\date{}
\theoremstyle{plain}
\newtheorem{theorem}{Theorem}
\newtheorem{proposition}{Proposition}
\newtheorem*{corollary}{Corollary}
\theoremstyle{remark}
\newtheorem*{remark}{Remark}
\renewcommand{\todo}[2][]{
    \@todo[caption={#2}, #1]{\begin{spacing}{0.75}#2\end{spacing}}
}
\newcommand{\pathToCommon}{.}
\newcommand{\pathToCommonFigs}{.}
\newcommand{\pathToFigures}{.}
\newcommand{\algname}[1]{\textsc{#1}}
\newcommand{\R}{\mathbb{R}}
\newcommand{\Z}{\mathbb{Z}}
\newcommand{\ep}{\varepsilon}
\renewcommand{\Pr}[1]{\mathbf{P} \left( {#1} \right)}
\newcommand{\Prsm}[1]{\mathbf{P} ( {#1} )}
\newcommand{\E}[1]{\mathbf{E} \left( {#1} \right)}
\renewcommand{\O}[1]{O \left( {#1} \right) }
\renewcommand{\P}{\boldsymbol{\mathcal{P}}}
\newcommand{\Q}{\boldsymbol{\mathcal{Q}}}
\newcommand{\mcZ}{\mathcal{Z}}
\newcommand{\hvphi}{{\hat{\varphi}}}
\newcommand{\MIC}{\mbox{MIC}}
\newcommand{\popMIC}{{\mbox{MIC}_*}}
\newcommand{\MICestE}{{\mbox{MIC}_e}}
\newcommand{\TIC}{\mbox{TIC}}
\newcommand{\TICestE}{{\mbox{TIC}_e}}
\newcommand{\proofwaypoint}[1]{ {\em {#1}:} }
\newtheorem{thm}{Theorem}[section]
\newtheorem*{thm*}{Theorem}
\newtheorem{lemma}[thm]{Lemma}
\theoremstyle{definition}
\newtheorem{definition}[thm]{Definition}
\newcommand{\figpart}[1]{\textbf{({#1})}}
\renewcommand*{\@fnsymbol}[1]{\ensuremath{\ifcase#1\or 1\or *\or \dagger\or 2\or 3\or 4\or 5\or **\or 6\or \mathsection\or \mathparagraph\or \|\or \ddagger\or \dagger\dagger
   \or \ddagger\ddagger \else\@ctrerr\fi}}
\title{Measuring dependence powerfully and equitably}
\author{
Yakir A.\ Reshef\footnote{School of Engineering and Applied Sciences, Harvard University.} \footnote{Co-first author.} \footnote{To whom correspondence should be addressed. Email: \url{yakir@seas.harvard.edu}}
\and
David N. Reshef\footnote{Department of Computer Science, Massachusetts Institute of Technology.} \footnotemark[2]
\and
Hilary K. Finucane\footnote{Department of Mathematics, Massachusetts Institute of Technology.}
\and
Pardis C. Sabeti\footnote{Department of Organismic and Evolutionary Biology, Harvard University.} \footnote{Broad Institute of MIT and Harvard.} \footnote{Co-last author.}
\and
Michael M. Mitzenmacher\footnotemark[1] \footnotemark[8]
}
\begin{document}

\maketitle

\begin{abstract}
Given a high-dimensional data set we often wish to find the strongest relationships within it. A common strategy is to evaluate a measure of dependence on every variable pair and retain the highest-scoring pairs for follow-up. This strategy works well if the statistic used is {\em equitable} \cite{reshef2015equitability}, i.e., if, for some measure of noise, it assigns similar scores to equally noisy relationships regardless of relationship type (e.g., linear, exponential, periodic).

In this paper, we introduce and characterize a population measure of dependence called $\popMIC$. We show three ways that $\popMIC$ can be viewed: as the population value of $\MIC$, a highly equitable statistic from \cite{MINE}, as a canonical ``smoothing'' of mutual information, and as the supremum of an infinite sequence defined in terms of optimal one-dimensional partitions of the marginals of the joint distribution. Based on this theory, we introduce an efficient approach for computing $\popMIC$ from the density of a pair of random variables, and we define a new consistent estimator $\MICestE$ for $\popMIC$ that is efficiently computable. In contrast, there is no known polynomial-time algorithm for computing the original equitable statistic $\MIC$. We show through simulations that $\MICestE$ has better bias-variance properties than $\MIC$. We then introduce and prove the consistency of a second statistic, $\TICestE$, that is a trivial side-product of the computation of $\MICestE$ and whose goal is powerful independence testing rather than equitability.

We show in simulations that $\MICestE$ and $\TICestE$ have good equitability and power against independence respectively. The analyses here complement a more in-depth empirical evaluation of several leading measures of dependence \citep{reshef2015comparisons} that shows state-of-the-art performance for $\MICestE$ and $\TICestE$.
\end{abstract}

\section{Introduction}
% Problem formulation
The growing dimensionality of today's data sets has popularized the idea of {\em hypothesis-generating science}, whereby a data set is used not to test existing hypotheses but rather to help a researcher formulate new ones. A common approach among practitioners is to evaluate some statistic on many candidate variable pairs in a data set, sort the variable pairs from highest-scoring to lowest, and manually examine all the pairs above a threshold score~\citep{storey2003statistical,emilsson2008genetics}.

% The two requirements of a solution: power and equitability
A popular class of statistics used for such analyses is {\em measures of dependence}, i.e., statistics whose population value is 0 in cases of statistical independence and non-zero otherwise. Measures of dependence are attractive because they guarantee that asymptotically no non-trivial relationship will erroneously be declared trivial. In the setting of continuous-valued data, which is our focus, there is a long line of fruitful research on such statistics including, e.g.,~\cite{hoeffding1948non, renyi1959measures, breiman1985estimating, paninski2003estimation, szekely2007measuring, gretton2005measuring, MINE, gretton2012kernel, lopez2013randomized, heller2013consistent, jiang2015nonparametric}.

The utility of a measure of dependence $\hvphi$ can be assessed in two ways. The first is {\em power against independence}, i.e., the power of independence testing based on $\hvphi$ to detect various types of non-trivial relationships. This is an important goal for datasets that have very few non-trivial relationships, or only very weak relationships that are difficult to detect. Often, however, the number of relationships declared statistically significant by a measure of dependence greatly exceeds the number of relationships that can then be explored further. For example, biological datasets often contain many non-trivial relationships, but testing a preliminary finding for further corroboration may take extensive manual lab work, or a study on human or animal subjects. In this case, it is tempting to restrict follow-up to relationships with high values of $\hvphi$, but this can skew the direction of follow-up work: if $\hvphi$ systematically assigns higher scores to, say, linear relationships than to non-linear ones, relatively noisy linear relationships might crowd out strong non-linear relationships from the top-scoring set.

Motivated by this problem, in a companion paper \citep{reshef2015equitability} we define a second way of assessing a measure of dependence called {\em equitability}. Informally, an equitable statistic is one that, for some measure of relationship strength, assigns similar scores to equally strong relationships regardless of relationship type. For instance, we may want our measure of dependence to also have the property that on noisy functional relationships it assigns similar scores to relationships with the same $R^2$, i.e., the squared Pearson correlation between the observed y-values and the x-values passed through the underlying function in question \citep{MINE}. Or, alternatively, we may want the value of our statistic to tell us about the proportion of points coming from the deterministic component of a mixture containing part signal and part uniform noise~\citep{ding2013copula}. Defining measures of dependence that achieve good equitability with respect to interesting measures of relationship strength is a new and challenging problem, with a number of different formalizations. (See, e.g., \cite{reshef2015equitability} and \cite{ding2013copula} cited above, as well as \cite{kinney2014equitability} along with associated technical comments \cite{reshef2014comment} and \cite{Murrell2014comment}.)

% New approach to these challenges, and overview of results.
In this paper, we introduce and theoretically characterize two new measures of dependence that we empirically show to have good equitability with respect to $R^2$ and power against independence, respectively. We begin by introducing a new population measure of dependence called $\popMIC$. Given a pair of jointly distributed random variables $(X,Y)$, $\popMIC(X,Y)$ is the supremum, over all finite grids $G$ imposed on the support of $(X,Y)$, of the mutual information of the discrete distribution induced by $(X,Y)$ on the cells of $G$, subject to a regularization based on the resolution of $G$. We prove three results, each of which gives a different way that this population quantity can be viewed.

\begin{enumerate}
\item $\popMIC$ is the population value of the maximal information coefficient ($\MIC$), a statistic introduced in \cite{MINE} that is highly equitable with respect to $R^2$ on a large class of noisy functional relationships. Simple corollaries of this result simplify and strengthen many of the theoretical results proven in \cite{MINE} about $\MIC$.
\item $\popMIC$ is a minimal ``smoothing'' of mutual information, in the sense that the regularization in the definition of $\popMIC$ renders it uniformly continuous as a function of random variables, and no smaller regularization achieves continuity. A corollary of this is that $\popMIC$ is uniformly continuous while mutual information is not continuous.
\item $\popMIC$ is the supremum of an infinite sequence defined in terms of optimal partitions of the marginal distributions of $(X,Y)$ rather than optimal (two-dimensional) grids imposed on the joint distribution. This characterization greatly simplifies the computation of $\popMIC$ and associated quantities.
\end{enumerate}

After proving these three results, we leverage them to introduce efficient algorithms both for approximately computing $\popMIC$ and for estimating it from finite samples. We first provide an efficient algorithm that in many cases allows for computation to arbitrary precision of the $\popMIC$ of a pair of random variables whose joint density is known. We then introduce a statistic, called $\MICestE$, that we prove is a consistent estimator of $\popMIC$. In contrast to the $\MIC$ statistic from \cite{MINE}, for which no efficient algorithm is known and a heuristic algorithm is used in practice, $\MICestE$ is efficiently computable. It has a better runtime complexity than the heuristic algorithm currently in use for computing the original $\MIC$ statistic, and is orders of magnitude faster in practice.

With a consistent and fast estimator for $\popMIC$ in hand, we turn to empirical analysis of its performance. Specifically, we show through simulation that $\MICestE$ has better bias/variance properties than the heuristic algorithm used in \cite{MINE} for computing $\MIC$, which has no theoretical convergence guarantees. Our analysis also reveals that the main parameter of $\MICestE$ can be used to tune statistical performance toward either stronger or weaker relationships in general. After studying the bias/variance properties of $\MICestE$, we then demonstrate via simulation that it outperforms currently available methods in terms of equitability with respect to $R^2$. Notably, we show this performance advantage both on the set of functional relationships analyzed in \cite{MINE} as well as on a large set of randomly chosen noisy functional relationships.

We choose in this paper to analyze equitability specifically with respect to $R^2$, rather than some other notion of relationship strength, because $R^2$ on noisy functional relationships is a simple measure with broad familiarity and intuitive interpretation among practitioners. Of course, it is also important to develop measures of dependence that are equitable with respect to notions of relationship strength besides $R^2$ or on families of relationships besides noisy functional relationships; however, our focus here remains on the ``simple'' case of $R^2$ on noisy functional relationships.

Importantly, we note that although there are methods for directly estimating the $R^2$ of a noisy functional relationship via nonparametric regression (see, e.g., \cite{cleveland1988locally, stone1977consistent}), those methods are not applicable in the context of equitability because they are not measures of dependence. That is, because non-parametric regression methods \textit{assume} a functional form for the relationship in question, they can give trivial scores to non-functional relationships, even in the large-sample limit. A simple example of this is when a distribution is supported on a circle, such that the regression function is constant. In contrast, a \textit{measure of dependence} is guaranteed never to make this ``mistake''. A measure of dependence that is equitable with respect to $R^2$ can therefore be viewed either as an ``upgraded'' measure of dependence that also comes with some of the interpretability properties of non-parametric regression, or as an ``upgraded'' approximate non-parametric regression method that also has the robustness properties of a measure of dependence.

The main strength of $\MICestE$ is equitability rather than power to reject a null hypothesis of independence. In some settings, though, it may be important to have good power against independence. We therefore introduce here a statistic closely related to $\MICestE$ called the total information coefficient $\TICestE$. We prove the consistency of testing for independence using $\TICestE$, and show via simulations that it achieves excellent power in practice, performing comparably to or better than current methods. Because $\TICestE$ arises naturally as a side-product of the computation of $\MICestE$, it is available ``for free'' once $\MICestE$ has been computed. This leads us to propose a data analysis strategy consisting of first using $\TICestE$ to filter out non-significant relationships, and then ranking the remaining ones using the simultaneously computed values of $\MICestE$.

In addition to the companion paper \cite{reshef2015equitability}, which focuses on the theory behind equitability, this paper is accompanied by a second companion work \citep{reshef2015comparisons} that explores in detail the empirical performance of the methods introduced here.
That paper shows, by comparing $\MICestE$ and $\TICestE$ to several leading measures of dependence under many different sampling and noise models, that the equitability of $\MICestE$ on noisy functional relationships and the power of independence testing using $\TICestE$ are both state-of-the-art. It also shows that these methods can be computed very fast in practice.

Taken together, our results shed significant light on the theory behind the maximal information coefficient, and suggest that $\TICestE$ and $\MICestE$ are a useful pair of methods for data exploration. Specifically, they point to joint use of these two statistics to filter and then rank relationships as a fast and practical way to explore large data sets by measuring dependence both powerfully and equitably.

\section{Preliminaries}
We work extensively in this paper with grids and discrete distributions over their cells. Given a grid $G$ and a point $(x,y)$, we define the function $\mbox{row}_G(y)$ to be the row of $G$ containing $y$ and we define $\mbox{col}_G(x)$ analogously. For a pair $(X,Y)$ of jointly distributed random variables, we write $(X,Y)|_G$ to denote $\left( \mbox{col}_G(X), \mbox{row}_G(Y) \right)$, and we use $I((X,Y)|_G)$ to denote the discrete mutual information \citep{Cover2006,csiszar2004information,csiszar2008axiomatic} between $\mbox{col}_G(X)$ and $\mbox{row}_G(Y)$. Given a finite sample $D$ from the distribution of $(X,Y)$, we sometimes use $D$ to refer both to the set of points in the sample as well as to a point chosen uniformly at random from $D$. In the latter case, it will then make sense to talk about, e.g., $D|_G$ and $I(D|_G)$.

For natural numbers $k$ and $\ell$, we use $G(k, \ell)$ to denote the set of all $k$-by-$\ell$ grids (possibly with empty rows/columns). A grid $G$ is an equipartition of $(X,Y)$ if all the rows of $(X,Y)|_G$ have the same probability mass, and all the columns do as well. We also use the term equipartition in the analogous way for one-dimensional partitions into just rows or columns. For a one-dimensional partition $P$ into rows and a one-dimensional partition $Q$ into columns, we write $(P,Q)$ to refer to the grid constructed from these two partitions. When a partition $P$ can be obtained from a partition $P'$ by addition of separators alone, we write $P' \subset P$.

Finally, let us establish some notation for infinite matrices. We use $m^\infty$ to denote the space of infinite matrices equipped with the supremum norm. Given a matrix $A \in m^\infty$, we often examine only the $k,\ell$-th entries of $A$ for which $k\ell \leq i$ for some $i$. Thus, for $i \in \Z^+$, we define the projection $r_i : m^\infty \rightarrow m^\infty$ via
\[ r_i(A)_{k,\ell} = \left\{
        \begin{array}{ll}
            A_{k,\ell} & \quad k\ell \leq i \\
            0 & \quad k\ell > i
        \end{array}
    \right. . \]

\section{The population maximal information coefficient \texorpdfstring{$\popMIC$}{MIC*}}
In this section, we define and characterize the population maximal information coefficient $\popMIC$. We begin by defining the population quantity $\popMIC(X,Y)$ for a pair of jointly distributed random variables $(X,Y)$. We then show three different ways to characterize this population quantity: first, as the large-sample limit of the statistic $\MIC$ from \cite{MINE}; second, as a minimally smoothed version of mutual information; and third, as the supremum of an infinite sequence defined in terms of optimal one-dimensional partitions of the marginals of the joint distribution of $(X,Y)$. We conclude the section by showing how the third characterization leads to an efficient approach for computing $\popMIC$ from the density of $(X,Y)$.

\subsection{Defining \texorpdfstring{$\popMIC$}{MIC*}}
\label{subsec:defining_popMIC}
The population maximal information can be defined in several equivalent ways, as we will see later. For now, we begin with the simplest definition.

\begin{definition}
Let $(X,Y)$ be jointly distributed random variables. The {\em population maximal information coefficient} ($\popMIC$) of $(X,Y)$ is defined by
\[
\popMIC(X, Y) = \sup_G \frac{I((X,Y)|_G)}{\log \|G\| }
\]
where $\|G\|$ denotes the minimum of the number of rows of $G$ and the number of columns of $G$.
\end{definition}
Given that $I(X,Y) = \sup_G I( (X,Y)|_G )$ (see, e.g., Chapter 8 of \cite{Cover2006}), this can be viewed as a regularized version of mutual information that penalizes complicated grids and ensures that the result falls between 0 and 1.

Before we continue, we state one simple equivalent definition of $\popMIC$ that is useful for the results in this section. This definition views $\popMIC$ as the supremum of a matrix called the {\em population characteristic matrix}, defined below.

\begin{definition}
\label{def:charmatrix}
Let $(X,Y)$ be jointly distributed random variables. Let
\[ I^*((X, Y), k, \ell) = \max_{G \in G(k,\ell)} I((X,Y)|_G) . \]
The {\em population characteristic matrix} of $(X,Y)$, denoted by $M(X,Y)$, is defined by
\[ M(X,Y)_{k,\ell} = \frac{I^*((X,Y),k,\ell)}{\log \min \{k, \ell\}} \]
for $k, \ell > 1$.
\end{definition}

It is easy to see the following:
\begin{proposition}
Let $(X,Y)$ be jointly distributed random variables. We have
\[ \popMIC(X, Y) = \sup M(X,Y) \]
where $M(X,Y)$ is the population characteristic matrix of $(X,Y)$.
\end{proposition}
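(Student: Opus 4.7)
The plan is to group the grids in the outer supremum by their dimensions and then swap the order of the two suprema. Concretely, every grid $G$ lies in $G(k,\ell)$ for some $k,\ell$, in which case $\|G\| = \min\{k,\ell\}$ by definition.

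The first step is to argue that grids $G$ with $\|G\|=1$ contribute nothing. If $G$ has a single row, then $\mbox{row}_G(Y)$ is deterministic, so $I((X,Y)|_G)=0$; the same argument applies when $G$ has a single column. These grids can therefore be excluded from the supremum (and in any case the formal ratio $0/\log 1$ is indeterminate). So we may restrict the supremum defining $\popMIC(X,Y)$ to grids with $k,\ell \geq 2$, which is exactly the range over which $M(X,Y)_{k,\ell}$ is defined.

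The second step is to rewrite
\begin{align*}
\popMIC(X,Y) \;=\; \sup_{G:\,\|G\|\geq 2} \frac{I((X,Y)|_G)}{\log \|G\|}
\;=\; \sup_{k,\ell \geq 2} \;\sup_{G \in G(k,\ell)} \frac{I((X,Y)|_G)}{\log \min\{k,\ell\}},
\end{align*}
and then pull the denominator, which depends only on $(k,\ell)$, outside the inner supremum. The inner supremum is (at worst) $I^*((X,Y),k,\ell)$ by Definition~\ref{def:charmatrix}, so the expression equals
\[
\sup_{k,\ell \geq 2} \frac{I^*((X,Y),k,\ell)}{\log \min\{k,\ell\}} \;=\; \sup_{k,\ell \geq 2} M(X,Y)_{k,\ell} \;=\; \sup M(X,Y),
\]
as desired.

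The only subtle point is the use of $\max$ rather than $\sup$ in the definition of $I^*$; a grid in $G(k,\ell)$ is specified by the positions of its separators, and although that parameter space is non-compact, $I((X,Y)|_G)$ depends only on the induced cell probabilities, which live in a compact simplex, so standard compactness and continuity considerations show the maximum is attained. This attainment is not actually needed for the identity above, however; even with $\sup$ in place of $\max$, the two expressions coincide. The only real ``work'' is the bookkeeping around the $\|G\|=1$ case and the swap of suprema, which is why the author calls this step easy.
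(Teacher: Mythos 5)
Your argument is correct and is exactly the definition-unpacking the paper has in mind when it states this proposition without proof ("It is easy to see..."): restrict to $k,\ell\geq 2$, swap the suprema, and pull out the normalization. Your side remarks on the degenerate $\|G\|=1$ grids and on $\max$ versus $\sup$ in the definition of $I^*$ are apt and correctly noted as immaterial to the identity.
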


The population characteristic matrix is so named because just as $\popMIC$, the supremum of this matrix, captures a sense of relationship strength, other properties of this matrix correspond to different properties of relationships. For instance, later in this paper we introduce an additional property of the characteristic matrix, the total information coefficient, that is useful for testing for the presence or absence of a relationship rather than quantifying relationship strength.

\subsection{First alternate characterization: \texorpdfstring{$\popMIC$}{MIC*} is the population value of \texorpdfstring{$\MIC$}{MIC}}
With $\popMIC$ defined, we now state our first alternate characterization of it, as the large-sample limit of the statistic $\MIC$ introduced in \cite{MINE}. We begin by first reproducing a description of $\MIC$ from~\cite{MINE}, via the two definitions below.

\begin{definition}[\citealp{MINE}]
\label{def:sample_char_matrix}
Let $D \subset \R^2$ be a set of ordered pairs. The {\em sample characteristic matrix} $\widehat{M}(D)$ of $D$ is defined by
\[
\widehat{M}(D)_{k,\ell} = \frac{I^*(D, k, \ell)}{ \log \min \{k, \ell \}}  .
\]
\end{definition}

\begin{definition}[\citealp{MINE}]
Let $D \subset \R^2$ be a set of $n$ ordered pairs, and let $B : \Z^+ \rightarrow \Z^+$. We define
\[ \MIC_B(D) = \max_{k,\ell \leq B(n)} \widehat{M}(D)_{k,\ell} . \]
\end{definition}
where the function $B(n)$ is specified by the user. In \cite{MINE}, it was suggested that $B(n)$ be chosen to be $n^\alpha$ for some constant $\alpha$ in the range of $0.5$ to $0.8$. (The statistics we introduce later will have an analogous parameter. See Section~\ref{subsubsec:choosingB}.)

We have shown the following result about convergence of functions of the sample characteristic matrix to their population counterparts, a consequence of which is the convergence of $\MIC$ to $\popMIC$. (In the theorem statement below, recall that $m^\infty$ is the space of infinite matrices equipped with the supremum norm, and given a matrix $A$ the projection $r_i$ zeros out all the entries $A_{k,\ell}$ for which $k\ell > i$.)

\begin{theorem}
\label{thm:estimator}
Let $f : m^\infty \rightarrow \R$ be uniformly continuous, and assume that $f \circ r_i \rightarrow f$ pointwise. Then for every random variable $(X,Y)$, we have
\[
\left( f \circ r_{B(n)} \right) \left(\widehat{M}(D_n) \right) \rightarrow f(M(X,Y))
\]
in probability where $D_n$ is a sample of size $n$ from the distribution of $(X,Y)$, provided $\omega(1) < B(n) \leq O(n^{1-\ep})$ for some $\ep > 0$.
\end{theorem}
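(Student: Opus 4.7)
The plan is to bound the target difference via the triangle inequality:
\[
\bigl| (f \circ r_{B(n)})(\widehat{M}(D_n)) - f(M(X,Y)) \bigr|
\le \underbrace{\bigl| f(r_{B(n)}(\widehat{M}(D_n))) - f(r_{B(n)}(M(X,Y))) \bigr|}_{(A)}
+ \underbrace{\bigl| f(r_{B(n)}(M(X,Y))) - f(M(X,Y)) \bigr|}_{(B)}.
\]
Term $(B)$ is a deterministic sequence that tends to zero: since $B(n) = \omega(1)$ we have $B(n) \to \infty$, and by hypothesis $f \circ r_i \to f$ pointwise on $m^\infty$, evaluated at the fixed matrix $M(X,Y)$. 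So the entire content of the theorem is to show $(A) \to 0$ in probability.

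To handle $(A)$, the plan is to invoke uniform continuity of $f$ to reduce to showing that $\| r_{B(n)}(\widehat{M}(D_n)) - r_{B(n)}(M(X,Y)) \|_\infty \to 0$ in probability. Since entries with $k\ell > B(n)$ are zero in both matrices, this sup norm equals
\[
\max_{k,\ell \ge 2,\; k\ell \le B(n)} \bigl| \widehat{M}(D_n)_{k,\ell} - M(X,Y)_{k,\ell} \bigr|.
\]
Each entry has the form $I^*(\cdot,k,\ell)/\log\min\{k,\ell\}$, and on this index range the denominator is bounded below by $\log 2$, so it is enough to prove
\[
\max_{k\ell \le B(n)} \bigl| I^*(D_n, k, \ell) - I^*((X,Y), k, \ell) \bigr| \longrightarrow 0 \quad \text{in probability.}
\]

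The heart of the argument will be this uniform convergence of optimized sample mutual information to its population counterpart. For a single $(k,\ell)$, each grid in $G(k,\ell)$ is parameterized by its $k-1$ row and $\ell-1$ column separators, so the family of cell indicators forms a VC-class of dimension $O(k+\ell)$. A Vapnik-Chervonenkis uniform convergence bound then guarantees that, with high probability, every $G \in G(k,\ell)$ has empirical cell masses within $\widetilde{O}(\sqrt{(k+\ell)/n})$ of their true values; combined with a known modulus of continuity for discrete mutual information as a function of cell probabilities on the $(k\ell)$-simplex, this controls $\bigl|I(D_n|_G) - I((X,Y)|_G)\bigr|$, and hence $\bigl|I^*(D_n,k,\ell) - I^*((X,Y),k,\ell)\bigr|$, uniformly in $G$. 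A union bound over the at most $B(n)^2$ index pairs $(k,\ell)$ with $k\ell \le B(n)$ then finishes the argument.

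The main obstacle I expect is making this uniform bound survive simultaneously the growth of the grids and the union bound, since cells of a grid with up to $B(n) \le n^{1-\ep}$ boxes can carry probability mass as small as $1/B(n)$, around which the logarithmic factors in the mutual-information continuity modulus become large. This is precisely where the assumption $B(n) \le O(n^{1-\ep})$ enters: it gives a VC error of order $n^{-\ep/2}$ up to logarithms, which, even after being amplified by the continuity modulus and union-bounded over $(k,\ell)$, remains $o(1)$. Balancing these three sources of error---VC rate, continuity modulus, and union bound---is the step I expect to require the most care.
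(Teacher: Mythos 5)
Your outer skeleton---the triangle-inequality split into a deterministic term handled by $B(n)=\omega(1)$ plus pointwise convergence of $f\circ r_i$, and a stochastic term handled by uniform continuity of $f$ once $\sup_{k\ell\le B(n)}|\widehat M(D_n)_{k,\ell}-M(X,Y)_{k,\ell}|\to 0$ in probability---is exactly the paper's. The gap is in the step you yourself identify as the heart of the matter, and it is not merely a matter of care: the specific bounds you propose do not close. A standard (additive) VC inequality controls each cell mass to within $\eta=\widetilde O(\sqrt{(k+\ell)/n})$ (in fact $\widetilde O(n^{-1/2})$, since cells of grids are axis-aligned rectangles, a class of VC dimension $4$ regardless of $k$ and $\ell$). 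But any modulus of continuity for discrete mutual information on the $(k\ell)$-simplex must be paid per coordinate: passing from per-cell error $\eta$ to a statement about $I$ costs a factor of the number of cells, so the relevant quantity is the total variation $\sum_{i,j}|\psi_{i,j}-\pi_{i,j}|\le k\ell\,\eta$, which for $k\ell\le B(n)=n^{1-\ep}$ is $\widetilde O(n^{1/2-\ep})$. This diverges for every $\ep<1/2$, whereas the theorem must hold for all $\ep>0$. So the three error sources you propose to balance cannot be balanced in the form you wrote them down.

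What rescues the argument---and what the paper's proof actually does---is to make the per-cell deviation scale with the cell's mass rather than uniformly. The paper applies a multiplicative Chernoff bound on the cells of a fine master equipartition $\Gamma$, obtaining relative errors $|\ep_{i,j}|\lesssim\sqrt{\pi_{i,j}}/C(n)^{1/2+\alpha}$, so that by concavity $\sum_{i,j}\sqrt{\pi_{i,j}}\le\sqrt{C(n)}$ and the aggregate error $\sum_{i,j}|\ep_{i,j}|\le C(n)^{-\alpha}$ vanishes; arbitrary $k$-by-$\ell$ grids are then handled by snapping their lines to $\Gamma$ and invoking the bounded-variation property of $I$ under moving $O(n^{-\ep/4})$ mass. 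You could repair your route by replacing the additive VC bound with a ratio-type (Bernstein/Vapnik normalized) uniform deviation bound, $|P_n(R)-P(R)|\le\widetilde O(\sqrt{P(R)/n}+1/n)$ over rectangles $R$, followed by Cauchy--Schwarz over the $k\ell$ cells, which yields total variation $\widetilde O(\sqrt{k\ell/n})=\widetilde O(n^{-\ep/2})$ and then closes as you intend; but that refinement is precisely the missing idea, and without it the proposal as written fails for $B(n)$ growing faster than $\sqrt n$.
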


Since the supremum of a matrix is uniformly continuous as a function on $m^\infty$ and can be realized as the limit of maxima of larger and larger segments of the matrix, this theorem yields our claim about $\popMIC$ as a corollary.
\begin{corollary}
$\MIC_B$ is a consistent estimator of $\popMIC$ provided $\omega(1) < B(n) \leq O(n^{1-\ep})$ for some $\ep > 0$.
\end{corollary}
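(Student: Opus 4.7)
The plan is to derive the corollary as a direct application of Theorem~\ref{thm:estimator} to the uniformly continuous functional $f = \sup : m^\infty \to \R$ (the supremum of the matrix entries). Under this choice, $(f \circ r_{B(n)})(\widehat{M}(D_n))$ is exactly $\MIC_B(D_n)$, while $f(M(X,Y)) = \sup M(X,Y) = \popMIC(X,Y)$ by the proposition immediately following Definition~\ref{def:charmatrix} identifying $\popMIC$ with the supremum of the population characteristic matrix. The entire proof therefore reduces to verifying the two hypotheses on $f$.

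For \emph{uniform continuity} on $m^\infty$: for any $A, B \in m^\infty$ one has $|\sup A - \sup B| \leq \|A - B\|_\infty$, so $\sup$ is $1$-Lipschitz with respect to the supremum norm. (The sample and population characteristic matrices have entries in $[0,1]$ since $I^*((X,Y),k,\ell) \leq \log\min\{k,\ell\}$, so the suprema are always finite in the situations of interest and there is no pathology to worry about.)

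For \emph{pointwise convergence} $f \circ r_i \to f$: since characteristic-matrix entries are non-negative, $(\sup \circ r_i)(A) = \sup_{k\ell \leq i} A_{k,\ell}$ is nondecreasing in $i$ and bounded above by $\sup A$. Given $\ep > 0$, choose $(k^*, \ell^*)$ with $A_{k^*, \ell^*} > \sup A - \ep$; then for every $i \geq k^*\ell^*$ we have $(\sup \circ r_i)(A) \geq A_{k^*, \ell^*} > \sup A - \ep$, so $(\sup \circ r_i)(A) \to \sup A$.

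Combining these two facts with Theorem~\ref{thm:estimator} immediately yields $\MIC_B(D_n) \to \popMIC(X,Y)$ in probability under the stated growth condition on $B(n)$, which is the claimed consistency. There is no real obstacle here: the corollary amounts to checking that Theorem~\ref{thm:estimator} does what one expects for the simplest natural functional derivable from the characteristic matrix, with all of the nontrivial work — the concentration of $\widehat{M}(D_n)$ toward $M(X,Y)$ uniformly over a growing initial segment of entries — already encapsulated in that theorem.
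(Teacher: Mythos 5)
Your proposal is correct and matches the paper's own argument: the paper likewise obtains the corollary by applying Theorem~\ref{thm:estimator} to $f = \sup$, noting that the supremum is uniformly continuous on $m^\infty$ and is the limit of the maxima over larger and larger segments of the matrix. You have simply spelled out the two verifications (the $1$-Lipschitz bound and the pointwise convergence of $\sup \circ \, r_i$) that the paper leaves implicit.
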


We prove Theorem~\ref{thm:estimator} in Appendix~\ref{appendix:popmicproof} and provide here some intuition for why it should hold as well as a description of the obstacles that must be overcome in the proof.

To see why the theorem should hold, fix a random variable $(X,Y)$ and let $D$ be a sample of size $n$ from its distribution. It is known that, for a fixed grid $G$, $I(D|_G)$ is a consistent estimator of $I((X,Y)|_G)$ \citep{roulston1999estimating, paninski2003estimation}. We might therefore expect $I^*(D, k, \ell)$ to be a consistent estimator of $I^*((X, Y), k, \ell)$ as well. And if $I^*(D, k, \ell)$ is a consistent estimator of $I^*((X, Y), k, \ell)$, then we might expect the maximum of the sample characteristic matrix (which just consists of normalized $I^*$ terms) to be a consistent estimator of the supremum of the true characteristic matrix.

These intuitions turn out to be true, but there are two reasons they are non-trivial to prove. First, consistency for $I^*$ does not follow from abstract considerations since the maximum of an infinite set of estimators is not necessarily a consistent estimator of the supremum of the estimands\footnote{
If $\hat \theta_1, \ldots, \hat \theta_k$ is a finite set of estimators, then a union bound shows that the random variable $(\hat \theta_1(D), \ldots, \hat \theta_k(D))$ converges in probability to $(\theta_1, \ldots, \theta_k)$ with respect to the supremum metric. The continuous mapping theorem then gives the desired result. However, if the set of estimators is infinite, the union bound cannot be employed. And indeed, if we let $\theta_1 = \cdots = \theta_k = 0$, and let $\hat \theta_i(D_n) = i/n$ deterministically, then each $\hat \theta_i$ is a consistent estimator of $\theta_i$, but since the set $\{\hat \theta_1(D_n), \hat \theta_2(D_n), \ldots \} = \{1/n, 2/n, \ldots \}$ is unbounded, $\sup_i \hat \theta_i(D_n) = \infty$ for every $n$.
}. Second, consistency of $I^*$ alone does not suffice to show that the maximum of the sample characteristic matrix converges to $\popMIC$. In particular, if $B(n)$ grows too quickly, and the convergence of $I^*(D, k, \ell)$ to $I^*((X, Y), k, \ell)$ is slow, inflated values of $\MIC$ can result. To see this, notice that if $B(n) = \infty$ then $\MIC = 1$ always, even though each individual entry of the sample characteristic matrix converges to its true value eventually.

The technical heart of the proof is overcoming these obstacles by using the dependencies between the quantities $I(D|_G)$ for different grids $G$ to not only show the consistency of $I^*(D, k, \ell)$ but then to quantify how quickly $I^*(D, k, \ell)$ actually converges to $I^*((X, Y), k, \ell)$.

\subsection{Second alternate characterization: \texorpdfstring{$\popMIC$}{MIC*} is a minimally smoothed mutual information}
We now describe a second equivalent view of $\popMIC$. Recall that for a pair of jointly distributed random variables $(X,Y)$, we defined $\popMIC(X,Y)$ as
\[
\popMIC(X, Y) = \sup_G \frac{I((X,Y)|_G)}{\log \|G\| }
\]
where $\|G\|$ denotes the minimum of the number of rows of $G$ and the number of columns of $G$. As we discussed in Section~\ref{subsec:defining_popMIC}, the mutual information $I(X,Y)$ is also a supremum, namely
\[
I(X,Y) = \sup_G I((X,Y)|_G) .
\]
and so $\popMIC$ can be viewed as a regularized version of $I$. It is natural to ask whether the regularization in the definition of $\popMIC$ has any smoothing effect on $I$. In this sub-section we show first that it does, in the sense that $\popMIC$ is uniformly continuous as a function of random variables with respect to the metric of statistical distance\footnote{
Recall that the statistical distance between random variables $A$ and $B$ is defined as $\sup_T \left| \Pr{A \in T} - \Pr{B \in T} \right|$. When $A$ and $B$ have probability density functions or probability mass functions, this equals one-half of the $L^1$ distance between those functions.},
and second that the regularization by $\log \| G \|$ is in fact the minimal one necessary for achieving any sort of continuity. As a corollary, we obtain that $I$ by itself is not continuous as a function of random variables with respect to the metric of statistical distance. This yields a view of $\popMIC$ as a canonical smoothing of $I$ that yields continuity.

Formally, let $\P(\R^2)$ denote the space of random variables supported on $\R^2$ equipped with the metric of statistical distance. Our first claim is that as a function defined on $\P(\R^2)$, $\popMIC$ is uniformly continuous. We prove this claim by establishing a stronger result: the uniform continuity of the characteristic matrix $M(X,Y)$. Specifically, by showing that the family of maps corresponding to each individual entry of the characteristic matrix is uniformly equicontinuous, we establish the following result.

\begin{theorem}
\label{thm:continuityofM}
The map from $\P(\R^2)$ to $m^\infty$ defined by $(X,Y) \mapsto M(X,Y)$ is uniformly continuous.
\end{theorem}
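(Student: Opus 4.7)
The plan is to prove the stronger statement that the family of entrywise maps $\Phi_{k,\ell} : (X,Y) \mapsto M(X,Y)_{k,\ell}$, indexed by $k, \ell \geq 2$, is uniformly equicontinuous. Since the supremum norm on $m^\infty$ is the supremum of absolute values of entries (with entries for $k=1$ or $\ell=1$ taken to be zero), this implies uniform continuity of the matrix-valued map. The first, easy observation is a data-processing property: whenever $(X,Y)$ and $(X',Y')$ have statistical distance at most $\delta$, the induced discrete laws of $(X,Y)|_G$ and $(X',Y')|_G$ on the cells of any grid $G$ also have statistical distance at most $\delta$, since every event about the cell-indices pulls back to an event in $\R^2$.

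The heart of the argument is a quantitative continuity estimate for discrete mutual information: for joint probability measures $\rho, \sigma$ on $[k]\times[\ell]$ at statistical distance at most $\delta$, I aim to show
\[
|I(\rho) - I(\sigma)| \leq C\delta \log\min\{k,\ell\} + h(\delta),
\]
where $h(\delta) \to 0$ as $\delta \to 0$. The factor $\log\min\{k,\ell\}$ (rather than $\log\max\{k,\ell\}$ or $\log(k\ell)$) is essential here: a naive application of Fannes's inequality to each of the three terms in $I = H(X) + H(Y) - H(X,Y)$ yields only an $O(\delta\log(k\ell))$ bound, which becomes vacuous after dividing by $\log\min\{k,\ell\}$ for long, thin grids. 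To achieve the sharper bound I would assume without loss of generality that $k \leq \ell$, use the decomposition $I = H(X) - H(X|Y)$, bound $|H_\rho(X) - H_\sigma(X)|$ by the usual Fannes inequality (which depends only on $\log k$), and bound the conditional-entropy term via the classical analogue of the Alicki--Fannes--Winter inequality, whose point is that $|H_\rho(X|Y) - H_\sigma(X|Y)|$ can be controlled in terms of $\log k$ alone, using the crude upper bound $H(X|Y) \leq \log k$. That analogue is proved from the maximal coupling decomposition $\rho = (1-\delta)\tau + \delta\rho_\perp$ and $\sigma = (1-\delta)\tau + \delta\sigma_\perp$, where $\tau$ is the common mass of $\rho$ and $\sigma$, together with concavity of $H(X|Y)$ as a function of the joint law.

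Given the discrete bound, taking the supremum over $G \in G(k,\ell)$ preserves the estimate for $|I^*((X,Y),k,\ell) - I^*((X',Y'),k,\ell)|$, and dividing by $\log\min\{k,\ell\} \geq \log 2$ yields
\[
|M(X,Y)_{k,\ell} - M(X',Y')_{k,\ell}| \leq C\delta + h(\delta)/\log 2,
\]
which is uniform in $k, \ell \geq 2$ and tends to $0$ as $\delta \to 0$ independently of the choice of random variables. Taking the supremum over $(k,\ell)$ then gives the desired bound in the $m^\infty$-norm. The main obstacle throughout is the sharp continuity bound for discrete mutual information with the correct dependence on $\log\min\{k,\ell\}$; everything else is essentially bookkeeping (splitting into cases depending on whether $k \leq \ell$ or $k \geq \ell$, reducing to $L^1$ distance versus statistical distance, and tracking additive $H_2(\delta)$ remainders).
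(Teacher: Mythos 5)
Your proposal is correct and follows the same overall architecture as the paper's proof: (i) the data-processing observation that applying any grid cannot increase statistical distance, (ii) a quantitative continuity bound for discrete mutual information of the form $O(\delta \log(1/\delta) + \delta \log\min\{k,\ell\})$, obtained by decomposing $I$ so that both entropy terms see only the smaller alphabet, and (iii) the observation that the $\log\min\{k,\ell\}$ normalization exactly absorbs the alphabet-dependent part, after which taking suprema over grids and over $(k,\ell)$ is routine. You also correctly identify the crux: a naive three-term Fannes bound gives $O(\delta\log(k\ell))$, which is useless for long thin grids. Where you genuinely diverge is in how the conditional-entropy term is controlled. The paper (Proposition~\ref{prop:boundedVariationOfI}) expands $|H(Y|X) - H(Y'|X')|$ column by column, applies a bespoke entropy-perturbation lemma (Lemma~\ref{lem:entropyBoundForChangedMass}) to each conditional distribution, and recombines the resulting binary-entropy terms via a weighted-average lemma (Lemma~\ref{lem:boundWeightedAverageOfEntropy}); this is entirely self-contained but requires several pages of supporting lemmas. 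You instead invoke the classical analogue of the Alicki--Fannes--Winter inequality, proved from the maximal-coupling decomposition $\rho = (1-\delta)\tau + \delta\rho_\perp$ together with concavity of $H(X|Y)$ in the joint law (for the lower bound) and the flag-variable argument $H(X|Y) \leq H(X|Y,Z) + H(Z)$ (for the upper bound), with the crude bound $H(X|Y) \leq \log k$ closing the estimate. Both routes yield the same bound; yours is more modular and shorter by leaning on standard machinery, while the paper's explicit perturbation lemmas are reused elsewhere (notably in the consistency proof of Appendix~\ref{appendix:popmicproof}), which is presumably why the authors chose the hands-on version.
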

\begin{proof}
See Appendix~\ref{app:continuityofM}.
\end{proof}

Since the supremum is a continuous function on $m^\infty$, Theorem~\ref{thm:continuityofM} yields the following corollary.

\begin{corollary}
The map $(X,Y) \mapsto \popMIC(X,Y)$ is uniformly continuous.
\end{corollary}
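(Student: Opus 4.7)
The plan is to reduce the corollary to a trivial composition of two uniformly continuous maps. First I would invoke the proposition stated immediately after Definition~\ref{def:charmatrix}, which identifies $\popMIC(X,Y)$ with $\sup M(X,Y)$. This factors the map in question as
\[
(X,Y) \;\longmapsto\; M(X,Y) \;\longmapsto\; \sup M(X,Y),
\]
where the first arrow goes from $\P(\R^2)$ to $m^\infty$ and the second from (a subset of) $m^\infty$ to $\R$.

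The first arrow is uniformly continuous by Theorem~\ref{thm:continuityofM}, so the work reduces to checking uniform continuity of the second. For this I would just note that the supremum is $1$-Lipschitz with respect to the supremum norm: for any $A,B \in m^\infty$ with finite suprema,
\[
\bigl|\sup A - \sup B\bigr| \;\le\; \sup_{k,\ell} |A_{k,\ell} - B_{k,\ell}| \;=\; \|A-B\|_\infty,
\]
which is the standard inequality $|\sup f - \sup g| \le \sup |f-g|$ applied symmetrically. The only small point to verify is that the supremum is well-defined on the image of $(X,Y)\mapsto M(X,Y)$: since the discrete mutual information between a $k$-valued and an $\ell$-valued variable is at most $\log\min\{k,\ell\}$, each entry $M(X,Y)_{k,\ell} = I^*((X,Y),k,\ell)/\log\min\{k,\ell\}$ lies in $[0,1]$, so $\sup M(X,Y) \in [0,1]$ and the Lipschitz estimate above applies throughout the relevant domain.

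Combining these two steps, the composition of a uniformly continuous map with a $1$-Lipschitz map is uniformly continuous, which yields the corollary. There is no real obstacle here: the substantive content lives entirely in Theorem~\ref{thm:continuityofM}, and the only care needed is the elementary boundedness check that makes the supremum map Lipschitz rather than merely continuous.
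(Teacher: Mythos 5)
Your proposal is correct and is essentially the paper's argument: the paper likewise derives the corollary by composing the uniform continuity of $(X,Y)\mapsto M(X,Y)$ from Theorem~\ref{thm:continuityofM} with the continuity of the supremum on $m^\infty$. Your version just makes explicit the $1$-Lipschitz estimate for the supremum and the boundedness of the matrix entries, which the paper leaves implicit.
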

Similar corollaries exist for any continuous function of the characteristic matrix.

Interestingly, Theorem~\ref{thm:continuityofM} relies crucially on the normalization in the definition of the characteristic matrix. This is not a coincidence: as the following proposition shows, any normalization that is meaningfully smaller than the one in the definition of the characteristic matrix will cause the matrix to contain an infinite discontinuity as a function on $\P(\R^2)$.

\begin{proposition}
\label{prop:MIdiscontinuous}
For some function $N(k, \ell$), let $M^N$ be the characteristic matrix with normalization $N$, i.e.,
\[
M^N(X,Y) = \frac{I^*((X,Y), k, \ell)}{N(k, \ell)} .
\]
If $N(k, \ell) = o(\log \min \{k,\ell\})$ along some infinite path in $\mathbb{N} \times \mathbb{N}$, then $M^N$ and $\sup M^N$ are not continuous as functions of $\mathcal{P}([0,1]\times[0,1]) \subset \P(\R^2)$.
\end{proposition}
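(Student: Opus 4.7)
The plan is to exhibit discontinuity at the uniform distribution $U$ on $[0,1]^2$. Since $X$ and $Y$ are independent under $U$, every entry of $M^N(U)$ vanishes, so it suffices to produce a sequence $P_n \in \mathcal{P}([0,1]\times[0,1])$ with $d_{\mathrm{TV}}(P_n, U) \to 0$ yet $\sup M^N(P_n) \to \infty$. From the hypothesized path I would extract a subsequence $(k_n, \ell_n)$ with (WLOG) $k_n \leq \ell_n$ and split into two cases: either $k_n \to \infty$ or $k_n$ is bounded.

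In the main case $k_n \to \infty$, let $G_n$ be the equipartition of $[0,1]^2$ into $k_n$ columns and $\ell_n$ rows, and construct a ``planted signal'' distribution $Q_n$ by assigning to each column a distinct horizontal strip of $\lfloor \ell_n/k_n\rfloor$ rows and placing uniform mass on the resulting $\approx \ell_n$ cells; this arrangement makes the marginals of $Q_n|_{G_n}$ uniform and achieves $I(Q_n|_{G_n}) = \log k_n$. Take $P_n = (1-\epsilon_n) U + \epsilon_n Q_n$, so $d_{\mathrm{TV}}(P_n, U) \leq \epsilon_n$ while the marginals of $P_n|_{G_n}$ stay uniform. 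A direct computation then yields likelihood ratio $1 + \epsilon_n(k_n-1)$ on the signal cells and $1 - \epsilon_n$ elsewhere, so
\[
I((P_n)|_{G_n}) = \Bigl[\tfrac{1-\epsilon_n}{k_n}+\epsilon_n\Bigr]\log\!\bigl(1+\epsilon_n(k_n-1)\bigr) + \tfrac{(k_n-1)(1-\epsilon_n)}{k_n}\log(1-\epsilon_n).
\]
Choosing $\epsilon_n = \max\!\bigl(\sqrt{N(k_n,\ell_n)/\log k_n},\,1/\log k_n\bigr)$ ensures $\epsilon_n \to 0$, $\epsilon_n k_n \geq k_n/\log k_n \to \infty$, and $\log(\epsilon_n k_n) \sim \log k_n$. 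The first summand then dominates and gives $I((P_n)|_{G_n}) \sim \epsilon_n \log k_n$, so
\[
M^N(P_n)_{k_n,\ell_n} \;\geq\; \frac{I((P_n)|_{G_n})}{N(k_n,\ell_n)} \;\sim\; \frac{\epsilon_n \log k_n}{N(k_n,\ell_n)} \;\to\; \infty,
\]
where the divergence follows piecewise: in the branch $\epsilon_n = \sqrt{N/\log k_n}$ the ratio equals $\sqrt{\log k_n / N} \to \infty$, and in the branch $\epsilon_n = 1/\log k_n$ (which is taken only when $N < 1/\log^2 k_n$, i.e., $N\to 0$ rapidly) the ratio is $1/N \to \infty$.

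The auxiliary case where $k_n$ is bounded (so $N \to 0$ directly along the path) is handled by the same construction with $k_n = k^*$ fixed: Taylor-expanding both logarithms in the formula above, valid when $\epsilon_n k^* \to 0$, yields $I((P_n)|_{G_n}) \sim (k^*-1)\,\epsilon_n^2$; taking $\epsilon_n = N(k_n,\ell_n)^{1/4}$ then gives $I/N \sim (k^*-1)/\sqrt{N} \to \infty$. Either way $\sup M^N(P_n) \to \infty$ while $\sup M^N(U) = 0$, establishing the claimed discontinuity of both $M^N$ and $\sup M^N$ at $U$.

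The main obstacle will be the asymptotic analysis of $I((P_n)|_{G_n})$: one must show that the planted-signal cells, carrying total mass $\Theta(\epsilon_n)$ with log-likelihood ratio $\Theta(\log k_n)$ in the growing case (or $\Theta(\epsilon_n)$ in the bounded case), dominate the noise cells and produce a mutual information comfortably larger than $N(k_n,\ell_n)$. The interplay between $\epsilon_n$, $k_n$, and $N$ must be calibrated so that $\epsilon_n \to 0$ (giving $\mathrm{TV}$ convergence) while $\epsilon_n \log k_n / N \to \infty$ (giving divergence of the characteristic-matrix entry); the piecewise choice above is what makes this possible whenever $N = o(\log k_n)$ along the path.
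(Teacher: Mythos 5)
Your argument is correct, but it takes a genuinely different route from the paper's. The paper perturbs the independent distribution by a \emph{single fixed} mixture: with probability $1-\ep$ the point is uniform on $[0,1/2]^2$ and with probability $\ep$ it lies uniformly on the diagonal segment of $[1/2,1]^2$. Because that singular line component supports arbitrarily fine diagonal grids at full mutual information, one gets $I^*(Z_\ep,k,\ell) \geq \ep \log\min\{k-1,\ell-1\}$ for \emph{all} $k,\ell$ simultaneously, so along the hypothesized path $M^N(Z_\ep)_{k,\ell} \to \infty$ and $\sup M^N(Z_\ep) = \infty$ for every fixed $\ep > 0$; this is a one-line lower bound needing no calibration and exhibits an infinite discontinuity in every statistical-distance neighborhood. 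You instead build a \emph{sequence} of absolutely continuous block-diagonal perturbations $P_n$, each tuned to one resolution $(k_n,\ell_n)$ on the path, with mixing weight $\epsilon_n \to 0$ balanced against $N$; this costs you an exact mutual-information computation, an asymptotic analysis, and a case split on whether $\min\{k_n,\ell_n\}$ is bounded, all of which check out (your mixture formula and the estimates $I \sim \epsilon_n\log(\epsilon_n k_n)$ and $I\sim (k^*-1)\epsilon_n^2/2$ are right). What your version buys is that the discontinuity is witnessed by bounded densities converging to the uniform density, rather than by a distribution with a singular component, and it quantifies how little mass must be moved; what it loses is brevity and the stronger ``infinite value arbitrarily close by'' conclusion. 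Two immaterial slips: the branch $\epsilon_n = 1/\log k_n$ of your maximum is active when $N \leq 1/\log k_n$, not $N < 1/\log^2 k_n$ (and in fact the case split is unnecessary, since $\epsilon_n \geq \sqrt{N/\log k_n}$ already gives $\epsilon_n \log k_n / N \geq \sqrt{\log k_n / N} \to \infty$); and you should note, as the paper does elsewhere, that divisibility of $\ell_n$ by $k_n$ is being neglected, since otherwise the marginals of $Q_n|_{G_n}$ are only approximately uniform, which perturbs $I((P_n)|_{G_n})$ by lower-order terms only.
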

\begin{proof}
See Appendix~\ref{app:MIdiscontinuous}
\end{proof}

The above proposition implies that the ``smoothing'' that $\popMIC$ applies to mutual information is necessary in some sense. In particular, one corollary of the proposition is that mutual information with no smoothing will contain a disconuity.
\begin{corollary}
\label{cor:MIdiscontinuous}
Mutual information is not continuous on $\mathcal{P}([0,1]\times[0,1]) \subset \P(\R^2)$.
\end{corollary}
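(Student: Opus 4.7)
The plan is to deduce the corollary directly from Proposition~\ref{prop:MIdiscontinuous} by choosing the trivial normalization $N \equiv 1$. With this choice, the characteristic matrix with normalization $N$ has entries $M^N(X,Y)_{k,\ell} = I^*((X,Y),k,\ell) = \max_{G \in G(k,\ell)} I((X,Y)|_G)$, so
\[
\sup M^N(X,Y) \;=\; \sup_{k,\ell} \max_{G \in G(k,\ell)} I((X,Y)|_G) \;=\; \sup_{G} I((X,Y)|_G).
\]
The standard identity $I(X,Y) = \sup_G I((X,Y)|_G)$ (Chapter 8 of \cite{Cover2006}, already invoked in Section~\ref{subsec:defining_popMIC}) then gives $\sup M^N(X,Y) = I(X,Y)$.

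Next I would verify the hypothesis of Proposition~\ref{prop:MIdiscontinuous}, namely that $N(k,\ell) = 1 = o(\log \min\{k,\ell\})$ along some infinite path in $\mathbb{N} \times \mathbb{N}$. The diagonal $k = \ell \to \infty$ works since $\log \min\{k,\ell\} = \log k \to \infty$, so $1/\log k \to 0$. Applying the proposition with this $N$ therefore yields that $\sup M^N$ is not continuous on $\mathcal{P}([0,1]\times[0,1])$, which by the identification above says exactly that $I$ is not continuous on $\mathcal{P}([0,1]\times[0,1])$.

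There is essentially no obstacle to overcome beyond bookkeeping: all of the real work is hidden inside Proposition~\ref{prop:MIdiscontinuous}. The only thing that could use a comment is that the proposition is stated for the matrix-valued map $M^N$ and for $\sup M^N$, and we need the latter; the above identification of $\sup M^N$ with $I$ under $N \equiv 1$ is what lets us transfer discontinuity of $\sup M^N$ to discontinuity of $I$ itself.
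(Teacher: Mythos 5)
Your proposal is correct and is exactly the paper's argument: the paper's proof is the one-line observation that mutual information is the supremum of $M^N$ with $N \equiv 1$, which you have simply spelled out (including the verification that $1 = o(\log\min\{k,\ell\})$ along the diagonal and the identification $\sup M^N = \sup_G I((X,Y)|_G) = I(X,Y)$). Nothing further is needed.
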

\begin{proof}
Mutual information is the supremum of $M^N$ with $N \equiv 1$.
\end{proof}
The same result can also be shown for the squared Linfoot correlation~\citep{speed2011correlation,linfoot1957informational}, which equals $1 - 2^{-2I}$ where $I$ represents mutual information. Thus, though the Linfoot correlation smoothes the mutual information enough to cause it to lie in the unit interval, it does not smooth the mutual information sufficiently to cause it to be continuous.

As we remarked previously, these results, when contrasted with the uniform continuity of $\popMIC$, allow us to view the latter as a canonical ``minimally smoothed'' version of mutual information that is uniformly continuous. This view gives a meaningful interpretation to the normalization used in $\popMIC$. Understanding $\popMIC$ as having smoothness properties not shared by mutual information also suggests that estimators of $\popMIC$ may have better statistical properties than estimators of ordinary mutual information. This is consistent with the hardness-of-estimation result in \cite{ding2013copula} and is also borne out empirically in \cite{reshef2015comparisons}.

\subsection{Third alternate characterization: \texorpdfstring{$\popMIC$}{MIC*} is the supremum of the boundary of the characteristic matrix}
We now show the third alternate view of $\popMIC$: that it can be equivalently defined as the supremum over a \textit{boundary} of the characteristic matrix rather than as a supremum over all of the entries of the matrix. This characterization of $\popMIC$ will serve as the foundation both for our approach to computing $\popMIC(X,Y)$ as well as the new estimator of $\popMIC$ that we introduce later in this paper.

We begin by defining what we mean by the boundary of the characteristic matrix. Our definition rests on the following observation.
\begin{proposition}
Let $M$ be a population characteristic matrix. Then for $\ell \geq k$, $M_{k,\ell} \leq M_{k, \ell+1}$.
\end{proposition}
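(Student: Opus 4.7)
The plan is to reduce the monotonicity claim to a comparison of numerators, since for $\ell \geq k$ the denominators in the definition of $M_{k,\ell}$ coincide. Specifically, $\min\{k,\ell\} = k$ and also $\min\{k,\ell+1\} = k$ (because $\ell+1 > \ell \geq k$), so the normalizations $\log\min\{k,\ell\}$ and $\log\min\{k,\ell+1\}$ are both equal to $\log k$. Thus it suffices to prove the inequality
\[
I^*((X,Y),k,\ell) \leq I^*((X,Y),k,\ell+1).
\]

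For this, I would use the fact, noted in the Preliminaries section, that $G(k,\ell)$ consists of all $k$-by-$\ell$ grids, possibly with empty rows or columns. Given any $G \in G(k,\ell)$, I would construct a grid $G' \in G(k,\ell+1)$ by inserting an extra column separator at any location that creates an empty column (for instance, placing two adjacent column separators). Since the new column carries zero probability mass under the distribution of $(X,Y)$, the discretized random variable $(X,Y)|_{G'}$ differs from $(X,Y)|_G$ only by the addition of an atom of probability zero in its support, and so
\[
I((X,Y)|_{G'}) = I((X,Y)|_G),
\]
by standard properties of discrete mutual information (zero-probability events contribute nothing to the entropy or mutual information sums). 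Hence every value achievable by grids in $G(k,\ell)$ is also achievable by grids in $G(k,\ell+1)$, giving the desired inequality between the two maxima (or suprema, if one prefers to argue with an arbitrary $\varepsilon$-near-optimal $G$ in case the maximum is not attained).

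Combining the two observations, $M_{k,\ell} = I^*((X,Y),k,\ell)/\log k \leq I^*((X,Y),k,\ell+1)/\log k = M_{k,\ell+1}$, which is exactly the claim. There is no real obstacle here: the only subtlety worth flagging explicitly is the coincidence of denominators in the regime $\ell \geq k$, since without that observation one might expect a more delicate argument comparing $\log\ell$ with $\log(\ell+1)$. The inequality is a reflection of the fact that enlarging the allowed grid family can only increase the supremum, together with the fact that the normalization in the characteristic matrix is constant along the "long side" direction once one has passed the diagonal.
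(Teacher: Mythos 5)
Your proof is correct and follows essentially the same route as the paper's: the denominators coincide since $\min\{k,\ell\}=\min\{k,\ell+1\}=k$ for $\ell\geq k$, and the numerator is monotone because an empty column can always be added without changing the mutual information. You simply spell out in more detail the "empty column" step that the paper states in one line.
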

\begin{proof}
Let $(X,Y)$ be the random variable in question. Since we can always let a row/column be empty, we know that $I^*((X,Y),k,\ell) \leq I^*((X,Y), k, \ell+1)$. And since $\ell, \ell+1 \geq k$, we know that $M_{k,\ell} = I^*((X,Y),k,\ell)/\log k \leq I^*((X,Y),k,\ell+1)/\log k = M_{k, \ell+1}$.
\end{proof}

Since the entries of the characteristic matrix are bounded, the monotone convergence theorem then gives the following corollary. In the corollary and henceforth, we let $M_{k,\uparrow} = \lim_{\ell \rightarrow \infty} M_{k,\ell}$ and define $M_{\uparrow, \ell}$ similarly.
\begin{corollary}
\label{cor:boundaryIsSup}
Let $M$ be a population characteristic matrix. Then $M_{k,\uparrow}$ exists, is finite, and equals $\sup_{\ell \geq k} M_{k,\ell}$. The same is true for $M_{\uparrow,\ell}$.
\end{corollary}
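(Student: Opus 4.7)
The plan is to reduce the statement to the monotone bounded convergence theorem for sequences of real numbers, applied entrywise. Fix $k$ and consider the sequence $\{M_{k,\ell}\}_{\ell \geq k}$. By the preceding proposition, this sequence is non-decreasing in $\ell$. I also need an upper bound in order to invoke monotone convergence, and this is where the phrase ``the entries of the characteristic matrix are bounded'' (from the corollary statement) comes in: for any grid $G \in G(k,\ell)$ with $\ell \geq k$, the discrete mutual information $I((X,Y)|_G)$ is at most $\log \min\{k,\ell\} = \log k$, since mutual information of a joint distribution on a $k$-by-$\ell$ grid cannot exceed the logarithm of the smaller alphabet size. Dividing by $\log \min\{k,\ell\} = \log k$ gives $M_{k,\ell} \leq 1$ for all $\ell \geq k$.

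Given monotonicity and the uniform upper bound of $1$, the monotone convergence theorem for real sequences implies that $\lim_{\ell \to \infty} M_{k,\ell}$ exists in $[0,1]$ and equals $\sup_{\ell \geq k} M_{k,\ell}$. This is exactly the first assertion. The second assertion, about $M_{\uparrow,\ell}$, follows by the same argument with the roles of $k$ and $\ell$ swapped, using the symmetric version of the preceding proposition (the same argument in the proposition's proof works verbatim for adding empty columns instead of empty rows when $k \geq \ell$).

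There is no real obstacle here: the only mildly non-trivial input is the information-theoretic bound $I((X,Y)|_G) \leq \log \min\{k,\ell\}$, which is a standard fact and is exactly what makes the normalization $\log \min\{k,\ell\}$ in the definition of $M$ produce entries in $[0,1]$. Everything else is a direct application of monotone convergence to a bounded, non-decreasing real sequence.
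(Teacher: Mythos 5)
Your proof is correct and follows essentially the same route as the paper, which likewise deduces the corollary from the preceding monotonicity proposition together with boundedness of the entries via the monotone convergence theorem. The only addition is that you explicitly justify the bound $M_{k,\ell}\leq 1$ using $I((X,Y)|_G)\leq \log\min\{k,\ell\}$, which the paper takes for granted.
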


The above corollary allows us to define the {\em boundary} of the characteristic matrix.
\begin{definition}
\label{def:boundary}
Let $M$ be a population characteristic matrix. The {\em boundary} of $M$ is the set
\[
\partial M = \{ M_{k, \uparrow} : 1 < k < \infty \} \bigcup \{ M_{\uparrow, \ell} : 1 < \ell < \infty \}.
\]
\end{definition}

The theorem below then gives a relationship between the boundary of the characteristic matrix and $\popMIC$.
\begin{theorem}
\label{thm:MICinTermsOfBoundary}
Let $(X,Y)$ be a random variable. We have
\[ \popMIC(X,Y) = \sup \partial M(X,Y) \]
where $M(X,Y)$ is the population characteristic matrix of $(X, Y)$.
\end{theorem}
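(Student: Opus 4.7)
The plan is to prove the equality by establishing both inequalities, with the direction $\popMIC(X,Y) \le \sup \partial M(X,Y)$ being the substantive one and the reverse inequality being essentially immediate from the definition of the boundary.

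First I would handle the easy direction. By Corollary~\ref{cor:boundaryIsSup}, every element of $\partial M(X,Y)$ is of the form $M_{k,\uparrow}=\sup_{\ell\ge k}M_{k,\ell}$ or $M_{\uparrow,\ell}=\sup_{k\ge\ell}M_{k,\ell}$, hence is a supremum of entries that already appear in $M(X,Y)$. Therefore every element of $\partial M(X,Y)$ is at most $\sup M(X,Y)=\popMIC(X,Y)$, which gives $\sup\partial M(X,Y)\le\popMIC(X,Y)$.

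For the reverse inequality, the goal is to show that every entry $M_{k,\ell}$ is dominated by an element of $\partial M(X,Y)$. I would split into two cases based on whether $k\le\ell$ or $k\ge\ell$. In the case $k\le\ell$, the monotonicity proposition preceding Corollary~\ref{cor:boundaryIsSup} gives $M_{k,\ell}\le M_{k,\ell+1}\le M_{k,\ell+2}\le\cdots$, so by Corollary~\ref{cor:boundaryIsSup} we get $M_{k,\ell}\le M_{k,\uparrow}\in\partial M(X,Y)$. For the case $k\ge\ell$, I would invoke the analogous row-direction monotonicity: the same ``we can always let a row/column be empty'' argument used in the monotonicity proposition applies equally to adding an empty row, giving $I^*((X,Y),k,\ell)\le I^*((X,Y),k+1,\ell)$; since $\min\{k,\ell\}=\min\{k+1,\ell\}=\ell$ whenever $k\ge\ell$, the normalization $\log\min\{k,\ell\}$ is unchanged, so $M_{k,\ell}\le M_{k+1,\ell}$ along this range. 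Passing to the limit and applying the row-version of Corollary~\ref{cor:boundaryIsSup} yields $M_{k,\ell}\le M_{\uparrow,\ell}\in\partial M(X,Y)$. Taking suprema over all $(k,\ell)$ gives $\popMIC(X,Y)=\sup M(X,Y)\le\sup\partial M(X,Y)$.

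I do not anticipate a genuine obstacle; the argument is essentially an organizational one, packaging the monotonicity beyond the diagonal (which holds both in the row and column directions, since $I^*$ only increases upon padding with empty rows or columns and the denominator $\log\min\{k,\ell\}$ is constant on each ``half-strip'' past the diagonal). The only thing worth being careful about is explicitly citing or briefly re-deriving the row-direction analogue of the proposition, since the text only states the column-direction version, and checking that the normalization indeed remains fixed in the relevant range so the inequality survives after dividing by $\log\min\{k,\ell\}$.
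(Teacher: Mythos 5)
Your proof is correct and follows essentially the same route as the paper: the easy direction via Corollary~\ref{cor:boundaryIsSup} (each boundary element is a supremum of entries of $M$), and the reverse via the case split $k\le\ell$ versus $k\ge\ell$, bounding $M_{k,\ell}$ by $M_{k,\uparrow}$ or $M_{\uparrow,\ell}$ respectively. Your extra care in spelling out the row-direction monotonicity is harmless but not strictly needed, since the statement $M_{\uparrow,\ell}=\sup_{k\ge\ell}M_{k,\ell}$ in Corollary~\ref{cor:boundaryIsSup} already gives $M_{k,\ell}\le M_{\uparrow,\ell}$ directly in that case.
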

\begin{proof}
The following argument shows that every entry of $M$ is at most $\sup \partial M$: fix a pair $(k, \ell)$ and notice that either $k \leq \ell$, in which case $M_{k,\ell} \leq M_{k,\uparrow}$, or $\ell \leq k$, in which case $M_{k,\ell} \leq M_{\uparrow, \ell}$. Thus, $\popMIC \leq \sup \{M_{\uparrow, \ell}\} \cup \{M_{k, \uparrow}\} = \sup \partial M$.

On the other hand, Corollary~\ref{cor:boundaryIsSup} shows that each element of $\partial M$ is a supremum over some elements of $M$. Therefore, $\sup \partial M$, being a supremum over suprema of elements of $M$, cannot exceed $\sup M = \popMIC$.
\end{proof}

\subsection{Computing \texorpdfstring{$\popMIC$}{MIC*} efficiently}
\label{subsec:alg_infinite_data}
The importance of the characterization in Theorem~\ref{thm:MICinTermsOfBoundary} from the previous sub-section is computational. Specifically, elements of the boundary of the characteristic matrix can be expressed in terms of a maximization over (one-dimensional) partitions rather than (two-dimensional) grids, the former being much quicker to compute exactly. This is stated in the theorem below.
\begin{theorem}
\label{thm:explicitValueOfBoundary}
Let $M$ be a population characteristic matrix. Then $M_{k, \uparrow}$ equals
\[ \max_{P \in P(k)} \frac{I(X, Y|_P)}{\log{k}} \]
where $P(k)$ denotes the set of all partitions of size at most $k$.
\end{theorem}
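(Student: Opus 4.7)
The plan is to recognize that $M_{k,\uparrow}$ involves a supremum over grids whose row count is capped at $k$ but whose column count is unbounded, and to collapse the unbounded column side using the classical fact that mutual information is the supremum of its finite discretizations.

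First I would unfold the definitions. Since $\ell \geq k$ forces $\log \min\{k,\ell\} = \log k$, Corollary~\ref{cor:boundaryIsSup} gives
\[
M_{k,\uparrow} = \frac{1}{\log k}\sup_{\ell \geq k} \sup_{G \in G(k,\ell)} I((X,Y)|_G).
\]
Every grid $G \in G(k,\ell)$ factors as a pair $(P, Q)$ with $P$ a row-partition of size at most $k$ and $Q$ a column-partition of size at most $\ell$, and $I((X,Y)|_G) = I(\mbox{col}_Q(X),\, \mbox{row}_P(Y))$. Allowing empty rows and columns does not change the induced distribution, and as $\ell \to \infty$ we range over all finite column partitions. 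So the double supremum rewrites as
\[
\sup_{P \in P(k)} \;\sup_{Q}\; I(\mbox{col}_Q(X),\, \mbox{row}_P(Y)),
\]
with the inner sup taken over all finite column partitions $Q$.

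The key step is then to invoke the standard identity $I(U,V) = \sup_Q I(\mbox{col}_Q(U), V)$, valid for any jointly distributed $U,V$ with $V$ discrete (the realization of mutual information as a supremum over one-sided finite discretizations; see Chapter 8 of \cite{Cover2006}). Applying this with $U = X$ and $V = \mbox{row}_P(Y)$, which is discrete since $P$ is finite, we obtain
\[
\sup_Q I(\mbox{col}_Q(X),\, \mbox{row}_P(Y)) = I(X,\, \mbox{row}_P(Y)) = I(X, Y|_P).
\]
Substituting back yields $M_{k,\uparrow} = \sup_{P \in P(k)} I(X, Y|_P)/\log k$, which is the desired formula up to replacing $\sup$ with $\max$.

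The remaining wrinkle is the use of $\max$ rather than $\sup$. Since refining a partition can only increase $I(X, Y|_P)$ (the coarser discretization is a function of the finer, so the data-processing inequality applies), and any partition with fewer than $k$ parts can be padded with empty parts up to size $k$ without affecting $I(X, Y|_P)$, the supremum may be taken over partitions of size exactly $k$. I expect this attainment-of-sup step to be the mildest technicality in the argument --- either it follows from a compactness argument on the space of ordered $(k-1)$-tuples of cutpoints in $\overline{\mathbb R}$ together with upper semicontinuity of $P \mapsto I(X, Y|_P)$, or the authors are using $\max$ in the same loose sense as in Definition~\ref{def:sample_char_matrix}, where $I^*$ is likewise written with a $\max$ over the infinite set $G(k,\ell)$.
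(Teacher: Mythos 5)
Your proposal is correct and takes essentially the same route as the paper: the paper proves the two inequalities separately, using the data-processing inequality $I((X,Y)|_{(P,Q)}) \leq I(X, Y|_P)$ for the upper bound and the limit $I(X|_{Q_\ell}, Y|_P) \to I(X, Y|_P)$ along equipartitions $Q_\ell$ for the lower bound, which together are exactly the one-sided discretization identity $\sup_Q I(X|_Q, Y|_P) = I(X, Y|_P)$ that you invoke to collapse the inner supremum. Your closing remark about $\max$ versus $\sup$ is apt: the paper's own proof silently switches to $\sup_P$ in its final display, so it is using $\max$ in the same loose sense as in Definition~\ref{def:sample_char_matrix}.
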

\begin{proof}
See Appendix~\ref{app:explicitValueOfBoundary}.
\end{proof}

To formally state how this will help us compute $\popMIC$, we note that Theorems~\ref{thm:MICinTermsOfBoundary} and~\ref{thm:explicitValueOfBoundary} above together give the following corollary.

\begin{corollary}
\label{cor:altDef1}
Let $(X,Y)$ be a random variable, and let $\mathbb{P}$ be the set of finite-size partitions. Then
\[ \popMIC(X,Y) = \sup \left\{ \frac{I(X, Y|_P)}{\log |P|} : P \in \mathbb{P} \right\} \bigcup \left\{ \frac{I(X|_P, Y)}{\log |P|} : P \in \mathbb{P} \right\} \]
where $|P|$ is the number of bins in the partition $P$.
\end{corollary}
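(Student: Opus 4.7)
The plan is to stitch together Theorem~\ref{thm:MICinTermsOfBoundary} and Theorem~\ref{thm:explicitValueOfBoundary}. The former expresses $\popMIC(X,Y)$ as $\sup \partial M(X,Y)$, i.e., a supremum over the two families $\{M_{k,\uparrow}\}_{k>1}$ and $\{M_{\uparrow,\ell}\}_{\ell>1}$, and the latter (together with its column version, obtained by swapping the roles of $X$ and $Y$ in its proof) rewrites each boundary entry as a normalized mutual information under an optimal one-dimensional partition. The substantive work, and essentially the only obstacle, is reconciling a normalization mismatch: Theorem~\ref{thm:explicitValueOfBoundary} normalizes by $\log k$, where $k$ is only an \emph{upper bound} on the partition size, whereas the corollary normalizes by $\log|P|$, the \emph{actual} size of the partition.

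To handle this, I would set $f(P) = I(X,Y|_P)/\log|P|$ for row partitions with $|P| > 1$, and $g(P) = I(X|_P,Y)/\log|P|$ for column partitions, and separately show
\[ \sup_{k>1} M_{k,\uparrow} \;=\; \sup_{P \in \mathbb{P}} f(P) \qquad \text{and} \qquad \sup_{\ell>1} M_{\uparrow,\ell} \;=\; \sup_{P \in \mathbb{P}} g(P); \]
the two statements are symmetric, so only the first needs a proof. For the $(\leq)$ direction, fix $k$ and let $P^\star \in P(k)$ attain the maximum in Theorem~\ref{thm:explicitValueOfBoundary}; since $|P^\star| \leq k$ yields $\log|P^\star| \leq \log k$, one has $f(P^\star) \geq I(X,Y|_{P^\star})/\log k = M_{k,\uparrow}$, and taking the supremum over $k$ gives the inequality. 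For the $(\geq)$ direction, fix any $P \in \mathbb{P}$ with $|P| > 1$ and set $k = |P|$; then $P \in P(k)$, so $f(P) = I(X,Y|_P)/\log k \leq \max_{P' \in P(k)} I(X,Y|_{P'})/\log k = M_{k,\uparrow}$, and taking the supremum over $P$ closes the loop.

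Beyond this normalization bookkeeping I do not anticipate any real obstacles. Partitions with $|P|=1$ carry zero mutual information and have an undefined denominator, so they can be excluded from the suprema in $\mathbb{P}$ without loss. Combining the two displayed equalities with Theorem~\ref{thm:MICinTermsOfBoundary} then yields the claim, since $\sup\partial M = \max\bigl(\sup_{k>1} M_{k,\uparrow},\,\sup_{\ell>1} M_{\uparrow,\ell}\bigr)$ equals the supremum of the two right-hand sides above, which is precisely the expression in the corollary.
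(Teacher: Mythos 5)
Your proposal is correct and follows exactly the route the paper intends: the corollary is stated as an immediate consequence of Theorems~\ref{thm:MICinTermsOfBoundary} and~\ref{thm:explicitValueOfBoundary}, and your argument simply fills in the details of combining them. Your careful handling of the normalization mismatch (replacing $\log k$ by $\log|P|$ via the two-sided inequality) and your exclusion of trivial partitions with $|P|=1$ are exactly the bookkeeping steps the paper leaves implicit.
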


The expressions in the above corollary involve maximization only over one-dimensional partitions rather than two-dimensional grids. We can exploit this fact to give an algorithm for computing elements of the boundary of the characteristic matrix to arbitrary precision. To do so, we utilize as a subroutine a dynamic programming algorithm from \cite{MINE} called \algname{OptimizeXAxis}. Before continuing, we therefore give a brief overview of that algorithm.

\paragraph{Overview of \algname{OptimizeXAxis} algorithm from \cite{MINE}}
The \algname{OptimizeXAxis} algorithm takes as input a set $D$ of $n$ data points, a fixed partition into columns\footnote{
Despite its name, the \algname{OptimizeXAxis} algorithm can be used to optimize a partition of either axis. In our description of the algorithm here, we choose to describe the algorithm as it would work for optimizing a partition of the \textit{y-axis} rather than the x-axis. This is for notational coherence of this paper only.}
$Q$ of size $\ell$, a ``master'' partition into rows $\Pi$, and a number $k$. The algorithm returns, for $2 \leq i \leq k$, the partition into rows $P_i \subset \Pi$ that maximizes the mutual information of $D|_{(P_i, Q)}$ among all sub-partitions of $\Pi$ of size at most $i$. The algorithm works by exploiting the fact that, conditioned on the location $y$ of the top-most line of $P_i$, the optimization of the rest of $P_i$ can be formulated as a sub-problem that depends only on the data points below $y$. The algorithm uses dynamic programming to store and reuse solutions to these subproblems, resulting in a runtime of $O(|\Pi|^2k\ell)$. If a black-box algorithm is used to compute each required mutual information in time at most $T$, then the runtime of the algorithm can be shown to be $O(T k |\Pi|)$.

The following theorem shows that the theory developed about the boundary of the characteristic matrix, together with \algname{OptimizeXAxis}, yields an efficient algorithm for computing entries of the boundary to arbitrary precision.

\begin{theorem}
\label{thm:alg_infinite_data}
Given a random variable $(X,Y)$, $M_{k, \uparrow}$ (resp. $M_{\uparrow, \ell}$) is computable to within an additive error of $O(k \ep \log(1/(k\ep))) + E$ (resp. $O(\ell \ep \log(1/(\ell\ep))) + E$) in time $O(kT(E)/\ep)$ (resp. $O(\ell T(E) / \ep)$), where $T(E)$ is the time required to numerically compute the mutual information of a continuous distribution to within an additive error of $E$.
\end{theorem}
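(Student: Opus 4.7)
The plan is to combine Theorem~\ref{thm:explicitValueOfBoundary}, which reduces the computation of $M_{k,\uparrow}$ to maximizing $I(X,Y|_P)/\log k$ over one-dimensional partitions $P$ of size at most $k$, with a two-stage approximation: first discretize the set of allowed partition boundaries, then exactly optimize over the discretized family using the \algname{OptimizeXAxis} dynamic program.

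First I would fix a ``master'' partition $\Pi$ of the $Y$-axis obtained by equipartitioning the marginal distribution of $Y$ into $N = \lceil 1/\epsilon \rceil$ rows, each of probability mass at most $\epsilon$. The candidate partitions considered by the algorithm are exactly those $P \subseteq \Pi$ with $|P| \le k$, and the quantity actually computed is $\widetilde M_{k,\uparrow} = \max \{I(X,Y|_P)/\log k : P \subseteq \Pi,\ |P|\le k\}$.

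Next I would bound the cost of restricting the search to sub-partitions of $\Pi$. Given any partition $P^\star$ of $Y$ into at most $k$ bins, construct $\tilde P \subseteq \Pi$ by rounding each of the at most $k-1$ internal boundaries of $P^\star$ to the nearest boundary of $\Pi$. Since every $\Pi$-row has mass at most $\epsilon$, each rounding step displaces the boundary by a set of $Y$-mass at most $\epsilon$, so the region on which $P^\star(Y)$ and $\tilde P(Y)$ disagree has probability at most $(k-1)\epsilon$. Coupling on the agreement event shows that the joint laws of $(X, P^\star(Y))$ and $(X, \tilde P(Y))$ are within $(k-1)\epsilon$ in total-variation distance. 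A Fannes-type continuity estimate for discrete entropy, applied to $H(P(Y))$ and (pointwise in $x$, then integrated against the marginal of $X$) to $H(P(Y)\mid X)$, then gives
\[
\bigl|\,I(X,P^\star(Y)) - I(X,\tilde P(Y))\,\bigr| \;=\; O\!\bigl(k\epsilon\,\log(1/(k\epsilon))\bigr),
\]
and hence the same bound on $|M_{k,\uparrow}-\widetilde M_{k,\uparrow}|$ up to the $\log k$ normalization.

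Finally I would exactly solve the discretized optimization by invoking \algname{OptimizeXAxis} with master partition $\Pi$, target size $k$, and the continuous random variable $X$ in the role of the column variable (so each ``cell'' mutual information is a continuous–discrete mutual information evaluated numerically). The dynamic program performs $O(k|\Pi|) = O(k/\epsilon)$ mutual-information evaluations; each takes time $T(E)$ and contributes additive error at most $E$, which propagates to at most $E$ additive error in the final maximum. The total runtime is thus $O(kT(E)/\epsilon)$ and the total additive error is $O(k\epsilon\log(1/(k\epsilon)))+E$, as claimed. The statement for $M_{\uparrow,\ell}$ follows by the same argument with the roles of the two axes interchanged.

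The main obstacle is the continuity estimate in the middle step: although the Fannes/Alicki--Fannes--Winter bound gives the required $O(\delta\log(1/\delta))$ control on discrete entropy under $\delta$-perturbations, one must handle the conditional term $H(P(Y)\mid X)$ carefully, showing that the pointwise-in-$x$ discrete continuity bound integrates against the marginal of $X$ to yield a global bound depending only on the joint total-variation distance. This is precisely the kind of estimate needed in (and so can be reused from) the proof of Theorem~\ref{thm:continuityofM}.
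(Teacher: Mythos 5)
Your proposal is correct and follows essentially the same route as the paper's proof: both fix a master equipartition $\Pi$ of size $\lceil 1/\ep\rceil$, round the optimal partition's at most $k-1$ boundaries to $\Pi$ (moving at most $(k-1)\ep$ mass), bound the resulting change in mutual information by the $O(\delta\log(1/\delta)+\delta\log k)$ continuity estimate of Proposition~\ref{prop:boundedVariationOfI}, and then run \algname{OptimizeXAxis} over sub-partitions of $\Pi$ in time $O(kT(E)/\ep)$. Your extra care in integrating the pointwise-in-$x$ Fannes bound against the marginal of $X$ is a reasonable way to justify applying the discrete continuity estimate when $X$ is continuous, but it does not change the substance of the argument.
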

\begin{proof}
See Appendix~\ref{app:alg_infinite_data}.
\end{proof}

The algorithm proposed in Theorem~\ref{thm:alg_infinite_data} gives us a polynomial-time method for computing any finite subset of the boundary $\partial M$ of the population characteristic matrix $M(X,Y)$ of a random variable $(X,Y)$. Thus, if we have some $k_0, \ell_0$ such that the maximum of the finite subset $\{ M_{k, \uparrow}, M_{\uparrow, \ell} : k \leq k_0, \ell \leq \ell_0 \}$ of $\partial M$ will be $\ep$-close to the supremum of the entire set $\partial M$, we can compute $\popMIC(X,Y)$ to within an error of $\ep$. Though we usually do not have precise knowledge of $k_0$ and $\ell_0$, for simple distributions it is often easy to make very conservative educated guesses for them. This algorithm therefore allows us to approximate $\popMIC(X,Y)$ very well in practice.

Being able to compute $\popMIC(X,Y)$ has two main advantages. The first is that it allows us to assess in simulations the large-sample properties of $\popMIC$ independent of any estimator. This is done in the companion paper \citep{reshef2015comparisons}, which shows that $\popMIC$ achieves high equitability with respect to $R^2$ on a set of noisy functional relationships thereby confirming that statistically efficient estimation of $\popMIC$ is a worthwhile goal.

The second advantage of being able to compute $\popMIC(X,Y)$ is that we can empirically assess the bias, variance, and expected squared error of estimators of $\popMIC$ by taking a distribution, computing $\popMIC$, and then comparing the result to estimates of it based on finite samples. In the next section, we introduce a new estimator $\MICestE$ of $\popMIC$ and carry out such an analysis to compare its statistical properties to those of the statistic $\MIC$ from \cite{MINE}.

\section{Estimating \texorpdfstring{$\popMIC$}{MIC*} with \texorpdfstring{$\MICestE$}{MICe}}
As we have shown, $\popMIC$ is actually the population value of the statistic $\MIC$ introduced in \cite{MINE}. However, though consistent, the statistic $\MIC$ is not known to be efficiently computable and in \cite{MINE} a heuristic approximation algorithm called \algname{Approx-MIC} was computed instead. In this section, we leverage the theory we have developed here to introduce a new estimator of $\popMIC$ that is both consistent and efficiently computable. The new estimator, called $\MICestE$, actually has better runtime complexity even than the heuristic \algname{Approx-MIC} algorithm, and runs orders of magnitude faster in practice.

The estimator $\MICestE$ is based on one of the alternate characterizations of $\popMIC$ proven in the previous section. Namely, if $\popMIC$ can be viewed as the supremum of the {\em boundary} of the characteristic matrix rather than of the entire matrix, then only the boundary of the matrix must be accurately estimated in order to estimate $\popMIC$. This has the advantage that, whereas computing individual entries of the sample characteristic matrix involves finding optimal (two-dimensional) grids, estimating entries of the boundary requires us only to find optimal (one-dimensional) partitions. While the former problem is computationally difficult, the latter can be solved using the dynamic programming algorithm from \cite{MINE} that we also employed in Section~\ref{subsec:alg_infinite_data} to compute $\popMIC$ in the large-sample limit.

We formalize this idea via a new object called the {\em equicharacteristic matrix}, which we denote by $[M]$. The difference between $[M]$ and the characteristic matrix $M$ is as follows: while the $k, \ell$-th entry of $M$ is computed from the maximal achievable mutual information using any $k$-by-$\ell$ grid, the $k, \ell$-th entry of $[M]$ is computed from the maximal achievable mutual information using any $k$-by-$\ell$ grid that equipartitions the dimension with more rows/columns. (See Figure~\ref{fig:equicharmatrix_schematic}.) Despite this difference, as the equipartition in question gets finer and finer it becomes indistinguishable from an optimal partition of the same size. This intuition can be formalized to show that the boundary of $[M]$ equals the boundary of $M$, and therefore that $\sup [M] = \sup M = \popMIC$. It will then follow that estimating $[M]$ and taking the supremum \--- as we did with $M$ in the case of $\MIC$ \--- yields a consistent estimate of $\popMIC$.

\subsection{The equicharacteristic matrix}
We now define the equicharacteristic matrix and show that its supremum is indeed $\popMIC$. To do so, we first define a version of $I^*$ that equipartitions the dimension with more rows/columns. Note that in the definition, brackets are used to indicate the presence of an equipartition.

\begin{figure}[t]
\centering
\begin{tabular}{ccc}
\includegraphics[clip=true, trim = 0in 8.85in 6.35in 0in, width=1in]{\pathToCommonFigs/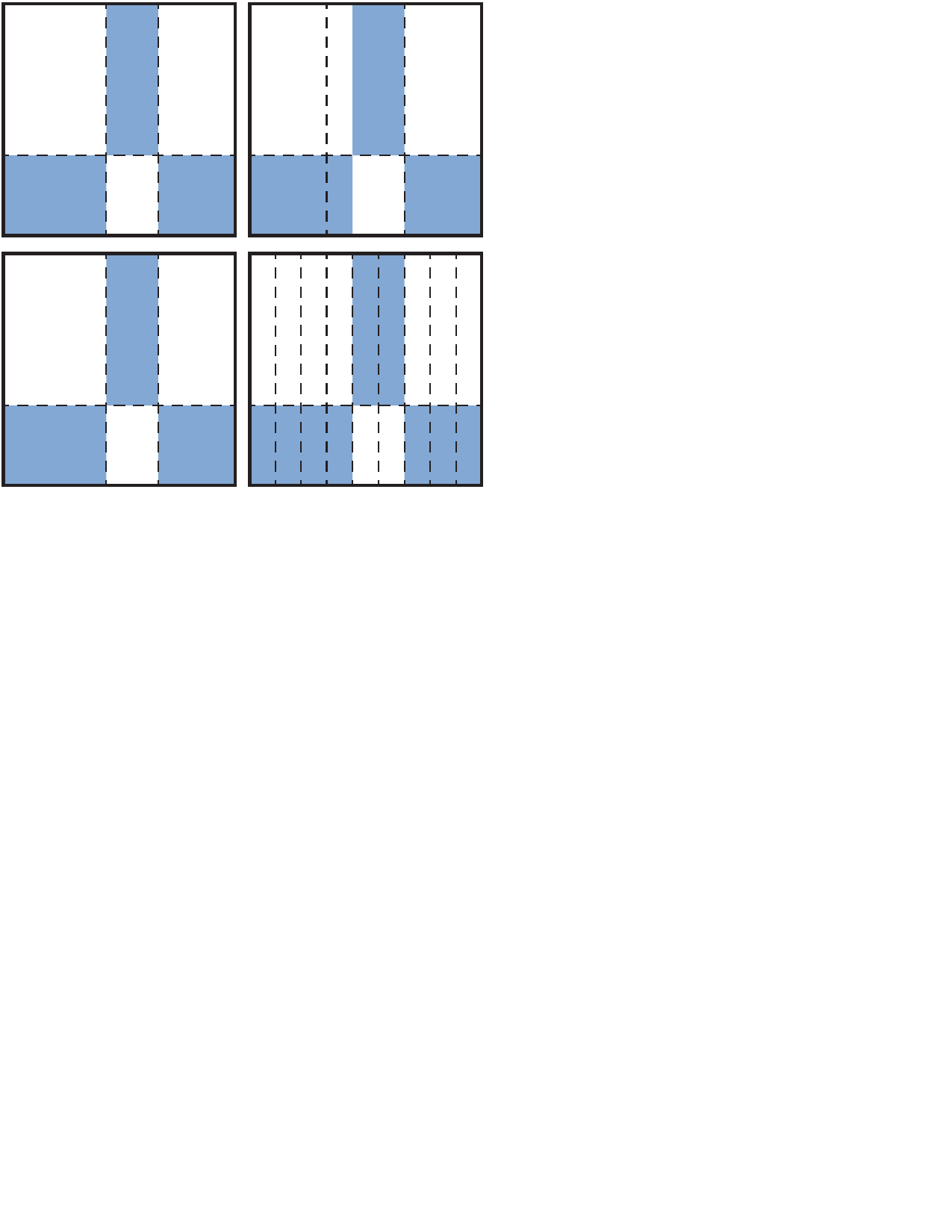} & & \includegraphics[clip=true, trim = 2.1875in 8.85in 4.15in 0in, width=1in]{\pathToCommonFigs/Equicharmatrix_Figure.pdf} \\
$M(X,Y)_{2,3}$ & & $[M](X,Y)_{2,3}$ \\
$I^* = 0.918$   & & $I^{[*]} = 0.613$ \\
\hfill & \hfill \\
\includegraphics[clip=true, trim = 0in 6.625in 6.35in 2.1875in, width=1in]{\pathToCommonFigs/Equicharmatrix_Figure.pdf} & & \includegraphics[clip=true, trim = 2.1875in 6.625in 4.15in 2.1875in, width=1in]{\pathToCommonFigs/Equicharmatrix_Figure.pdf} \\
$M(X,Y)_{2,9}$ & & $[M](X,Y)_{2,9}$ \\
$I^* = 0.918$   & & $I^{[*]} = 0.918$
\end{tabular}
\caption{A schematic illustrating the difference between the characteristic matrix $M$ and the equicharacteristic matrix $[M]$.
\figpart{Top} When restricted to 2 rows and 3 columns, the characteristic matrix $M$ is computed from the optimal 2-by-3 grid. In contrast, the equicharacteristic matrix $[M]$ still optimizes the smaller partition of size 2 but is restricted to have the larger partition be an equipartition of size 3. This results in a lower mutual information of $0.613$.
\figpart{Bottom} When 9 columns are allowed instead of 3, the grid found by the equicharacteristic matrix does not change, since the grid with 3 columns was already optimal. However, now the equicharacteristic matrix uses an equipartition into columns of size 9, whose resolution is able to fully capture the dependence between $X$ and $Y$.}
\label{fig:equicharmatrix_schematic}
\end{figure}

\begin{definition}
Let $(X,Y)$ be jointly distributed random variables. Define
\[ I^* \left( (X,Y), k, [\ell] \right) = \max_{G \in G(k, [\ell])} I \left( (X,Y)|_G \right) \]
where $G(k,[\ell])$ is the set of $k$-by-$\ell$ grids whose y-axis partition is an equipartition of size $\ell$. Define $I^*((X,Y), [k], \ell)$ analogously.

Define $I^{[*]}((X,Y), k, \ell)$ to equal $I^*((X,Y), k, [\ell])$ if $k < \ell$ and $I^*((X,Y), [k], \ell)$ otherwise.
\end{definition}

We now define the equicharacteristic matrix in terms of $I^{[*]}$. In the definition below, we continue our convention of using brackets around a quantity to denote the presence of equipartitions.
\begin{definition}
Let $(X,Y)$ be jointly distributed random variables. The {\em population equicharacteristic matrix} of $(X,Y)$, denoted by $[M](X,Y)$, is defined by
\[ [M](X,Y)_{k,\ell} = \frac{I^{[*]}((X,Y), k, \ell)}{\log \min \{k, \ell\}} \]
for $k, \ell > 1$.
\end{definition}

The boundary of the equicharacteristic matrix can be defined via a limit in the same way as the characteristic matrix. We then have the following theorem.

\begin{theorem}
\label{thm:same_boundary}
Let $(X,Y)$ be jointly distributed random variables. Then $\partial [M] = \partial M$.
\end{theorem}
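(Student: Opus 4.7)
The plan is to reduce the set equality $\partial[M] = \partial M$ to the entry-wise equalities $[M]_{k,\uparrow} = M_{k,\uparrow}$ for every $k \geq 2$ and $[M]_{\uparrow,\ell} = M_{\uparrow,\ell}$ for every $\ell \geq 2$. Since both boundary entries carry the same normalization, this further reduces to showing convergence of the underlying (unnormalized) mutual-information maxima, and showing that the limit defining $[M]_{k,\uparrow}$ exists in the first place. I will describe the $k$-row case; the column case is symmetric.

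For $\ell > k$, unpacking the definitions gives
\[
[M]_{k,\ell} \;=\; \frac{1}{\log k}\,\max_{P_Y \in P(k)} I\bigl(X|_{\mathrm{equi}_\ell},\, Y|_{P_Y}\bigr),
\]
where $\mathrm{equi}_\ell$ denotes the $\ell$-cell equipartition of the $X$-marginal and $P(k)$ is the set of partitions of size at most $k$ (applied here to the $Y$-axis). By Theorem~\ref{thm:explicitValueOfBoundary}, $M_{k,\uparrow} = \tfrac{1}{\log k}\max_{P_Y \in P(k)} I(X, Y|_{P_Y})$. Thus the goal becomes
\[
\lim_{\ell \to \infty}\, \max_{P_Y \in P(k)} I\bigl(X|_{\mathrm{equi}_\ell},\, Y|_{P_Y}\bigr) \;=\; \max_{P_Y \in P(k)} I\bigl(X,\, Y|_{P_Y}\bigr).
\]
The upper bound is immediate from the data processing inequality applied pointwise in $P_Y$: since $X|_{\mathrm{equi}_\ell}$ is a deterministic function of $X$, $I(X|_{\mathrm{equi}_\ell}, Y|_{P_Y}) \leq I(X, Y|_{P_Y})$, and taking maxima preserves this inequality. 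For the lower bound, let $P_Y^*$ attain the max on the right. For any $\eta > 0$, the identity $I(X,Z) = \sup_{P_X} I(X|_{P_X}, Z)$ (Chapter~8 of \cite{Cover2006}) supplies a finite partition $P_X^\eta$ of the $x$-axis with $I(X|_{P_X^\eta}, Y|_{P_Y^*}) > I(X, Y|_{P_Y^*}) - \eta$. For each cut-point of $P_X^\eta$, select the nearest cut-point of $\mathrm{equi}_\ell$ to obtain a coarsening $Q_\ell \subset \mathrm{equi}_\ell$ with $|Q_\ell| \leq |P_X^\eta|$. Because $\mathrm{equi}_\ell$ refines $Q_\ell$, the data processing inequality gives $I(X|_{\mathrm{equi}_\ell}, Y|_{P_Y^*}) \geq I(X|_{Q_\ell}, Y|_{P_Y^*})$. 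As $\ell \to \infty$ the quantiles of $X$ grow dense, so the joint law of $(X|_{Q_\ell}, Y|_{P_Y^*})$ converges in total variation to the law of $(X|_{P_X^\eta}, Y|_{P_Y^*})$, and continuity of mutual information on finite-alphabet distributions yields $I(X|_{Q_\ell}, Y|_{P_Y^*}) \to I(X|_{P_X^\eta}, Y|_{P_Y^*})$. Taking $\liminf$ in $\ell$ and then $\eta \to 0$ furnishes the matching lower bound. Combined with the upper bound, this simultaneously establishes existence of $[M]_{k,\uparrow}$ and its equality with $M_{k,\uparrow}$.

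The main obstacle is exactly this lower-bound step: consecutive equipartitions do not refine each other, so there is no direct monotone-convergence argument in $\ell$. The workaround above recasts the problem as approximating an arbitrary $P_X^\eta$ by a \emph{sub-partition} of a sufficiently fine equipartition, which succeeds because the quantile set of $X$ becomes dense. Edge cases in which the quantile function has jumps (so that cut-points of $P_X^\eta$ cannot be matched exactly) are absorbed by first perturbing $P_X^\eta$ slightly, with arbitrarily small loss in $I$ by the same TV-continuity argument. The $[M]_{\uparrow,\ell} = M_{\uparrow,\ell}$ case is proved by swapping the roles of $X$ and $Y$, and combining the two families of equalities gives the set identity $\partial[M] = \partial M$.
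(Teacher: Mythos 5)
Your proposal is correct and follows essentially the same route as the paper: reduce the set identity to the entry-wise equalities $[M]_{k,\uparrow}=M_{k,\uparrow}$ (and symmetrically for columns), show that for a fixed row partition $P$ the mutual information under ever-finer column equipartitions converges to $I(X,Y|_P)$, and conclude via Theorem~\ref{thm:explicitValueOfBoundary}. The only difference is that the paper cites the standard fact that continuous mutual information is the limit of discrete mutual information over increasingly fine partitions, whereas you supply the details of that convergence (handling the non-nestedness of equipartitions via coarsenings matched to a near-optimal partition), which is a welcome elaboration rather than a different argument.
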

\begin{proof}
See Appendix~\ref{app:same_boundary}.
\end{proof}

Since every entry of the equicharacteristic matrix is dominated by some entry on its boundary, the equivalence of $\partial [M]$ and $\partial M$ yields the following corollary as a simple consequence.
\begin{corollary}
\label{cor:boundary_equichar}
Let $(X,Y)$ be jointly distributed random variables. Then $\sup [M](X,Y) = \popMIC(X,Y)$.
\end{corollary}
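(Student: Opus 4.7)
The plan is to establish the corollary in two short steps, chaining together Theorem~\ref{thm:same_boundary}, Theorem~\ref{thm:MICinTermsOfBoundary}, and a simple dominance argument that reduces $\sup [M]$ to $\sup \partial [M]$.

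First I would prove the sandwich $\sup \partial [M] \leq \sup [M] \leq \sup \partial [M]$. The lower inequality is essentially free: each element of $\partial [M]$ is, by definition, a limit of entries $[M]_{k,\ell}$, and every such entry trivially satisfies $[M]_{k,\ell} \leq \sup [M]$, so the limit inherits this upper bound. The upper inequality is the substantive direction and needs the dominance claim the authors signal. Fix an entry $[M]_{k,\ell}$. Since the class of grids $G(k,[\ell])$ (or $G([k],\ell)$) used to define $I^{[*]}$ is a subfamily of the class $G(k,\ell)$ used to define $I^*$, we have $[M]_{k,\ell} \leq M_{k,\ell}$. Now I apply the monotonicity of $M$ already established in the excerpt: if $k \leq \ell$ then $M_{k,\ell} \leq M_{k,\uparrow}$, and if $\ell \leq k$ then $M_{k,\ell} \leq M_{\uparrow,\ell}$. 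In either case, $[M]_{k,\ell}$ is bounded above by an element of $\partial M$.

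Next I would invoke Theorem~\ref{thm:same_boundary} to replace $\partial M$ by $\partial [M]$, concluding that every entry $[M]_{k,\ell}$ is bounded above by some element of $\partial [M]$ and hence $\sup [M] \leq \sup \partial [M]$. Combining with the reverse inequality from the previous paragraph yields $\sup [M] = \sup \partial [M]$.

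Finally, applying Theorem~\ref{thm:same_boundary} once more gives $\sup \partial [M] = \sup \partial M$, and Theorem~\ref{thm:MICinTermsOfBoundary} identifies the right-hand side with $\popMIC(X,Y)$. Chaining the equalities
\[
\sup [M](X,Y) \;=\; \sup \partial [M](X,Y) \;=\; \sup \partial M(X,Y) \;=\; \popMIC(X,Y)
\]
completes the proof. There is no real obstacle here since all the heavy lifting was done in Theorems~\ref{thm:MICinTermsOfBoundary} and \ref{thm:same_boundary}; the only point requiring any care is verifying that the boundary elements $[M]_{k,\uparrow}$ and $[M]_{\uparrow,\ell}$ are well-defined as limits, which the excerpt notes can be established in the same way as for $M$ and in any case is implicit in Theorem~\ref{thm:same_boundary}.
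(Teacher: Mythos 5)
Your proof is correct and follows essentially the same route as the paper: the paper's one-line justification ("every entry of the equicharacteristic matrix is dominated by some entry on its boundary, and $\partial[M]=\partial M$") is exactly what you spell out via $[M]_{k,\ell}\leq M_{k,\ell}\leq M_{k,\uparrow}$ (or $M_{\uparrow,\ell}$) together with Theorems~\ref{thm:same_boundary} and~\ref{thm:MICinTermsOfBoundary}. Your closing caveat about the existence of the limits defining $\partial[M]$ is also handled correctly, since the proof of Theorem~\ref{thm:same_boundary} establishes those limits directly.
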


\subsection{The estimator \texorpdfstring{$\MICestE$}{MICe}}
With the equicharacteristic matrix defined, we can now define our new estimator $\MICestE$ in terms of the sample equicharacteristic matrix, analogously to the way we defined $\MIC$ in terms of the sample characteristic matrix.

\begin{definition}
Let $D \subset \R^2$ be a set of ordered pairs. The {\em sample equicharacteristic matrix} $\widehat{[M]}(D)$ of $D$ is defined by
\[
\widehat{[M]}(D)_{k,\ell} = \frac{I^{[*]}(D, k, \ell)}{ \log \min \{k, \ell \}} .\
\]
\end{definition}

\begin{definition}
Let $D \subset \R^2$ be a set of $n$ ordered pairs, and let $B : \Z^+ \rightarrow \Z^+$. We define
\[ \MICestE_{,B}(D) = \max_{k\ell \leq B(n)} \widehat{[M]}(D)_{k,\ell} .\]
\end{definition}

With the equivalence between the boundaries of the characteristic matrix and the equicharacteristic matrix established, it is straightforward to show that $\MICestE$ is a consistent estimator of $\popMIC$ via arguments similar to those we applied in the case of $\MIC$. (See Appendix~\ref{app:MICeconsistent}.) Specifically, we show the following theorem, an analogue of Theorem~\ref{thm:estimator}.

\begin{theorem}
\label{thm:new_estimator}
Let $f : m^\infty \rightarrow \R$ be uniformly continuous, and assume that $f \circ r_i \rightarrow f$ pointwise. Then for every random variable $(X,Y)$, we have
\[
\left( f \circ r_{B(n)} \right) \left(\widehat{[M]}(D_n) \right) \rightarrow f([M](X,Y))
\]
in probability where $D_n$ is a sample of size $n$ from the distribution of $(X,Y)$, provided $\omega(1) < B(n) \leq O(n^{1-\ep})$ for some $\ep > 0$.
\end{theorem}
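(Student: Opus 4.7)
The plan is to mirror the proof of Theorem~\ref{thm:estimator}, the only substantive change being to replace the concentration argument for entries of $\widehat{M}$ with one for entries of $\widehat{[M]}$. Once we have established that
\[
\Delta_n := \max_{k\ell \leq B(n)} \left| \widehat{[M]}(D_n)_{k,\ell} - [M](X,Y)_{k,\ell} \right| \longrightarrow 0
\]
in probability, the rest of the argument runs on autopilot: $\Delta_n \to 0$ means that $r_{B(n)}(\widehat{[M]}(D_n))$ and $r_{B(n)}([M](X,Y))$ are within $\Delta_n$ in the supremum norm of $m^\infty$; uniform continuity of $f$ then yields convergence in probability of $f(r_{B(n)}(\widehat{[M]}(D_n))) - f(r_{B(n)}([M](X,Y)))$ to $0$; and the hypothesis $f \circ r_i \to f$ pointwise disposes of the deterministic residual $|f(r_{B(n)}([M](X,Y))) - f([M](X,Y))|$ since $B(n) = \omega(1)$.

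The core of the work is therefore the uniform convergence of entries of the sample equicharacteristic matrix. Fix $k, \ell$ with $k \leq \ell$ and $k\ell \leq B(n)$ (the opposite case being symmetric), and write
\[
I^{[*]}(D_n, k, \ell) - I^{[*]}((X,Y), k, \ell) \;=\; \mathrm{(i)} \,+\, \mathrm{(ii)},
\]
where (ii) holds the row partition fixed at the population y-equipartition $P^*$ of size $\ell$ and compares the sample-optimal to the population-optimal $k$-column partitions, while (i) is the remainder, which moves from $P^*$ to the empirical y-equipartition $\widehat{P}$. Term (ii) is precisely an instance of the concentration-plus-maximization argument already used in the proof of Theorem~\ref{thm:estimator}, since once the row partition is fixed we are optimizing the ordinary $I^*$ over $k$-column partitions. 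Term (i) is controlled by a DKW-type inequality applied to the empirical CDF of the marginal $Y$: with probability $1 - \delta$, the symmetric difference between the rows of $\widehat{P}$ and those of $P^*$ has total population mass $O(\ell \sqrt{\log(1/\delta)/n})$, so the joint distributions induced on the $k\ell$-cell grid by $(\widehat{Q}, \widehat{P})$ and by $(\widehat{Q}, P^*)$ differ in total variation by the same amount, and Theorem~\ref{thm:continuityofM} (or equivalently a Fannes-type continuity bound for mutual information on an alphabet of size $k\ell$) converts this into a bound on (i) that depends only on $k\ell$, $n$, and $\delta$.

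The main obstacle is keeping the resulting error $o(\log \min\{k,\ell\})$ uniformly over $k\ell \leq B(n)$ after dividing by the normalization factor in the definition of $\widehat{[M]}$. The constraint $B(n) \leq O(n^{1-\ep})$ is exactly what makes this possible: the Fannes-type modulus scales like $k\ell \log(k\ell) / \sqrt{n}$, which on this range vanishes with room to spare, and a union bound over the $O(B(n)^2)$ relevant entries preserves the convergence in probability. I expect the bookkeeping around $\log$ factors and the precise continuity modulus of $I$ on $k\ell$-sized alphabets to be the most delicate step; however, since Theorem~\ref{thm:same_boundary} already ensures that $[M]$ and $M$ share a boundary (so no new population-side pathology can appear) and since \algname{OptimizeXAxis} treats a fixed equipartition of one axis no worse than the fully unconstrained case analyzed in Theorem~\ref{thm:estimator}, no new ideas beyond those already present in that proof should be needed.
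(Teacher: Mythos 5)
Your overall architecture is exactly the paper's: establish uniform convergence of the entries of $\widehat{[M]}(D_n)$ over $k\ell \le B(n)$ (the paper's Lemma~\ref{lem:boundEquiCharMatrixEntries}, which it obtains by reusing Lemma~\ref{lem:unionBoundOverGrids}), then rerun verbatim the uniform-continuity argument from the proof of Theorem~\ref{thm:estimator}. To your credit, you also isolate the one genuinely new ingredient --- the sample entry optimizes over grids built on the \emph{empirical} equipartition while the population entry uses the population equipartition --- a step the paper disposes of only with the phrase ``similar to that of Lemma~\ref{lem:boundCharMatrixEntries}.''

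The gap is in your quantitative treatment of term (i). DKW does give displaced mass $\delta' = O(\ell\sqrt{\log(1/\delta)/n})$ between the two $\ell$-equipartitions, and Proposition~\ref{prop:boundedVariationOfI} converts a total-variation perturbation $\delta'$ into a mutual-information perturbation of order $\delta'\log(1/\delta') + \delta'\log\min\{k,\ell\}$. But on the allowed range $k\ell \le B(n) = O(n^{1-\ep})$ the larger index $\ell$ can be as large as $B(n)/2$, so $\delta'$ can be of order $n^{1/2-\ep}$, which \emph{diverges} for every $\ep < 1/2$ --- precisely the regime ($\alpha$ between $0.5$ and $0.8$) the paper recommends. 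Your assertion that the modulus $k\ell\log(k\ell)/\sqrt{n}$ ``vanishes with room to spare'' is therefore false, and after dividing by $\log\min\{k,\ell\}$ (which can be as small as $\log 2$) the bound on (i) does not go to zero for the high-resolution entries. Note that the analogous uniformity in Theorem~\ref{thm:estimator} is \emph{not} obtained by a total-variation argument of this crude form: Lemma~\ref{lem:unionBoundOverGrids} works with per-cell multiplicative Chernoff bounds on a master grid, whose errors sum to $C(n)^{-\alpha}$, rather than with $\ell\cdot\|F-\widehat{F}\|_\infty$. To close the gap you need a comparison of the two equipartitions that beats total variation when $\max\{k,\ell\} \gg \sqrt{n}$ --- for example, exploiting that mass is only exchanged between \emph{adjacent} rows of the size-$\ell$ partition, so that its effect on $I(X|_Q, Y|_P)$ with $|Q|=k\le\ell$ can be bounded by a conditional-entropy term far smaller than the worst-case Fannes bound. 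As written, term (i) fails for a large swath of the entries the theorem must control.
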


By setting $f([M]) = \sup [M]$, we then obtain as a corollary the consistency of $\MICestE$.
\begin{corollary}
\label{cor:mice_consistent}
$\MICestE_{,B}$ is a consistent estimator of $\popMIC$ provided $\omega(1) < B(n) \leq O(n^{1-\ep})$ for some $\ep > 0$.
\end{corollary}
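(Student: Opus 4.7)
The plan is to deduce the corollary immediately from Theorem~\ref{thm:new_estimator} by instantiating that theorem with the functional $f(A) = \sup A$ on $m^\infty$, and then invoking Corollary~\ref{cor:boundary_equichar} to identify $\sup [M](X,Y)$ with $\popMIC(X,Y)$. All of the substantive analysis---simultaneous control of infinitely many plug-in mutual-information estimates under a growing truncation window $B(n)$---has already been packaged inside Theorem~\ref{thm:new_estimator}, so this is essentially a matter of verifying that $\sup$ meets the theorem's abstract hypotheses and unpacking definitions.

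First I would check the two hypotheses Theorem~\ref{thm:new_estimator} places on $f$. Uniform continuity of $f = \sup$ with respect to the supremum norm on $m^\infty$ is immediate from the entrywise inequality $|\sup A - \sup B| \leq \|A - B\|_\infty$. For the pointwise convergence $f \circ r_i \to f$, I would note that for any fixed $A \in m^\infty$ the sequence $\sup r_i(A)$ is monotone non-decreasing in $i$ (since $r_i$ retains strictly more entries as $i$ grows) and is dominated by $\sup A$; moreover, each individual entry $A_{k,\ell}$ of $A$ is recovered exactly once $i \geq k\ell$, so the limit equals $\sup A$. This argument relies on the fact that the matrices to which we apply $r_i$ are the sample and population equicharacteristic matrices, which have non-negative entries, so the zeros $r_i$ introduces cannot spuriously raise the supremum.

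Next I would recognize the estimator as an instance of the left-hand side of Theorem~\ref{thm:new_estimator}. By definition, $\MICestE_{,B}(D_n) = \max_{k\ell \leq B(n)} \widehat{[M]}(D_n)_{k,\ell}$, and using non-negativity of the entries of the sample equicharacteristic matrix, this maximum equals $\sup r_{B(n)}(\widehat{[M]}(D_n)) = (f \circ r_{B(n)})(\widehat{[M]}(D_n))$. Theorem~\ref{thm:new_estimator} therefore yields convergence in probability to $f([M](X,Y)) = \sup [M](X,Y)$, and Corollary~\ref{cor:boundary_equichar} identifies this limit as $\popMIC(X,Y)$.

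There is no real obstacle to flag: the hard analytic work lives in Theorem~\ref{thm:new_estimator} and the structural identity $\sup [M] = \popMIC$ lives in Corollary~\ref{cor:boundary_equichar}. The only point that genuinely requires attention is the pointwise convergence $f \circ r_i \to f$, where one must verify that truncation-induced zeros cannot dominate the supremum; this is precisely where non-negativity of mutual-information-based entries is used.
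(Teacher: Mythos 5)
Your proposal is correct and matches the paper's own argument: the paper likewise obtains this corollary by setting $f([M]) = \sup[M]$ in Theorem~\ref{thm:new_estimator} and invoking Corollary~\ref{cor:boundary_equichar} to identify $\sup[M](X,Y)$ with $\popMIC(X,Y)$. Your extra care about non-negativity of the entries when verifying $f \circ r_i \to f$ is a sound refinement of a point the paper treats only implicitly.
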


\subsubsection{Choosing \texorpdfstring{$B(n)$}{B(n)}}
\label{subsubsec:choosingB}
As with the statistic $\MIC$, the statistic $\MICestE$ requires the user to specify a function $B(n)$ to use. While the theory suggests that any function of the form $B(n) = n^\alpha$ suffices provided $0 < \alpha < 1$, different values of $\alpha$ may yield different finite-sample properties. We study the empirical performance of $\MICestE$ for different choices of $B(n)$ in Section~\ref{sec:bias_variance}.

\cite{reshef2015comparisons} provides simple, empirical recommendations about appropriate values of $\alpha$ for different settings. Those recommendations are formulated by choosing a set of representative relationships (e.g., a set of noisy functional relationships), as well as a ``ground truth'' population quantity $\Phi$ (e.g., $R^2$) that can be used to quantify the strength of each of those relationships, and then assessing which values of $\alpha$ maximize the equitability of $\MICestE$ with respect to $\Phi$ at a given sample size. This approach is applied to an analysis of real data from the World Health Organization in \cite{reshef2015comparisons}, and the parameters chosen for that analysis are the ones used for all analyses in this paper.

We remark that if the goal of the user is only detection of non-trivial relationships rather discovery of the strongest such relationships, $\alpha$ can also be chosen in a more straightforward manner: the user can sub-sample a small random set of relationships on which to compare the power of $\MICestE$ for different values of $\alpha$. Those relationships can then be discarded and the rest of the relationships analyzed with the optimal value of $\alpha$. However, if the user's primary goal is power against independence, the statistic $\TICestE$ introduced in Section~\ref{sec:TIC} of this paper should be used with this strategy rather than $\MICestE$.

\subsection{Computing \texorpdfstring{$\MICestE$}{MICe}}
\label{subsec:computing_MICe}
Both $\MIC$ and $\MICestE$ are consistent estimators of $\popMIC$. The difference between them is that while $\MIC$ can currently be computed efficiently only via a heuristic approximation, $\MICestE$ can be computed exactly very efficiently via an approach similar to the one used for computing $\popMIC$ involving the \algname{OptimizeXAxis} subroutine. We now describe the details of this approach.

Recall that, given a fixed x-axis partition $Q$ into $\ell$ columns, a set of $n$ data points, a ``master'' y-axis partition $\Pi$, and a number $k$, the \algname{OptimizeXAxis} subroutine finds, for every $2 \leq i \leq k$, a y-axis partition $P_i \subset \Pi$ of size at most $i$ that maximizes the mutual information induced by the grid $(P_i,Q)$. The algorithm does this in time $O(|\Pi|^2 k\ell)$. (For more discussion of \algname{OptimizeXAxis}, see Section~\ref{subsec:alg_infinite_data}, where it is also used to give an algorithm for computing $\popMIC$.)

In the pair of theorems below, we show two ways that \algname{OptimizeXAxis} can be used to compute $\MICestE$ efficiently. In the proofs of both theorems, we neglect issues of divisibility, i.e., we often write $B/2$ rather than $\lfloor B / 2 \rfloor$. This does not affect the results.
\begin{theorem}
\label{thm:algorithm_naive}
There exists an algorithm \algname{Equichar} that, given a sample $D$ of size $n$ and some $B \in \Z^+$, computes the portion $r_{B(n)}(\widehat{[M]}(D))$ of the sample equicharacteristic matrix in time $O(n^2B^2)$, which equals $O(n^{4 - 2\ep})$ for $B(n) = O(n^{1-\ep})$ with $\ep > 0$.
\end{theorem}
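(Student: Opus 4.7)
The plan is to fill the portion $r_{B(n)}(\widehat{[M]}(D))$ by invoking the \algname{OptimizeXAxis} subroutine once for each column-equipartition size and once for each row-equipartition size that can appear. Recall that $\widehat{[M]}(D)_{k,\ell}$ is built from $I^*(D, k, [\ell])$ when $k < \ell$ and from $I^*(D, [k], \ell)$ when $k \ge \ell$. In either case, one axis is an equipartition while the other is free to be optimized, which is precisely the shape of problem \algname{OptimizeXAxis} is designed to solve. For the master partition $\Pi$ we take the partition with separators between consecutive data points on the free axis, so $|\Pi| = O(n)$; this is without loss of generality, since any partition of the real line induces the same discrete distribution on $D$ (and hence the same mutual information) as some sub-partition of $\Pi$.

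The algorithm \algname{Equichar} then runs two symmetric loops. In the first loop, for each $\ell$ with $2 \le \ell \le \lfloor B/2 \rfloor$, we construct the $x$-axis equipartition $Q$ of size $\ell$ from the sorted $x$-values and invoke \algname{OptimizeXAxis} with columns $Q$, master rows $\Pi$, and row cap $k_{\max}(\ell) = \min\{\ell-1,\, \lfloor B/\ell \rfloor\}$. This single call returns $I^*(D, k, [\ell])$ for all $2 \le k \le k_{\max}(\ell)$ simultaneously, and after dividing by $\log \min\{k,\ell\}$ it populates every entry $(k,\ell)$ of $r_{B}(\widehat{[M]}(D))$ with $k < \ell$. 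The second loop is the transpose: for each $k$ with $2 \le k \le \lfloor B/2 \rfloor$, equipartition the $y$-axis into $k$ bins and optimize the $x$-partition up to $\ell_{\max}(k) = \min\{k,\, \lfloor B/k \rfloor\}$, filling in all $(k,\ell)$ entries with $k \ge \ell$.

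For the runtime, the quoted cost of \algname{OptimizeXAxis} with a master partition of size $|\Pi| = n$, fixed column partition of size $\ell$, and row cap $k$ is $O(n^2 k \ell)$. By construction $k_{\max}(\ell)\cdot\ell \le B$, so each call in the first loop costs $O(n^2 B)$, and the second loop is symmetric. Each loop makes at most $O(B)$ calls (one per choice of $\ell$ or $k$), yielding a total cost of $O(n^2 B^2)$; substituting $B(n) = O(n^{1-\ep})$ gives $O(n^{4-2\ep})$. Sorting the sample once ($O(n \log n)$) and extracting each equipartition ($O(n)$) contribute only lower-order terms. The main subtleties, neither of which is deep, are (i) verifying that reducing $\Pi$ to size $O(n)$ preserves $I^*(D,\cdot,[\ell])$ exactly, and (ii) choosing the caps $k_{\max}(\ell)$ and $\ell_{\max}(k)$ tightly enough that the constraint $k\ell \le B$ is never exceeded while still covering every required entry of $r_{B(n)}(\widehat{[M]}(D))$; once both are in place, the bound follows from a direct per-loop tally.
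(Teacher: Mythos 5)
Your proposal is correct and follows essentially the same route as the paper's proof: a size-$n$ master partition on the free axis, one call to \algname{OptimizeXAxis} per equipartition size $\ell$ computing all admissible $k$ simultaneously at cost $O(n^2 B)$, $O(B)$ such calls, and symmetry for the transposed half, giving $O(n^2B^2)$. The only differences are cosmetic bookkeeping (your cap $\min\{\ell-1,\lfloor B/\ell\rfloor\}$ versus the paper's $B/\ell$ with the split at $k\leq\ell$).
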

\begin{proof}
We describe the algorithm and simultaneously bound its runtime. We do so only for the $k,\ell$-th entries of $\widehat{[M]}(D)$ satisfying $k \leq \ell, k\ell \leq B$. This suffices, since by symmetry computing the rest of the required entries at most doubles the runtime.

To compute $\widehat{[M]}(D)_{k,\ell}$ with $k \leq \ell$, we must fix an equipartition into $\ell$ columns on the x-axis and then find the optimal partition of the y-axis of size at most $k$. If we set the master partition $\Pi$ of the \algname{OptimizeXAxis} algorithm to be an equipartition into rows of size $n$, then it performs precisely the required optimization. Moreover, for fixed $\ell$ it can carry out the optimization simultaneously for all of the pairs $\{(2, \ell), \ldots, (B/\ell, \ell) \}$ in time $O(|\Pi|^2 (B/\ell) \ell) = O(n^2 B)$. For fixed $\ell$, this set contains all the pairs $(k, \ell)$ satisfying $k \leq \ell, kl \leq B$. Therefore, to compute all the required entries of $\widehat{[M]}(D)$ we need only apply this algorithm for each $\ell = 2, \ldots, B/2$. Doing so gives a runtime of $O(n^2 B^2)$.
\end{proof}

The algorithm above, while polynomial-time, is nonetheless not efficient enough for use in practice. However, a simple modification solves this problem without affecting the consistency of the resulting estimates. The modification hinges on the fact that \algname{OptimizeXAxis} can use master partitions $\Pi$ besides the equipartition of size $n$ that we used above. Spefically, setting $\Pi$ in the above algorithm to be an equipartition into $c k$ ``clumps'', where $k$ is the size of the largest optimal partition being sought, speeds up the computation significantly. This modification does give a slightly different statistic. However, the result is still a consistent estimator of $\popMIC$ because the size of the master partition $\Pi$ grows as $k$ grows, and so the optimal sub-partition of $\Pi$ approaches the true optimal partition eventually. These ideas, first about improved runtime and second about preserved consistency, are formalized in the following theorem.

\begin{theorem}
\label{thm:algorithm_with_c}
Let $(X,Y)$ be a pair of jointly distributed random variables, and let $D_n$ be a sample of size $n$ from the distribution of $(X,Y)$. For every $c \geq 1$, there exists a matrix $\{\widehat{M}\}^c(D_n)$ such that
\begin{enumerate}
\item The function
\[
\widetilde{\MICestE}_{,B}(\cdot) = \max_{k\ell \leq B(n)} \{\widehat{M}\}^c(\cdot)_{k,\ell}
\]
is a consistent estimator of $\popMIC$ provided $\omega(1) < B(n) \leq O(n^{1-\ep})$ for some $\ep >0$.
\item There exists an algorithm \algname{EquicharClump} for computing $r_B(\{\widehat{M}\}^c(D_n))$ in time $O(n + B^{5/2})$, which equals $O(n + n^{5(1-\ep)/2})$ when $B(n) = O(n^{1-\ep})$.
\end{enumerate}
\end{theorem}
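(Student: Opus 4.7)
The plan is to separate the two parts of the theorem, linking them through a careful definition of the matrix $\{\widehat{M}\}^c(D_n)$ as the output of the algorithm described below and of its population counterpart $[M]^c(X,Y)$, obtained by replacing empirical equipartitions with population ones. Concretely, for $k \leq \ell$, $\{\widehat{M}\}^c(D_n)_{k,\ell}$ is the maximum of $I(D_n|_G)/\log k$ over grids $G$ whose $\ell$-column partition equipartitions $D_n$ on the $x$-axis and whose $k$-row partition is a sub-partition (of size $\leq k$) of the empirical equipartition of the $y$-marginal into $c K_\ell$ clumps, where $K_\ell = \min(\ell, \lfloor B/\ell \rfloor)$; the case $k > \ell$ is symmetric.

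For part (2) (runtime), \algname{EquicharClump} first sorts $D_n$ and constructs the required equipartition cut points in $O(n)$ time (plus lower-order terms absorbed into $B^{5/2}$). It then iterates $\ell = 2, \ldots, B/2$: for each $\ell$ it fixes the $\ell$-column equipartition and calls \algname{OptimizeXAxis} once with master $y$-partition of size $c K_\ell$, which simultaneously returns the optimal sub-partition of each size $k \leq K_\ell$. The cost per $\ell$ is $O((c K_\ell)^2 K_\ell \ell) = O(K_\ell^3 \ell)$. Splitting the sum at $\ell = \sqrt{B}$: for $\ell \leq \sqrt{B}$ we have $K_\ell = \ell$, contribution $O(\ell^4)$, summing to $O(B^{5/2})$; for $\ell > \sqrt{B}$ we have $K_\ell = B/\ell$, contribution $O(B^3/\ell^2)$, again summing to $O(B^{5/2})$. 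A symmetric pass handles $k > \ell$, yielding $O(n + B^{5/2})$ total.

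For part (1) (consistency), I would follow the three-step structure of the proof of Theorem~\ref{thm:new_estimator}. Step~1 (entrywise convergence): standard empirical-process facts give convergence of the empirical $c K_\ell$-equipartition to its population counterpart, and the uniform continuity of mutual information from Theorem~\ref{thm:continuityofM} then yields $\{\widehat{M}\}^c(D_n)_{k,\ell} \to [M]^c(X,Y)_{k,\ell}$ in probability for each fixed $(k,\ell)$. Step~2 (boundary equality): I would show $\partial [M]^c = \partial [M]$. The inequality $[M]^c \leq [M]$ is immediate since restricting to sub-partitions only shrinks the class of competitors. Conversely, as $k \to \infty$ the cells of the $c K$-equipartition (for any $K$ growing with $k$) each have mass $O(1/K)$, so any fixed population-optimal partition can be approximated in statistical distance by a sub-partition; uniform continuity of $I$ gives $[M]^c_{k_0, k} \to [M]_{k_0,\uparrow}$ as $k \to \infty$. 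Combined with Theorems~\ref{thm:MICinTermsOfBoundary} and~\ref{thm:same_boundary}, this shows $\sup [M]^c = \popMIC$. Step~3: the quantitative rate argument from the proof of Theorem~\ref{thm:estimator} transfers verbatim since the clump restriction only removes candidate grids from the supremum, so the uniform control on $I^*$ convergence still applies, and hence $\max_{k\ell \leq B(n)} \{\widehat{M}\}^c(D_n)_{k,\ell} \to \popMIC$ in probability whenever $\omega(1) < B(n) \leq O(n^{1-\ep})$.

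The main obstacle, as in Theorem~\ref{thm:estimator}, will be Step~3: controlling the convergence uniformly in $(k,\ell)$ as the window $B(n)$ grows, to preclude inflated values of the maximum. The original argument used delicate couplings between $I(D_n|_G)$ for different $G$; the clump restriction does not interfere with those couplings (it only removes some $G$ from the supremum), so the existing bounds carry over and no genuinely new machinery should be required beyond bookkeeping to confirm that the constants depend benignly on $c$.
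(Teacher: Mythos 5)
Your proposal is correct and follows essentially the same route as the paper: the runtime analysis (splitting at $\ell=\sqrt{B}$ with master partitions of size $c\min(\ell,B/\ell)$) is identical, and your consistency argument---pointwise convergence of each entry to a clumped population matrix, a boundary argument identifying that matrix's supremum with $\popMIC$, and control of overshoot because clumping only removes candidate grids---is exactly the paper's sandwich lemma. The one point where the paper is more careful than your phrase ``transfers verbatim'' is that uniform convergence of the clumped matrix is only established on the region $k,\ell\le\sqrt{B(n)}$ (the algorithm's master partition does not match the population definition outside it), so the overshoot over the full window $k\ell\le B(n)$ is handled purely by the element-wise domination $\{\widehat{M}\}^c\le\widehat{[M]}$ together with the already-proven consistency of $\MICestE$, rather than by re-running the quantitative rate argument.
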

\begin{proof}
See Appendix~\ref{app:computation}.
\end{proof}

For an analysis of the effect of the parameter $c$ in the above theorem on the results of the \algname{EquicharClump} algorithm, see Appendix~\ref{app:empirical_char_of_c}.

Setting $\ep = 0.6$ in the above theorem yields the following corollary.
\begin{corollary}
$\popMIC$ can be estimated consistently in linear time.
\end{corollary}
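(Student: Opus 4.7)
The plan is to simply substitute $\epsilon = 0.6$ into Theorem~\ref{thm:algorithm_with_c} and verify that both of its conclusions collapse to the claim. First I would choose $B(n) = \lfloor n^{0.4} \rfloor$, which satisfies $\omega(1) < B(n) \leq O(n^{1-\epsilon})$ for $\epsilon = 0.6$. Part~1 of Theorem~\ref{thm:algorithm_with_c} then immediately gives that $\widetilde{\MICestE}_{,B}$ is a consistent estimator of $\popMIC$, so it only remains to check the running time.

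For the running time, I would invoke part~2 of Theorem~\ref{thm:algorithm_with_c}, which says that \algname{EquicharClump} computes $r_B(\{\widehat{M}\}^c(D_n))$ in time
\[
O\bigl(n + n^{5(1-\epsilon)/2}\bigr).
\]
Plugging in $\epsilon = 0.6$ gives $5(1-0.6)/2 = 5 \cdot 0.4 / 2 = 1$, so this is $O(n + n) = O(n)$. Since $\widetilde{\MICestE}_{,B}(D_n)$ is obtained by taking a single maximum over the $O(B(n)^2) = O(n^{0.8})$ entries of $r_{B(n)}(\{\widehat{M}\}^c(D_n))$, this postprocessing step also runs in time sub-linear in $n$ and does not affect the overall complexity.

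Combining the two conclusions, $\widetilde{\MICestE}_{,B}$ with $B(n) = \lfloor n^{0.4} \rfloor$ is a consistent estimator of $\popMIC$ that is computable in time $O(n)$. There is essentially no hard step here: the entire content of the corollary is the arithmetic observation that $\epsilon = 0.6$ is the threshold at which the exponent $5(1-\epsilon)/2$ in the runtime bound from Theorem~\ref{thm:algorithm_with_c} reaches $1$, making the $B^{5/2}$ term no larger than the unavoidable $\Omega(n)$ cost of reading the sample.
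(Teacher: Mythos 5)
Your proof is correct and is essentially identical to the paper's argument, which likewise obtains the corollary by setting $\ep = 0.6$ in Theorem~\ref{thm:algorithm_with_c} so that the $B^{5/2}$ term becomes $O(n)$. Your added remark that the final maximization over the $k\ell \leq B(n)$ entries is sublinear is a harmless (and correct) bit of extra bookkeeping.
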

Of course, at low sample sizes, setting $\ep=0.6$ would be undesirable. However, our companion paper \citep{reshef2015comparisons} shows empirically that at large sample sizes this strategy works very well on typical relationships.

We remark that the \algname{EquicharClump} algorithm given above is asymptotically faster even than the heuristic \algname{Approx-MIC} algorithm used to calculate $\MIC$ in practice, which runs in time $O(B(n)^4)$. As demonstrated in our companion paper \citep{reshef2015comparisons}, this difference translates into a substantial difference in runtimes for similar performance at a range of realistic sample sizes, ranging from a $30$-fold speedup at $n=500$ to over a $350$-fold speedup at $n=10,000$.

For readability, in the rest of this paper we do not distinguish between the two versions of $\MICestE$ computed by the \algname{Equichar} and \algname{EquicharClump} algorithms described above. Wherever we present simulation data about $\MICestE$ in simulations though, we use the version of the statistic computed by \algname{EquicharClump}.

% Bias/Variance characterization of MIC_{e}, n=500, EVEN_CURVE, YNOISE figure
\begin{figure}[h!]
	(a) \vspace{-0.5cm} \\
	\hspace*{\fill} \includegraphics[clip=true, trim = 0in 6.45in 0in 0in, height=0.36\textheight]{\pathToFigures/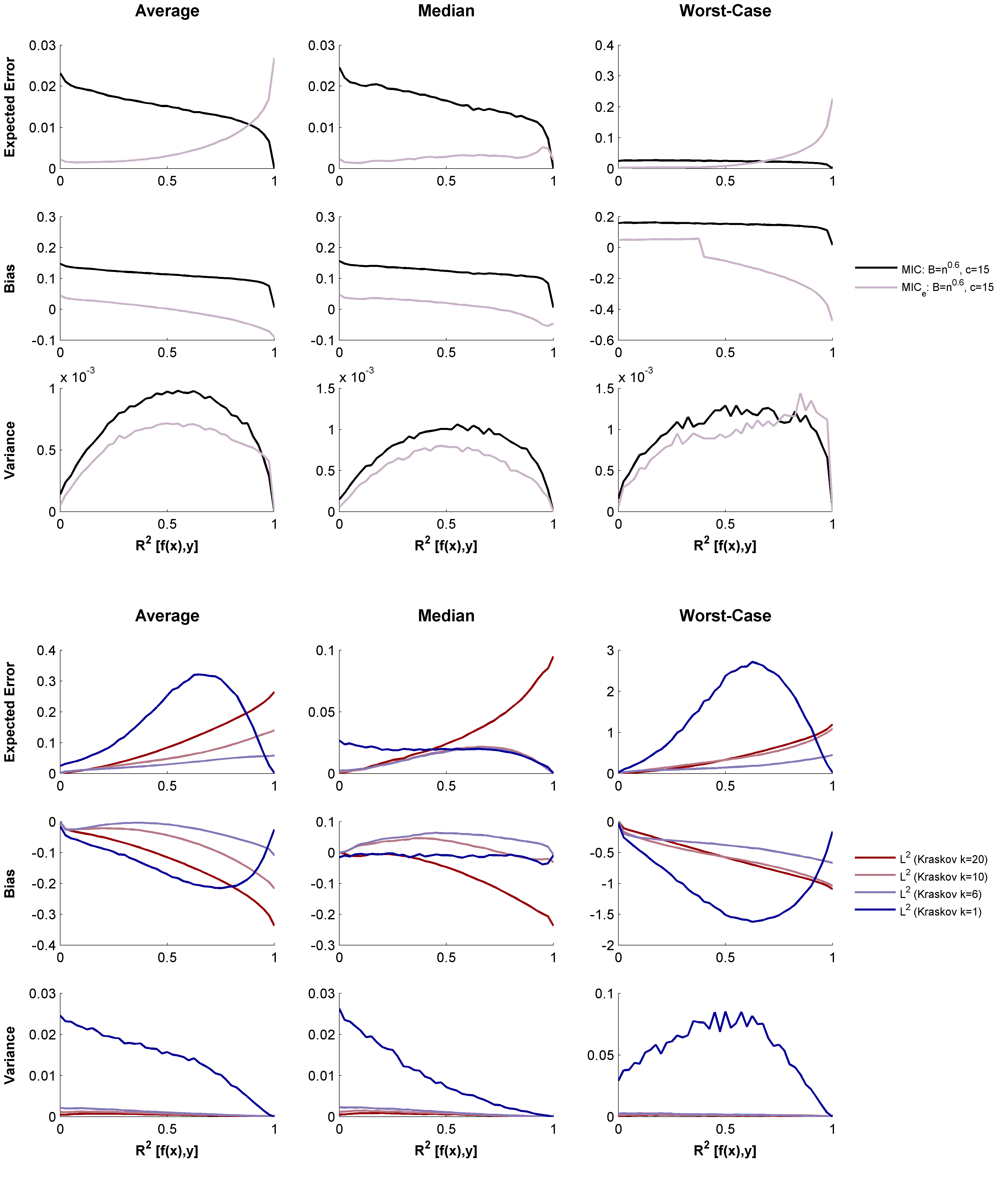} \\
	\vspace{0cm} \\
	(b) \vspace{-0.5cm} \\
	\hspace*{\fill} \includegraphics[clip=true, trim = 0in 6.45in 0in 0in, height=0.36\textheight]{\pathToFigures/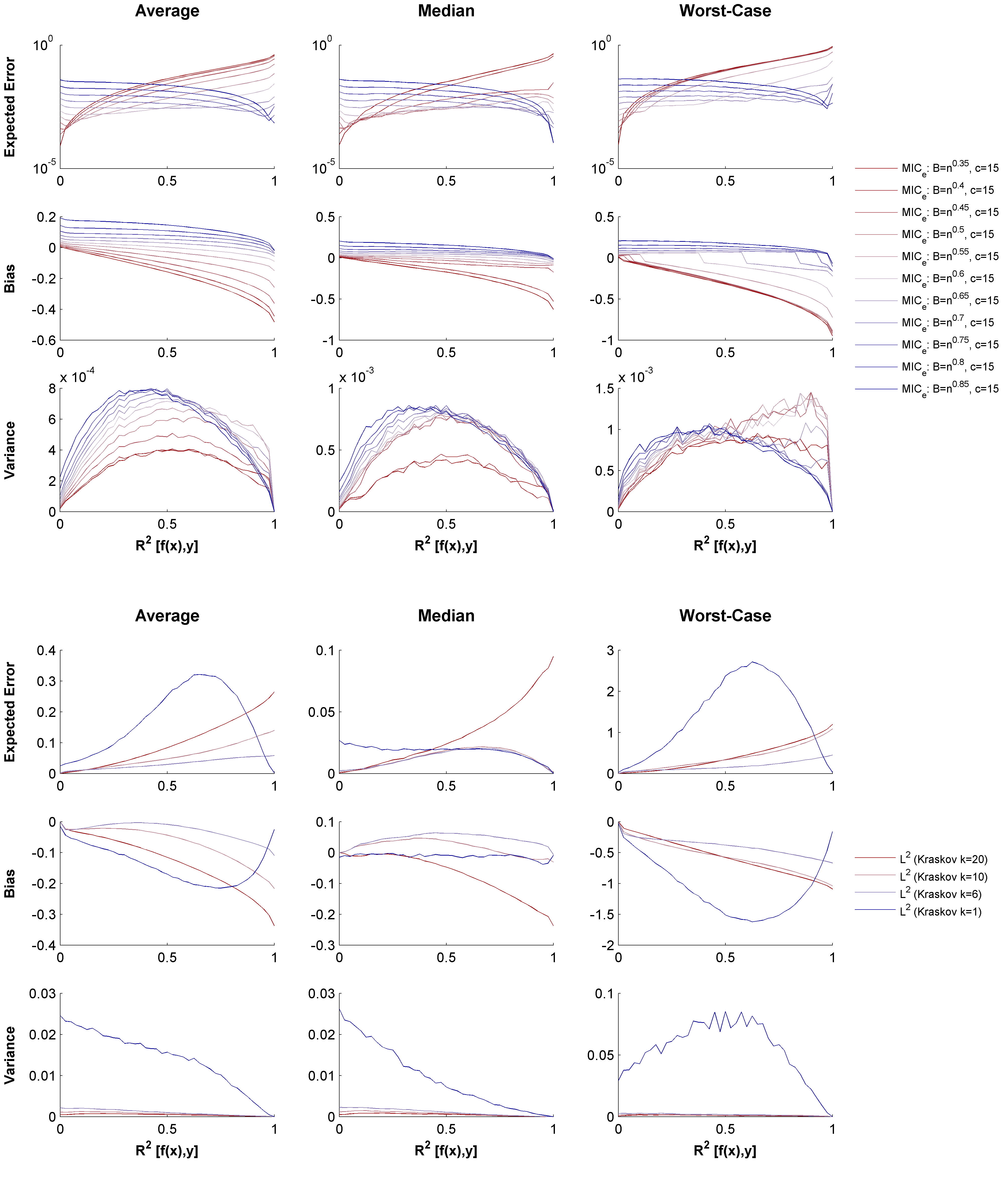}

	\caption[Bias and variance of \algname{Approx-MIC} and $\MICestE$]{Bias/variance characterization of \algname{Approx-MIC} and $\MICestE$.
Each plot shows expected squared error, bias, or variance across the set of noisy functional relationships described in Section~\ref{sec:bias_variance} as a function of the $R^2$ of the relationships. The results are aggregated across the 16 function types analyzed by either the average, median, or worst result at every value of $R^2$.
\figpart{a} A comparison between $\MICestE$ (light purple) and $\MIC$ (black) as computed via the heuristic \algname{Approx-MIC} algorithm, at a typical usage parameter.
\figpart{b} Performance of $\MICestE$ with $B(n) = n^\alpha$ for various values of $\alpha$.}
	\label{fig:biasVarianceMICe_evenCurve_XYNoise}
\end{figure}

\subsection{Bias/variance characterization of \texorpdfstring{$\MICestE$}{MICe}}
\label{sec:bias_variance}
The algorithm we presented in Section~\ref{subsec:alg_infinite_data} for computing $\popMIC$ in the large-sample limit allows us to examine the bias/variance properties of estimators of $\popMIC$. Here, we use it to examine the bias and variance of both $\MIC$ as computed by the heuristic \algname{Approx-MIC} algorithm from \cite{MINE}, and $\MICestE$ as computed by the \algname{EquicharClump} algorithm given above. To do this, we performed a simulation analysis on the following set of relationships
\[
\Q = \{ (x + \ep_\sigma, f(x) + \ep'_\sigma) : x \in X_f, \ep_\sigma, \ep'_\sigma \sim \mathcal{N}(0, \sigma^2), f \in F, \sigma \in \R_{\geq 0} \}
\]
where $\ep_\sigma$ and $\ep'_\sigma$ are i.i.d., $F$ is the set of 16 functions analyzed in \cite{MINE}, and $X_f$ is the set of $n$ x-values that result in the points $(x_i, f(x_i))$ being equally spaced along the graph of $f$.

For each relationship $\mcZ \in \Q$ that we examined, we used the algorithm from Theorem~\ref{thm:alg_infinite_data} to compute $\popMIC$. We then simulated 500 independent samples from $\mcZ$, each of size $n=500$, and computed both $\algname{Approx-MIC}$ and $\MICestE$ on each one to obtain estimates of the sampling distributions of the two statistics. From each of the two sampling distributions, we estimated the bias and variance of either statistic on $\mcZ$. We then analysed the bias, variance, and expected squared error of the two statistics as a function of relationship strength, which we quantified using the coefficient of determination ($R^2$) with respect to the generating function.

The results, presented in Figure~\ref{fig:biasVarianceMICe_evenCurve_XYNoise}, are interesting for two reasons. First, they demonstrate that for a typical usage parameter of $B(n) = n^{0.6}$, $\MICestE$ performs substantially better than \algname{Approx-MIC} overall. Specifically, the median of the expected squared error of $\MICestE$ across the set $F$ of functions is uniformly lower across $R^2$ values than that of \algname{Approx-MIC}. When average expected squared error is used instead of median, $\MICestE$ still performs better on all but the strongest of relationships ($R^2$ above $\sim$0.9). The superior performance of $\MICestE$ is consistent with the fact that we have theoretical guarantees about its statistical properties whereas \algname{Approx-MIC} is a heuristic.

Second, the results show that different values of the exponent in $B(n) = n^\alpha$ give good performance in different signal-to-noise regimes due to a bias-variance trade-off represented by this parameter. Large values of $\alpha$ lead to increased expected error in lower-signal regimes (low $R^2$) through both a positive bias in those regimes and a general increase in variance that predominantly affects those regimes. On the other hand, small values of $\alpha$ lead to an increased expected error in higher-signal regimes (high $R^2$) by leading to a negative bias in those regimes and by shifting the variance of the estimator toward those regimes. In other words, lower values of $\alpha$ are better-suited for detecting weaker signals, while higher values of $\alpha$ are better suited for distinguishing among stronger signals. This is consistent with the results seen in our companion paper \citep{reshef2015comparisons}, which show that low values of $\alpha$ cause $\MICestE$ to yield better powered independence tests while high values of $\alpha$ cause $\MICestE$ to have better equitability. For a detailed discussion of this trade-off and of specific recommendations for how to set $\alpha$ in practice, see \cite{reshef2015comparisons}.

\subsection{Equitability of \texorpdfstring{$\MICestE$}{MICe}}
\label{sec:equitability_analysis}
As mentioned previously, one of the main motivations for the introduction of $\MIC$ was equitability, the extent to which a measure of dependence usefully captures some notion of relationship strength on some set of standard relationships. We therefore carried out an empirical analysis of the equitability of $\MICestE$ with respect to $R^2$ and compared its performance to distance correlation \citep{szekely2007measuring, szekely2009brownian}, mutual information estimation \citep{Kraskov}, and maximal correlation estimation \citep{breiman1985estimating}.

We began by assessing equitability on the set of relationships $\Q$ defined above, a set that has been analyzed in previous work \citep{MINE, reshef2015comparisons, kinney2014equitability}. The results, shown in Figure~\ref{fig:equitabilityAnalysis}, confirm the superior equitability of the new estimator $\MICestE$ on this set of relationships.

\begin{figure}
    \centering
    \begin{tabular}{ccc}
    \includegraphics[clip=true, trim = 0in 0in 0.5in 0in, height=1.7in]{\pathToFigures/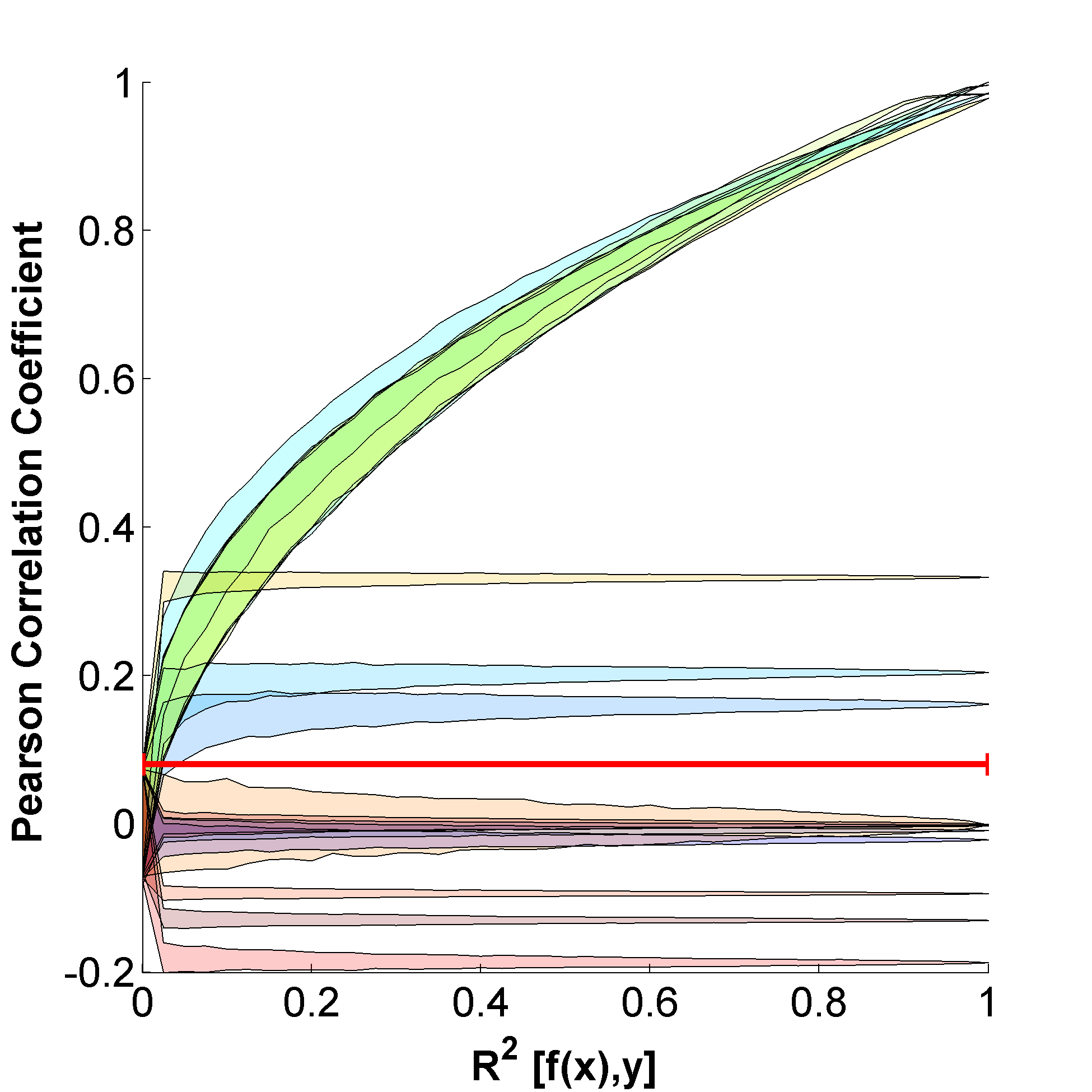} & & \includegraphics[clip=true, trim = 4.4in 6.125in 0in 2.2in, height=1.7in]{\pathToCommonFigs/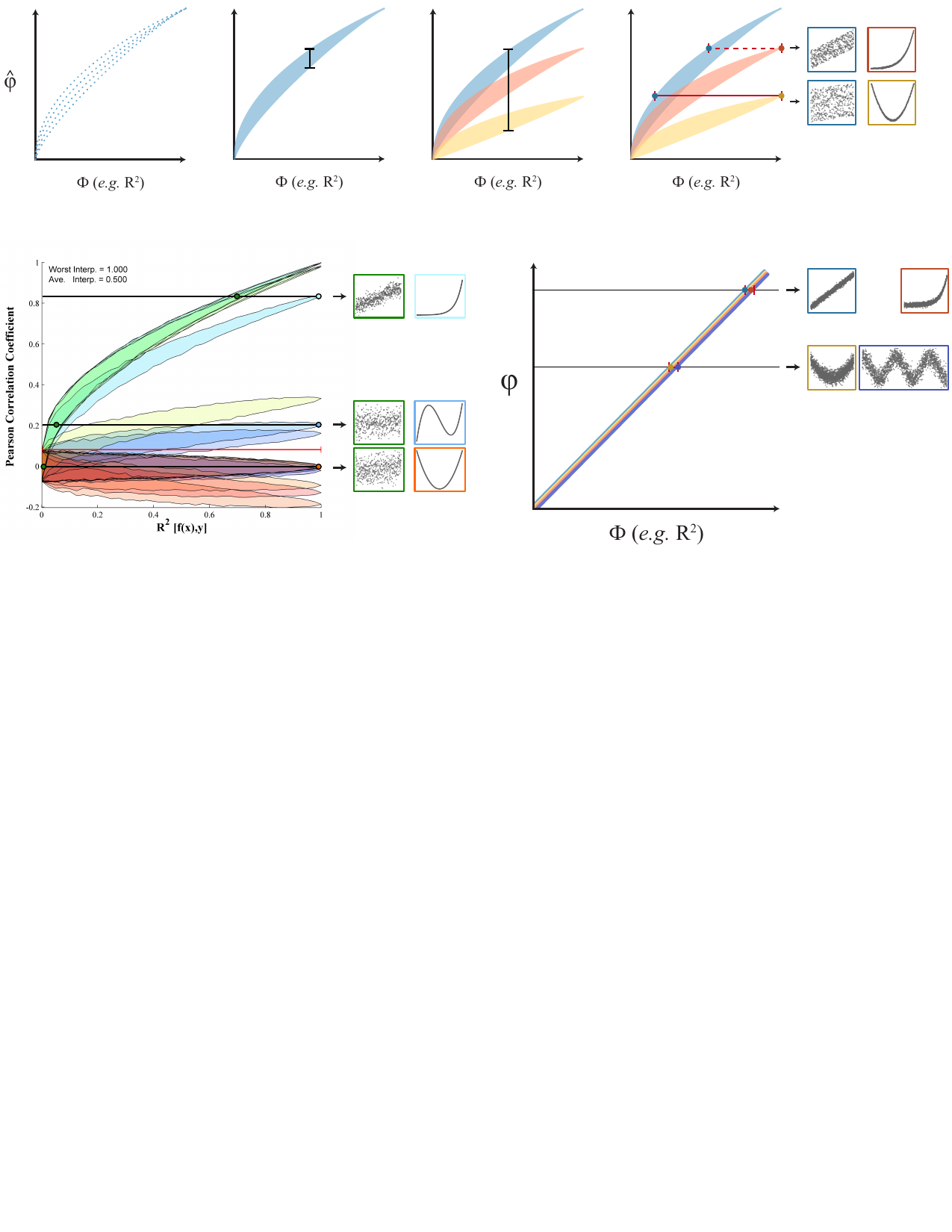} \hspace{0.275in} \\
    \hspace{0.17in} (a) & & (b) \hspace{1.25in} \\
    
    \includegraphics[clip=true, trim = 0in 0in 0.5in 0in, height=1.7in]{\pathToFigures/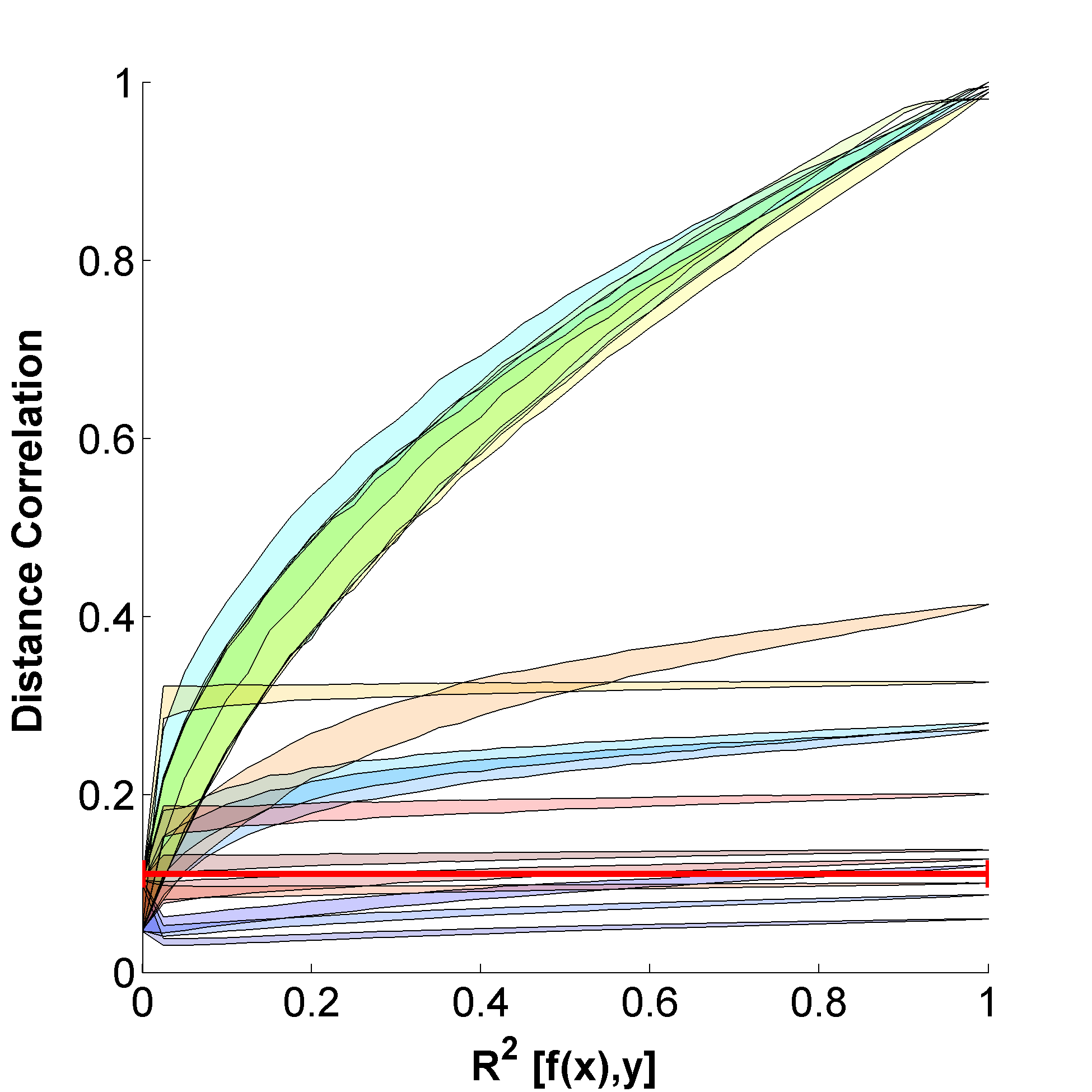} & & \includegraphics[clip=true, trim = 0in 0in 0.5in 0in, height=1.7in]{\pathToFigures/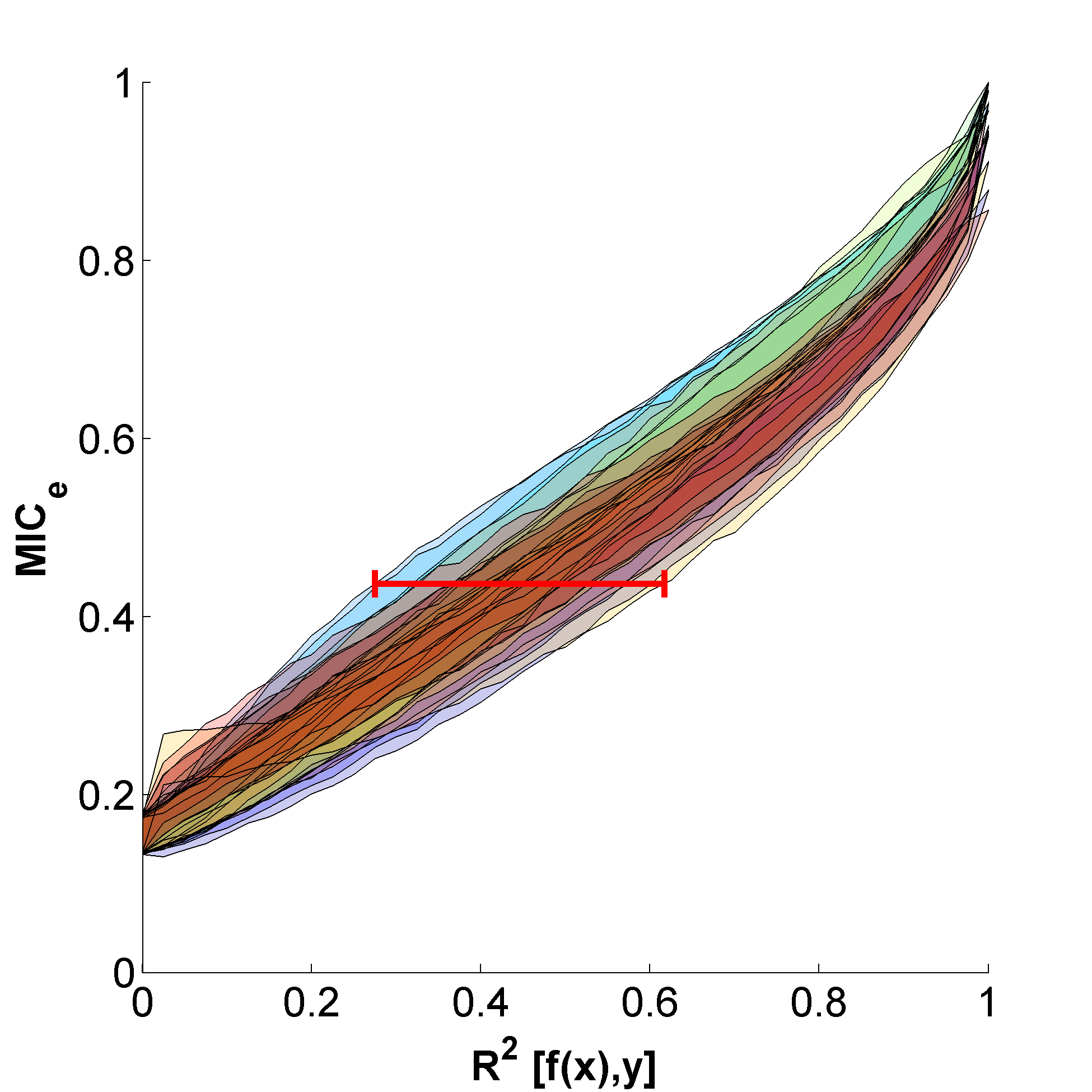} \quad \includegraphics[clip=true, trim = 4.125in 1.35in 1.75in 1.675in,height=1.7in]{\pathToFigures/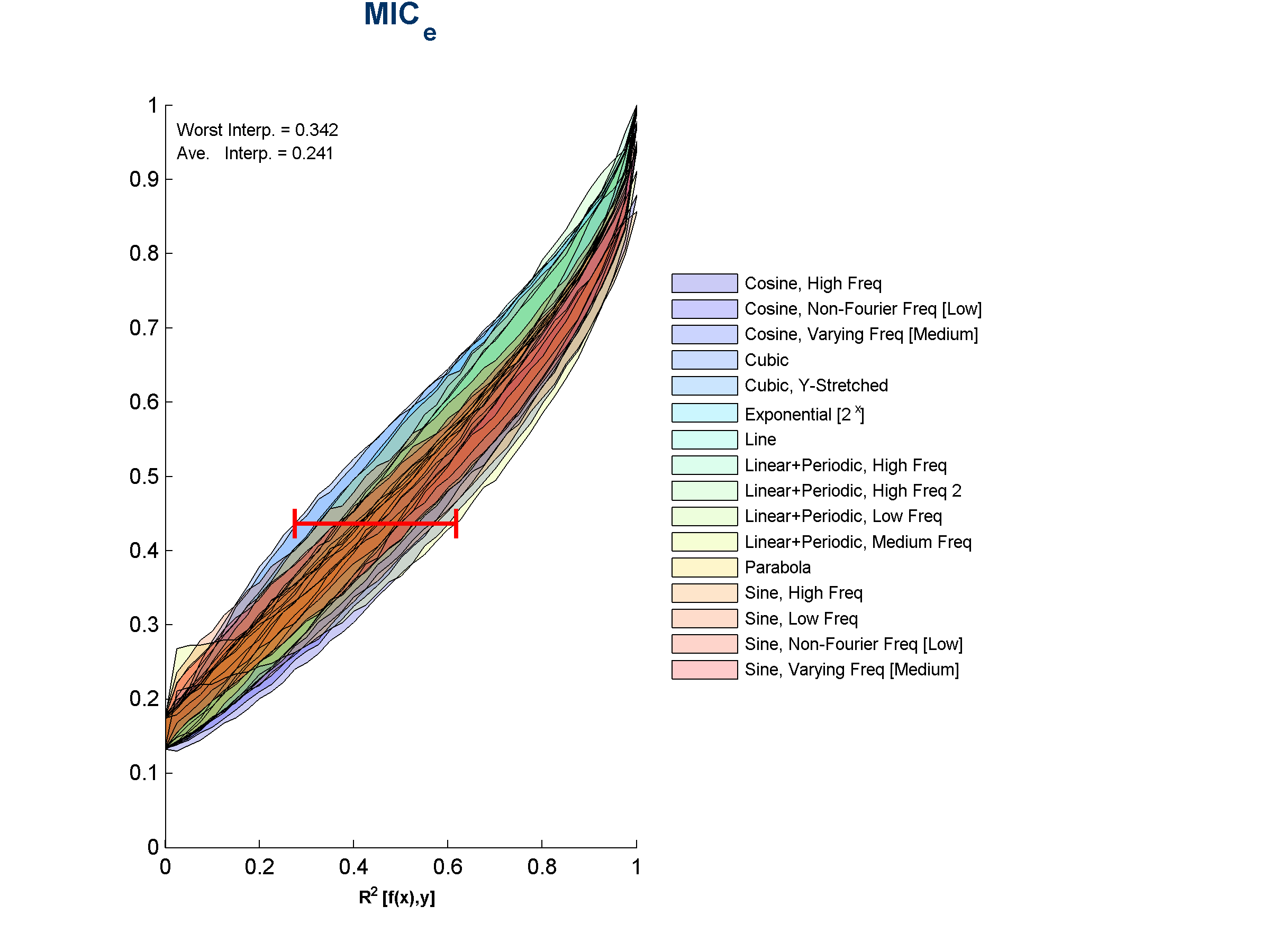} \\
    \hspace{0.17in} (c) & & (d) \hspace{1.25in} \\
    
    \includegraphics[clip=true, trim = 0in 0in 0.5in 0in, height=1.7in]{\pathToFigures/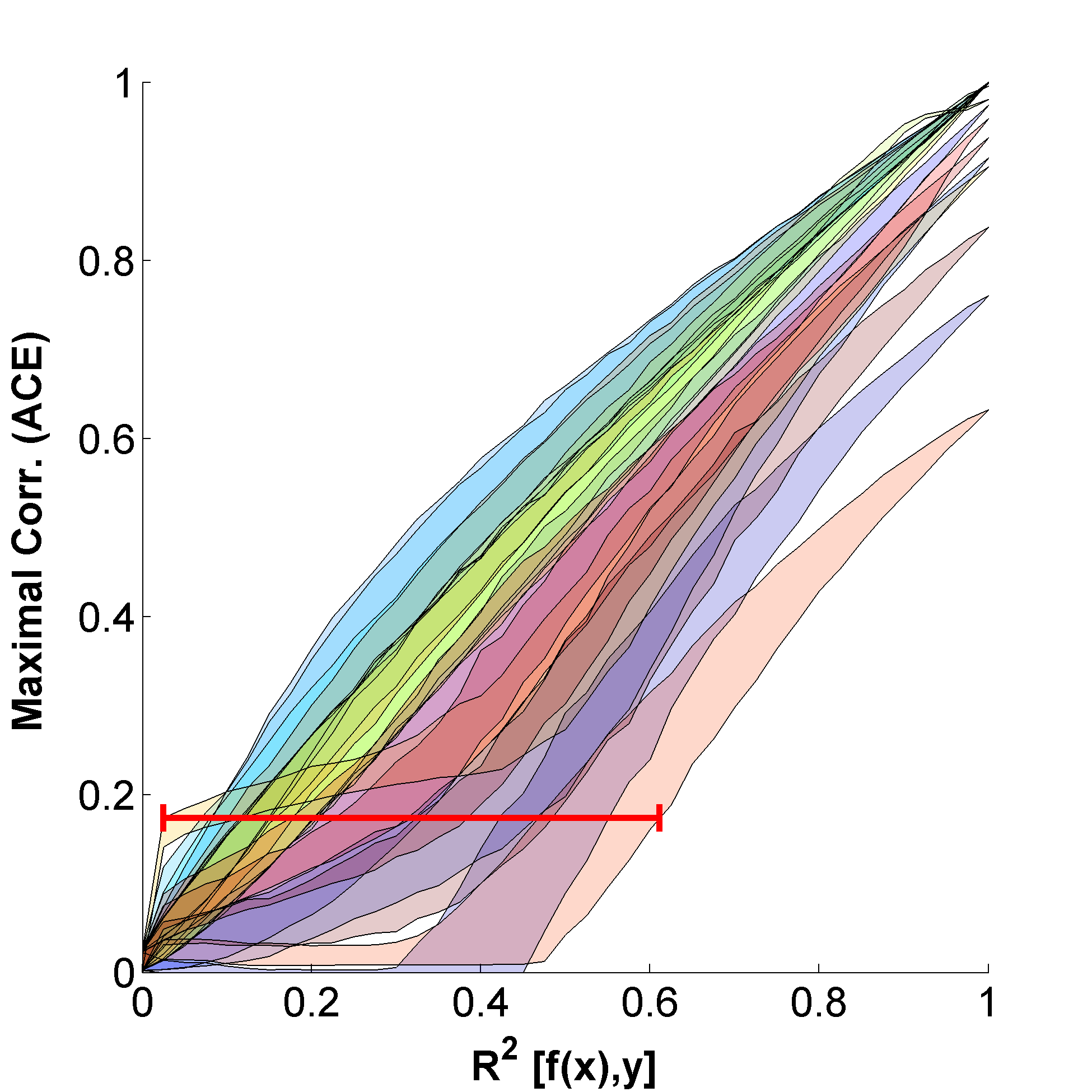} & & \includegraphics[clip=true, trim = 0in 0in 0.5in 0in, height=1.7in]{\pathToFigures/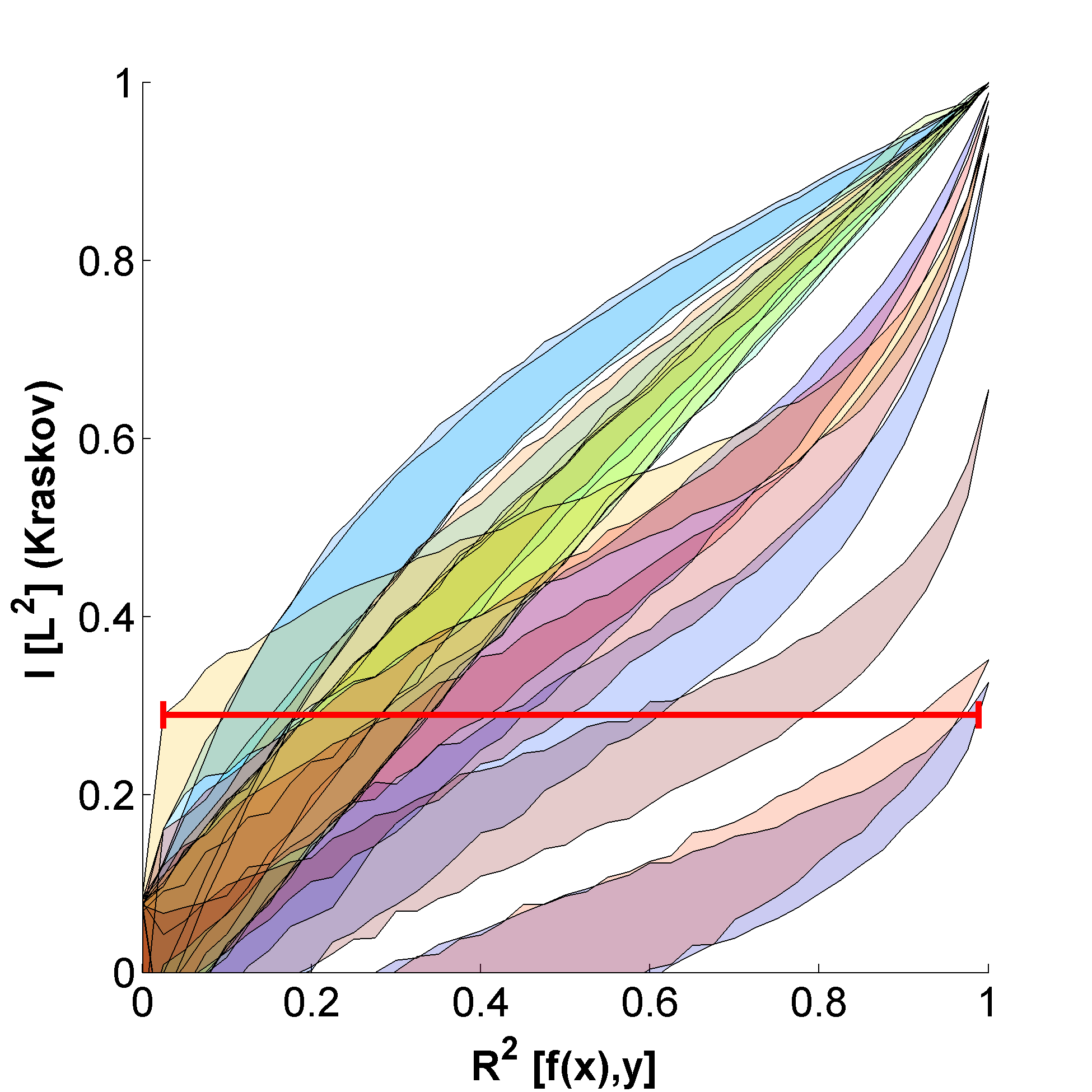} \hspace*{\fill} \\
    \hspace{0.16in} (e) & & (f) \hspace{1.25in} \\
    \end{tabular}
    \caption{Equitability with respect to $R^2$ on a set of noisy functional relationships of \figpart{a} the Pearson correlation coefficient, \figpart{b} a hypothetical measure of dependence $\varphi$ with perfect equitability, \figpart{c} distance correlation, \figpart{d} $\MICestE$, \figpart{e} maximal correlation estimation, and \figpart{f} mutual information estimation. For each relationship, a shaded region denotes 5th and 95th percentile values of the sampling distribution of the statistic in question on that relationship at every $R^2$. The resulting plot shows which values of $R^2$ correspond to a given value of each statistic. The red interval on each plot indicates the widest range of $R^2$ values corresponding to any one value of the statistic; the narrower the red interval, the higher the equitability. A red interval with width 0, as in \figpart{b}, means that the statistic reflects only $R^2$ with no dependence on relationship type, as demonstrated by the pairs of thumbnails of relationships of different types with identical $R^2$ values.
    }
    \label{fig:equitabilityAnalysis}
\end{figure}

To assess equitability more objectively without relying on a manually curated set of functions, we then analyzed 160 random functions drawn from a Gaussian process distribution with a radial basis function kernel with one of eight possible bandwidths in the set
\[
\{0.01, 0.025, 0.05, 0.1, 0.2, 0.25, 0.5, 1\}
\]
to represent a range of possible relationship complexities. The results, shown in Figure~\ref{fig:equitabilityAnalysis_evenCurve_GP_XYNoise}, show that $\MICestE$ outperforms currently existing methods in terms of equitability with respect to $R^2$ on these functions as well. Appendix Figure~\ref*{fig:equitabilityAnalysis_evenCurve_GP_YNoise} shows a version of this analysis under a different noise model that yields the same conclusion. We also examined the effect of outlier relationships on our results by repeatedly subsampling random subsets of 20 functions from this large set of relationships and measuring the equitability of each method on average over the subsets; results were similar.

One feature of the performance of $\MICestE$ on these randomly chosen relationships that is demonstrated in Figure~\ref{fig:equitabilityAnalysis_evenCurve_GP_XYNoise} is that it appears minimally sensitive to the bandwidth of the Gaussian process from which a given relationship is drawn. This puts it in contrast to, e.g., mutual information estimation, which shows a pronounced sensitivity to this parameter that prevents it from being highly equitable when relationships with different bandwidths are present in the same dataset.

% list methods analyzed, and add main text figure
In our companion paper \citep{reshef2015comparisons}, we perform more in-depth analyses of the equitability with respect to $R^2$ of $\MICestE$, $\MIC$, and the four measures of dependence described above as well as the Hilbert-Schmidt independence criterion (HSIC) \citep{gretton2005measuring, gretton2007kernel}, the Heller-Heller-Gorfine (HHG) test \citep{heller2013consistent}, the data-derived partitions (DDP) test \citep{heller2014consistent}, and the randomized dependence coefficient (RDC) \citep{lopez2013randomized}. These analyses consider a range of sample sizes, noise models, marginal distributions, and parameter settings. They conclude that, in terms of equitability with respect to $R^2$ on the sets of noisy functional relationships studied, a) $\MICestE$ uniformly outperforms $\MIC$, and b) $\MICestE$ outperforms all the methods tested in the vast majority of settings examined. Appendix Figure~\ref*{fig:equitabilityAnalysis_evenCurve_XYNoise} contains a reproduction of a representative equitability analysis from that paper for the reader's reference.

\begin{figure}
    \centering
    \includegraphics[clip=true, trim = 0in 0in 0in 0in, height=7in]{\pathToFigures/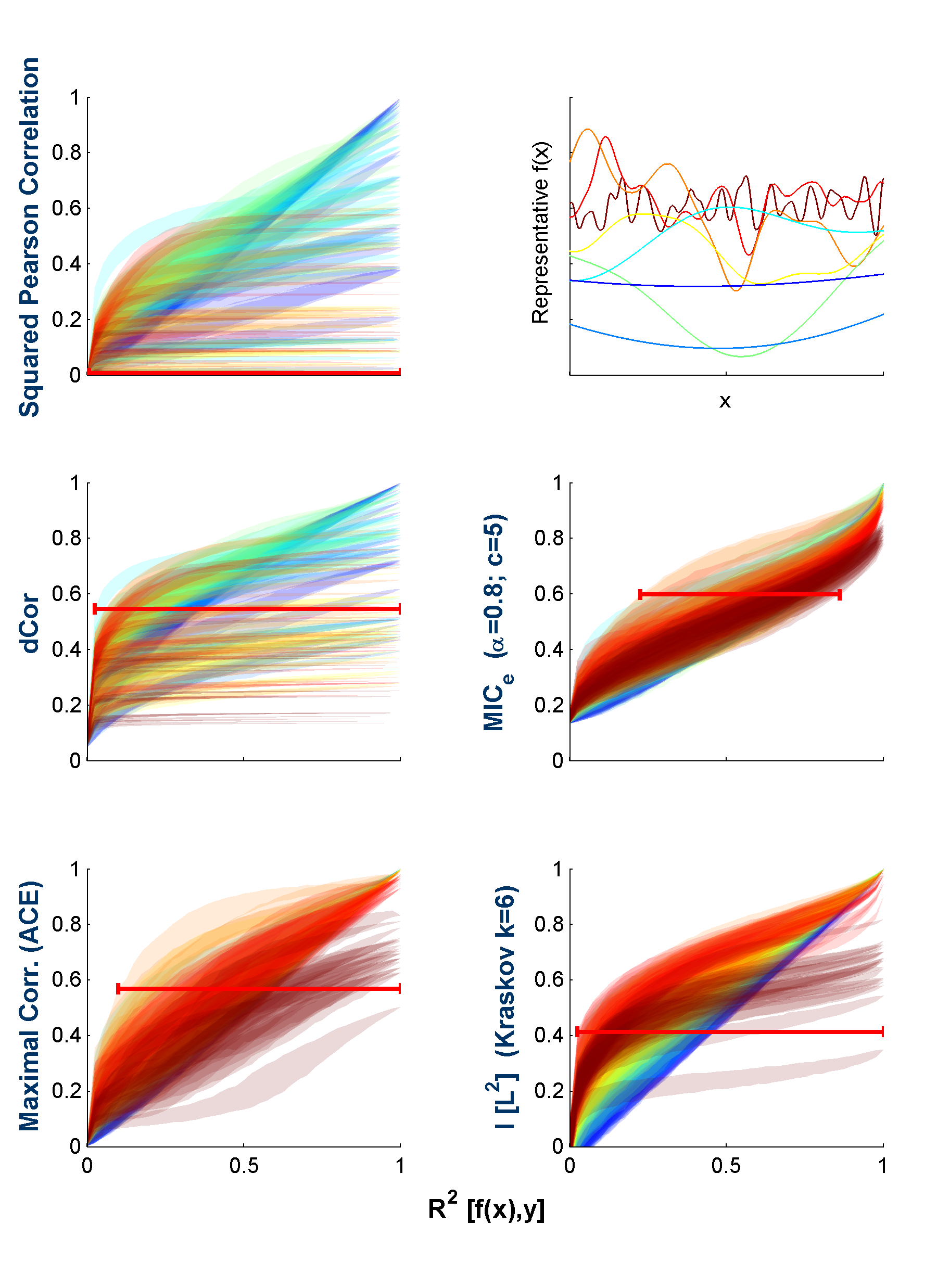}
    \caption{Equitability of methods examined on functions randomly drawn from a Gaussian process distribution. Each method is assessed as in Figure~\ref{fig:equitabilityAnalysis}, with a red interval indicating the widest range of $R^2$ values corresponding to any one value of the statistic; the narrower the red interval, the higher the equitability. Each shaded region corresponds to one relationship, and the regions are colored by the bandwidth of the Gaussian process from which they were sampled. Sample relationships for each bandwidth are shown in the top right with matching colors.
    }
    \label{fig:equitabilityAnalysis_evenCurve_GP_XYNoise}
\end{figure}

\section{The total information coefficient}
\label{sec:TIC}
So far we have presented results about estimators of the population maximal information coefficient, a quantity for which equitability is the primary motivation. We now introduce and analyze a new measure of dependence, the {\em total information coefficient} (TIC). In contrast to the maximal information coefficient, the total information coefficient is designed not for equitability but rather as a test statistic for testing a null hypothesis of independence.

We begin by giving some intuition. Recall that the maximal information coefficient is the supremum of the characteristic matrix. While estimating the supremum of this matrix has many advantages, this estimation involves taking a maximum over many estimates of individual entries of the characteristic matrix. Since maxima of sets of random variables tend to become large as the number of variables grows, one can imagine that this procedure will lead to an undesirable positive bias in the case of statistical independence, when the population characteristic matrix equals 0. This is detrimental for independence testing, when the sampling distribution of a statistic under a null hypothesis of independence is crucial.

The intuition behind the total information coefficient is that if we instead consider a more stable property, such as the sum of the entries in the characteristic matrix, we might expect to obtain a statistic with a smaller bias in the case of independence and therefore better power. Stated differently, if our only goal is to distinguish any dependence at all from complete noise, then disregarding all of the sample characteristic matrix except for its maximal value may throw away useful signal, and the total information coefficient avoids this by summing all the entries.

We remark that in \cite{MINE} it is suggested that other properties of the characteristic matrix may allow us to measure other aspects of a given relationship besides its strength, and several such properties were defined. The total information coefficient fits within this conceptual framework.

In this section we define the total information coefficient in the case of both the characteristic matrix ($\TIC$) and the equicharacteristic matrix ($\TICestE$). We then prove that both $\TIC$ and $\TICestE$ yield independence tests that are consistent against all dependent alternatives. Finally, we present a simulation study of the power of independence testing based on $\TICestE$, showing that it outperforms other common measures of dependence.

\subsection{Definition and consistency of the total information coefficient}
We begin by defining the two versions of the total information coefficient. In the definition below, recall that $\widehat{M}$ denotes a sample characteristic matrix whereas $\widehat{[M]}$ denotes a sample equicharacteristic matrix.
\begin{definition}
Let $D \subset \R^2$ be a set of $n$ ordered pairs, and let $B : \Z^+ \rightarrow \Z^+$. We define
\[
\TIC_B(D) = \sum_{k\ell \leq B(n)} \widehat{M}(D)_{k, \ell}
\]
and
\[
\TICestE_{,B}(D) = \sum_{k\ell \leq B(n)} \widehat{[M]}(D)_{k, \ell} .
\]
\end{definition}

To show that these two statistics lead to consistent independence tests, we must take a step back and analyze the behavior of the analogous population quantities. We take some care with the limits involved in defining these quantities, since the fine-grained behavior of these limits will be a key part of our proof of consistency.

\begin{definition}
For a matrix $A$ and a positive number $B$, the {\em $B$-partial sum} of $A$, denoted by $S_B(A)$, is
\[
S_B(A) = \sum_{k\ell \leq B} A_{k,\ell} .
\]
\end{definition}
When $A$ is an (equi)characteristic matrix, $S_B(A)$ is the sum over all entries corresponding to grids with at most $B$ total cells. Thus, if $\widehat{M}(D)$ is a sample characteristic matrix of a sample $D$, $S_B(\widehat{M}(D)) = \TIC_B(D)$, and the same holds for $\widehat{[M]}(D)$ and $\TICestE_{,B}(D)$.

It is clear that if $X$ and $Y$ are statistically independent random variables, then both the characteristic matrix $M(X,Y)$ and the equicharacteristic matrix $[M](X,Y)$ are identically 0, so that $S_B(M(X,Y)) = S_B([M](X,Y)) = 0$ for all $B$. However, we are also interested in how these quantities behave when $X$ and $Y$ are dependent. The following pair of propositions helps us understand this. The first proposition shows a lower bound on the values of entries in both $M(X,Y)$ and $[M](X,Y)$. The second proposition translates this into an asymptotic characterization of how quickly $S_B(M)$ and $S_B([M])$ grow as functions of $B$. These two propositions are the technical heart of why the total information coefficient yields a consistent independence test.

\begin{proposition}
\label{prop:nonzero_submatrix}
Let $(X,Y)$ be a pair of jointly distributed random variables. If $X$ and $Y$ are statistically independent, then $M(X,Y) \equiv [M](X,Y) \equiv 0$. If not, then there exists some $a > 0$ and some integer $\ell_0 \geq 2$ such that
\[
M(X,Y)_{k,\ell}, [M](X,Y)_{k,\ell} \geq \frac{a}{\log \min \{k, \ell\}}
\]
either for all $k \geq \ell \geq \ell_0$, or for all $\ell \geq k \geq \ell_0$.
\end{proposition}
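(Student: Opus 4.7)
The plan is to handle the two directions separately: first to verify quickly that independence forces both matrices to vanish, and second to construct a uniform lower bound on a large region of the matrices when $X$ and $Y$ are dependent. For the easy direction, if $X$ and $Y$ are independent then for any grid $G$ the discretizations $\mbox{col}_G(X)$ and $\mbox{row}_G(Y)$ are measurable functions of independent random variables and are therefore themselves independent, so $I((X,Y)|_G) = 0$; taking maxima over ordinary and equipartitioned $k \times \ell$ grids respectively yields $M \equiv [M] \equiv 0$.

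For the non-trivial direction, suppose $X$ and $Y$ are not independent. Since $I(X,Y) = \sup_G I((X,Y)|_G) > 0$, I fix a grid $G^* = (P^*, Q^*) \in G(k^*, \ell^*)$ with $c := I((X,Y)|_{G^*}) > 0$. I will prove the stated bound in the region $k \geq \ell \geq \ell_0$, which is one of the two alternatives offered by the proposition. For the characteristic matrix $M$ a pure refinement argument suffices: whenever $k \geq \ell \geq \max(k^*, \ell^*)$, the class $G(k,\ell)$ contains grids that refine $G^*$ (by adding extra row and column lines), and since refining a grid can only increase its induced mutual information (the coarser discretization is a function of the finer one, so this is the data processing inequality), $I^*((X,Y), k, \ell) \geq c$ and hence $M_{k,\ell} \geq c/\log\ell = c/\log\min\{k,\ell\}$.

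For the equicharacteristic matrix $[M]$ pure refinement fails, because the x-axis of a grid in $G([k],\ell)$ must be the $k$-equipartition of $X$, which need not refine $P^*$. Instead I combine the equipartition with a refined y-partition and exploit the fact that the $k$-equipartition captures $X$ in the limit. Writing $Y^* := Y|_{Q^*}$ and letting $X|_{[k]}$ denote the $k$-equipartition discretization of $X$, data processing gives $I(X, Y^*) \geq I(X|_{P^*}, Y^*) = c$, and combining the supremum characterization $I(X, Y^*) = \sup_P I(X|_P, Y^*)$ over finite partitions $P$ of $X$ with the observation that any such $P$ can be arbitrarily well approximated by a sub-partition of the $k$-equipartition for $k$ large yields $I(X|_{[k]}, Y^*) \to I(X, Y^*) \geq c$. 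Hence there is a threshold $k_0$ with $I(X|_{[k]}, Y^*) \geq c/2$ for all $k \geq k_0$. Setting $\ell_0 := \max(k_0, k^*, \ell^*)$, for any $k \geq \ell \geq \ell_0$ the grid in $G([k],\ell)$ consisting of the $k$-equipartition on the x-axis together with any $\ell$-row refinement of $Q^*$ attains mutual information at least $I(X|_{[k]}, Y^*) \geq c/2$ (by refinement on the y-axis), giving $[M]_{k,\ell} \geq (c/2)/\log\ell$. Taking $a = c/2$ yields the bound for both matrices with a common constant.

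The main obstacle is making the convergence $I(X|_{[k]}, Y^*) \to I(X, Y^*)$ genuinely rigorous: approximating an arbitrary finite partition $P$ of $X$ by a sub-partition of the $k$-equipartition produces small perturbations of the cell probabilities whose effect on mutual information must be controlled. This is immediate when the marginal of $X$ has no atoms (the equipartition boundaries can be placed at quantiles arbitrarily close to those of $P$, and mutual information is continuous in the resulting joint law on the finite number of cells); the general case can still be handled via a quantile convention for the equipartition together with a total-variation continuity argument for $I$ on finitely supported distributions.
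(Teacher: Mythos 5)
Your proof is correct, and it reaches the result by a route that is structurally parallel to the paper's but justified differently. Both arguments share the same skeleton: (i) independence kills every grid's mutual information, and (ii) under dependence one finds a single column index $\ell_0$ (equivalently, a single discretization $Y^*$ of one marginal) whose entries are uniformly bounded below for all large $k$, and then spreads that bound across the wedge $k \geq \ell \geq \ell_0$ using monotonicity of the optimized dimension. Where you differ is in how step (ii) is established. The paper gets the good column ``for free'' from its earlier machinery: dependence gives $\popMIC>0$, the boundary characterization ($\sup \partial [M] = \popMIC$) then yields a boundary element $[M]_{\uparrow,\ell_0}>0$, and positivity of that limit bounds all but finitely many entries of the $\ell_0$-th column; the spread across $\ell$ is done by noting that enlarging the optimized partition size cannot decrease $I^{[*]}$. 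You instead construct witnesses from scratch: a fixed grid $G^*$ with positive mutual information, pure refinement for $M$, and for $[M]$ the convergence $I(X|_{[k]},Y^*)\to I(X,Y^*)$ of equipartition-discretized mutual information, with the spread across $\ell$ done by refining $Q^*$ on the optimized axis. Your version is more self-contained (it does not invoke Theorem~\ref{thm:MICinTermsOfBoundary} or Corollary~\ref{cor:boundary_equichar}), at the cost of re-deriving the limit fact that the paper packages inside Theorem~\ref{thm:same_boundary} via the Cover--Thomas refinement argument; the technical caveat you flag (equipartitions of sizes $k$ and $k+1$ are not nested, so one needs approximation of an arbitrary partition by a sub-partition of a fine equipartition plus total-variation continuity of discrete mutual information) is real but is exactly what Proposition~\ref{prop:boundedVariationOfI} supplies, so it is handled at the same level of rigor as the paper's own appendices. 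One cosmetic point: as in the paper's proof, your threshold for $[M]$ is really $\max(k_0,k^*,\ell^*)$, so the final $\ell_0$ in the statement should be taken to be that maximum for both matrices simultaneously, which you do.
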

\begin{proof}
See Appendix~\ref{app:nonzero_submatrix}
\end{proof}

\begin{proposition}
\label{prop:growth_of_S}
Let $(X,Y)$ be a pair of jointly distributed random variables. If $X$ and $Y$ are statistically independent, then $S_B(M(X,Y)) = S_B([M](X,Y)) = 0$ for all $B > 0$. If not, then $S_B(M(X,Y))$ and $S_B([M](X,Y))$ are both $\Omega(B \log \log B)$.
\end{proposition}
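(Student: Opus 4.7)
The independence case is immediate from Proposition~\ref{prop:nonzero_submatrix}, which says that independence forces $M(X,Y) \equiv [M](X,Y) \equiv 0$, so every partial sum vanishes. The work is in the dependent case, and since the same entry-wise lower bound is guaranteed for both $M$ and $[M]$ by Proposition~\ref{prop:nonzero_submatrix}, it suffices to handle one matrix, say $M$; the argument for $[M]$ is identical.

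So suppose $X,Y$ are dependent. Apply Proposition~\ref{prop:nonzero_submatrix} to obtain $a > 0$ and $\ell_0 \geq 2$ such that, without loss of generality (the other case being symmetric in $k,\ell$), the lower bound $M_{k,\ell} \geq a/\log\min\{k,\ell\} = a/\log k$ holds for all pairs with $\ell \geq k \geq \ell_0$. The plan is then to restrict the partial sum $S_B(M)$ to pairs $(k,\ell)$ lying in this region and estimate the resulting double sum.

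Concretely, discard all terms except those with $\ell_0 \leq k \leq \ell$ and $k\ell \leq B$. For fixed $k$ in the range $\ell_0 \leq k \leq \lfloor\sqrt{B}/2\rfloor$, the admissible values of $\ell$ are $k \leq \ell \leq \lfloor B/k \rfloor$, and there are at least $\lfloor B/k \rfloor - k + 1 \geq B/(2k)$ of them (using $k \leq \sqrt{B}/2$). Each such term contributes at least $a/\log k$, so
\[
S_B(M) \;\geq\; \sum_{k=\ell_0}^{\lfloor \sqrt{B}/2 \rfloor} \frac{B}{2k} \cdot \frac{a}{\log k} \;=\; \frac{aB}{2} \sum_{k=\ell_0}^{\lfloor \sqrt{B}/2 \rfloor} \frac{1}{k\log k}.
\]
The remaining step is the standard integral estimate $\sum_{k=2}^{N} \frac{1}{k\log k} = \Theta(\log\log N)$, which applied to $N = \lfloor\sqrt{B}/2\rfloor$ yields $\Theta(\log\log B)$. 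Multiplying by $aB/2$ gives $S_B(M) = \Omega(B \log\log B)$ as claimed.

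The only mildly delicate point, and what I would flag as the main thing to be careful about, is ensuring that the restriction to $k \leq \sqrt{B}/2$ still picks up enough $\ell$-values per $k$ (so that the factor $B/k - k$ really is $\Omega(B/k)$) while simultaneously making the outer sum over $k$ reach far enough that the $\log\log$ appears; the choice $\sqrt{B}/2$ handles both. One also needs $B$ large enough that $\ell_0 \leq \sqrt{B}/2$, but this is irrelevant for the asymptotic claim. The same chain of inequalities, with $M$ replaced by $[M]$, gives the bound for $\TICestE$.
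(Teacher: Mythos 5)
Your proof is correct and follows essentially the same route as the paper's: both use the entry-wise lower bound from Proposition~\ref{prop:nonzero_submatrix}, count the admissible entries along each row (or column) of the matrix, and finish with the estimate $\sum 1/(k\log k) = \Theta(\log\log N)$. The only cosmetic difference is that you truncate the outer index at $\sqrt{B}/2$ to get a clean count of $\Omega(B/k)$ entries per index, whereas the paper keeps the exact count $B/\ell - (\ell-1)$ and absorbs the correction into an $O(B)$ term; both handle the same delicate point equally well.
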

\begin{proof}
See Appendix~\ref{app:growth_of_S}
\end{proof}

The propositions above, together with reasoning analogous to the convergence arguments presented earlier, can be used to show the main result of this section, namely that the statistics $\TIC$ and $\TICestE$ yield consistent independence tests.
\begin{theorem}
\label{thm:consistency_TICe}
The statistics $\TIC_B$ and $\TICestE_{,B}$ yield consistent right-tailed tests of independence, provided $\omega(1) < B(n) \leq O(n^{1-\ep})$ for some $\ep > 0$.
\end{theorem}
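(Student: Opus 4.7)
The plan is to exploit the asymptotic separation established by Proposition~\ref{prop:growth_of_S}: under independence the population partial sum $S_B(M(X,Y))$ is identically $0$, while under any dependent alternative it grows as $\Omega(B \log \log B)$. The strategy is to show that the sample partial sums $\TIC_B(D_n) = S_{B(n)}(\widehat{M}(D_n))$ and $\TICestE_{,B}(D_n) = S_{B(n)}(\widehat{[M]}(D_n))$ track their population counterparts closely enough that their sampling distributions under the null and under any fixed alternative become separated by a scale of $B(n) \log \log B(n)$. Consistency of a right-tailed test then follows by taking a threshold sequence $t_n$ in this gap.

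For the alternative direction, I would extract from the proof of Theorem~\ref{thm:estimator} a uniform-in-$(k,\ell)$ version of the entry-wise convergence used there: with a more careful bookkeeping of the rates, $\max_{k\ell \leq B(n)} |\widehat{M}(D_n)_{k,\ell} - M(X,Y)_{k,\ell}|$ should be $o_p(1/\log B(n))$. Since the number of index pairs with $k\ell \leq B(n)$ is $O(B(n) \log B(n))$, summing the uniform error yields $|S_{B(n)}(\widehat{M}(D_n)) - S_{B(n)}(M(X,Y))| = o_p(B(n))$. Combined with Proposition~\ref{prop:growth_of_S}, this gives $\TIC_B(D_n) \geq c \cdot B(n) \log \log B(n)$ with probability tending to $1$, for some constant $c>0$ depending on the joint law but not on $n$.

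For the null direction, $M(X,Y) \equiv 0$ forces every entry $\widehat{M}(D_n)_{k,\ell}$ to be a non-negative random variable converging to $0$; the same uniform rate then gives $\TIC_B(D_n) \leq O(B(n)\log B(n)) \cdot o_p(1/\log B(n)) = o_p(B(n))$, and in particular $\TIC_B(D_n)/(B(n)\log\log B(n)) \to 0$ in probability. The identical argument, with $\widehat{M}$ replaced by $\widehat{[M]}$ and using that $[M](X,Y) \equiv 0$ under independence (Proposition~\ref{prop:nonzero_submatrix}), controls $\TICestE_{,B}(D_n)$. Taking $t_n = \tfrac{c}{2}\, B(n) \log \log B(n)$ then gives vanishing type I error under the null and power tending to $1$ under every dependent alternative, which is the desired consistency.

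I expect the main obstacle to be establishing the uniform convergence rate invoked above: the pointwise consistency of each $\widehat{M}_{k,\ell}$ that Theorem~\ref{thm:estimator} essentially provides is not, on its own, sharp enough to push through to a sum of $\Theta(B(n)\log B(n))$ entries. One must revisit the dependency arguments in the proof of that theorem to show that the estimates $\widehat{M}_{k,\ell}$ for $k\ell \leq B(n)$ are coupled closely enough that a union bound over this range beats the entry count by a $\log\log B(n)$ factor. Once this uniform rate is in hand the remainder of the argument is bookkeeping, and the proofs for $\TIC_B$ and $\TICestE_{,B}$ are structurally identical.
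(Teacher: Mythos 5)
Your proposal is correct and follows essentially the same route as the paper: bound the sample partial sum's deviation from $S_{B(n)}(M)$ by a uniform entry-wise error times the $O(B(n)\log B(n))$ entry count, then invoke Proposition~\ref{prop:growth_of_S} to separate null from alternative. The uniform rate you flag as the main obstacle is already supplied by Lemma~\ref{lem:boundCharMatrixEntries} (and Lemma~\ref{lem:boundEquiCharMatrixEntries}), which give $O(n^{-a})$ uniformly over $k\ell \leq B(n)$ with probability $1-o(1)$ — far stronger than the $o_p(1/\log B(n))$ you require — so no further work on the dependency arguments is needed.
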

\begin{proof}
See Appendix~\ref{app:consistency_TICe}.
\end{proof}

In practice, we often use the \algname{EquicharClump} algorithm (see Section~\ref{subsec:computing_MICe}) to compute the equicharacteristic matrix from which we calculate $\TICestE$. This algorithm does not compute the sample equicharacteristic matrix exactly. However, as in the case of $\MICestE$, the use of the algorithm does not affect the consistency of our approach for independence testing. This is proven in Appendix~\ref{app:computation}.

\subsection{Power of independence tests based on \texorpdfstring{$\TICestE$}{TICe}}
\label{subsec:power_analysis}
With the consistency of independence tests based on $\TIC$ and $\TICestE$ established, we turn now to empirical evaluation of the power of independence testing based on $\TICestE$ as computed using the \algname{EquicharClump} algorithm.

To evaluate the power of $\TICestE$-based tests, we reproduced the analysis performed in \cite{simon2012comment}. Namely, for the set of functions $F$ chosen by Simon and Tibshirani, we considered the set of relationships they analyzed:
\[
\Q = \left\{ (X, f(X) + \ep') : X \sim \mbox{Unif}, f \in F, \ep' \sim \mathcal{N}(0, \sigma^2), \sigma \in \R_{\geq 0} \right\} .
\]

For each relationship $\mcZ$ in this set that we examined, we simulated a null hypothesis of independence with the same marginal distributions, and generated $1,000$ independent samples, each with a sample size of $n=500$, from both $\mcZ$ and from the null distribution. These were used to estimate the power of the size-$\alpha$ right-tailed independence test based on each statistic being evaluated. Following Simon and Tibshirani, we compared $\TICestE$ to the distance correlation \citep{szekely2007measuring, szekely2009brownian}, the original maximal information coefficient~\citep{MINE} as approximated using $\algname{Approx-MIC}$, and to the Pearson correlation. (Though it is not a measure of dependence, the Pearson correlation was presumably included by Simon and Tibshirani as an intuitive benchmark for what is achievable under a linear model.) We also compared to $\MICestE$ using identical parameters to those of $\TICestE$ to examine whether the summation performed by $\TICestE$ is better than maximization when all other things are equal. Note that we do not compare to methods of analyzing contingency tables, such as Pearson's chi-squared test. This is because our data are real-valued rather than discrete, and so contingency-based methods are not applicable. However, when data are discrete, those methods can be very well powered.

% Power of individual functions, using optimal params (n=500) based on AUC
\begin{figure}
	\centering
	\includegraphics[clip=true, trim = 0.85in 0.8in 0.85in 0.6in, width=0.975\textwidth]{\pathToFigures/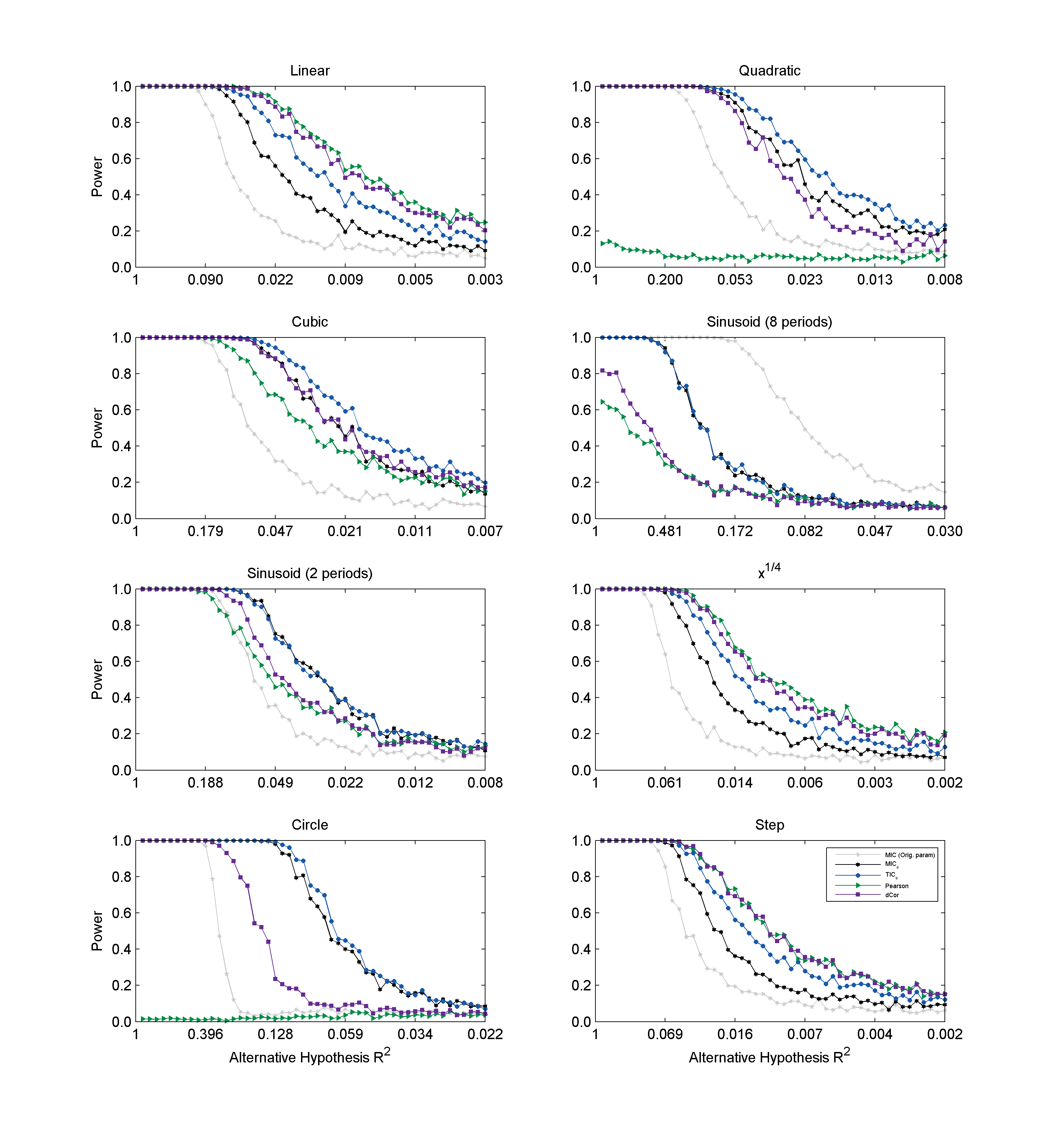}
	\caption[Power of independence tests based on measures of dependence across several relationship types]{Comparison of power of independence testing based on $\TICestE$ (blue) to $\MIC$ with default parameters (gray), $\MICestE$ with the same parameters as $\TICestE$ (black), distance correlation (purple), and the Pearson correlation coefficient (green) across several alternative hypothesis relationship types chosen by \cite{simon2012comment}. The relationships analyzed are described in Section~\ref{subsec:power_analysis}.
}
\label{fig:power_analysis}
\end{figure}

The results of our analysis are presented in Figure~\ref{fig:power_analysis}. First, the figure shows that $\TICestE$ compares quite favorably with distance correlation, a method considered to have state-of-the-art power \citep{simon2012comment}. Specifically, $\TICestE$ uniformly outperforms distance correlation on 5 of the 8 relationship types examined, and performs comparably to it on the other three relationship types. Distance correlation has many advantages over $\TICestE$, including the fact that it easily generalizes to higher-dimensional relationships. However, in the bivariate setting $\TICestE$ appears to perform as well as distance correlation if not better in terms of statistical power against independence.

The analysis also shows that $\TICestE$ outperforms the original maximal information coefficient by a very large margin, and outperformed $\MICestE$ as well, supporting the intuition that the summation performed by the former can indeed lead to substantial gains in power against independence over the maximization performed by the latter. (We note that in both Simon and Tibshirani's analysis and in this one, the original maximal information coefficient was run with default parameters that were optimized for equitability rather than power against independence. When run with different parameters, its power improves substantially, though it still does not match the power of $\MICestE$. See Appendix Figure~\ref*{fig:indivRelPowerOptimalParam} and the discussion in \cite{reshef2015comparisons}.)

Our companion paper \citep{reshef2015comparisons} expands on this analysis, conducting in-depth evaluation of the the power against independence of the tests described above as well as tests based on mutual information estimation \citep{Kraskov}, maximal correlation estimation \citep{breiman1985estimating}, HSIC \citep{gretton2005measuring, gretton2007kernel}, HHG \citep{heller2013consistent}, DDP \citep{heller2014consistent}, and RDC \citep{lopez2013randomized}. These analyses consider a range of sample sizes and parameter settings, as well as a variety of ways of quantifying power across different alternative hypothesis relationship types and noise levels. They conclude that in most settings $\TICestE$ either outperforms all the methods tested or performs comparably to the best ones. Appendix Figure~\ref*{fig:indivRelPowerOptimalParam} contains a reproduction of one detailed set of power curves from the main analysis in that paper for the reader's reference.

\section{Conclusion}
% Summary of major findings
As high-dimensional data sets become increasingly common, data exploration requires not only statistics that can accurately detect a large number of non-trivial relationships in a data set, but also ones that can identify a smaller number of strongest relationships. The former property is achieved by measures of dependence that yield independence tests with high power; the latter is achieved by measures of dependence that are equitable with respect to some measure of relationship strength. In this paper, we introduced two related measures of dependence that achieve these two goals, respectively, through the following theoretical contributions.

\begin{itemize}
\item A new population measure of dependence, $\popMIC$, that we proved can be viewed in three different ways: as the population value of the maximal information coefficient ($\MIC$) from \cite{MINE}, as a ``minimal smoothing'' of mutual information that makes it uniformly continuous, or as the supremum of an infinite sequence defined in terms of optimal partitions of one marginal at a time of a given joint distribution.
\item An efficient algorithm for approximating the $\popMIC$ of a given joint distribution.
\item A statistic $\MICestE$ that is a consistent estimator of $\popMIC$, is efficiently computable, and has good equitability with respect to $R^2$ both on a manually chosen set of noisy functional relationships as well as on randomly chosen noisy functional relationships.
\item The total information coefficient ($\TICestE$), a statistic that arises as a trivial side-product of the computation of $\MICestE$ and yields a consistent and powerful independence test.
\end{itemize}

Though we presented here some empirical results for $\popMIC$, $\MICestE$, and $\TICestE$, our focus was on theoretical considerations; the performance of these methods is analyzed in detail in our companion paper \citep{reshef2015comparisons}. That paper shows that on a large set of noisy functional relationships with varying noise and sampling properties, the asymptotic equitability with respect to $R^2$ of $\popMIC$ is quite high and the equitability with respect to $R^2$ of $\MICestE$ is state-of-the-art. It also shows that the power of the independence test based on $\TICestE$ is state-of-the-art across a wide variety of dependent alternative hypotheses. Finally, it demonstrates that the algorithms presented here allow for $\MICestE$ and $\TICestE$ to be computed simultaneously very quickly, enabling analysis of extremely large data sets using both statistics together.

% Importance of findings
Our contributions are of both theoretical and practical importance for several reasons. First, our characterization of $\popMIC$ as the large-sample limit of $\MIC$ sheds light on the latter statistic. For example, while $\MIC$ is parametrized, $\popMIC$ is not. Knowing that $\MIC$ converges in probability to $\popMIC$ tells us that this parametrization is statistical only: it controls the bias/variance properties of the statistic, but not its asymptotic behavior.

Second, the normalization in the definition of $\MIC$, while empirically seen to yield good performance, had previously not been theoretically understood. Our result that this normalization is the minimal smoothing necessary to make mutual information uniformly continuous provides for the first time a lens through which the normalization is canonical. In doing so, it constitutes an initial step toward understanding the role of the normalization in the performance of $\popMIC$ and $\MIC$. The uniform continuity of $\popMIC$ and the lack of continuity of ordinary mutual information also suggest that estimation of the former may be easier in some sense than estimation of the latter. This is consonant with the result concerning difficulty of estimation of mutual information shown in \cite{ding2013copula}. It is also borne out empirically by the substantial finite-sample bias and variance observed in \cite{reshef2015comparisons} of the Kraskov mutual information estimator \citep{Kraskov} compared to $\MICestE$.

Third, our alternate characterization of $\popMIC$ in terms of one-dimensional optimization over partitions rather than two-dimensional optimization over grids enhances our understanding of how to efficiently compute it in the large-sample limit and estimate it from finite samples using $\MICestE$. This is a significant improvement over the previous state of affairs, in which the statistic $\MIC$ could only be approximated heuristically, and orders of magnitude slower than the results in this paper now allow.

Finally, the introduction of the total information coefficient provides evidence that the basic approach of considering the set of normalized mutual information values achievable by applying different grids to a joint distribution is of fundamental value in characterizing dependence. Interestingly, a statistic introduced in \cite{heller2014consistent} follows a similar approach by considering the (non-normalized) sum of the mutual information values achieved by all possible finite grids. Consistent with our demonstration here that an aggregative grid-based approach works well, that statistic also achieves excellent power. ($\TICestE$ is compared to the statistic from \cite{heller2014consistent} in our companion paper, \cite{reshef2015comparisons}.)

% Implications for downstream analyses
Taken together, our results point to joint use of the statistics $\MICestE$ and $\TICestE$ as a theoretically grounded, computationally efficient, and highly practical approach to data exploration. Specifically, since the two statistics can be computed simultaneously with little extra cost beyond that of computing either individually, we propose computing both of them on all variable pairs in a data set, using $\TICestE$ to filter out non-significant associations, and then using $\MICestE$ to rank the remaining variable pairs. Such a strategy would have the advantage of leveraging the state-of-the-art power of $\TICestE$ to substantially reduce the multiple-testing burden on $\MICestE$, while utilizing the latter statistic's state-of-the-art equitability to effectively rank relationships for follow-up by the practitioner.

% Limitations and future work
Of course our results, while useful, nevertheless have limitations that warrant exploration in future work. First, for a sample $D$ from the distribution of some random $(X,Y)$, all of the sample quantities we define here use the naive estimate $I(D|_G)$ of the quantity $I((X,Y)|_G)$ for various grids $G$. There is a long and fruitful line of work on more sophisticated estimators of the discrete mutual information~\cite{paninski2003estimation} whose use instead of $I(D|_G)$ could improve the statistics introduced here. Second, our approach to approximating the $\popMIC$ of a given joint density consists of computing a finite subset of an infinite set whose supremum we seek to calculate. However, the choice of how large a finite set we should compute in order to approximate the supremum to a given precision remains heuristic. Finally, though empirical characterization of the equitability of $\MICestE$ on representative sets of relationships is important and promising, we are still missing a theoretical characterization of its equitability in the large-sample limit. A clear theoretical demarcation of the set of relationships on which $\popMIC$ achieves good equitability with respect to $R^2$, and an understanding of why that is, would greatly advance our understanding of both $\popMIC$ and equitability.

\section{Acknowledgments}
We would like to acknowledge R Adams, T Broderick, E Airoldi, A Gelman, M Gorfine, R Heller, J Huggins, J Mueller, and R Tibshirani for constructive conversations and useful feedback.

\appendix
\setcounter{figure}{0}
\renewcommand{\thefigure}{\thesection\arabic{figure}}

\section{Proof of Theorem~\ref{thm:estimator}}
\setcounter{figure}{0}
\label{appendix:popmicproof}
This appendix is devoted to proving Theorem~\ref{thm:estimator}, restated below.

%Since we couldn't create a no-numbered theorem without the amsthm package, and since that package conflicts with the jmlr2e package, then we have to just crudely create a non-numbered theorem.
\vspace{0.3cm}
\noindent \textbf{Theorem}
\textit{
Let $f : m^\infty \rightarrow \R$ be uniformly continuous, and assume that $f \circ r_i \rightarrow f$ pointwise. Then for every random variable $(X,Y)$, we have
\[
\left( f \circ r_{B(n)} \right) \left(\widehat{M}(D_n) \right) \rightarrow f(M(X,Y))
\]
in probability where $D_n$ is a sample of size $n$ from the distribution of $(X,Y)$, provided $\omega(1) < B(n) \leq O(n^{1-\ep})$ for some $\ep > 0$.}
\vspace{0.3cm}

We prove the theorem by a sequence of lemmas that build on each other to bound the bias of $I^*(D, k, \ell)$. The general strategy is to capture the dependencies between different $k$-by-$\ell$ grids $G$ by considering a ``master grid'' $\Gamma$ that contains many more than $k\ell$ cells. Given this master grid, we first bound the difference between $I(D|_G)$ and $I((X,Y)|_G)$ only for sub-grids $G$ of $\Gamma$. The bound is in terms of the difference between $D|_\Gamma$ and $(X,Y)|_\Gamma$. We then show that this bound can be extended without too much loss to all $k$-by-$\ell$ grids. This gives what we seek, because then the difference between $I(D|_G)$ and $I((X,Y)|_G)$ is uniformly bounded for all grids $G$ in terms of the same random variable: $D|_\Gamma$. Once this is done, standard arguments give the consistency we seek.

In our argument we occasionally require technical facts about entropy and mutual information that are self-contained and unrelated to the central ideas. These lemmas are consolidated in Appendix~\ref{appendix:technical}.

We begin by using one of these technical lemmas to prove a bound on the difference between $I(D|_G)$ and $I((X,Y)|_G)$ that is uniform over all grids $G$ that are sub-grids of a much denser grid $\Gamma$. The common structure imposed by $\Gamma$ will allow us to capture the dependence between the quantities $\left| I(D|_G) - I((X,Y)|_G) \right|$ for different grids $G$.

\begin{lemma}
\label{lem:boundMIOnSubgrids}
Let $\Pi = (\Pi_X, \Pi_Y)$ and $\Psi = (\Psi_X, \Psi_Y)$ be random variables distributed over the cells of a grid $\Gamma$, and let $(\pi_{i,j})$ and $(\psi_{i,j})$ be their respective distributions. Define
\[ \ep_{i,j} = \frac{\psi_{i,j} - \pi_{i, j}}{\pi_{i,j}} . \]
Let $G$ be a sub-grid of $\Gamma$ with $B$ cells. Then for every fixed $0 < a < 1$ we have
\[ \left| I(\Psi|_G) - I(\Pi|_G) \right| \leq \O{ \left( \log B \right) \sum_{i,j} |\ep_{i,j}| } \]
when $|\ep_{i,j}| \leq 1 - a$ for all $i$ and $j$.
\end{lemma}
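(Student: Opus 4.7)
The plan is to reduce the mutual-information difference to a sum of entropy differences, push the $\Gamma$-level perturbations $\epsilon_{i,j}$ down to the coarser grid $G$ while preserving the bounded-relative-error hypothesis, and then apply a continuity-of-entropy inequality with a $\log B$ factor.

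First I would write $I(\Pi|_G) = H(\Pi_X|_G) + H(\Pi_Y|_G) - H(\Pi|_G)$ and the analogous decomposition for $\Psi$, reducing the claim by the triangle inequality to bounds of the form $O(\log B)\sum_{i,j}|\epsilon_{i,j}|$ on the three entropy differences $|H(\Pi|_G) - H(\Psi|_G)|$, $|H(\Pi_X|_G) - H(\Psi_X|_G)|$, and $|H(\Pi_Y|_G) - H(\Psi_Y|_G)|$, taken over distributions on $B$, at most $B$, and at most $B$ outcomes, respectively.

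Next I would translate the perturbations from $\Gamma$ to $G$. Since $G$ is a sub-grid of $\Gamma$, each cell $(I,J)$ of $G$ is a union of cells of $\Gamma$; writing $p_{I,J} = \sum_{(i,j) \in (I,J)} \pi_{i,j}$ and $q_{I,J} = \sum_{(i,j) \in (I,J)} \psi_{i,j}$ gives $q_{I,J} - p_{I,J} = \sum_{(i,j) \in (I,J)} \pi_{i,j}\epsilon_{i,j}$. Two consequences are key. (a) The $G$-level relative error $(q_{I,J} - p_{I,J})/p_{I,J}$ is a $\pi_{i,j}$-weighted average of the $\epsilon_{i,j}$'s over the cells of $\Gamma$ inside $(I,J)$, so the hypothesis $|\epsilon_{i,j}| \leq 1-a$ descends to $\Pi|_G$ and $\Psi|_G$ (and, by an identical argument aggregating rows or columns, to their marginals). (b) The $L^1$ deviations at the $G$-level telescope over the cells of $\Gamma$:
\[
\sum_{I,J} \lvert q_{I,J} - p_{I,J} \rvert \;\leq\; \sum_{i,j} \pi_{i,j}\lvert \epsilon_{i,j} \rvert \;\leq\; \sum_{i,j} \lvert \epsilon_{i,j} \rvert,
\]
and the same inequality holds for each marginal after summing either rows or columns of the $G$-level table.

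Finally I would invoke a technical entropy-continuity lemma from Appendix~\ref{appendix:technical} that, under a bounded-relative-error hypothesis, gives $|H(p) - H(q)| = O(\log n)\lVert p - q \rVert_1$ for distributions on $n$ outcomes with $n \leq B$. Applying this to each of the three entropy differences and combining with the $L^1$ bound above yields the claimed $O((\log B)\sum_{i,j} |\epsilon_{i,j}|)$ estimate on $|I(\Psi|_G) - I(\Pi|_G)|$. The main obstacle is the technical continuity inequality itself: off-the-shelf bounds like Csisz\'ar--K\"orner give $\lVert p-q \rVert_1\log(n/\lVert p-q \rVert_1)$ rather than the cleaner $\lVert p-q \rVert_1 \log n$, and one needs the bounded-relative-error hypothesis (which prevents any cell's probability from contracting by more than a factor of $a$) in order to avoid the $\log(1/\lVert p - q \rVert_1)$ factor so that the bound composes correctly across the three entropy differences.
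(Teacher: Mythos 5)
Your overall architecture matches the paper's: decompose $\left| I(\Psi|_G) - I(\Pi|_G) \right|$ via $I = H(X) + H(Y) - H(X,Y)$ into three entropy differences, push the $\Gamma$-level perturbations down to $G$, and close with an entropy-continuity estimate from Appendix~\ref{appendix:technical}. However, the continuity inequality you invoke in the final step --- $|H(p)-H(q)| = O(\log n)\,\|p-q\|_1$ for distributions on $n$ outcomes under a bounded-relative-error hypothesis --- is false, and it is the load-bearing step. A counterexample with $n=2$: take $p=(1-t,\,t)$ and $q=(1-t/2,\,t/2)$ with $t$ small. The relative errors are at most $1/2$, so your hypothesis holds with $a=1/2$, yet $|H(p)-H(q)| = \Theta(t\log(1/t))$ while $\|p-q\|_1 = t$, so the ratio diverges as $t \to 0$. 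Bounded relative error does not remove the $\log(1/\|p-q\|_1)$ factor from the Csisz\'ar--K\"orner/Fannes-type bound.

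What does hold --- and what the paper's Lemma~\ref{lem:boundEntropy} actually proves by Taylor expansion --- is $|H(p)-H(q)| \leq O(\log B)\sum_i |e_i|$, where $e_i = (q_i - p_i)/p_i$ are the \emph{relative} errors: the key point is that $|e_i| \cdot (-p_i\log p_i) \leq |e_i|\log B$ term by term, so the weight $p_i$ is absorbed into the $\log B$ rather than retained as in an $L^1$ bound. (In the counterexample above, $\sum_i|e_i| \approx 1/2 \gg \|p-q\|_1$, which is why that bound survives.) Correspondingly, the quantity to carry from $\Gamma$ to $G$ is not your $L^1$ telescoping $\sum_{I,J}|q_{I,J}-p_{I,J}| \leq \sum_{i,j}\pi_{i,j}|\ep_{i,j}|$, but the bound on the $G$-level relative errors that you already observe in your point (a): since $e_{I,J}$ is a $\pi$-weighted average of the $\ep_{i,j}$ over the $\Gamma$-cells inside $(I,J)$, one gets $|e_{I,J}| \leq \sum_{(i,j)\in(I,J)}|\ep_{i,j}|$ and hence $\sum_{I,J}|e_{I,J}| \leq \sum_{i,j}|\ep_{i,j}|$. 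Feeding this (and its row- and column-marginal analogues) into the relative-error form of the entropy bound closes the argument; this is exactly the paper's route.
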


\begin{proof}
Let $P = \Pi|_G$ and $Q = \Psi|_G$ be the random variables induced by $\Pi$ and $\Psi$ respectively on the cells of $G$. Using the fact that $I(X,Y) = H(X) + H(Y) - H(X,Y)$, we write
\[ \left| I(Q) - I(P) \right| \leq \left| H(Q_X) - H(P_X) \right| + \left| H(Q_Y) - H(P_Y) \right| + \left| H(Q) - H(P) \right| \]
where $Q_X$ and $P_X$ denote the marginal distributions on the columns of $G$ and $Q_Y$ and $P_Y$ denote the marginal distributions on the rows. We can bound each of the terms on the right-hand side of the equation above using a Taylor expansion argument given in Lemma~\ref{lem:boundEntropy}, whose proof is found in the appendix. Doing so gives
\[ \left| I(Q) - I(P) \right| \leq (\ln B) \left( \sum_i \O{|\ep_{i,*}|} + \sum_j \O{|\ep_{*,j}|} + \sum_{i,j} \O{|\ep_{i,j}|} \right) \]
where
\[ \ep_{i,*} = \frac{\sum_j (\psi_{i,j} - \pi_{i,j})}{\sum_j \pi_{i,j}} \]
and $\ep_{*,j}$ is defined analogously.

To obtain the result, we observe that
\[ \left| \ep_{i,*} \right|
	= \left| \frac{\sum_j \pi_{i,j} \ep_{i,j}}{\sum_j \pi_{i,j}} \right|
	\leq \frac{ \sum_j \pi_{i,j} \left| \ep_{i,j} \right|}{\sum_j \pi_{i,j}}
	\leq \sum_j \left| \ep_{i,j} \right|
\]
since $\pi_{i,j} / \sum_j \pi_{i,j} \leq 1$, and the analogous bound holds for $\left| \ep_{*,j} \right|$.
\end{proof}

We now extend Lemma~\ref{lem:boundMIOnSubgrids} to all grids with $B$ cells rather than just those that are sub-grids of the master grid $\Gamma$. The proof of this lemma relies on an information-theoretic result proven in Appendix~\ref{app:continuityofM} that bounds the difference in mutual information between two distributions that can be obtained from each other by moving a small amount of probability mass.

\begin{lemma}
\label{lem:boundMI}
Let $\Pi = (\Pi_X, \Pi_Y)$ and $\Psi = (\Psi_X, \Psi_Y)$ be random variables, and let $\Gamma$ be a grid. Define $\ep_{i,j}$ on $\Pi|_\Gamma$ and $\Psi|_\Gamma$ as in Lemma~\ref{lem:boundMIOnSubgrids}. Let $G$ be any grid with $B$ cells, and let $\delta$ (resp. $d$) represent the total probability mass of $\Pi|_\Gamma$ (resp. $\Psi|_\Gamma$) falling in cells of $\Gamma$ that are not contained in individual cells of $G$. We have that
\[ \left| I(\Psi|_G) - I(\Pi|_G) \right|
\leq \O{ \left( \sum_{i,j} \left| \ep_{i,j} \right| + \delta + d \right) \log B + \delta \log(1/\delta) + d \log(1/d) }
\]
provided that the $|\ep_{i,j}|$ are bounded away from 1 and that $d, \delta \leq 1/2$.
\end{lemma}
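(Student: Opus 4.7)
The plan is to reduce to the setting of Lemma~\ref{lem:boundMIOnSubgrids} by perturbing both $\Pi$ and $\Psi$ so that the perturbed distributions place no mass on cells of $\Gamma$ that are straddled by gridlines of $G$. Call a cell of $\Gamma$ \emph{good} if it lies entirely inside a single cell of $G$ and \emph{bad} otherwise; by hypothesis the total $\Pi$-mass of bad cells is $\delta$ and the total $\Psi$-mass of bad cells is $d$. Fix one good cell once and for all, and let $\tilde\Pi$ (resp.\ $\tilde\Psi$) be the distribution obtained from $\Pi$ (resp.\ $\Psi$) by transporting all of its bad-cell mass into that single fixed good cell.

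First I would invoke the ``mass movement'' perturbation bound from Appendix~\ref{app:continuityofM} (the inequality used to establish Theorem~\ref{thm:continuityofM}), which asserts that moving total probability mass $\mu \leq 1/2$ within a random variable taking at most $B$ values changes its mutual information by $O(\mu \log B + \mu \log(1/\mu))$. Applied to the grid $G$, this gives
\[
|I(\Pi|_G) - I(\tilde\Pi|_G)| = O\!\left(\delta \log B + \delta \log(1/\delta)\right),
\]
\[
|I(\Psi|_G) - I(\tilde\Psi|_G)| = O\!\left(d \log B + d \log(1/d)\right),
\]
which produces the two ``entropy-of-$\delta$'' and ``entropy-of-$d$'' tails of the desired bound.

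Second, I would control $|I(\tilde\Psi|_G) - I(\tilde\Pi|_G)|$ via Lemma~\ref{lem:boundMIOnSubgrids}. Because $\tilde\Pi$ and $\tilde\Psi$ vanish on every bad cell of $\Gamma$, the grid $G$ behaves like a sub-grid of $\Gamma$ with respect to these perturbed distributions: assign each bad cell (now carrying zero mass for both distributions) arbitrarily to a cell of $G$ that intersects it, and observe that this assignment does not change $\tilde\Pi|_G$ or $\tilde\Psi|_G$. Lemma~\ref{lem:boundMIOnSubgrids} then yields
\[
|I(\tilde\Psi|_G) - I(\tilde\Pi|_G)| = O\!\left((\log B)\sum_{i,j}|\tilde\ep_{i,j}|\right),
\]
where $\tilde\ep_{i,j}$ is the relative deviation of $\tilde\Psi$ from $\tilde\Pi$ at the $(i,j)$-th cell of $\Gamma$. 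Since bad cells contribute zero to this sum and only the fixed target good cell has its relative deviation altered, a short direct calculation gives $\sum_{i,j}|\tilde\ep_{i,j}| \leq \sum_{i,j}|\ep_{i,j}| + O(\delta + d)$. A final triangle inequality combining these three estimates produces the claimed bound.

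The main obstacle is justifying the ``effective sub-grid'' step: $G$ is generally not literally a sub-grid of $\Gamma$, because its separators need not form a subset of $\Gamma$'s separators. The resolution is that once $\tilde\Pi$ and $\tilde\Psi$ have been arranged to vanish on all bad cells, the numerical values of $I(\tilde\Pi|_G)$ and $I(\tilde\Psi|_G)$ no longer depend on how the bad cells of $\Gamma$ are carved up by the gridlines of $G$, so we may reinterpret $G$ as a sub-grid of $\Gamma$ via an arbitrary assignment. A secondary subtlety is that the hypothesis $|\ep_{i,j}| \leq 1-a$ needed by Lemma~\ref{lem:boundMIOnSubgrids} must be transferred to $|\tilde\ep_{i,j}|$; the assumption $d,\delta \leq 1/2$ is exactly what keeps the modified target cell within the admissible range.
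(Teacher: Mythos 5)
Your overall architecture matches the paper's: a triangle inequality with two ``mass movement'' terms controlled by the perturbation bound of Proposition~\ref{prop:boundedVariationOfI} (yielding the $\delta\log(1/\delta)+\delta\log B$ and $d\log(1/d)+d\log B$ contributions) and one application of Lemma~\ref{lem:boundMIOnSubgrids} (yielding the $(\log B)\sum|\ep_{i,j}|$ contribution). The difference is where you put the perturbation, and that is where the argument breaks. The paper keeps $\Pi$ and $\Psi$ fixed and perturbs the \emph{grid}: it snaps each line of $G$ to a closest line of $\Gamma$ to obtain a genuine sub-grid $G'$ of $\Gamma$, notes that $\Pi|_{G'}$ (resp.\ $\Psi|_{G'}$) is obtained from $\Pi|_G$ (resp.\ $\Psi|_G$) by moving at most $\delta$ (resp.\ $d$) probability mass, and then applies Lemma~\ref{lem:boundMIOnSubgrids} to the \emph{original} distributions on $G'$, so the hypothesis $|\ep_{i,j}|\le 1-a$ is exactly the one already assumed. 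You instead perturb the \emph{distributions}, transporting the bad-cell mass of $\Pi$ (total $\delta$) and of $\Psi$ (total $d$) into one fixed good cell $(i_0,j_0)$, and then compare $\tilde\Pi$ and $\tilde\Psi$ cell-wise.

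The gap is in that comparison. At the target cell,
\[
\tilde\ep_{i_0,j_0}=\frac{(\psi_{i_0,j_0}+d)-(\pi_{i_0,j_0}+\delta)}{\pi_{i_0,j_0}+\delta},
\qquad\text{so}\qquad
\left|\tilde\ep_{i_0,j_0}\right|\;\ge\;\frac{|d-\delta|}{\pi_{i_0,j_0}+\delta}-\left|\ep_{i_0,j_0}\right| .
\]
This is not $O(\delta+d)$ and need not even be bounded: take $\delta=0$, $d>0$, and $\pi_{i_0,j_0}\ll d$; then $\tilde\ep_{i_0,j_0}\approx d/\pi_{i_0,j_0}$ is arbitrarily large. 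So both your intermediate claim $\sum_{i,j}|\tilde\ep_{i,j}|\le\sum_{i,j}|\ep_{i,j}|+O(\delta+d)$ and the hypothesis that the $|\tilde\ep_{i,j}|$ are bounded away from $1$ (needed to invoke Lemma~\ref{lem:boundMIOnSubgrids}) fail; the condition $d,\delta\le 1/2$ does nothing to prevent this, since the obstruction is the ratio of $|d-\delta|$ to the mass of the receiving cell, not the size of $d$ or $\delta$. Spreading the bad mass proportionally over all good cells does not rescue the bound either, since it perturbs every one of the possibly very many good cells by a multiplicative factor $(1-\delta)/(1-d)$, and the resulting sum of deviations picks up a factor of the number of cells of $\Gamma$. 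The additive perturbations $+\delta$ and $+d$ simply do not respect the multiplicative closeness that Lemma~\ref{lem:boundMIOnSubgrids} requires. Your ``effective sub-grid'' observation is sound, but the correct way to exploit it is the paper's: leave the distributions alone, move the gridlines, and let Proposition~\ref{prop:boundedVariationOfI} absorb the resulting $\delta$ and $d$ of transported mass one distribution at a time.
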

\begin{proof}
In the proof below, we use the convention that for any two grids $G$ and $G'$ and any random variable $Z$, the expression $\Delta^Z(G, G')$ denotes $| I(Z|_G) - I(Z|_{G'})|$.

Consider the grid $G'$ obtained by replacing every horizontal or vertical line in $G$ that is not in $\Gamma$ with a closest line in $\Gamma$. The grid $G'$ is clearly a sub-grid of $\Gamma$. Moreover, $\Pi|_{G'}$ (resp. $\Psi|_{G'}$) can be obtained from $\Pi|_G$ (resp. $\Pi|_G$) by moving at most $\delta$ (resp. $d$) probability mass. This can be shown to imply that
\[ \Delta^\Pi(G, G') \leq \O{ \delta \log(1/\delta) + \delta \log B }
\quad \mbox{and} \quad
\Delta^{\Psi}(G', G) \leq \O{ d \log(1/d) + d \log B } .\]
The proof of this information-theoretic fact is self-contained and so we defer it to Proposition~\ref{prop:boundedVariationOfI} in Appendix~\ref{app:continuityofM}, as it is more central to the arguments presented there.

With $\Delta^\Phi(G,G')$ and $\Delta^\Psi(G', G)$ bounded in terms of $\delta$ and $d$, we can bound $|I(\Psi|_G) - I(\Phi|_G)|$ using the triangle inequality by comparing it with
\[ \Delta^\Pi(G, G')
+ \left| I \left( \Pi|_{G'} \right) - I \left( \Psi|_{G'} \right) \right|
+ \Delta^{\Psi}(G', G) \]
and bounding the middle term using Lemma~\ref{lem:boundMIOnSubgrids}, since $G' \subset \Gamma$.
\end{proof}

We now use the fact that the variables $\ep_{i,j}$ defined in Lemma~\ref{lem:boundMIOnSubgrids} are small with high probability to give a concrete bound on the bias of $I(D|_G)$ that is uniform over all $k$-by-$\ell$ grids $G$ and that holds with high probability. It is useful at this point to recall that, given a distribution $(X,Y)$, an {\em equipartition} of $(X,Y)$ is a grid $G$ such that all the rows of $(X,Y)|_G$ have the same probability mass, and all the columns do as well.

\begin{lemma}
\label{lem:unionBoundOverGrids}
Let $D_n$ be a sample of size $n$ from the distribution of a pair $(X,Y)$ of jointly distributed random variables. For any $\alpha \geq 0$, any $\ep > 0$, and any integers $k, \ell > 1$, we have that for all $n$
\[ \left| I(D_n|_G) - I((X,Y)|_G) \right|
	\leq \O{ \frac{\log (k \ell)}{C(n)^\alpha} + \frac{\log (k\ell n)}{n^{\ep/4}} } \]
for every $k$-by-$\ell$ grid $G$ with probability at least $1 - C(n) e^{-\Omega(n/C(n)^{1+2\alpha})}$, where $C(n) = k\ell n^{\ep/2}$.
\end{lemma}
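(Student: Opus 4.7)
The plan is to invoke Lemma~\ref{lem:boundMI} with a carefully chosen \emph{master} grid $\Gamma$ on which all the needed probabilistic control is concentrated. The central idea is that by routing the argument through a single $\Gamma$ that depends only on $n$, $k$, and $\ell$, the task of controlling $|I(D_n|_G) - I((X,Y)|_G)|$ uniformly over the uncountable family of $k$-by-$\ell$ grids $G$ is reduced to a finite union bound over the cells of $\Gamma$.

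Concretely, I would take $\Gamma$ to be an equipartition of $(X,Y)$ whose total number of cells is $C(n) = k\ell\, n^{\ep/2}$ (so each cell has mass exactly $1/C(n)$), and set $\Pi = (X,Y)|_\Gamma$, $\Psi = D_n|_\Gamma$. Since the empirical count in each cell is binomial with mean $n/C(n)$, a multiplicative Chernoff bound yields
\[
\Pr\!\left[\, |\ep_{i,j}| > 1/C(n)^\alpha \,\right] \;\leq\; e^{-\Omega(n/C(n)^{1+2\alpha})}.
\]
A union bound over the $C(n)$ cells of $\Gamma$ then gives the failure probability $C(n)\, e^{-\Omega(n/C(n)^{1+2\alpha})}$ appearing in the lemma, and on its complementary event $\mathcal{E}$ every relative error $|\ep_{i,j}|$ is at most $1/C(n)^\alpha$; in particular, they are bounded away from $1$, so the hypothesis of Lemma~\ref{lem:boundMI} is satisfied.

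For an arbitrary $k$-by-$\ell$ grid $G$, I would then construct $G' \subset \Gamma$ by snapping each of $G$'s $k-1$ horizontal and $\ell-1$ vertical separators to a nearest line of $\Gamma$. Because $\Gamma$ is an equipartition of $(X,Y)$, each snapped separator shifts mass at most by the side-length of $\Gamma$, so that $\delta = \O{(k+\ell)/\sqrt{C(n)}}$; absorbing the $k+\ell$ dependence into the $k\ell\, n^{\ep/2}$ normalization of $C(n)$ produces the target bound $\delta = \O{1/n^{\ep/4}}$, and the sample-side counterpart $d$ is controlled on $\mathcal{E}$ via $d \leq \delta + \|\Psi - \Pi\|_1$. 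Feeding these estimates into Lemma~\ref{lem:boundMI} with $B = k\ell$, the $\sum_{i,j}|\ep_{i,j}|$ contribution, weighted by the uniform cell masses, yields the first summand $\log(k\ell)/C(n)^\alpha$, while the $\delta \log B + \delta\log(1/\delta)$ term together with its $d$-analogue combine into the second summand $\log(k\ell n)/n^{\ep/4}$.

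The main technical hurdle is securing \emph{uniformity} over $G$: a direct union bound is impossible because there are uncountably many $k$-by-$\ell$ grids, but the master-grid device collapses all of the randomness onto the $C(n)$ events $\{|\ep_{i,j}| > 1/C(n)^\alpha\}$, after which the approximation $G \mapsto G'$ is purely deterministic. A secondary subtlety is tuning the resolution of $\Gamma$: refining it improves the deterministic error $\delta$ but worsens the per-cell Chernoff concentration, and the choice of total size $C(n)$ is exactly what makes these two effects balance into the two summands with the stated exponents.
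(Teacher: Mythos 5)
Your overall architecture --- a master equipartition $\Gamma$, per-cell Chernoff bounds, a union bound over the $C(n)$ cells, and a deterministic snapping of an arbitrary $G$ onto a sub-grid of $\Gamma$ fed into Lemma~\ref{lem:boundMI} --- is exactly the paper's strategy, and you correctly identify why routing everything through one master grid is what buys uniformity over the uncountable family of grids. But two steps as written do not go through. First, Lemma~\ref{lem:boundMI} is stated in terms of the \emph{unweighted} sum $\sum_{i,j}|\ep_{i,j}|$ of relative errors, not the mass-weighted sum that your phrase ``weighted by the uniform cell masses'' suggests. With your per-cell threshold $|\ep_{i,j}|\le 1/C(n)^{\alpha}$, summing over the $C(n)$ cells gives only $\sum_{i,j}|\ep_{i,j}|\le C(n)^{1-\alpha}$, which diverges instead of producing the needed $C(n)^{-\alpha}$. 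The paper's fix is a cell-dependent threshold $|\ep_{i,j}|\le\sqrt{\pi_{i,j}}/C(n)^{1/2+\alpha}$, after which concavity of $\sum_{i,j}\sqrt{\pi_{i,j}}$ (maximized at the uniform distribution, with value $\sqrt{C(n)}$) yields $\sum_{i,j}|\ep_{i,j}|\le C(n)^{-\alpha}$ while the union bound still gives the stated failure probability.

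Second, an equipartition of a dependent $(X,Y)$ equalizes the masses of rows and of columns, not of individual cells, so ``each cell has mass exactly $1/C(n)$'' is false; the binomial mean in cell $(i,j)$ is $n\pi_{i,j}$, not $n/C(n)$, which is another reason the deviation threshold must depend on $\pi_{i,j}$. Relatedly, your estimate $\delta=O\left((k+\ell)/\sqrt{C(n)}\right)$ presupposes a roughly square aspect ratio for $\Gamma$, and the factor $(k+\ell)/\sqrt{k\ell}$ is unbounded when $k$ and $\ell$ are very different, so ``absorbing'' it into the definition of $C(n)$ is not a valid step. The paper instead fixes the aspect ratio of $\Gamma$ to be $kn^{\ep/4}$ rows by $\ell n^{\ep/4}$ columns: each of the at most $k-1$ horizontal separators of $G$ lies inside a row of $\Gamma$ of mass $1/(kn^{\ep/4})$, and similarly for columns, giving $\delta\le 2/n^{\ep/4}$ directly; the sample-side quantity $d$ is then bounded by $2\delta$ by showing $\psi_{i,j}\le 2\pi_{i,j}$ cell-wise on the good event. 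With these two repairs your argument coincides with the paper's proof.
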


\begin{proof}
Fix $n$, and let $\Gamma$ be an equipartition of $(X,Y)$ into $kn^{\ep/4}$ rows and $\ell n^{\ep/4}$ columns. $C(n)$ is now the number of cells in $\Gamma$. Lemma~\ref{lem:boundMI}, with $\Pi = (X,Y)$ and $\Psi = D$, shows that $\left| I(D|_G) - I((X,Y)|_G) \right|$ is at most
\[
\O{ \left( \sum_{i,j} \left| \ep_{i,j} \right| + \delta + d \right) \log (k\ell) + \delta \log(1/\delta) + d \log(1/d) }
\]
provided the $\ep_{i,j}$ have absolute value bounded away from 1, and provided that $d, \delta \leq 1/2$.

The remainder of the proof proceeds as follows. We first show that the $\ep_{i,j}$ are small with high probability. This will both show that the lemma's requirement on the $\ep_{i,j}$ holds and allow us to bound the sum in the inequality above. We will then use our bound on the $\ep_{i,j}$ to bound $d$ in terms of $\delta$. Finally, we will bound $\delta$ using the fact that the number of rows and columns in $\Gamma$ increases with $n$. This will give us that $d, \delta \leq 1/2$ and allow us to bound the rest of the terms in the expression above.

\proofwaypoint{Bounding the $\ep_{i,j}$}
We bound the $\ep_{i,j}$ using a multiplicative Chernoff bound. Let $\pi_{i,j}$ and $\psi_{i,j}$ represent the probability mass functions of $(X,Y)|_\Gamma$ and $D|_\Gamma$ respectively. We write
\begin{eqnarray*}
\Pr{ \left| \ep_{i,j} \right| \geq \delta} &=& \Pr{\pi_{i,j}(1 - \delta) \leq \psi_{i,j} \leq \pi_{i,j}(1 + \delta)} \\
	&\leq& e^{-\Omega(n\pi_{i,j} \delta^2)}
\end{eqnarray*}
since $\psi_{i,j}$ is a sum of $n$ i.i.d Bernoulli random variables and $\E{\psi_{i,j}} = n\pi_{i,j}$. (See, e.g., \cite{mitzenmacher2005probability}.) Setting $\delta = \sqrt{\pi_{i,j}}/C(n)^{1/2 + \alpha}$ yields
\[
\Pr{|\ep_{i,j}| \geq \frac{\sqrt{\pi_{i,j}}}{C(n)^{1/2 + \alpha}}} \leq e^{-\Omega(n/C(n)^{1+2\alpha})} .
\]
A union bound over the pairs $(i,j)$ then gives that, with the desired probability, the above bound on $|\ep_{i,j}|$ holds for all $i,j$.

\proofwaypoint{Bounding $\sum \left| \ep_{i,j} \right|$}
The bound on the $\ep_{i,j}$ implies that
\begin{eqnarray*}
\sum_i |\ep_{i,j}| &\leq& \frac{1}{C(n)^{1/2 + \alpha}} \sum_{i,j} \sqrt{\pi_{i,j}} \\
	&\leq& \frac{1}{C(n)^{1/2 + \alpha}} \sqrt{C(n)} \\
	&\leq& \frac{1}{C(n)^{\alpha}}
\end{eqnarray*}
where the second line follows from the fact that the function $\sum \sqrt{\pi_{i,j}}$ is symmetric and concave and therefore, when restricted to the hyperplane $\sum \pi_{i,j }= 1$, must achieve its maximum when $\pi_{i,j} = 1/C(n)$ for all $i,j$.

\proofwaypoint{Bounding $d$ in terms of $\delta$}
We use our bound on the $\ep_{i,j}$ to bound $d$. We do so by observing that it implies
\[
\psi_{i,j} \leq \pi_{i,j} \left( 1 + \frac{\sqrt{\pi_{i,j}}}{C(n)^{1/2 + \alpha}} \right)
	= \pi_{i,j} + \frac{\pi_{i,j}^{3/2}}{C(n)^{1/2 + \alpha}}
	\leq \pi_{i,j} + \frac{\pi_{i,j}}{C(n)^{1/2 + \alpha}}
	\leq 2\pi_{i,j} \]
since $\pi_{i,j} \leq 1$ and $C(n) \geq 1$.

The connection to $d$ comes from the fact that for any column $j$ of $\Gamma$, this means that
\[
\psi_{*,j} = \sum_i \psi_{i,j} \leq 2 \sum_i \pi_{i,j} = 2 \pi_{*,j} .
\]
This also applies to the sums across rows. Since $d$ is a sum of terms of the form $\psi_{*,j}$ and $\psi_{i,*}$ for $j$ in some index set $J$ and $i$ in an index set $I$, and $\delta$ is a sum of terms of the form $\pi_{*,j}$ and $\pi_{i,*}$ with the same index sets, we therefore get that $d \leq 2 \delta$.

\proofwaypoint{Bounding $\delta$ and obtaining the result}
To bound $\delta$, we observe that because $G$ has at most $\ell - 1$ vertical lines and $k -1$ horizontal lines, we have
\[
\delta \leq \frac{\ell}{\ell n^{\ep/ 4}} + \frac{k}{k n^{\ep/4}} \leq \frac{2}{n^{\ep/4}} .
\]

This bound on $\delta$ allows us to bound the terms involving $d$ and $\delta$ by
\[
\delta + d \leq \O{ \frac{1}{n^{\ep/4}} },
\quad
\delta \log\left( \frac{1}{\delta} \right) + d \log\left( \frac{1}{d} \right) \leq \O{\frac{\log n}{n^{\ep/4}} } .
\]
Combining all of the bounds gives the desired result.
\end{proof}

Our final lemma shows that as long as $B(n)$ doesn't grow too fast, the bound from the previous lemma yields a uniform bound on the entire sample characteristic matrix. This is done by specifying an error threshold for which Lemma~\ref{lem:unionBoundOverGrids} yields a bound that holds with high probability, and then invoking a union bound.

\begin{lemma}
\label{lem:boundCharMatrixEntries}
Let $D_n$ be a sample of size $n$ from the distribution of a pair $(X,Y)$ of jointly distributed random variables. For every $B(n) = \O{n^{1-\ep}}$, there exists an $a > 0$ such that for sufficiently large $n$,
\[
\left| \widehat{M}(D_n)_{k, \ell} - M(X,Y)_{k, \ell} \right|
	\leq \O{ \frac{1}{n^a} }
\]
holds for all $k\ell \leq B(n)$ with probability $P(n) = 1 - o(1)$, where $\widehat{M}(D_n)_{k,\ell}$ is the $k, \ell$-th entry of the sample characteristic matrix and $M(X,Y)_{k,\ell}$ is the $k, \ell$-th entry of the population characteristic matrix of $(X,Y)$.
\end{lemma}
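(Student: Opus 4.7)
\textbf{Proof plan for Lemma~\ref{lem:boundCharMatrixEntries}.} The plan is to invoke Lemma~\ref{lem:unionBoundOverGrids} for each pair $(k,\ell)$ with $k\ell\leq B(n)$, apply a union bound over those pairs, and then observe that a bound on $|I(D_n|_G)-I((X,Y)|_G)|$ that holds uniformly in $G\in G(k,\ell)$ transfers to a bound on $|I^*(D_n,k,\ell)-I^*((X,Y),k,\ell)|$ and hence on $|\widehat M(D_n)_{k,\ell}-M(X,Y)_{k,\ell}|$.

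First, fix $\ep>0$ as in the statement and pick $\alpha>0$ small enough that the exponent produced by Lemma~\ref{lem:unionBoundOverGrids} is polynomially small. Concretely, for any $(k,\ell)$ with $k\ell\leq B(n)=O(n^{1-\ep})$ we have
\[
C(n)=k\ell\,n^{\ep/2}\leq O\!\left(n^{1-\ep/2}\right),
\]
so if $\alpha$ is chosen so that $(1-\ep/2)(1+2\alpha)<1$ (any $\alpha<\ep/(4-2\ep)$ works), then $n/C(n)^{1+2\alpha}\geq n^{\beta}$ for some $\beta=\beta(\ep,\alpha)>0$. With this choice, Lemma~\ref{lem:unionBoundOverGrids} gives, for the fixed pair $(k,\ell)$,
\[
|I(D_n|_G)-I((X,Y)|_G)|\leq O\!\left(\frac{\log(k\ell)}{C(n)^\alpha}+\frac{\log(k\ell n)}{n^{\ep/4}}\right)
\quad\text{for every }G\in G(k,\ell),
\]
with probability at least $1-C(n)e^{-\Omega(n^\beta)}$.

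Next, I union-bound over all pairs $(k,\ell)$ with $k\ell\leq B(n)$. There are at most $B(n)^2=O(n^{2(1-\ep)})$ such pairs, and for each one the failure probability is at most $C(n)e^{-\Omega(n^\beta)}\leq n\cdot e^{-\Omega(n^\beta)}$. Since exponential decay dominates any polynomial factor, the total failure probability is $o(1)$. On the complementary event, the displayed inequality above holds simultaneously for every $G\in G(k,\ell)$ and every $(k,\ell)$ with $k\ell\leq B(n)$. Since $k\ell\leq B(n)\leq n$, we have $\log(k\ell)\leq\log n$ and $\log(k\ell n)\leq 2\log n$; and since $k\ell\geq 4$, $C(n)^\alpha\geq n^{\ep\alpha/2}$. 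Thus the right-hand side is bounded by $O(\log n/n^{a_0})$ for $a_0=\min\{\ep\alpha/2,\ep/4\}>0$, and the $\log n$ factor can be absorbed by taking any $a<a_0$, so the bound becomes $O(1/n^a)$.

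Finally, I translate this grid-wise bound into the matrix-entry bound. Since the maximum $I^*((X,Y),k,\ell)=\max_{G\in G(k,\ell)}I((X,Y)|_G)$ and likewise $I^*(D_n,k,\ell)=\max_G I(D_n|_G)$, a uniform bound of $\eta_n=O(1/n^a)$ on $|I(D_n|_G)-I((X,Y)|_G)|$ over $G\in G(k,\ell)$ immediately gives $|I^*(D_n,k,\ell)-I^*((X,Y),k,\ell)|\leq\eta_n$ by the standard fact that maxima are $1$-Lipschitz under the sup-norm. Dividing by $\log\min\{k,\ell\}\geq\log 2$ preserves this estimate up to a constant, yielding $|\widehat M(D_n)_{k,\ell}-M(X,Y)_{k,\ell}|\leq O(1/n^a)$ uniformly for $k\ell\leq B(n)$ on the good event of probability $1-o(1)$.

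The main obstacle, and where care is required, is calibrating $\alpha$ against $\ep$ so that the per-pair failure probability from Lemma~\ref{lem:unionBoundOverGrids} decays fast enough to survive a union bound of size $B(n)^2$ while simultaneously ensuring that $C(n)^\alpha$ grows polynomially in $n$; everything else is bookkeeping. Once $\alpha$ is chosen in the window described above, both the union bound and the error estimate work out with a single $a>0$.
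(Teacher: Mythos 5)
Your proposal is correct and follows essentially the same route as the paper's proof: the same choice of $\alpha < \ep/(4-2\ep)$, the same invocation of Lemma~\ref{lem:unionBoundOverGrids}, and the same polynomial-count union bound over pairs $(k,\ell)$ with $k\ell \leq B(n)$. The only difference is that you spell out explicitly the (correct) step from the uniform bound over grids to the bound on $I^*$ and the normalized matrix entries, which the paper treats as immediate.
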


\begin{proof}
Fix $k, \ell$, and any $\alpha$ satisfying $0 < \alpha < \ep/(4 - 2\ep)$. Lemma~\ref{lem:unionBoundOverGrids} implies that with high probability the difference $|\widehat{M}(D_n)_{k, \ell} - M_{k,\ell}|$ is at most
\begin{eqnarray*}
\O{ \frac{\log(k\ell)}{C(n)^\alpha} + \frac{\log(k\ell n)}{n^{\ep / 4}} }
	&\leq& \O{ \frac{\log n}{C(n)^\alpha} + \frac{\log n}{n^{\ep / 4}} } \\
&\leq& \O{ \frac{\log n}{n^{\alpha \ep/2}} + \frac{\log n}{n^{\ep / 4}} }
\end{eqnarray*}
where the first inequality comes from $k\ell \leq B(n)$ and second is because $C(n) = k\ell n^{\ep/2} \geq n^{\ep/2}$. This bound is at most $\O{1/n^a}$ for every $a < \min\{\alpha \ep/2, \ep/4\}$, as desired. It remains only to show that the bound holds with high probability across all $k\ell \leq B(n)$.

Lemma~\ref{lem:unionBoundOverGrids} states that the probability our bound holds for one fixed pair $(k, \ell)$ is at least
\[ 1 - C(n)e^{-\Omega(n/C(n)^{1+2\alpha})} \geq 1 - \O{n}e^{-\Omega(n^u)} \]
for some positive $u$. This is because $C(n) \leq B(n)n^{\ep/2} \leq \O{n^{1 - \ep/2}}$ for large $n$, and so our choice of $\alpha$ ensures that $C(n)^{1 + 2\alpha} = \O{n^{1-u}}$ for some $u > 0$.

We can then perform a union bound over all pairs $k\ell \leq B(n)$: since the number of such pairs can be bounded by a  polynomial in $n$, we have that the desired condition is satisfied for all $k\ell \leq B(n)$ with probability approaching 1.
\end{proof}

We are now ready to prove the main result.

%Since we couldn't create a no-numbered theorem without the amsthm package, and since that package conflicts with the jmlr2e package, then we have to just crudely create a non-numbered theorem.
\vspace{0.3cm}
\noindent \textbf{Theorem}
\textit{
Let $f : m^\infty \rightarrow \R$ be uniformly continuous, and assume that $f \circ r_i \rightarrow f$ pointwise. Then for every random variable $(X,Y)$, we have
\[
\left( f \circ r_{B(n)} \right) \left(\widehat{M}(D_n) \right) \rightarrow f(M(X,Y))
\]
in probability where $D_n$ is a sample of size $n$ from the distribution of $(X,Y)$, provided $\omega(1) < B(n) \leq O(n^{1-\ep})$ for some $\ep > 0$.}
\vspace{0.3cm}
\begin{proof}
Let $N$ denote $B(n)$, let $M_N = r_N(M)$, and let $\widehat{M}_N(D_n) = r_N(\widehat{M}(D_n))$. We begin by writing
\begin{eqnarray*}
\left| f \left( \widehat{M}_N(D_n) \right) - f(M) \right| &\leq& \left| f \left( \widehat{M}_N(D_n) \right) - f \left( M_N \right) \right| + \left| f \left( M_N \right) - f(M) \right| \\
	&=& \left| f \left( \widehat{M}_N(D_n) \right) - f \left( M_N \right) \right| + \left| \left( f \circ r_N \right)(M) - f(M) \right|
\end{eqnarray*}
and observing that as $n \rightarrow \infty$, the second term vanishes by the pointwise convergence of $f \circ r_i$ and the fact that $B(n) > \omega(1)$. It therefore suffices to show that the first term converges to 0 in probability. Since $f$ is uniformly continuous, we can establish this via a simple adaptation of the continuous mapping theorem, which says that if the sequence of random variables $R_n \rightarrow R$ in probability, and $g$ is continuous, then $g(R_n) \rightarrow g(R)$ in probability. We replace $R$ with a second sequence, and replace continuity with uniform continuity.

Let $\| \cdot \|$ denote the supremum norm on $m^\infty$, and fix any $z > 0$. Then, for any $\delta > 0$, define
\[
C_{\delta} = \left\{ A \in m^\infty : \exists A' \in m^\infty\mbox{ s.t. } \| A - A' \| < \delta, \left| f(A) - fA') \right| > z \right\} .
\]
This is the set of matrices $A \in m^\infty$ for which it is possible to find, within a $\delta$-neighborhood of $A$, a second matrix that $f$ maps to more than $z$ away from $f(A)$. Because $f$ is uniformly continuous, there exists a $\delta^*$ sufficiently small so that $C_{\delta^*} = \emptyset$.

Suppose that $| f (\widehat{M}_N(D_n)) - f(M_N) | > z$. This means that either $\| \widehat{M}_N(D_n) - M_N \| > \delta^*$, or $M_N \in C_{\delta^*}$. The latter option is impossible since $C_{\delta^*} = \emptyset$, and Lemma~\ref{lem:boundCharMatrixEntries} tells us that $\Pr{\| \widehat{M}_N(D_n) - M_N \| > \delta^*} \rightarrow 0$ as $n$ grows. We therefore have that
\[ \left| f \left( \widehat{M}_N(D_n) \right) - f(M_N) \right| \rightarrow 0 \]
in probability, as desired.
\end{proof}

\section{Proof of Theorem~\ref{thm:continuityofM}}
\label{app:continuityofM}
\setcounter{figure}{0}
In this appendix we prove Theorem~\ref{thm:continuityofM}, reproduced below.

%Since we couldn't create a no-numbered theorem without the amsthm package, and since that package conflicts with the jmlr2e package, then we have to just crudely create a non-numbered theorem.
\vspace{0.3cm}
\noindent \textbf{Theorem}
\textit{
Let $\P(\R^2)$ denote the space of random variables supported on $\R^2$ equipped with the metric of statistical distance. The map from $\P(\R^2)$ to $m^\infty$ defined by $(X,Y) \mapsto M(X,Y)$ is uniformly continuous.
}
\vspace{0.3cm}

The proposition below begins our argument with the simple observation that the family of maps consisting of applying any finite grid to some $(X,Y) \in \P(\R^2)$ is uniformly equicontinuous. The reason this holds is that $(X,Y)|_G$ is a deterministic function of $(X,Y)$, and deterministic functions cannot increase statistical distance.

\begin{proposition}
\label{prop:equicontinuityOfGridding}
Let $\mathbb{G}$ be the set of all finite grids. The family $\{ (X,Y) \mapsto (X,Y) |_G : G \in \mathbb{G} \}$ is uniformly equicontinuous on $\P(\R^2)$.
\end{proposition}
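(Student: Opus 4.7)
The plan is to prove a stronger statement: each map in the family is $1$-Lipschitz with respect to statistical distance, which immediately gives uniform equicontinuity (with $\delta = \varepsilon$, independent of $G$).

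To carry this out, I would fix any grid $G \in \mathbb{G}$ and observe that the map $(X,Y) \mapsto (X,Y)|_G = (\mathrm{col}_G(X), \mathrm{row}_G(Y))$ is deterministic: it is obtained by composing $(X,Y)$ with the measurable function $T_G(x,y) = (\mathrm{col}_G(x), \mathrm{row}_G(y))$. The key fact is the data-processing inequality for statistical distance: for any measurable $T$ and any two random variables $A,B$ taking values in the same space,
\[
d_{\mathrm{TV}}(T(A), T(B)) = \sup_{S} \left| \Pr[T(A) \in S] - \Pr[T(B) \in S] \right| = \sup_{S} \left| \Pr[A \in T^{-1}(S)] - \Pr[B \in T^{-1}(S)] \right| \leq d_{\mathrm{TV}}(A,B),
\]
where the last inequality holds because the supremum over preimages $T^{-1}(S)$ is bounded by the supremum over all measurable sets. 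Applying this with $T = T_G$ and $(A,B) = ((X,Y), (X',Y'))$ shows that the map $(X,Y) \mapsto (X,Y)|_G$ is $1$-Lipschitz in statistical distance.

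Since the Lipschitz constant of $1$ does not depend on $G$, uniform equicontinuity of the family follows immediately: for any $\varepsilon > 0$, taking $\delta = \varepsilon$ ensures that $d_{\mathrm{TV}}((X,Y), (X',Y')) < \delta$ implies $d_{\mathrm{TV}}((X,Y)|_G, (X',Y')|_G) < \varepsilon$ for every $G \in \mathbb{G}$ simultaneously. There is no real obstacle here beyond writing the data-processing inequality cleanly; the essential point is simply that gridding is a deterministic post-processing step and hence a contraction for total variation distance.
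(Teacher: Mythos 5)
Your proof is correct and is essentially the same as the paper's: both rest on the observation that gridding is a deterministic (measurable) map and hence cannot increase statistical distance, so $\delta = \varepsilon$ works uniformly over all grids. The only difference is that you spell out the data-processing inequality for total variation, which the paper simply invokes.
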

\begin{proof}
To establish uniform equicontinuity, we need to show that, given some $(X,Y) \in \P(\R^2)$ and some $\ep > 0$, we can choose $\delta$ to satisfy the continuity condition in a way that does not depend on $G$ or on $(X,Y)$. But because deterministic functions cannot increase statistical distance, we have that if $(X,Y), (X', Y') \in \P$ are at most $\ep$ apart then
\[ \Delta \left( (X,Y)|_G, (X', Y')|_G \right) \leq \Delta \left( (X,Y), (X', Y') \right) = \ep \]
where $\Delta$ denotes statistical distance. Choosing $\delta = \ep$ therefore gives the result.
\end{proof}

At this point it is tempting to try to use continuity properties of discrete mutual information to obtain uniform continuity of the characteristic matrix. And indeed, this strategy does yield that each \textit{individual} entry of the characteristic matrix is a uniformly continuous function. However, to obtain continuity of the entire (infinite) characteristic matrix we need to make a statement about all grid resolutions simultaneously. This is not straightforward because mutual information is only uniformly continuous for a fixed grid resolution, and the family $\{ (X,Y) \mapsto I((X,Y)|_G) : G \in \mathbb{G}\}$ is in fact not even equicontinuous.

The normalization in the definition of $\popMIC$ is what allows us to establish the uniform continuity of the characteristic matrix despite this problem. To see why, suppose we have a distribution over a $k$-by-$\ell$ grid and we are allowed to move at most $\delta$ away in statistical distance for some small $\delta$. The largest change in discrete mutual information that this can cause indeed increases as we increase $k$ and $\ell$. However, it turns out that we can bound the extent of this ``non-uniformity'': the proposition below shows that as we move away from a distribution, the discrete mutual information can change only proportionally to the amount of mass we move, with the proportionality constant bounded by $\log \min \{k, \ell\}$. Because $\log \min \{k, \ell\}$ is the quantity by which we regularize the entries of the characteristic matrix, this is exactly enough to make the normalized matrix continuous. This proposition is the technical heart of our continuity result. And as we show in Corollary~\ref{cor:MIdiscontinuous} when we demonstrate the non-continuity of the non-normalized characteristic matrix mutual information, our bound is tight.

\begin{proposition}
\label{prop:boundedVariationOfI}
Let $I_{k,\ell} : \mathcal{P}(\{1,\ldots,k\} \times \{1, \ldots, \ell\}) \rightarrow \R$ denote the discrete mutual information function on $k$-by-$\ell$ grids. For $0 < \delta \leq 1/4$, the maximal change in $I_{k,\ell}$ over any subset of $\mathcal{P}(\{1,\ldots,k\} \times \{1, \ldots, \ell\})$ of diameter $\delta$ (in statistical distance) is
\[
O\left( \delta \log \left( \frac{1}{\delta} \right) + \delta \log \min \{k, \ell \}  \right).
\]
\end{proposition}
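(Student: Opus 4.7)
The plan is to combine a maximal-coupling decomposition with the chain rule for mutual information, trading the naive Fannes-type bound of $O(\delta \log(k\ell) + \delta \log(1/\delta))$ for the sharper $O(\delta\log\min\{k,\ell\}+\delta\log(1/\delta))$ by exploiting the elementary bound $I(X;Y)\leq\log\min\{k,\ell\}$. Fix $P,Q\in\mathcal{P}(\{1,\ldots,k\}\times\{1,\ldots,\ell\})$ with $\delta_0:=d_{\mathrm{TV}}(P,Q)\leq\delta$, and write the standard maximal-coupling decomposition
\[
P=(1-\delta_0)R+\delta_0 P', \qquad Q=(1-\delta_0)R+\delta_0 Q',
\]
where $R:=(P\wedge Q)/(1-\delta_0)$ and $P',Q'$ are the appropriate renormalized residuals, all of which are probability distributions on the grid.

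Next I would couple these decompositions through an auxiliary Bernoulli indicator $Z$ with $\Pr\{Z=0\}=1-\delta_0$: under $P$ take $(X,Y)\,|\,Z{=}0\sim R$ and $(X,Y)\,|\,Z{=}1\sim P'$, and analogously under $Q$ with $Q'$ in place of $P'$. Expanding $I(X;Y,Z)$ via the chain rule in the two possible orders yields the identity
\[
I(X;Y\,|\,Z)-I(X;Y)=I(X;Z\,|\,Y)-I(X;Z),
\]
whose right-hand side is bounded in absolute value by $2H(Z)=2h(\delta_0)$, giving $|I(X;Y)-I(X;Y\,|\,Z)|\leq 2h(\delta_0)$ under both $P$ and $Q$, where $h$ denotes binary entropy.

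The decisive step is that conditioning on $Z=0$ produces the same distribution $R$ under both $P$ and $Q$, so those contributions cancel exactly in the difference:
\[
I_P(X;Y\,|\,Z)-I_Q(X;Y\,|\,Z)=\delta_0\bigl[I_{P'}(X;Y)-I_{Q'}(X;Y)\bigr],
\]
and the crude bound $0\leq I_{P'}(X;Y),I_{Q'}(X;Y)\leq\log\min\{k,\ell\}$ controls the right-hand side by $\delta_0\log\min\{k,\ell\}$. Assembling the three pieces via the triangle inequality gives
\[
|I(P)-I(Q)|\leq 4h(\delta_0)+\delta_0\log\min\{k,\ell\},
\]
and since $h$ is increasing on $[0,1/2]$ with $h(\delta)=O(\delta\log(1/\delta))$ for $\delta\leq 1/4$, the stated bound follows uniformly on any diameter-$\delta$ subset.

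The main obstacle is resisting the naive route of applying Fannes' inequality separately to $H(X)$, $H(Y)$, and $H(X,Y)$, which would pay $\log(k\ell)$ rather than $\log\min\{k,\ell\}$ and hence fail to match the normalization in the definition of $\popMIC$ (and is, as Corollary~\ref{cor:MIdiscontinuous} shows, genuinely insufficient). The key insight is that the common part $R$ of $P$ and $Q$, carrying $1-\delta_0$ of the mass, contributes identically to $I_P(X;Y\,|\,Z)$ and $I_Q(X;Y\,|\,Z)$ and therefore cancels in the comparison, so only the small residual $\delta_0$ fraction of the mass incurs the $\log\min\{k,\ell\}$ cost.
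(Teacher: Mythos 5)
Your argument is correct and reaches the stated bound by a genuinely different route from the paper's. The paper works directly with the decomposition $I(X,Y)=H(Y)-H(Y\mid X)$, taking $Y$ to be the coordinate with the smaller alphabet so that every entropy appearing is at most $\log\min\{k,\ell\}$; it then bounds $\left|H(Y)-H(Y')\right|$ and the column-by-column conditional entropy differences via an explicit lemma on how entropy changes when a $\delta$-fraction of mass is moved (Lemma~\ref{lem:entropyBoundForChangedMass}), aggregating the resulting binary-entropy terms with a concavity argument (Lemma~\ref{lem:boundWeightedAverageOfEntropy}). You instead use the maximal-coupling decomposition with an auxiliary Bernoulli flag $Z$, in the style of Alicki--Fannes continuity bounds: the two chain-rule expansions of $I(X;Y,Z)$ give $\left|I(X;Y)-I(X;Y\mid Z)\right|\leq H(Z)=H_b(\delta_0)$ on each side (your factor of $2$ is safe but unneeded, since $I(X;Z)$ and $I(X;Z\mid Y)$ each lie in $[0,H(Z)]$), and the shared component $R$ cancels exactly in $I_P(X;Y\mid Z)-I_Q(X;Y\mid Z)$, so only the residual $\delta_0$ fraction of mass pays the trivial bound $\log\min\{k,\ell\}$. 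Both proofs hinge on the same insight --- that the $\log\min\{k,\ell\}$ cost must be charged only to the moved mass --- but yours localizes it in a single cancellation rather than in per-column entropy estimates. Your approach buys brevity, modularity, and immediate generalizability beyond product index sets; the paper's buys the explicit entropy lemmas, which it reuses elsewhere (notably in Lemma~\ref{lem:boundEntropy} for the consistency proof). Two trivial points to tidy: the case $\delta_0=0$ should be dispatched separately (the residuals $P',Q'$ are then undefined, but there is nothing to prove), and you correctly note that monotonicity of $\delta_0\mapsto 4H_b(\delta_0)+\delta_0\log\min\{k,\ell\}$ on $[0,1/2]$ converts the pairwise bound into a bound on the oscillation over a diameter-$\delta$ set.
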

\begin{proof}
Without loss of generality, assume $k \leq \ell$, so that $\log \min \{k, \ell\} = \log k$. Let $(X,Y)$ and $(X', Y')$ be two random variables distributed over $\{1, \ldots, k\} \times \{1, \ldots, \ell\}$ that are at most $\delta$ apart in statistical distance. Using $I(X,Y) = H(Y) - H(Y | X)$, we can express the difference between the mutual information of these two pairs of random variables as
\[
\left| I(X,Y) - I(X',Y') \right| \leq \left| H(Y) - H(Y') \right| + \left| H(Y | X) - H(Y' | X') \right| .
\]

We now use Lemma~\ref{lem:entropyBoundForChangedMass}, which relates movement of probability mass to changes in entropy and is proven in Appendix~\ref{appendix:technical}, to separately bound each of the terms on the right hand side. Straightforward application of the lemma to $| H(Y) - H(Y') |$ shows that it is at most $2 H_b(2 \delta) + 3\delta \log k$, where $H_b(\cdot)$ is the binary entropy function. Since $H_b(x) \leq O(x\log(1/x))$ for $x$ small, this is $O(\delta \log(1/\delta) + \delta \log k)$.

Bounding the term with the conditional entropies is more involved. Let $p_x = \Pr{X = x}$, and let $p_x' = \Pr{X' = x}$. We have
\begin{eqnarray}
\left| H(Y | X) - H(Y' | X') \right| &=& \sum_x \left| p_x H(Y | X = x) - p_x' H(Y' | X' = x) \right| \nonumber \\
	&\leq& \sum_x \left( p_x \left| H(Y | X = x) - H(Y' | X' = x) \right| + \left| p_x' - p_x \right| H(Y' | X' = x) \right) \nonumber \\
	&=& \sum_x p_x \left| H(Y | X = x) - H(Y' | X' = x) \right| + \sum_x \left| p_x' - p_x \right| \log k \nonumber \\
	&\leq& \sum_x p_x \left| H(Y | X = x) - H(Y' | X' = x) \right| + \delta \log k \label{line:boundMutualInfoChange}
\end{eqnarray}
where the last line is because $\sum_x |p_x - p_x'| \leq \delta$ and $H(Y' | X' = x) \leq \log k$.

Now let $\delta_{x+}$ be the magnitude of all the probability mass entering any cell in column $x$, let $\delta_{x-}$ be the magnitude of all the probability mass leaving any cell in column $x$, and let $\delta_x = \delta_{x+} + \delta_{x-}$. Using this notation, we can again apply Lemma~\ref{lem:entropyBoundForChangedMass} to obtain
\begin{eqnarray*}
\sum_x p_x \left| H(Y | X = x) - H(Y' | X' = x) \right| &\leq& \sum_x p_x \left( 2H_b \left( \frac{2\delta_x}{p_x} \right) + 3 \frac{\delta_x}{p_x} \log k \right) \\
	&=& 2\sum_x p_x H_b \left( \frac{2\delta_x}{p_x} \right) + 3 \sum_x \delta_x \log k \\
	&\leq& 2\sum_x p_x H_b \left( \frac{2\delta_x}{p_x} \right) + 3 \delta \log k \\
	&\leq& 2 H_b(2 \delta) + 3 \delta \log k
\end{eqnarray*}
where the last line is by application of Lemma~\ref{lem:boundWeightedAverageOfEntropy} from the appendix, which bounds weighted sums of binary entropies.

Combining this with Line~\eqref{line:boundMutualInfoChange} gives that
\[ \left| H(Y | X) - H(Y' | X') \right| \leq 2H_b(2\delta) + 4\delta \log k \]
which, together with the bound on $\left| H(Y) - H(Y') \right|$ and the fact that $H_b(X) \leq O(x \log(1/x))$ for $x$ small, gives the result.
\end{proof}

Having bounded the extent to which variation in mutual information depends on grid resolution, we are now ready to show the uniform continuity of the characteristic matrix.

\vspace{0.3cm}
\noindent \textbf{Theorem}
\textit{
Let $\P(\R^2)$ denote the space of random variables supported on $\R^2$ equipped with the metric of statistical distance. The map from $\P(\R^2)$ to $m^\infty$ defined by $(X,Y) \mapsto M(X,Y)$ is uniformly continuous.
}
\vspace{0.3cm}
\begin{proof}
We complete the proof in three steps. First, we show that a certain family of functions $F$ is uniformly equicontinuous. Second, we use this to show that a different family $F'$ consisting of functions of the form $\sup_{g \in A} g$ with $A \subset F$ is uniformly equicontinuous. Finally, we argue that since the entries of $M(X,Y)$ consist of the functions in $F'$, this is sufficient to establish the result.

Define
\[
F = \left\{ (X,Y) \mapsto \frac{I_{k,\ell}((X,Y)|_G)}{\log \min \{k, \ell \} } : k, \ell \in \Z_{> 1}, G \in G(k, \ell) \right\} .
\]
$F$ is uniformly equicontinuous by the following argument. Given some $\ep > 0$, we know (Proposition \ref{prop:equicontinuityOfGridding}) that for any $(X', Y')$ in an $\ep$-ball around $(X,Y)$, $(X',Y')|_G$ will remain $\ep$ of $(X,Y)|_G$ for any $G$. Proposition~\ref{prop:boundedVariationOfI} then tells us that if $\ep$ is sufficiently small then the distance between $I_{k,\ell}((X',Y')|_G)$ and $I_{k,\ell}((X,Y)|_G)$ will be at most
\[
O\left( \ep \log(1/\ep) + \ep \log \min \{k,\ell\} \right) .
\]
After the normalization, this becomes at most $O(\ep(\log(1/\ep) + 1))$, which goes to 0 (uniformly with respect to $(X,Y)$) as $\ep$ approaches 0, as desired.

Next, define 
\[
F' = \left\{ (X,Y) \mapsto M(X,Y)_{k,\ell} : k, \ell \in \Z_{> 1} \right\} .
\]
Each map in $F'$ is of the form $\sup_{g \in A} g$ for some $A \subset F$. Therefore, for a given $\ep > 0$, whatever $\delta$ establishes the uniform equicontinuity for $F$ can be used to establish continuity of all the functions in $F'$. (To see this: $\sup_{g \in A} g$ can't increase by more than $\ep$ if no $g$ increases by more than $\ep$, and $\sup_{g \in A}g$ is also lower bounded by any of the $g$'s, so it can't decrease by more than $\ep$ either.) Since we can use the same $\delta$ for all of the maps in $F'$, they therefore form a uniformly equicontinuous family.

Finally, the $\delta$ provided by the uniform equicontinuity of $F'$ also ensures that $M(X',Y')$ is within $\ep$ of $M(X,Y)$ in the supremum norm, thus giving the uniform continuity of $(X,Y) \mapsto M(X,Y)$.
\end{proof}

\section{Proof of Proposition~\ref{prop:MIdiscontinuous}}
\label{app:MIdiscontinuous}
\setcounter{figure}{0}
\vspace{0.3cm}
\noindent \textbf{Theorem}
\textit{
For some function $N(k, \ell$), let $M^N$ be the characteristic matrix with normalization $N$, i.e.,
\[
M^N(X,Y) = \frac{I^*((X,Y), k, \ell)}{N(k, \ell)} .
\]
If $N(k, \ell) = o(\log \min \{k,\ell\})$ along some infinite path in $\mathbb{N} \times \mathbb{N}$, then $M^N$ and $\sup M^N$ are not continuous as functions of $\mathcal{P}([0,1]\times[0,1]) \subset \P(\R^2)$.
}
\vspace{0.3cm}
\begin{proof}
Consider a random variable $Z$ uniformly distributed on $[0,1/2]^2$. Because $Z$ exhibits statistical independence, $I^*(Z, k,\ell)$ is zero for all $k, \ell$. Now define $Z_\ep$ to be uniformly distributed on $[0,1/2]^2$ with probability $1-\ep$ and uniformly distributed on the line from $(1/2, 1/2)$ to $(1,1)$ with probability $\ep$.

We lower-bound $I^*(Z_\ep, k, \ell)$. Without loss of generality suppose that $k \leq \ell$, and consider a grid that places all of $[0,1/2]^2$ into one cell and uniformly partitions the set $[1/2,1]^2$ into $k-1$ rows and $k-1$ columns. By considering just the rows/columns in the set $[1/2,1]^2$ we see that this grid gives a mutual information of at least $\ep \log(k-1)$. Thus, we have that for all $k, \ell$,
\[
I^*(Z_\ep, k, \ell) \geq \ep \log \min \{k-1, \ell - 1\} .
\]

This implies that the limit of $M^N(Z_\ep)$ along $P$ is $\infty$, and so the distance between $M^N(Z)$ and $M^N(Z_\ep)$ in the supremum norm is infinite.
\end{proof}

\section{Proof of Theorem~\ref{thm:explicitValueOfBoundary}}
\label{app:explicitValueOfBoundary}
\setcounter{figure}{0}
%Since we couldn't create a no-numbered theorem without the amsthm package, and since that package conflicts with the jmlr2e package, then we have to just crudely create a non-numbered theorem.
\vspace{0.3cm}
\noindent \textbf{Theorem}
\textit{
Let $M$ be a population characteristic matrix. Then $M_{k, \uparrow}$ equals
\[ \max_{P \in P(k)} \frac{I(X, Y|_P)}{\log{k}} \]
where $P(k)$ denotes the set of all partitions of size at most $k$.
}
\vspace{0.3cm}
\begin{proof}
Define
\[ M_{k,\uparrow}^* = \max_{P \in P(k)} \frac{I(X, Y|_P)}{\log{k}} .\]
We wish to show that $M_{k,\uparrow}^*$ is in fact equal to $M_{k,\uparrow}$. To show that $M_{k,\uparrow} \leq M_{k,\uparrow}^*$, we observe that for every $k$-by-$\ell$ grid $G = (P,Q)$, where $P$ is a partition into rows and $Q$ is a partition into columns, the data processing inequality gives $I((X, Y)|_G) \leq I(X, Y|_P)$. Thus $M_{k,\ell} \leq M_{k,\uparrow}^*$ for $\ell \geq k$, implying that
\[ M_{k,\uparrow} = \lim_{\ell \rightarrow \infty} M_{k,\ell} \leq M_{k,\uparrow}^* .\]

It remains to show that $M_{k,\uparrow}^* \leq M_{k,\uparrow}$. To do this, we let $P$ be any partition into $k$ rows, and we define $Q_\ell$ to be an equipartition into $\ell$ columns. We let
\[ M_{k,\ell,P}^* = \frac{I(X|_{Q_\ell}, Y|_P)}{\log{k} } .\]
Since $M_{k,\ell,P}^* \leq M_{k,\ell}$ when $\ell \geq k$, we have that for all $P$
\[ \frac{I(X, Y|_P)}{\log{k}} = \lim_{\ell \rightarrow \infty} M_{k,\ell,P}^* \leq \lim_{\ell \rightarrow \infty} M_{k,\ell} = M_{k,\uparrow} \]
which gives that
\[ M_{k,\uparrow}^* = \sup_P \frac{I(X, Y|_P)}{\log{k}} \leq M_{k,\uparrow} \]
as desired.
\end{proof}

\section{Proof of Theorem~\ref{thm:alg_infinite_data}}
\label{app:alg_infinite_data}
\setcounter{figure}{0}
%Since we couldn't create a no-numbered theorem without the amsthm package, and since that package conflicts with the jmlr2e package, then we have to just crudely create a non-numbered theorem.
\vspace{0.3cm}
\noindent \textbf{Theorem}
\textit{
Given a random variable $(X,Y)$, $M_{k, \uparrow}$ (resp. $M_{\uparrow, \ell}$) is computable to within an additive error of $O(k \ep \log(1/(k\ep))) + E$ (resp. $O(\ell \ep \log(1/(\ell\ep))) + E$) in time $O(kT(E)/\ep)$ (resp. $O(\ell T(E) / \ep)$), where $T(E)$ is the time required to numerically compute the mutual information of a continuous distribution to within an additive error of $E$.
}
\vspace{0.3cm}
\begin{proof}
Without loss of generality we prove the claim only for $M_{k, \uparrow}$. Given $0 < \ep < 1$, we would like a partition into rows $P$ of size at most $k$ such that $I(X, Y|_P)$ is maximized. We would like to use \algname{OptimizeXAxis} for this purpose, but while our search problem is continuous, \algname{OptimizeXAxis} can only perform a discrete search over sub-partitions of some master partition $\Pi$. We therefore set $\Pi$ to be an equipartition into $1/\ep$ rows and show that this gets us close enough to achieve the desired result.

With $\Pi$ as described, the \algname{OptimizeXAxis} provides in time $O(kT(E)/\ep)$ a partition $P_0$ into at most $k$ rows such that $I \left(X,Y|_{P_0} \right)$ is maximized, subject to $P_0 \subset \Pi$, to within an additive error of $E$. To prove the claim then, we must show that the loss we incur by restricting to sub-partitions of $\Pi$ costs us at most $O(k \ep \log(1/(k\ep)))$. In other words, we must show that
\[ I \left( X, Y|_P \right) - I \left( X, Y|_{P_0} \right) \leq O(k \ep) \]
where $P$ is an optimal partition into rows. Note that we have omitted the absolute value above, since by the optimality of $P$, $I \left( X, Y|_P \right) \geq I \left( X, Y|_{P_0} \right)$ always.

We prove the desired bound by showing that there exists some $P' \subset \Pi$ such that the mutual information of $(X, Y|_{P'})$ is $O(k\ep\log(1/(k\ep)))$-close to that achieved with $(X, Y|_P)$. Since $P' \subset \Pi$ gives us that $I\left( X, Y|_{P_0} \right) \geq I\left( X, Y|_{P'} \right)$, we may then conclude that $I \left( X, Y|_P \right) - I \left( X, Y|_{P_0} \right)$ is at most $O(k\ep \log(1/(k\ep)))$.

We construct $P'$ by simply moving replacing every horizontal line in $P$ with the horizontal line in $\Pi$ closest to it. Since there are at most $k-1$ horizontal lines in $P$, and each such line is contained in a row of $\Pi$ containing $1/\ep$ probability mass, performing this operation moves at most $(k-1)\ep$ probability mass. In other words, the statistical distance between $(X, Y|_{P'})$ and $(X, Y|_P)$ is at most $(k-1)\ep \leq k\ep$. Thus, for sufficiently small $\ep$, Proposition~\ref{prop:boundedVariationOfI}, proven in Appendix~\ref{app:continuityofM}, can be used to show that
\[
\left| I\left(X, Y|_{P'} \right) - I\left(X, Y|_P\right) \right| \leq O \left( k\ep \log \left( \frac{1}{k\ep} \right) + k\ep \log \left( \frac{1}{\ep} \right) \right)
\]
which yields the desired result.
\end{proof}

\begin{remark}
We do not explore here the details of the numerical integration associated with the above theorem, since the error introduced by the numerical integration is independent of the algorithm being proposed. However, standard numerical integration methods can be used to make this error arbitrarily small with an understood complexity tradeoff (see, e.g., \cite{stoer1980numerical}).
\end{remark}

\section{Proof of Theorem~\ref{thm:same_boundary}}
\label{app:same_boundary}
\setcounter{figure}{0}
%Since we couldn't create a no-numbered theorem without the amsthm package, and since that package conflicts with the jmlr2e package, then we have to just crudely create a non-numbered theorem.
\vspace{0.3cm}
\noindent \textbf{Theorem}
\textit{
Let $(X,Y)$ be jointly distributed random variables. Then $\partial [M] = \partial M$.
}
\vspace{0.3cm}
\begin{proof}
Without loss of generality, we show that $[M]_{k, \uparrow} = M_{k, \uparrow}$. Fix any partition into rows $P$. If $Q_\ell$ is an equipartition into $\ell$ columns then
\[
\lim_{\ell \rightarrow \infty} I(X|_{Q_\ell}, Y|_P) = I(X, Y|_P) ,
\]
because the continuous mutual information equals the limit of the discrete mutual information with increasingly fine partitions. (See, e.g., Chapter 8 of \cite{Cover2006} for a proof of this.) This means that, letting $P(k)$ denote the set of all partitions of size at most $k$, we have
\[
[M]_{k, \uparrow} = \max_{P \in P(k)} \frac{I(X, Y|_P)}{\log k} = M_{k, \uparrow}
\]
where the second equality follows from Proposition~\ref{thm:explicitValueOfBoundary}.
\end{proof}

\section{Consistency of \texorpdfstring{$\MICestE$}{MICe} in estimating \texorpdfstring{$\popMIC$}{MIC*}}
\label{app:MICeconsistent}
\setcounter{figure}{0}
The consistency of $\MICestE$ for estimating $\popMIC$ can be established using the same technical lemmas that we used to show that $\MIC \rightarrow \popMIC$. Specifically, we can use Lemma~\ref{lem:unionBoundOverGrids}, which bounds the difference, for all $k$-by-$\ell$ grids $G$, between the sample quantity $I(D_n|_G)$ and the population quantity $I((X,Y)|_G)$ with high probability, where $D_n$ is a sample of size $n$ from $(X,Y)$. That lemma yields the following fact about the sample equicharacteristic matrix, whose proof is similar to that of Lemma~\ref{lem:boundCharMatrixEntries}.

\begin{lemma}
\label{lem:boundEquiCharMatrixEntries}
Let $D_n$ be a sample of size $n$ from the distribution of a pair $(X,Y)$ of jointly distributed random variables. For every $B(n) = \O{n^{1-\ep}}$, there exists an $a > 0$ such that for sufficiently large $n$,
\[
\left| \widehat{[M]}(D_n)_{k, \ell} - [M](X,Y)_{k, \ell} \right|
	\leq \O{ \frac{1}{n^{a}} }
\]
holds for all $k\ell \leq B(n)$ with probability $P(n) = 1 - o(1)$, where $\widehat{[M]}(D_n)_{k,\ell}$ is the $k, \ell$-th entry of the sample equicharacteristic matrix and $[M](X,Y)_{k,\ell}$ is the $k, \ell$-th entry of the population equicharacteristic matrix of $(X,Y)$.
\end{lemma}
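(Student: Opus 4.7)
The plan is to mirror the proof of Lemma~\ref{lem:boundCharMatrixEntries}, with one additional ingredient needed because the sample equicharacteristic matrix maximizes over grids whose larger-axis partition equipartitions $D_n$, while the population equicharacteristic matrix maximizes over grids whose larger-axis partition equipartitions $(X,Y)$. So I need to bridge the two (random versus deterministic) constraint sets in addition to bounding the discrepancy $|I(D_n|_G) - I((X,Y)|_G)|$ for a fixed grid $G$.

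First, I would invoke Lemma~\ref{lem:unionBoundOverGrids} with the same choice of $\alpha$ as in Lemma~\ref{lem:boundCharMatrixEntries} to obtain, with probability $1-o(1)$, the uniform bound $|I(D_n|_G) - I((X,Y)|_G)| \leq O(1/n^a)$ over \emph{all} $k$-by-$\ell$ grids $G$ with $k\ell \leq B(n)$ and an appropriate $a > 0$. This already applies to the (random) optimizer $G^*$ that achieves $\widehat{[M]}(D_n)_{k,\ell}$ and to the (deterministic) optimizer $G^{**}$ that achieves $[M](X,Y)_{k,\ell}$. The remaining task is to relate $I((X,Y)|_{G^*})$ to $[M](X,Y)_{k,\ell}$ and $I(D_n|_{G^{**}})$ to $\widehat{[M]}(D_n)_{k,\ell}$; by symmetry, it suffices to show that any sample equipartition on the larger axis can be well-approximated by a population equipartition, and conversely.

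To do so, I would apply the Dvoretzky-Kiefer-Wolfowitz inequality to each marginal to obtain, with probability $1-o(1)$, a uniform bound of $O(\sqrt{\log n/n})$ between the empirical and population CDFs. This implies that any sample equipartition into $\ell$ bins has population mass $1/\ell + O(\sqrt{\log n/n})$ in each bin. Consequently, given a sample-equipartitioned grid I can produce a nearby population-equipartitioned grid (with the same non-equipartitioned axis) by moving the $\ell-1$ horizontal separators to the population quantiles; this displaces total probability mass of at most $\delta = O(\ell \sqrt{\log n / n})$. Since $\ell \leq B(n) = O(n^{1-\ep})$, we have $\delta \leq O(n^{-\ep/2}\sqrt{\log n})$. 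Proposition~\ref{prop:boundedVariationOfI} then bounds the resulting change in mutual information by $O(\delta \log(1/\delta) + \delta \log \min\{k,\ell\}) = O(n^{-\ep/2} (\log n)^{3/2})$, which can be absorbed into $O(1/n^{a'})$ for a possibly smaller $a' > 0$.

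Combining the two error sources — the fixed-grid estimation error from Step 1 and the equipartition-matching error from Steps 2 and 3 — gives, for each $(k,\ell)$ with $k\ell \leq B(n)$, that both $\widehat{[M]}(D_n)_{k,\ell} \leq [M](X,Y)_{k,\ell} + O(1/n^{a'})$ and the reverse inequality hold by matching each maximizer with a close-by grid in the other constraint set and applying the optimality of the respective maxima. A union bound over the polynomially many pairs $k\ell \leq B(n)$ then yields the uniform high-probability bound claimed in the lemma. The main obstacle is not the DKW step itself but verifying that the statistical distance between the sample and population equipartitions of size $\ell \leq B(n)$ remains $o(1/\log n)$ uniformly in $\ell$, so that after normalization by $\log \min\{k,\ell\}$ the overall error is still $O(1/n^{a'})$; once this bookkeeping is in place the rest of the argument tracks Lemma~\ref{lem:boundCharMatrixEntries} line by line.
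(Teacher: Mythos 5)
Your overall route is the one the paper intends: combine the uniform fixed-grid bound of Lemma~\ref{lem:unionBoundOverGrids} with the continuity/union-bound argument of Lemma~\ref{lem:boundCharMatrixEntries}, and you correctly flag the one point at which the two settings differ — $\widehat{[M]}(D_n)_{k,\ell}$ maximizes over grids whose longer axis equipartitions the \emph{sample}, while $[M](X,Y)_{k,\ell}$ maximizes over grids whose longer axis equipartitions the \emph{population}, so a bridge between the two constraint sets is genuinely needed (the paper's one-sentence justification elides this). The gap is in the arithmetic of that bridge. Moving the $\ell-1$ separators of a sample equipartition to the corresponding population quantiles displaces, by Dvoretzky--Kiefer--Wolfowitz, up to $O(\sqrt{\log n / n})$ population mass \emph{per separator}, hence $\delta = O(\ell\sqrt{\log n/n})$ in total. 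On the equipartitioned (longer) axis, $\ell$ can be as large as $B(n)/2 = \Theta(n^{1-\ep})$, so $\delta = O(n^{1/2-\ep}\sqrt{\log n})$, which does not tend to zero at all when $\ep < 1/2$, let alone polynomially. Your claimed bound $\delta = O(n^{-\ep/2}\sqrt{\log n})$ is what one gets only when $\ell = O(n^{(1-\ep)/2})$, i.e., roughly when $k,\ell \le \sqrt{B(n)}$; it does not hold over the whole region $k\ell \le B(n)$ asserted by the lemma, which includes elongated entries such as $k=2$, $\ell = B(n)/2$.

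As written, then, the proposal establishes the lemma only on the square region $k,\ell \le \sqrt{B(n)}$ — which, notably, is exactly the region to which the paper restricts its analogous uniform statement for the clumped matrix (Lemma~\ref{lem:uniform_convergence_clump}), falling back on pointwise convergence outside it. For the elongated entries the ``move each separator'' accounting is not just lossy but genuinely inapplicable: when $\ell \gg \sqrt{n}$, a single cell of the sample equipartition can carry population mass of order $\sqrt{\log n/n} \gg 1/\ell$, and the sample and population equipartitions can sit at constant statistical distance from one another, so Proposition~\ref{prop:boundedVariationOfI} gives nothing. Handling those entries would require a different argument — for instance, showing that once the equipartitioned axis is much finer than the $k$-cell axis, both $I(X|_{Q_n}, Y|_P)$ and $I(X|_Q, Y|_P)$ are sandwiched against $I(X, Y|_P)$ — but the resulting rates are distribution-dependent and would have to be controlled to get the uniform $O(1/n^a)$ claimed. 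Absent that, the honest repair is to state and prove the bound for $k,\ell \le \sqrt{B(n)}$ and check that the downstream consistency arguments (which in the paper's Appendix~\ref{app:computation} indeed only use the restricted region uniformly) still go through.
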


In the case of $\MIC$, we proceeded to apply abstract continuity considerations to obtain our consistency theorem (Theorem~\ref{thm:estimator}) from a result analogous to the above lemma. A similar argument shows us that, in the case of the equicharacteristic matrix as well, we can estimate a large class of functions of the matrix in the same way. This is stated formally in the theorem below. As before, we let $m^\infty$ be the space of infinite matrices equipped with the supremum norm, and given a matrix $A$ the projection $r_i$ zeros out all the entries $A_{k,\ell}$ for which $k\ell > i$.

%Since we couldn't create a no-numbered theorem without the amsthm package, and since that package conflicts with the jmlr2e package, then we have to just crudely create a non-numbered theorem.
\vspace{0.3cm}
\noindent \textbf{Theorem}
\textit{
Let $f : m^\infty \rightarrow \R$ be uniformly continuous, and assume that $f \circ r_i \rightarrow f$ pointwise. Then for every random variable $(X,Y)$, we have
\[
\left( f \circ r_{B(n)} \right) \left(\widehat{[M]}(D_n) \right) \rightarrow f([M](X,Y))
\]
in probability where $D_n$ is a sample of size $n$ from the distribution of $(X,Y)$, provided $\omega(1) < B(n) \leq O(n^{1-\ep})$ for some $\ep > 0$.
}

\section{The \algname{EquicharClump} algorithm}
\label{app:computation}
\setcounter{figure}{0}
In Theorem~\ref{thm:algorithm_with_c}, we sketched an algorithm called \algname{EquicharClump} for approximating the sample equicharacteristic matrix that is more efficient than the naive computation. In this appendix, we describe the algorithm in detail, bound its runtime, and show that it indeed yields a consistent estimator of $\popMIC$ from finite samples as well as a consistent independence test when used to compute the total information coefficient. We then present some empirical results characterizing the sensitivity of the algorithm to its speed-versus-optimality parameter $c$.

The results in this section can be summarized as follows: let $(X,Y)$ be a pair of jointly distributed random variables, and let $D_n$ be a sample of size $n$ from the distribution of $(X,Y)$. For every $c \geq 1$, there exists a matrix $\{\widehat{M}\}^c(D_n)$ such that
\begin{enumerate}
\item There exists an algorithm \algname{EquicharClump} for computing $r_B(\{\widehat{M}\}^c(D_n))$ in time $O(n + B^{5/2})$, which equals $O(n + n^{5(1-\ep)/2})$ when $B(n) = O(n^{1-\ep})$.
\item The function
\[
\widetilde{\MICestE}_{,B}(\cdot) = \max_{k\ell \leq B(n)} \{\widehat{M}\}^c(\cdot)_{k,\ell}
\]
is a consistent estimator of $\popMIC$ provided $\omega(1) < B(n) \leq O(n^{1-\ep})$ for some $\ep >0$.
\item The function
\[\widetilde{\TICestE}_{,B}(\cdot) = \sum_{k\ell \leq B(n)} \{\widehat{M}\}^c(\cdot)_{k,\ell}
\]
yields a consistent right-tailed test of independence provided $\omega(1) < B(n) \leq O(n^{1-\ep})$ for some $\ep >0$
\end{enumerate}

We will prove these results in order.

\subsection{Algorithm description and analysis of runtime}
We begin by describing the algorithm and bound its runtime simultaneously. As in the proof of Theorem~\ref{thm:algorithm_naive}, we bound the runtime required to approximately compute only the $k,\ell$-th entries of $\{\widehat{M}\}^c(D_n)$ satisfying $k \leq \ell, k\ell \leq B$. To do this, we analyze two portions of $\{\widehat{M}\}^c(D_n)$ separately: we first consider the case $\ell \geq \sqrt{B}$, in which we must compute the entries corresponding to all the pairs $\{(2, \ell), \ldots, (B/\ell, \ell)\}$. We then consider $\ell < \sqrt{B}$, in which case we need only compute the entries $\{(2, \ell), \ldots, (\ell, \ell)\}$ since the additional pairs would all have $k > \ell$.

For the case of $\ell \geq \sqrt{B}$, as in the previous theorem we can simultaneously compute using \algname{OptimizeXAxis} the entries corresponding to all the pairs $\{ (2, \ell), \ldots, (B/\ell, \ell) \}$ in time $O(|\Pi|^2 (B/\ell) \ell) = O(|\Pi|^2 B)$, which equals $O(c^2B^3/\ell^2)$ when we set $\Pi$ to be an equipartition of size $cB/\ell$. Doing this for $\ell = \sqrt{B}, \ldots, B/2$ gives a contribution of the following order to the runtime.
\begin{align*}
O(c^2B^3) \sum_{\ell = \sqrt{B}}^{B/2} \frac{1}{\ell^2} &= O\left( c^2B^3 \right) O\left( \frac{1}{\sqrt{B}} \right) \\
    &= O(c^2 B^{5/2})
\end{align*}

For the case of $\ell < \sqrt{B}$, we can simultaneously compute using \algname{OptimizeXAxis} the entries corresponding to all the pairs $\{ (2, \ell), \ldots, (\ell, \ell) \}$ in time $O(|\Pi|^2 \ell^2)$ which equals $O(c^2 \ell^4) \leq O(c^2 B^2)$ when we set $\Pi$ to be an equipartition of size $c\ell$. Summing over the $O(\sqrt{B})$ possible values of $\ell$ with $\ell < \sqrt{B}$ gives an upper bound of $O(c^2 B^{5/2})$.

\subsection{Consistency}
Let $(X,Y)$ be a pair of jointly distributed random variables. For a sample $D_n$ of size $n$ from the distribution of $(X,Y)$ and a speed-versus-optimality parameter $c \geq 1$, let $\{\widehat{M}\}^c(D_n)$ denote the matrix computed by \algname{EquicharClump}. (Notice the use of curly braces to differentiate this from the sample equicharacteristic matrix $\widehat{[M]}$.) We show here that $\max_{k \ell \leq B(n)} \{\widehat{M}\}^c(D_n)_{k,\ell}$ is a consistent estimator of $\popMIC(X,Y)$, and correspondingly that $\sum_{k \ell \leq B(n)} \{\widehat{M}\}^c(D_n)_{k,\ell}$ yields a consistent independence test.

The key to both consistency results is that, though in calculating the $k,\ell$-th entry of $\{\widehat{M}\}^c(D_n)$ the algorithm only searches for optimal partitions that are sub-partitions of some equipartition, the size of the equipartition used always grows as $n$, $k$, and $\ell$ grow large. Therefore, in the limit this additional restriction does not hinder the optimization. We present this argument by introducing a population object called the {\em clumped equicharacteristic matrix}. We observe that this matrix is the limit of the \algname{EquicharClump} procedure as sample size grows, and then show that the supremum and partial sums of this matrix have the necessary properties.

\begin{definition}
Let $(X,Y)$ be jointly distributed random variables and fix some $c \geq 1$. Let
\[
I^{\{c*\}}((X,Y), k, \ell) = \max_G I((X,Y)|_G)
\]
where the maximum is over $k$-by-$\ell$ grids whose larger partition is an equipartition and whose smaller partition must be contained in an equipartition of size $c \cdot \max \{k, \ell\}$. The {\em clumped equicharacteristic matrix} of $(X,Y)$, denoted by $\{M\}^c(X,Y)$, is defined by
\[
\{M\}^c(X,Y)_{k,\ell} = \frac{I^{\{c*\}}((X,Y), k, \ell)}{\log \min \{k, \ell\}}
\]
\end{definition}
Notice that curly braces differentiate the quantities $I^{\{c*\}}$ and $\{M\}^c$ defined above from the corresponding equicharacteristic matrix quantities $I^{[*]}$ and $[M]$.

The following two results, which we state without proof, characterize the convergence of the output of \algname{EquicharClump} to the clumped equicharacteristic matrix. These lemmas can be shown using Lemma~\ref{lem:unionBoundOverGrids}, which simultaneously bounds the difference, for all $k$-by-$\ell$ grids $G$, between the sample quantity $I(D_n|_G)$ and the population quantity $I((X,Y)|_G)$ with high probability over the sample $D_n$ of size $n$ from $(X,Y)$.

\begin{lemma}
\label{lem:uniform_convergence_clump}
Let $D_n$ be a sample of size $n$ from the distribution of a pair $(X,Y)$ of jointly distributed random variables. For every $B(n) = \O{n^{1-\ep}}$, there exists an $a > 0$ such that for sufficiently large $n$,
\[
\left| \{\widehat{M}\}^c(D_n)_{k, \ell} - \{M\}^c(X,Y)_{k, \ell} \right|
	\leq \O{ \frac{1}{n^{a}} }
\]
holds for all $k, \ell \leq \sqrt{B(n)}$ with probability $P(n) = 1 - o(1)$, where $\{\widehat{M}\}^c(D_n)$ denotes the matrix computed by the \algname{EquicharClump} algorithm with parameter $c$ on the sample $D_n$.
\end{lemma}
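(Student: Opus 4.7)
The plan is to mirror the proof of Lemma~\ref{lem:boundCharMatrixEntries}, adapting it to the clumped setting. By definition, both $\{M\}^c(X,Y)_{k,\ell}$ and $\{\widehat{M}\}^c(D_n)_{k,\ell}$ are normalized maxima of the mutual information $I(\cdot|_G)$ over grid classes $\mathcal{G}^{c,\mathrm{pop}}_{k,\ell}$ and $\mathcal{G}^{c,\mathrm{emp}}_{k,\ell}$ respectively, where each class consists of $k$-by-$\ell$ grids whose larger-resolution axis equipartitions the relevant distribution (population vs.\ empirical) and whose smaller-resolution axis is a subpartition of an equipartition of size $c\max\{k,\ell\}$. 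Crucially, each class is a subset of the unrestricted family of all $k$-by-$\ell$ grids, so Lemma~\ref{lem:unionBoundOverGrids} applies to each without modification.

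First I would fix $(k,\ell)$ with $k,\ell \leq \sqrt{B(n)}$ and apply Lemma~\ref{lem:unionBoundOverGrids} with the same $\alpha \in (0,\ep/(4-2\ep))$ used for Lemma~\ref{lem:boundCharMatrixEntries}. This yields, on an event of probability at least $1 - C(n)e^{-\Omega(n/C(n)^{1+2\alpha})}$ with $C(n) = k\ell\, n^{\ep/2}$, the uniform estimate $|I(D_n|_G) - I((X,Y)|_G)| = O(n^{-b})$ for every $k$-by-$\ell$ grid $G$ and some $b > 0$ depending only on $\ep$. In particular, this controls $|\sup_{G \in \mathcal{G}^{c,\mathrm{pop}}_{k,\ell}} I(D_n|_G) - \sup_{G \in \mathcal{G}^{c,\mathrm{pop}}_{k,\ell}} I((X,Y)|_G)|$ directly.

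The new step, which I expect to be the main obstacle, is reconciling $\mathcal{G}^{c,\mathrm{emp}}_{k,\ell}$ with $\mathcal{G}^{c,\mathrm{pop}}_{k,\ell}$: these families differ only in that the empirical one uses equipartitions of the empirical marginals. A standard concentration argument (e.g., Dvoretzky--Kiefer--Wolfowitz) shows that an empirical equipartition of a marginal into $q$ parts differs from its population counterpart by $O(\sqrt{q \log n / n})$ in statistical distance; applied with $q \leq c\sqrt{B(n)}$ this is at most $O(n^{-\ep/4}\sqrt{\log n})$. Proposition~\ref{prop:boundedVariationOfI} then bounds the resulting change in $I((X,Y)|_G)$ by $O(n^{-\ep/4}\sqrt{\log n} \cdot \log\min\{k,\ell\})$, and after normalization by $\log\min\{k,\ell\}$ the matching error between the two suprema is $O(n^{-a'})$ for any $a' < \ep/4$.

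Combining these two pieces via the triangle inequality gives the claimed $O(n^{-a})$ bound on $|\{\widehat{M}\}^c(D_n)_{k,\ell} - \{M\}^c(X,Y)_{k,\ell}|$ on the stated event, for $a = \min\{b, a'\}$. Finally, I would take a union bound over the at most $B(n) \leq n^{1-\ep}$ pairs $(k,\ell)$ with $k,\ell \leq \sqrt{B(n)}$: as in the proof of Lemma~\ref{lem:boundCharMatrixEntries}, the choice of $\alpha$ ensures the per-pair failure probability is of the form $C(n) e^{-\Omega(n^u)}$ for some $u > 0$, so polynomial inflation still yields total failure probability $o(1)$. The bulk of the work is thus the equipartition-matching step; the rest is a routine adaptation of machinery already used for $\MIC$ and $\MICestE$.
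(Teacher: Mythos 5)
Your proposal is correct. The paper states this lemma without proof, saying only that it ``can be shown using Lemma~\ref{lem:unionBoundOverGrids}''; your argument follows exactly that route and, usefully, supplies the one genuinely non-routine ingredient that the paper's one-line justification glosses over: the optimization families defining $\{\widehat{M}\}^c(D_n)_{k,\ell}$ and $\{M\}^c(X,Y)_{k,\ell}$ are not the same, since one uses equipartitions of the empirical marginals and the other of the population marginals, so a grid-matching step is needed on top of the uniform bound on $|I(D_n|_G) - I((X,Y)|_G)|$. Your resolution via quantile concentration followed by Proposition~\ref{prop:boundedVariationOfI} is precisely the mass-movement technique the paper already deploys in Lemma~\ref{lem:boundMI} and Theorem~\ref{thm:alg_infinite_data}, so it integrates cleanly with the existing machinery, and the concluding union bound over the $O(B(n))$ pairs is the same as in Lemma~\ref{lem:boundCharMatrixEntries}. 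One small quantitative slip: snapping the up to $q-1$ boundaries of an empirical $q$-equipartition onto their population counterparts displaces $O(q\sqrt{\log n / n})$ probability mass, not $O(\sqrt{q \log n / n})$; since $q \leq c\sqrt{B(n)} = O(n^{(1-\ep)/2})$ this is still $O(n^{-\ep/2}\sqrt{\log n})$, so the claimed polynomial rate and the overall conclusion are unaffected.
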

Notice that the error bound provided by the above lemma holds not for $k\ell \leq B(n)$ as in the analogous Lemma~\ref{lem:boundCharMatrixEntries} and Lemma~\ref{lem:boundEquiCharMatrixEntries}, but rather for the smaller region defined by $k, \ell \leq \sqrt{B(n)}$. However, though we do not have uniform convergence outside the region $k, \ell \leq \sqrt{B(n)}$, we do nevertheless have pointwise convergence there, as stated below.

\begin{lemma}
\label{lem:pointwise_convergence_clump}
Fix $k, \ell \geq 2$. Let $D_n$ be a sample of size $n$ from the distribution of a pair $(X,Y)$ of jointly distributed random variables. For every $B(n) > \omega(1)$, we have that
\[
\{\widehat{M}\}^c(D_n)_{k,\ell} \rightarrow \{M\}^c(X,Y)_{k,\ell}
\]
in probability as $n$ grows, where $\{\widehat{M}\}^c(D_n)$ denotes the matrix computed by the \algname{EquicharClump} algorithm with parameter $c$ on the sample $D_n$.
\end{lemma}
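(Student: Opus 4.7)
Fix $k, \ell \geq 2$ and $c \geq 1$, and let $m := c \max\{k, \ell\}$. Since $B(n) > \omega(1)$, eventually $B(n) \geq k\ell$ and the algorithm computes the $(k,\ell)$-entry, so the claim is about this single fixed entry. The key simplification is that with $k, \ell, c$ all fixed, the optimization defining both $\{M\}^c(X,Y)_{k,\ell}$ and $\{\widehat{M}\}^c(D_n)_{k,\ell}$ ranges over a \emph{finite} set of combinatorial ``skeletons,'' where a skeleton is a choice of which of the $m-1$ separators of the master equipartition to retain to form a sub-partition of size at most $k$. Enumerating the skeletons as $s_1, \ldots, s_N$, let $\phi_j(X,Y)$ denote the mutual information obtained by applying the population equipartition of size $\max\{k,\ell\}$ on one axis and skeleton $s_j$ applied to the population equipartition of size $m$ on the other, and let $\widehat{\phi}_j(D_n)$ denote the analogous sample quantity built from sample equipartitions. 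Then
\[
\{M\}^c(X,Y)_{k,\ell} = \frac{\max_{1 \leq j \leq N} \phi_j(X,Y)}{\log\min\{k,\ell\}}, \qquad \{\widehat{M}\}^c(D_n)_{k,\ell} = \frac{\max_{1 \leq j \leq N} \widehat{\phi}_j(D_n)}{\log\min\{k,\ell\}}.
\]
By the continuous mapping theorem applied to the max of $N$ arguments, it therefore suffices to prove that $\widehat{\phi}_j(D_n) \rightarrow \phi_j(X,Y)$ in probability for each fixed $j$.

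To prove this pointwise convergence for a fixed $j$, I would decompose using the triangle inequality:
\[
\left|\widehat{\phi}_j(D_n) - \phi_j(X,Y)\right| \leq \left|I(D_n|_{\widehat{G}_j}) - I((X,Y)|_{\widehat{G}_j})\right| + \left|I((X,Y)|_{\widehat{G}_j}) - I((X,Y)|_{G_j})\right|,
\]
where $G_j$ and $\widehat{G}_j$ denote the population and sample grids for skeleton $s_j$. The first term tends to $0$ in probability by Lemma~\ref{lem:unionBoundOverGrids} with $k, \ell$ now fixed, which already gives a uniform-in-$G$ bound at the rate provided there. For the second term, observe that $G_j$ and $\widehat{G}_j$ share the same combinatorial skeleton but differ in the precise placement of their separators: the separators of $\widehat{G}_j$ are sample quantiles of $X$ and of $Y$, whereas those of $G_j$ are the corresponding population quantiles. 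A standard application of Glivenko--Cantelli (or the Dvoretzky--Kiefer--Wolfowitz inequality) to the marginal CDFs of $X$ and $Y$ then shows that the sample quantile separators converge in probability to their population counterparts, which implies that the statistical distance between $(X,Y)|_{\widehat{G}_j}$ and $(X,Y)|_{G_j}$ tends to $0$ in probability. Since both grids have at most $k\ell$ cells, Proposition~\ref{prop:boundedVariationOfI} then bounds the second term by a quantity tending to $0$ in probability, completing the plan.

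The main obstacle is when the marginals of $X$ or $Y$ contain atoms, in which case neither the population nor the sample equipartitions place exactly equal mass in each cell and population quantiles are only defined up to tie-breaking. The standard fix is to adopt an explicit tie-breaking convention (e.g., the left-most cut-point consistent with each cell containing at least the required mass) and then verify that sample cut-points built by the same convention still converge in probability to the population cut-points. An arguably cleaner alternative bypasses cut-points entirely: show directly that the cell-mass vector of $(D_n)|_{\widehat{G}_j}$ converges in probability to the cell-mass vector of $(X,Y)|_{G_j}$ by combining Hoeffding on each cell with quantile convergence, and then invoke the continuity of discrete mutual information as a function of the probability mass function on a fixed finite support.
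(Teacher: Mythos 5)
The paper states this lemma without proof, remarking only that it ``can be shown using Lemma~\ref{lem:unionBoundOverGrids}.'' Your argument is correct and is, in essence, the intended route with the missing details supplied. The finite-skeleton reduction is exactly right: with $k$, $\ell$, and $c$ fixed, both the population entry $\{M\}^c(X,Y)_{k,\ell}$ and the algorithm's output are maxima of the same finite family of skeletons, so marginal convergence in probability of each $\widehat{\phi}_j$ plus a union bound and the continuity of a finite max (the same observation the paper makes in its footnote contrasting finite and infinite families of estimators) finishes the job. Your two-term decomposition then isolates precisely the step that the paper's hint does \emph{not} cover: Lemma~\ref{lem:unionBoundOverGrids} controls $|I(D_n|_{\widehat{G}_j}) - I((X,Y)|_{\widehat{G}_j})|$ uniformly over $k$-by-$\ell$ grids (including the random grid $\widehat{G}_j$), but it says nothing about the discrepancy between gridding the population by the \emph{sample} equipartition versus the \emph{population} equipartition. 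Handling that second term via convergence of sample quantiles to population quantiles (so that the statistical distance between $(X,Y)|_{\widehat{G}_j}$ and $(X,Y)|_{G_j}$ vanishes) followed by Proposition~\ref{prop:boundedVariationOfI} is the right tool, and it mirrors how the paper itself uses that proposition elsewhere (e.g., in Theorem~\ref{thm:alg_infinite_data}) to control the effect of perturbing grid lines. Your caveat about atoms is also well taken --- the paper's notion of equipartition is only well defined up to tie-breaking for atomic marginals, and this gap is present throughout the paper's treatment of the equicharacteristic matrix (including Lemma~\ref{lem:boundEquiCharMatrixEntries}), not just here; either of your proposed fixes is adequate. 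In short: no gap, and your write-up is more complete than what the paper provides.
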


\subsubsection{Consistency for estimating \texorpdfstring{$\popMIC$}{MIC*}}
The consistency of $\{\widehat{M}\}^c(D_n)$ for estimating $\popMIC$ follows from the following property of the clumped equicharacteristic matrix $\{M\}^c$, for which we state a proof sketch.
\begin{proposition}
\label{prop:clumped_equichar_mic}
Let $(X,Y)$ be a pair of jointly distributed random variables. Then we have $\sup \{M\}^c(X,Y) = \popMIC(X,Y)$.
\end{proposition}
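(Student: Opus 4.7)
The plan is to prove both inequalities $\sup \{M\}^c(X,Y) \leq \popMIC(X,Y)$ and $\sup \{M\}^c(X,Y) \geq \popMIC(X,Y)$ by leveraging the already-established identities $\sup [M](X,Y) = \popMIC(X,Y)$ (Corollary~\ref{cor:boundary_equichar}) and the boundary characterization $M_{k,\uparrow} = \max_{P \in P(k)} I(X,Y|_P)/\log k$ (Theorem~\ref{thm:explicitValueOfBoundary}).

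The upper bound is immediate: every grid admissible in the definition of $I^{\{c*\}}((X,Y),k,\ell)$ is also admissible in the definition of $I^{[*]}((X,Y),k,\ell)$, since the former additionally restricts the smaller partition to lie inside an equipartition of size $c\cdot \max\{k,\ell\}$. Hence $\{M\}^c(X,Y)_{k,\ell} \leq [M](X,Y)_{k,\ell}$ entrywise, and taking the supremum yields $\sup \{M\}^c(X,Y) \leq \sup [M](X,Y) = \popMIC(X,Y)$.

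For the lower bound, the strategy mirrors the proof of Theorem~\ref{thm:same_boundary}: fix $k$, let $\ell \to \infty$, and show that $\{M\}^c(X,Y)_{k,\ell}$ approaches $M_{k,\uparrow}$. Without loss of generality consider $k \leq \ell$, so the $X$-axis is an equipartition $Q_\ell$ of size $\ell$ and the $Y$-axis partition of size $k$ is constrained to be a sub-partition of an equipartition $\Pi_{c\ell}$ of size $c\ell$. Given any partition $P \in P(k)$ of the $Y$-axis, I would build an approximating partition $P'_\ell \subset \Pi_{c\ell}$ by snapping each separator of $P$ to the nearest separator in $\Pi_{c\ell}$; this moves at most $(k-1)/(c\ell)$ probability mass, so Proposition~\ref{prop:boundedVariationOfI} gives $\bigl| I(X|_{Q_\ell}, Y|_{P'_\ell}) - I(X|_{Q_\ell}, Y|_P) \bigr| \to 0$ as $\ell \to \infty$. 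Combined with the standard fact that $I(X|_{Q_\ell}, Y|_P) \to I(X, Y|_P)$ as the equipartition $Q_\ell$ becomes arbitrarily fine (as used in the proof of Theorem~\ref{thm:same_boundary}), we conclude
\[
\liminf_{\ell \to \infty} \{M\}^c(X,Y)_{k,\ell} \geq \frac{I(X, Y|_P)}{\log k}
\]
for every $P \in P(k)$. Taking the supremum over $P$ and invoking Theorem~\ref{thm:explicitValueOfBoundary} gives $\liminf_{\ell \to \infty} \{M\}^c(X,Y)_{k,\ell} \geq M_{k,\uparrow}$, and therefore $\sup \{M\}^c(X,Y) \geq \sup_k M_{k,\uparrow} = \sup \partial M(X,Y) = \popMIC(X,Y)$ by Theorem~\ref{thm:MICinTermsOfBoundary}.

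The main obstacle is the approximation step: verifying that forcing the size-$k$ partition to be a sub-partition of an equipartition of size $c\ell$ only costs a vanishing amount of mutual information as $\ell \to \infty$. This requires combining the mass-movement bound from Proposition~\ref{prop:boundedVariationOfI} with the convergence of mutual information under refinement of the equipartition on the opposite axis, and checking that both errors shrink as $\ell$ grows. Everything else is routine bookkeeping around the two supremum characterizations already in hand.
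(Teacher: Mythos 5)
Your proposal is correct and follows essentially the same route as the paper's own proof sketch: an elementwise domination argument for the upper bound (you compare $\{M\}^c$ to $[M]$ where the paper compares to $M$, but both are immediate), and for the lower bound the same fix-$k$, let-$\ell\to\infty$ argument in which the constrained partition is obtained by snapping separators to the equipartition of size $c\ell$ and the resulting loss is controlled via Proposition~\ref{prop:boundedVariationOfI}, yielding $\partial\{M\}^c = \partial M$ and hence the result via Theorem~\ref{thm:MICinTermsOfBoundary}. Your write-up is in fact somewhat more explicit than the paper's sketch about the mass-movement bound $(k-1)/(c\ell)$.
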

\begin{proof}(Sketch) Let $\{M\}^c = \{M\}^c(X,Y)$, and let $M = M(X,Y)$ be the characteristic matrix. Fix $k$, and consider the limit $\{M\}^c_{k,\ell}$ as $\ell$ grows. The grid chosen for the $k, \ell$-th entry when $\ell > k$ will contain an equipartition $P_\ell$ of size $\ell$ on the x-axis, and a partition $Q_\ell$ of size $k$ on the y-axis that is optimal subject to the restriction that $Q_\ell$ be contained in an equipartition of size $c\ell$. As $\ell$ grows large, the equipartition $P_\ell$ on the first axis will become finer and finer until in the limit $X|_{P_\ell} \rightarrow X$. And the partition $Q_\ell$ will be chosen from a finer and finer equipartition, so that in the limit it approaches an unconditionally optimal partition $Q$ of size $k$. The convergence of $Q_\ell$ to the optimal partition $Q$ of size $k$ can be shown to be uniform using Proposition~\ref{prop:boundedVariationOfI}. This implies that
\[
\{M\}^c_{k,\uparrow} = \lim_{\ell\rightarrow\infty}\{M\}^c_{k,\ell} = \max_{P \in P(k)} \frac{I(X, Y|_P)}{\log k}
\]
where $P(k)$ denotes the set of all partitions of size at most $k$. Therefore, the boundary $\partial \{M\}^c$ of $\{M\}^c$ equals the boundary $\partial M$ of $M$. Since $\popMIC(X,Y) = \sup \partial M$ (Theorem~\ref{thm:MICinTermsOfBoundary}), this implies that
\[
\sup \{M\}^c \geq \sup \partial \{M\}^c = \sup \partial M = \popMIC(X,Y) .
\]
On the other hand, $\{M\}^c \leq M$ element-wise since the optimization for the $k,\ell$-th entry of $\{M\}^c$ is performed over a subset of the grids searched for the $k,\ell$-th entry of $M$. This means that $\sup \{M\}^c \leq \sup M = \popMIC(X,Y)$.
\end{proof}

This fact, together with the pointwise convergence of $\{\widehat{M}\}^c(D_n)$ to $\{M\}^c$, suffices to establish the consistency we seek via standard continuity arguments, which we give in the abstract lemma below. The lemma applies to a double-indexed sequence indexed by $i$ and $j$; in our argument, the index $i$ corresponds to position in the equicharacteristic matrix, and the index $j$ corresponds to sample size. The sequence $A$ corresponds to the output of the \algname{EquicharClump} algorithm, the sequence $a$ corresponds to the clumped equicharacteristic matrix, and the sequence $B$ corresponds to the sample equicharacteristic matrix.

\begin{lemma}
\label{lem:convergence_of_approx_equichar}
Let $\{A_{ij}\}_{i,j=1}^\infty$ and $\{B_{ij}\}_{i,j=1}^\infty$ be sequences of random variables, and let $\{a_i\}_{i=1}^\infty$ be a non-stochastic sequence. Assume that the following conditions hold.
\begin{enumerate}
\item $A_{ij} \leq B_{ij}$ almost surely
\item For every $i$, $A_{ij} \rightarrow a_i$ in probability
\item $B'_j = \max_{i \leq j} B_{ij}$ satisfies $B'_j \rightarrow \sup \{a_i\}$ in probability
\end{enumerate}
Then $A'_j = \max_{i \leq j} A_{ij}$ converges in probability to $\sup \{a_i\}$ as well.
\end{lemma}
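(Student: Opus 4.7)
The plan is to prove this two-sided convergence by separately controlling the upper and lower tails of $A'_j$ around $L := \sup\{a_i\}$, using the three hypotheses in essentially their obvious roles.

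For the upper tail, I would argue from the almost-sure domination $A_{ij} \le B_{ij}$ that $A'_j = \max_{i\le j} A_{ij} \le \max_{i\le j} B_{ij} = B'_j$ almost surely. Fixing $\varepsilon>0$, it follows that
\[
\Pr\!\big(A'_j > L + \varepsilon\big) \;\le\; \Pr\!\big(B'_j > L + \varepsilon\big) \;\longrightarrow\; 0
\]
by hypothesis~3. This half uses nothing about the sequence $\{a_i\}$ beyond the value of its supremum.

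For the lower tail, the idea is that even though we are taking a maximum over $j$ indices, we only need \emph{one} index whose population value $a_i$ is close to the supremum. Fix $\varepsilon>0$ and choose $i_0$ with $a_{i_0} > L - \varepsilon/2$; this is possible by the definition of supremum. For every $j \ge i_0$ we have $A'_j \ge A_{i_0,j}$, so
\[
\Pr\!\big(A'_j < L - \varepsilon\big)
\;\le\; \Pr\!\big(A_{i_0,j} < L - \varepsilon\big)
\;\le\; \Pr\!\big(A_{i_0,j} < a_{i_0} - \tfrac{\varepsilon}{2}\big)
\;\longrightarrow\; 0
\]
by hypothesis~2 applied at the fixed index $i_0$. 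Combining the two tail bounds yields $A'_j \to L$ in probability.

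There is really no serious obstacle: the argument pivots on noticing that the maximum over $i \le j$ only needs to dominate a \emph{fixed} term $A_{i_0,j}$ to avoid falling much below $L$, while being dominated by the analogous maximum of $B$'s prevents it from rising much above $L$. The only subtlety worth flagging is the mild dependence on the index: in the lower-tail step one must take $j \ge i_0$ so that $i_0$ is actually in the range of the maximum, but this is automatic for all sufficiently large $j$ and so does not affect the convergence in probability. (If one wished to allow $L = \infty$, the same argument with $L - \varepsilon$ replaced by an arbitrary constant $M$ would still go through, giving $A'_j \to \infty$ in probability; but in our application $L = \popMIC \in [0,1]$.)
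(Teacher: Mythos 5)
Your proof is correct and follows essentially the same route as the paper's: pick a fixed index $i_0$ with $a_{i_0}$ within $\varepsilon/2$ of the supremum and use hypothesis~2 to control the lower tail of $A'_j$ via $A'_j \geq A_{i_0,j}$, and use the almost-sure domination $A'_j \leq B'_j$ together with hypothesis~3 to control the upper tail. The splitting into two explicit tail bounds and the remark about $j \geq i_0$ are just presentational differences from the paper's argument.
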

\begin{proof}
Let $a = \sup \{a_i\}$. We give the proof for the case that $a < \infty$. However, it is easily adapted to the infinite case. We must show that for every $\ep > 0$ and every $0 < p \leq 1$, there exists some $N$ such that $\Prsm{|A'_j - a| < \ep} > p$ for all $j \geq N$. By the definition of $a$, we know that there exists some $k$ such that $|a_k - a| < \ep / 2$. Also, by the convergence of $A_{kj}$ to $a_k$, there exists some $m$ such that $\Prsm{|A_{kj} - a_k| < \ep / 2} > 1-p$ for all $j \geq m$. Thus, with probability at least $1-p$, we have
\begin{align*}
\left| A_{kj} - a \right| &\leq \left| A_{kj} - a_k \right| + \left| a_k - a \right| \\
    &\leq \ep
\end{align*}
for all $j \geq m$.

Next, we observe that since $A'_j \geq A_{kj}$ for $j \geq k$, the above inequality implies that for $j \geq \max\{m, k\}$ we have $\Prsm{A'_j > a - \ep} > 1-p$. It remains only to show that $A'_j$ doesn't get too large, but this follows from the fact that $A'_j \leq B'_j$ and $B'_j \rightarrow a$ in probability. Specifically, we are guaranteed some $N \geq \max\{m, k\}$ such that $\Prsm{B'_j < a + \ep} > 1-p$ for $j \geq N$. Since $B'_j < a + \ep$ implies $A'_j < a + \ep$, we have that $\Prsm{|A'_j - a| < \ep} > 1-p$ for $j \geq N$, as desired.
\end{proof}

\begin{proposition}
The function
\[
\widetilde{\MICestE}_{,B}(\cdot) = \max_{k\ell \leq B(n)} \{\widehat{M}\}^c(\cdot)_{k,\ell}
\]
is a consistent estimator of $\popMIC$ provided $\omega(1) < B(n) \leq O(n^{1-\ep})$ for some $\ep >0$, where $\{\widehat{M}\}^c(\cdot)$ is the output of the the \algname{EquicharClump} algorithm.
\end{proposition}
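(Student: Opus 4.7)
The plan is to deduce the claim from Lemma~\ref{lem:convergence_of_approx_equichar} after a mild re-indexing. Enumerate the pairs $(k,\ell)$ with $k,\ell \geq 2$ as $\{(k_i,\ell_i)\}_{i\geq 1}$ in any order with $k_i\ell_i$ non-decreasing, and set $A_{i,n} = \{\widehat{M}\}^c(D_n)_{k_i,\ell_i}$, $a_i = \{M\}^c(X,Y)_{k_i,\ell_i}$, and $B_{i,n} = \widehat{M}(D_n)_{k_i,\ell_i}$, each truncated to $0$ once $k_i\ell_i > B(n)$. Then $\widetilde{\MICestE}_{,B}(D_n) = \max_i A_{i,n}$, and the goal is to show $\max_i A_{i,n} \to \sup_i a_i$ in probability, noting that $\sup_i a_i = \sup \{M\}^c(X,Y) = \popMIC(X,Y)$ by Proposition~\ref{prop:clumped_equichar_mic}.

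To apply the natural version of Lemma~\ref{lem:convergence_of_approx_equichar}, the three hypotheses must be verified. The domination $A_{i,n} \leq B_{i,n}$ is immediate, since \algname{EquicharClump} optimizes over a strict subset of $k$-by-$\ell$ grids (those with one equipartition and a clumped sub-partition), so its per-cell value cannot exceed the unrestricted optimum defining the sample characteristic matrix. Pointwise convergence $A_{i,n} \to a_i$ for each fixed $i$ is exactly Lemma~\ref{lem:pointwise_convergence_clump}; the only observation needed is that $B(n) > \omega(1)$ forces $B(n) \to \infty$, so the truncation is eventually inactive at every fixed $i$. Finally, the envelope $\max_i B_{i,n}$ is literally $\MIC_B(D_n)$, which by Theorem~\ref{thm:estimator} applied to the uniformly continuous $f = \sup$ (which satisfies $f \circ r_i \to f$ pointwise) converges in probability to $\sup M(X,Y) = \popMIC(X,Y)$; combined with Proposition~\ref{prop:clumped_equichar_mic}, this identifies the envelope's limit with $\sup_i a_i$.

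The only place where the abstract lemma does not apply off-the-shelf is the cutoff: Lemma~\ref{lem:convergence_of_approx_equichar} uses ``$i \leq j$'' whereas here the cutoff is ``$k_i \ell_i \leq B(j)$''. This is not a real obstacle, since that lemma's proof relies only on two features that survive the substitution: the cutoff eventually includes any fixed index $i^*$ (which holds because $B(j)\to\infty$), and the envelope $\max B$ converges to the target supremum (which we have verified). The lower bound $\max_i A_{i,n} \geq \sup_i a_i - \epsilon$ then comes from picking a fixed $i^*$ with $a_{i^*} \geq \sup_i a_i - \epsilon/2$ and invoking pointwise convergence of $A_{i^*,n}$, while the upper bound $\max_i A_{i,n} \leq \sup_i a_i + \epsilon$ follows by sandwiching $\max_i A_{i,n} \leq \max_i B_{i,n}$. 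The main conceptual content is therefore already discharged by the identification $\sup\{M\}^c = \popMIC$ in Proposition~\ref{prop:clumped_equichar_mic}, and I expect the present proposition to be essentially bookkeeping on top of the earlier lemmas of this appendix.
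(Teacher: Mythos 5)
Your proof is correct and follows essentially the same route as the paper's: the sandwich argument of Lemma~\ref{lem:convergence_of_approx_equichar} with pointwise convergence from Lemma~\ref{lem:pointwise_convergence_clump} and the identification $\sup \{M\}^c = \popMIC$ from Proposition~\ref{prop:clumped_equichar_mic}. The only (immaterial) difference is your choice of dominating envelope --- the sample characteristic matrix via Theorem~\ref{thm:estimator}, where the paper uses the sample equicharacteristic matrix via Corollary~\ref{cor:mice_consistent}; both dominations are valid, and your explicit handling of the $i \leq j$ versus $k_i\ell_i \leq B(j)$ cutoff mismatch is if anything cleaner than the paper's subsequence bookkeeping.
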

\begin{proof}
Let $(X,Y)$ be a pair of jointly distributed random variables, and let $D_n$ be a sample of size $n$ from the distribution of $(X,Y)$. Let $\{(k_i, \ell_i) \}_{i=1}^\infty \subset \Z^+ \times \Z^+$ be a sequence of coordinates with the property that for every number $B$ there exists an index $q(B)$ such that
\[
\left\{ (k_i, \ell_i) : i \leq q(B) \right\} = \left\{ (k, \ell) : k \ell \leq B \right\} .
\]

We define $B_{ij} = \widehat{[M]}(D_j)_{k_i, \ell_i}$, i.e., $B_{ij}$ is the $k_i, \ell_i$-th entry of the sample characteristic matrix evaluated on a sample of size $j$. We analogously define $A_{ij} = \{\widehat{M}\}^c(D_j)_{k_i, \ell_i}$, and we define $a_i = \{M\}^c(X,Y)_{k_i, \ell_i}$. We observe that by Proposition~\ref{prop:clumped_equichar_mic}, $\sup a_i = \sup \{M\}^c(X,Y) = \popMIC$.

It is straightforward to see that $A_{ij} \leq B_{ij}$. Additionally, Lemma~\ref{lem:pointwise_convergence_clump} shows that $A_{ij} \rightarrow a_i$ in probability, and Corollary~\ref{cor:mice_consistent}, which states that $\MICestE$ is a consistent estimator of $\popMIC$, shows that $B'_j = \max_{i \leq j} B_{ij} \rightarrow \popMIC(X,Y)$. In the notation of the lemma, it therefore follows that $A'_j = \max_{i \leq j} A_{ij}$ converges in probability to $\popMIC(X,Y)$ as well. But this means that the sub-sequence 
\[
A'_{q(B(n))} = \max_{i \leq q(B(n))} \{\widehat{M}\}^c(D_{q(B(n))})_{k_i, \ell_i} = \max_{k\ell \leq B(n)} \{\widehat{M}\}^c(D_{q(B(n))})_{k,\ell}
\]
converges in probability to $\popMIC(X,Y)$, which implies the result since the sequence $A'_j$ is monotone.
\end{proof}

\subsubsection{Consistency for total information coefficient}
Similarly to the consistency argument for $\popMIC$, we begin by exhibiting the relevant property of the population clumped equicharacteristic matrix.
\begin{proposition}
\label{prop:clumped_equichar_tic}
Let $(X,Y)$ be a pair of jointly distributed random variables. If $X$ and $Y$ are statistically independent, then $\{M\}^c(X,Y) \equiv 0$. If not, then there exists some $a > 0$ and some integer $\ell_0 \geq 2$ such that
\[
\{M\}^c(X,Y)_{k,\ell} \geq \frac{a}{\log \min \{k, \ell\}}
\]
either for all $k \geq \ell \geq \ell_0$, or for all $\ell \geq k \geq \ell_0$.
\end{proposition}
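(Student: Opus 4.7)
The plan splits into the trivial independence case and the substantive dependence case. For independence, if $X$ and $Y$ are statistically independent then for every grid $G$ the discrete random variables $\mbox{col}_G(X)$ and $\mbox{row}_G(Y)$ are independent, giving $I((X,Y)|_G) = 0$ and hence $\{M\}^c(X,Y) \equiv 0$.

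For the dependence case, I start from the fact that some fixed grid $G_0 = (R_0, C_0)$ of constant dimensions $k_0 \times \ell_0$ achieves $a_0 := I((X,Y)|_{G_0}) > 0$ (this follows either directly from $I(X,Y) = \sup_G I((X,Y)|_G) > 0$ or from Proposition~\ref{prop:nonzero_submatrix} applied to $[M]$). For any $(k,\ell)$ with $\ell \geq k \geq k_0$ and $\ell$ large, I would construct a grid $G$ admissible for $I^{\{c*\}}((X,Y), k, \ell)$ as follows. Let $C$ be the equipartition into $\ell$ columns and $E$ the equipartition into $c\max\{k,\ell\} = c\ell$ rows. Snap each column boundary of $C_0$ to the nearest boundary of $C$ to obtain a size-$\ell_0$ sub-partition $C_0' \subset C$; similarly, snap each row boundary of $R_0$ to the nearest boundary of $E$ to obtain a size-$k_0$ sub-partition $R_0' \subset E$. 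Finally, let $R \subset E$ be any size-$k$ refinement of $R_0'$, which exists because $k_0 \leq k \leq c\ell$. The grid $G := (R, C)$ satisfies the $\{M\}^c$ constraints (its larger partition $C$ is an equipartition and its smaller partition $R$ lies inside $E$) and refines $(R_0', C_0')$, so the data-processing inequality gives $I((X,Y)|_G) \geq I((X,Y)|_{(R_0', C_0')})$.

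The estimate that drives everything is then a comparison of $I((X,Y)|_{(R_0', C_0')})$ to $a_0$. The snapping moves at most $(\ell_0-1)/\ell + (k_0-1)/(c\ell) = O(1/\ell)$ total probability mass, so the two $k_0$-by-$\ell_0$ distributions are within $O(1/\ell)$ in statistical distance; Proposition~\ref{prop:boundedVariationOfI} bounds the gap in mutual information by $O((1/\ell)\log\ell + (1/\ell)\log\min\{k_0,\ell_0\}) = O((\log\ell)/\ell) \to 0$. For $\ell$ above some threshold $\ell^*$ this is below $a_0/2$, so $I((X,Y)|_G) \geq a_0/2$ and $\{M\}^c_{k,\ell} \geq (a_0/2)/\log\min\{k,\ell\}$ on the whole region $\ell \geq k \geq \ell_0' := \max\{k_0, \ell^*\}$; the symmetric construction handles the other region if needed. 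The main obstacle is that anchoring to a fixed $G_0$ is essential: approximating a growing optimal $k$-partition directly by a sub-partition of the size-$c\ell$ equipartition would move up to $1/c$ mass, and Proposition~\ref{prop:boundedVariationOfI} would then charge an error of order $(\log k)/c$, exceeding the target signal $a/\log k$ for any fixed $c$ and large $k$. Anchoring to $G_0$ replaces the offending $\log\min\{k,\ell\}$ in the error with the constant $\log\min\{k_0,\ell_0\}$, and the data-processing inequality then lifts the approximate mutual information back up to the required grid resolution.
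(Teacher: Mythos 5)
Your proof is correct, but it takes a genuinely different route from the paper's. The paper's argument piggybacks on the boundary machinery: it first establishes (in Proposition~\ref{prop:clumped_equichar_mic}) that $\partial \{M\}^c = \partial M$, deduces from $\popMIC(X,Y)>0$ that some boundary element, say $\{M\}^c_{\uparrow,\ell_0}$, is positive, hence that all but finitely many entries of the $\ell_0$-th column are at least some $r>0$, and then spreads this along each row $k$ using the monotonicity of $I^{\{c*\}}((X,Y),k,\cdot)$ on $\{\ell_0,\dots,k\}$ (empty columns are allowed, so enlarging the smaller partition's budget cannot decrease the maximum), arriving at $a = r\log\ell_0$. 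You instead give a direct, pointwise construction: anchor to a single fixed grid $G_0$ with $I((X,Y)|_{G_0})=a_0>0$ (which exists under dependence, e.g.\ a $2\times 2$ grid witnessing non-factorization), snap its lines to the admissible equipartitions, refine to an admissible $k$-by-$\ell$ grid, and control the snapping loss by Proposition~\ref{prop:boundedVariationOfI} with $\delta = O(1/\ell)$, so the loss is $O((\log\ell)/\ell)$ and uniformly below $a_0/2$ for large $\ell$. Both arguments ultimately rest on Proposition~\ref{prop:boundedVariationOfI} and the data-processing inequality, but yours is more self-contained (it does not need the boundary equivalence or the limit arguments behind it) and produces an explicit uniform constant $a_0/2$, whereas the paper's is shorter given the structure already in place. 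Your closing observation --- that approximating a growing optimal $k$-partition inside the size-$c\ell$ equipartition directly would incur an error of order $(\log k)/c$, which is why one must anchor to a fixed low-resolution grid and lift via data processing --- is exactly the right diagnosis of where a naive argument would fail.
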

\begin{proof}(Sketch)
Let $\{M\}^c = \{\widehat{M}\}^c(X,Y)$. Under independence, every entry of $\{M\}^c$ is zero since $I((X,Y)|_G) = 0$ for any grid $G$. For the case of dependence, the argument is identical to that given in the proof of Proposition~\ref{prop:nonzero_submatrix}. Specifically, it can be shown that there exists some index $\ell_0$, taken without loss of generality to be a column index, and some $r > 0$ such that all but finitely many of the entries in the $\ell_0$-column are at least $r$. It can then be shown that for large $k$, the entries $(k,\ell_0), (k, \ell_0+1), \ldots, (k, k)$ have non-decreasing values of $I^{[c*]}$. This establishes the claim for $a = r \log \ell_0$.
\end{proof}

We now show that the above result, together with the uniform convergence of $\{\widehat{M}\}^c(D_n)$ to $\{M\}^c(X,Y)$, implies the consistency we seek.
\begin{proposition}
The function
\[
\widetilde{\TICestE}_{,B}(\cdot) = \sum_{k\ell \leq B(n)} \{\widehat{M}\}^c(\cdot)_{k,\ell}
\]
yields a consistent right-tailed test of independence provided $\omega(1) < B(n) \leq O(n^{1-\ep})$ for some $\ep >0$, where $\{\widehat{M}\}^c(\cdot)$ is the output of the the \algname{EquicharClump} algorithm.
\end{proposition}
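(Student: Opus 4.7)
The plan is to adapt the proof of Theorem~\ref{thm:consistency_TICe}, replacing the sample equicharacteristic matrix $\widehat{[M]}$ with the \algname{EquicharClump} output $\{\widehat{M}\}^c$ and the population equicharacteristic matrix $[M]$ with the clumped equicharacteristic matrix $\{M\}^c$. Two ingredients suffice: a lower bound on $S_{B(n)}(\{M\}^c(X,Y))$ when $X$ and $Y$ are dependent (the analog of Proposition~\ref{prop:growth_of_S}), and a uniform approximation of $\{\widehat{M}\}^c(D_n)$ by $\{M\}^c(X,Y)$ across all $(k,\ell)$ with $k\ell \leq B(n)$ (the analog of Lemma~\ref{lem:boundCharMatrixEntries}). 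With both in hand, the consistency argument from Theorem~\ref{thm:consistency_TICe} transfers verbatim: the null critical values $c_n(\alpha)$ are of strictly smaller order than the statistic at any dependent alternative, so the size-$\alpha$ test has power tending to $1$.

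For the first ingredient, I plug Proposition~\ref{prop:clumped_equichar_tic} into the calculation used in the proof of Proposition~\ref{prop:growth_of_S}. That proposition supplies a ``strip'' of entries with $\{M\}^c_{k,\ell} \geq a/\log\min\{k,\ell\}$; without loss of generality take the strip to be $\ell \geq k \geq \ell_0$. Summing along the portion of this strip with $k\ell \leq B$ gives
\[
S_{B}\bigl(\{M\}^c(X,Y)\bigr) \;\geq\; \sum_{k=\ell_0}^{\sqrt{B}} \sum_{\ell=k}^{B/k} \frac{a}{\log k} \;\geq\; \Omega\!\left(B \sum_{k=\ell_0}^{\sqrt{B}} \frac{1}{k \log k}\right) \;=\; \Omega(B \log\log B).
\]
Under independence, $\{M\}^c \equiv 0$ and so $S_B(\{M\}^c(X,Y)) = 0$ for every $B$.

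For the second ingredient, although Lemma~\ref{lem:uniform_convergence_clump} is stated only for $k,\ell \leq \sqrt{B(n)}$, the underlying concentration estimate Lemma~\ref{lem:unionBoundOverGrids} is uniform over all $k$-by-$\ell$ grids $G$, and in particular over the restricted subset appearing in the definition of $I^{\{c*\}}$. Running the union-bound argument from the proof of Lemma~\ref{lem:boundCharMatrixEntries} over all $O(B(n)\log B(n))$ pairs $(k,\ell)$ with $k\ell \leq B(n)$ yields, for some fixed $a>0$,
\[
\sup_{k\ell \leq B(n)} \Bigl| \{\widehat{M}\}^c(D_n)_{k,\ell} - \{M\}^c(X,Y)_{k,\ell} \Bigr| \;=\; O(n^{-a})
\]
with probability tending to $1$. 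Summing this pointwise bound over the $O(B(n) \log B(n))$ entries contributing to $S_{B(n)}$ gives
\[
\bigl| \widetilde{\TICestE}_{,B}(D_n) - S_{B(n)}(\{M\}^c(X,Y)) \bigr| \;=\; O(B(n)\log B(n)\cdot n^{-a}) \;=\; o(B(n)\log\log B(n)).
\]
Combining the two ingredients: under independence $\widetilde{\TICestE}_{,B}(D_n) = o(B\log\log B)$ with probability tending to $1$, so $c_n(\alpha) = o(B\log\log B)$ as well; under any dependent alternative $\widetilde{\TICestE}_{,B}(D_n) \geq \Omega(B\log\log B) - o(B\log\log B) = \Omega(B\log\log B)$ with probability tending to $1$, and this eventually exceeds $c_n(\alpha)$.

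The step I expect to be slightly delicate is upgrading Lemma~\ref{lem:uniform_convergence_clump} from the narrower region $\{k,\ell \leq \sqrt{B(n)}\}$ to the full region $\{k\ell \leq B(n)\}$, since entries in the latter may have highly unbalanced dimensions (for example $k=2$ with $\ell$ close to $B(n)/2$). The upgrade should nevertheless go through, because the constraint $k\ell \leq B(n) = O(n^{1-\ep})$ still keeps the master-grid cell count $C(n) = k\ell\, n^{\ep/2}$ at $O(n^{1-\ep/2})$, which is exactly the regime in which the multiplicative Chernoff bound inside Lemma~\ref{lem:unionBoundOverGrids} produces an $e^{-\Omega(n^u)}$ per-grid failure probability for a suitable choice of the free parameter $\alpha$, and this easily absorbs a union bound over the $O(n \log n)$ pairs $(k,\ell)$ in question.
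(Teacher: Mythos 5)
Your overall architecture is right, and your null-hypothesis and population-lower-bound calculations are fine, but the step you yourself flag as ``slightly delicate'' --- upgrading Lemma~\ref{lem:uniform_convergence_clump} from $\{k,\ell \leq \sqrt{B(n)}\}$ to the full region $\{k\ell \leq B(n)\}$ --- is a genuine gap, and it is precisely the step the paper's proof is structured to avoid (the paper states explicitly that uniform convergence is \emph{not} available outside $k,\ell \leq \sqrt{B(n)}$). Your justification for the upgrade only addresses the fixed-grid concentration in Lemma~\ref{lem:unionBoundOverGrids}, i.e.\ the closeness of $I(D_n|_G)$ to $I((X,Y)|_G)$ for each grid $G$. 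But the grids used by \algname{EquicharClump} are data-dependent: the equipartition of the larger axis into $\max\{k,\ell\}$ pieces and the master partition into $c\max\{k,\ell\}$ clumps are built from sample quantiles. To compare $\{\widehat{M}\}^c(D_n)_{k,\ell}$ with $\{M\}^c(X,Y)_{k,\ell}$ (defined via \emph{population} equipartitions) one must also bound the probability mass displaced between the sample and population equipartitions; across $\Theta(\max\{k,\ell\})$ boundaries each off by $O_P(1/\sqrt{n})$, this displacement is of order $\max\{k,\ell\}/\sqrt{n}$, which for unbalanced entries such as $k=2$, $\ell \approx B(n)/2 \approx n^{1-\ep}/2$ does not vanish when $\ep<1/2$. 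This is why the paper retains only pointwise convergence there, and it breaks your claim that $|\widetilde{\TICestE}_{,B}(D_n) - S_{B(n)}(\{M\}^c)| = o(B\log\log B)$.

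The fix --- which is what the paper does --- is to give up the full $\Omega(B\log\log B)$ rate and work only on the region where uniform convergence holds. Under the alternative, bound $\widetilde{\TICestE}_{,B}(D_n) \geq \sum_{k,\ell\leq\sqrt{B(n)}} \{\widehat{M}\}^c(D_n)_{k,\ell}$, apply Lemma~\ref{lem:uniform_convergence_clump} on that square region, and use Proposition~\ref{prop:clumped_equichar_tic} to get a population lower bound of only $\Omega(B(n)/\log B(n))$ there (roughly $B(n)$ entries each at least $a/\log\sqrt{B(n)}$). This weaker rate is still enough once you analyze the renormalized statistic $\widetilde{\TICestE}_{,B}(D_n)\log B(n)/B(n)$ instead of $\widetilde{\TICestE}_{,B}(D_n)/B(n)$: it stays bounded away from $0$ under dependence. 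Under the null, no convergence result for the clumped matrix is needed at all --- since $\{\widehat{M}\}^c(D_n) \leq \widehat{[M]}(D_n)$ entrywise, $\widetilde{\TICestE}_{,B}$ is sandwiched between $0$ and $\TICestE_{,B}$, whose normalized convergence to $0$ is already established in Appendix~\ref{app:tic_consistent}.
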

\begin{proof}
Let $(X,Y)$ a pair of jointly distributed random variables, and let $D_n$ be a sample of size $n$ from the distribution of $(X, Y)$. It suffices to show consistency for any deterministic monotonic function of the statistic in question. We therefore choose to analyze $\widetilde{\TICestE}_{,B}(D_n) \log(B(n)) / B(n)$.

For the null hypothesis in which $X$ and $Y$ are independent, we observe that since $\{\widehat{M}\}^c(D_n) \leq \widehat{[M]}(D_n)$ element-wise, $0 \leq \widetilde{\TICestE}_{,B}(D_n) \leq \TICestE_{,B}(D_n)$ as well. Moreover, the argument given in Appendix~\ref{app:tic_consistent}, which shows that $\TICestE_{,B}(D_n)/B(n)$ converges to 0 in probability under the null hypothesis, can be adapted to show that $\TICestE_{,B}(D_n) \log(B(n))/B(n) \rightarrow 0$ as well. Thus, $\widetilde{\TICestE}_{,B}(D_n) \log(B(n))/B(n)$ converges to 0 in probability, as required.

For the case that $X$ and $Y$ are dependent, the proof is analogous to the argument given in Appendix~\ref{app:tic_consistent} for $\TICestE$. The only difference is that Lemma~\ref{lem:uniform_convergence_clump}, which guarantees the uniform convergence of $\{\widehat{M}\}^c(D_n)$ to $\{M\}^c(X,Y)$, applies only to the $k, \ell$-th entries for which $k, \ell \leq \sqrt{B(n)}$, rather than the entries over which we are summing, which are those for which $k\ell \leq B(n)$. However, since we require only a lower bound on $\widetilde{\TICestE}_{,B}(D_n)$, we may neglect these entries because
\[
\widetilde{\TICestE}_{,B}(D_n) = \sum_{k\ell \leq B(n)} \{\widehat{M}\}^c(D_n)_{k,\ell} \geq \sum_{k,\ell \leq \sqrt{B(n)}}\{\widehat{M}\}^c(D_n)_{k,\ell} .
\]
It can then be shown, following the argument from Appendix~\ref{app:tic_consistent}, that there exists some $a > 0$ depending only on $B$ such that, with probability $1-o(1)$,
\[
\frac{\log B(n)}{B(n)}\left( \sum_{k,\ell \leq \sqrt{B(n)}}\{\widehat{M}\}^c(X,Y)_{k,\ell} - \widetilde{\TICestE}_{,B}(D_n) \right) \leq \O{ \frac{\#_n \log B(n)}{B(n) n^a} } = \O{ \frac{\log B(n)}{n^a} }
\]
where $\#_n = B(n)$ represents the number of pairs $(k,\ell)$ such that $k, \ell \leq \sqrt{B(n)}$. To obtain the result, we note that this means that
\[
\frac{\log B(n)}{B(n)}\widetilde{\TICestE}_{,B}(D_n) \geq \frac{\log B(n)}{B(n)} \sum_{k,\ell \leq \sqrt{B(n)}}\{\widehat{M}\}^c(X,Y)_{k,\ell} -  \O{ \frac{\log B(n)}{n^a} }
\]
and then invoke Proposition~\ref{prop:clumped_equichar_tic}, which implies that for large $n$
\[
\sum_{k,\ell \leq \sqrt{B(n)}}\{M\}^c(X,Y) \geq \Omega \left( \frac{B(n)}{\log B(n)} \right) .
\]

\end{proof}

\subsection{Empirical characterization of the performance of \algname{EquicharClump}}
\label{app:empirical_char_of_c}
The \algname{EquicharClump} algorithm has a parameter $c$ that controls the fineness of the equipartition whose sub-partitions are searched over by the algorithm. To gain an empirical understanding of the effect of $c$ on performance, we computed $\MICestE$ on the set of relationships described in Section~\ref{sec:bias_variance} using \algname{EquicharClump} with different values of $c$. For each relationship, we compared the average $\MICestE$ across all 500 independent samples from that relationship with different values of $c$. We performed this analysis at sample sizes of $n=250$ (Figure~\ref*{fig:characterization_of_c_250}), $n=500$ (Figure~\ref*{fig:characterization_of_c_500}), and $5,000$ (Figure~\ref*{fig:characterization_of_c_5000}).

We summarize our findings as follows.
\begin{itemize}
\item At low ($n=250$) and medium ($n=500$) sample sizes, using $c=1$ introduces a downward bias for more complex relationships when $B(n) = n^{0.6}$ is used but not when $B(n) =n^{0.8}$ is used. This makes sense since the low sample size and low setting of $B(n)$ mean that the algorithm is searching over grids with relatively few cells, and so setting $c=1$ hinders its ability to find good grids in this limited search space. This bias is almost entirely alleviated by setting $c \geq 2$.
\item At high sample size ($n=5,000$), this effect is still observable but much reduced. This makes sense since when $n$ is large, $B(n)$ is large as well, and so the number of cells allowed in the grids being searched over is already large regardless of the exponent $\alpha$ used in $B(n) = n^\alpha$. Thus, there is less need for the robustness provided by searching for an optimal grid.
\end{itemize}

\begin{figure}
\centering
\includegraphics[clip=true, trim = 0in 0in 0in 0in, height=3.6in]{\pathToCommonFigs/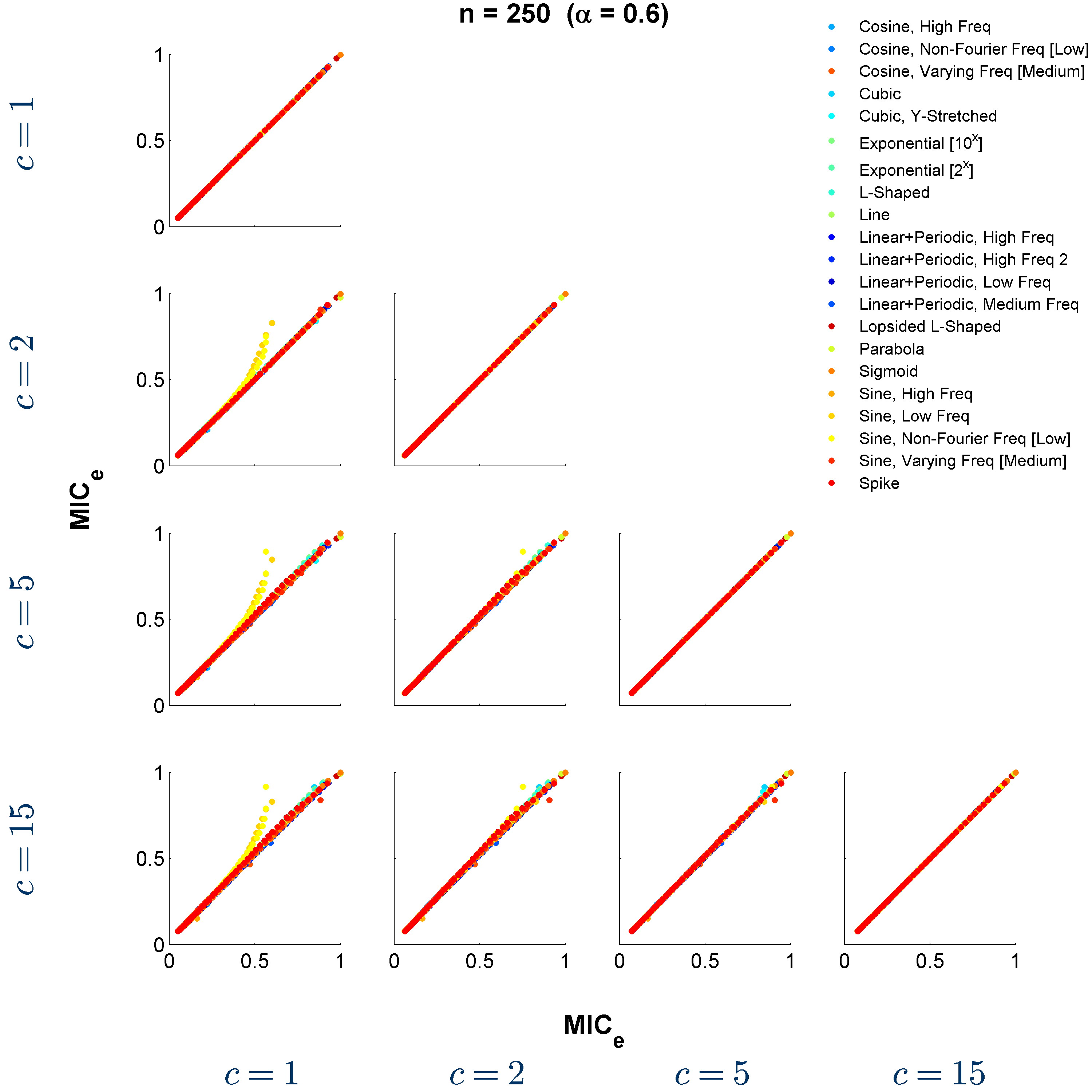} \\
\vspace{0.4in}
\includegraphics[clip=true, trim = 0in 0in 0in 0in, height=3.6in]{\pathToCommonFigs/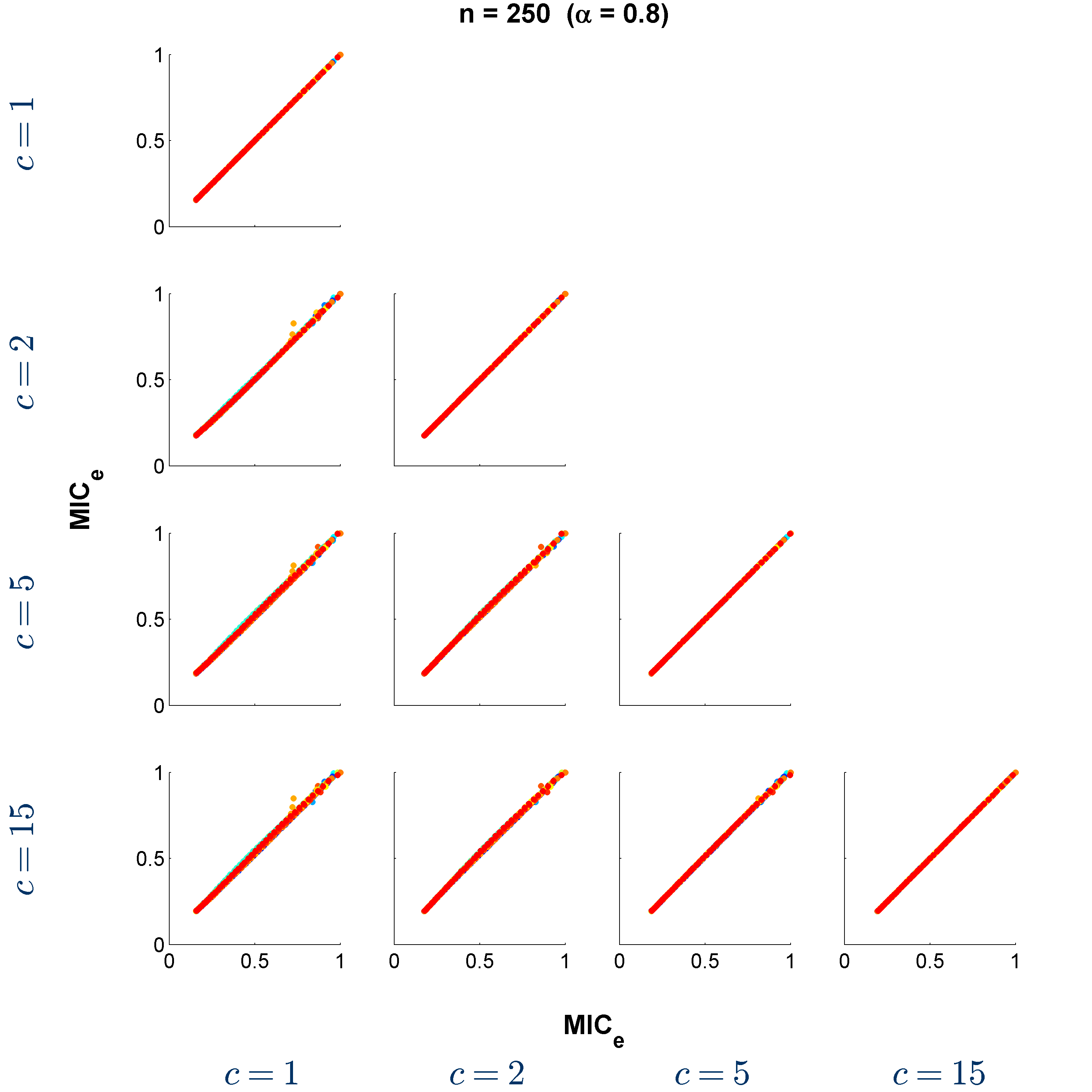}
\caption{The effect of the parameter $c$ on the performance of \algname{EquicharClump}, at $n=250$. See Section~\ref{app:empirical_char_of_c} for details.}
\label{fig:characterization_of_c_250}
\end{figure}

\begin{figure}
\centering
\includegraphics[clip=true, trim = 0in 0in 0in 0in, height=3.6in]{\pathToCommonFigs/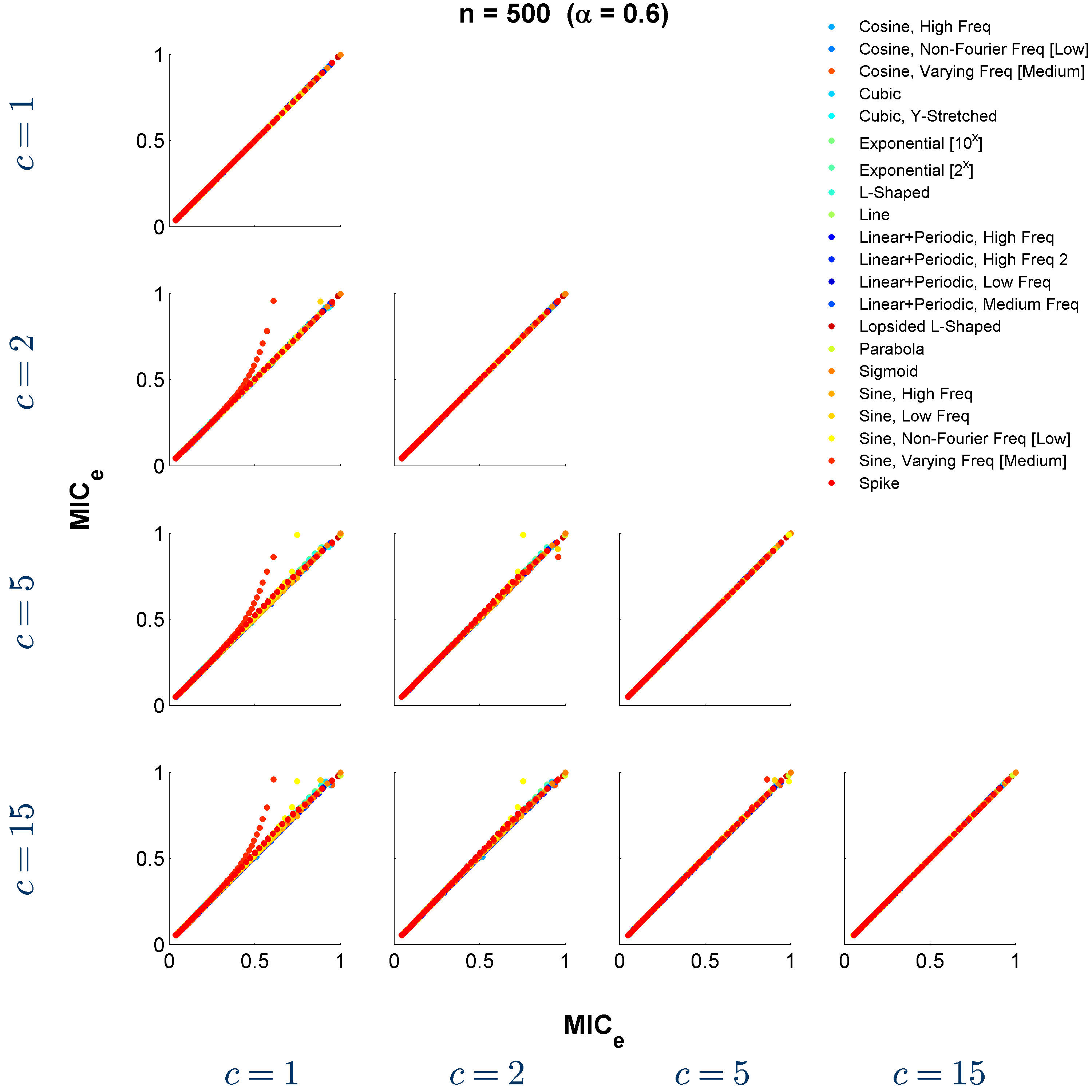} \\
\vspace{0.4in}
\includegraphics[clip=true, trim = 0in 0in 0in 0in, height=3.6in]{\pathToCommonFigs/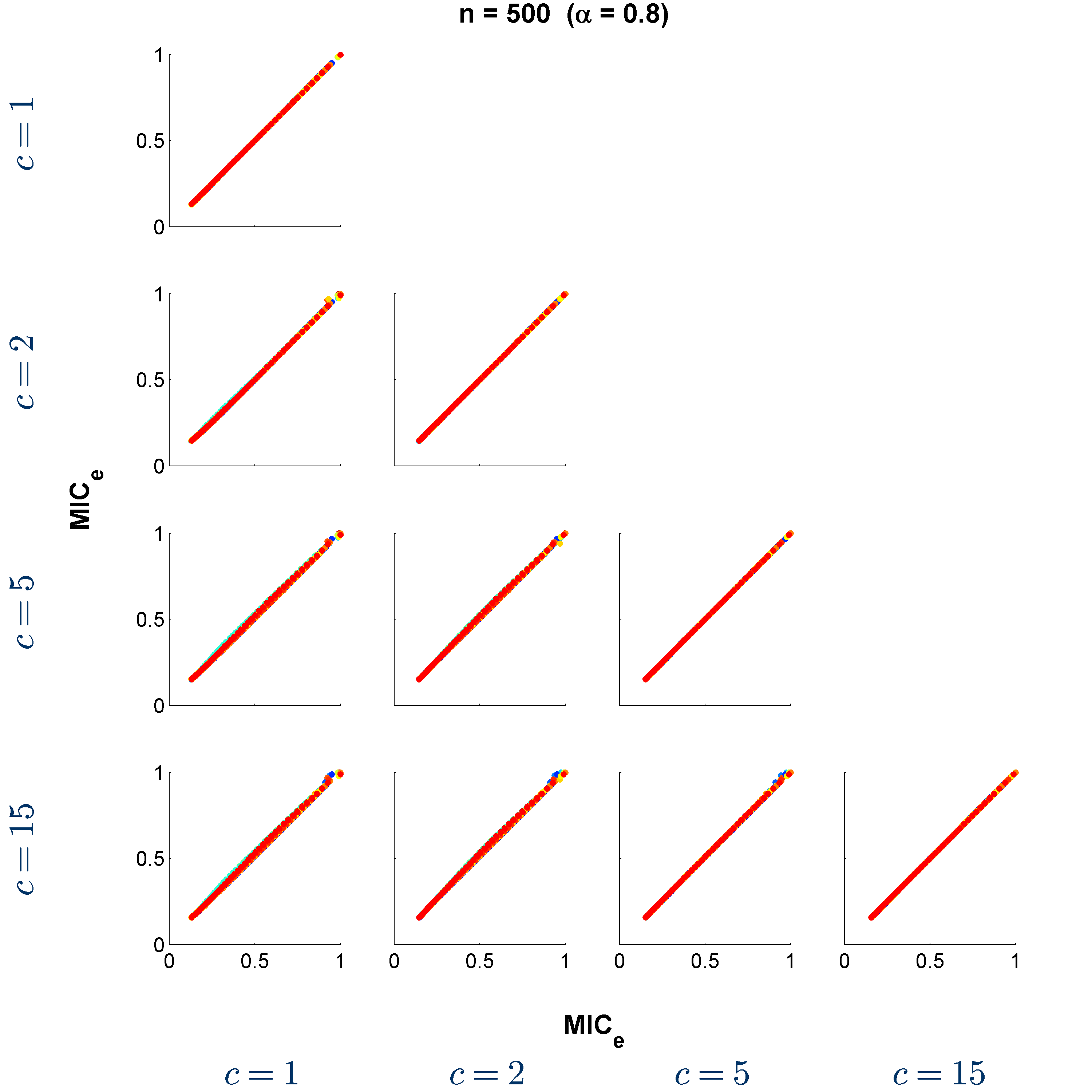}
\caption{The effect of the parameter $c$ on the performance of \algname{EquicharClump}, at $n=500$. See Section~\ref{app:empirical_char_of_c} for details.}
\label{fig:characterization_of_c_500}
\end{figure}

\begin{figure}
\centering
\includegraphics[clip=true, trim = 0in 0in 0in 0in, height=3.6in]{\pathToCommonFigs/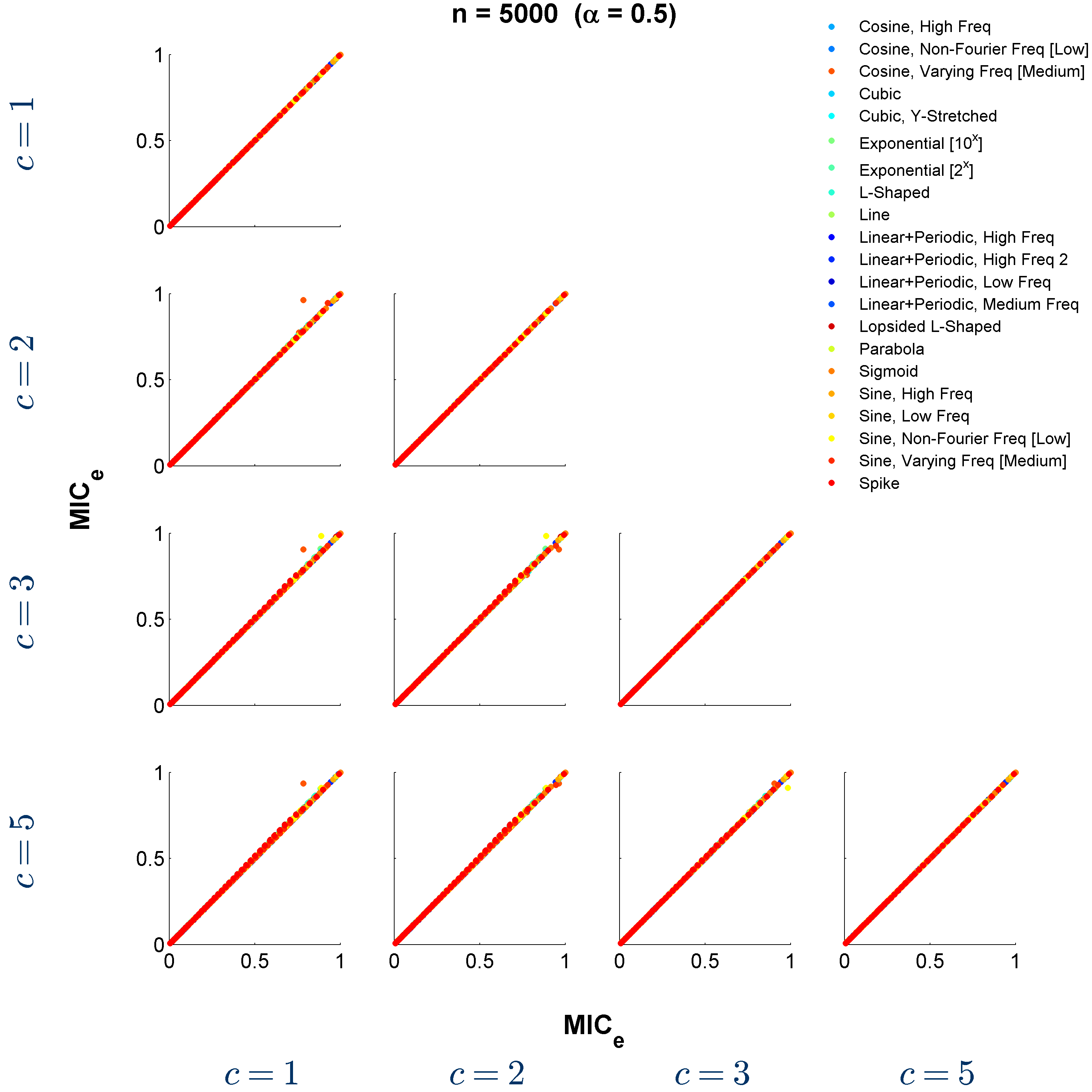} \\
\vspace{0.4in}
\includegraphics[clip=true, trim = 0in 0in 0in 0in, height=3.6in]{\pathToCommonFigs/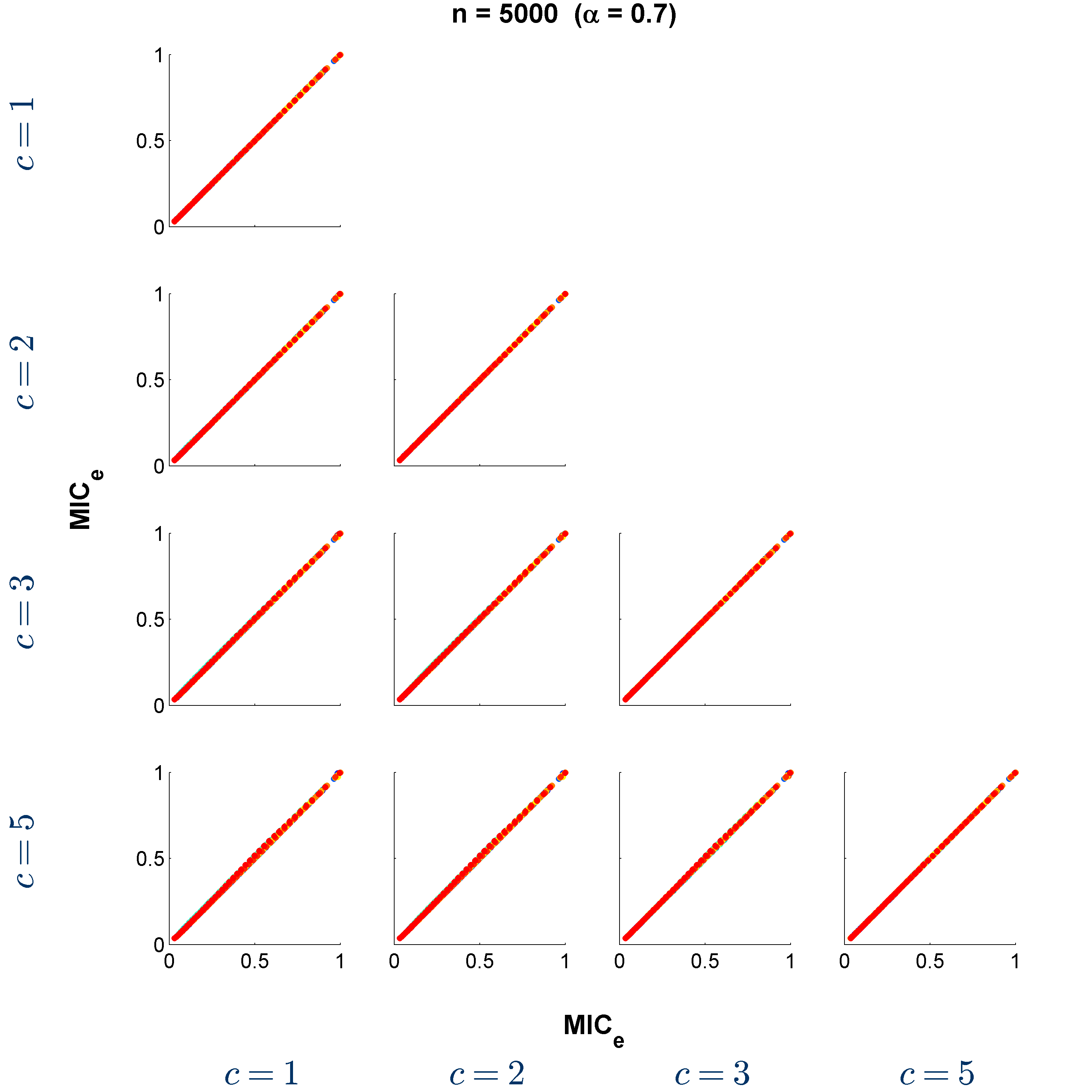}
\caption{The effect of the parameter $c$ on the performance of \algname{EquicharClump}, at $n=5,000$. See Section~\ref{app:empirical_char_of_c} for details.}
\label{fig:characterization_of_c_5000}
\end{figure}

\section{Equitability and power analyses from \texorpdfstring{\cite{reshef2015comparisons}}{the empirical paper}}
\setcounter{figure}{0}
Figure~\ref*{fig:equitabilityAnalysis_evenCurve_XYNoise} contains a representative equitability analysis from \cite{reshef2015comparisons}. Figure~\ref*{fig:indivRelPowerOptimalParam} contains power curves from \cite{reshef2015comparisons} for a large set of leading methods.

\label{app:equitabilityAnalysis_evenCurve_XYNoise}
\label{app:indivRelPowerOptimalParam}
\begin{figure}
	\centering
	\begin{minipage}[b][\textheight][t]{0.62\linewidth}
		\includegraphics[clip=true, trim = 0.15in 1in 0.7in 0.6in, width=\textwidth]{\pathToFigures/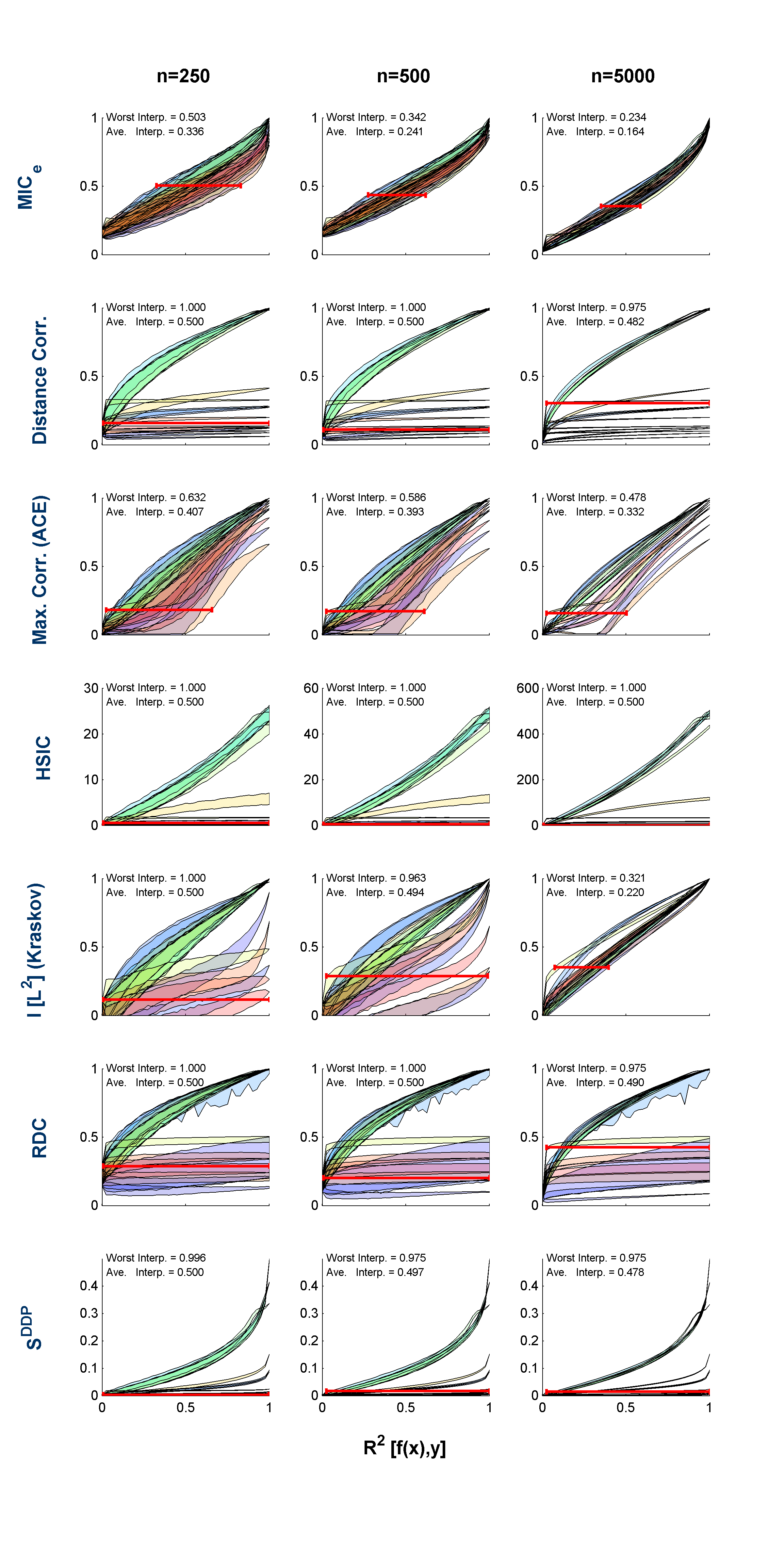}
	\end{minipage}
	\hfill
	\begin{minipage}[b][\textheight][t]{0.34\linewidth}
	    \includegraphics[clip=true, trim = 4.7in 0.25in 2.25in 0.35in, width=\textwidth]{\pathToFigures/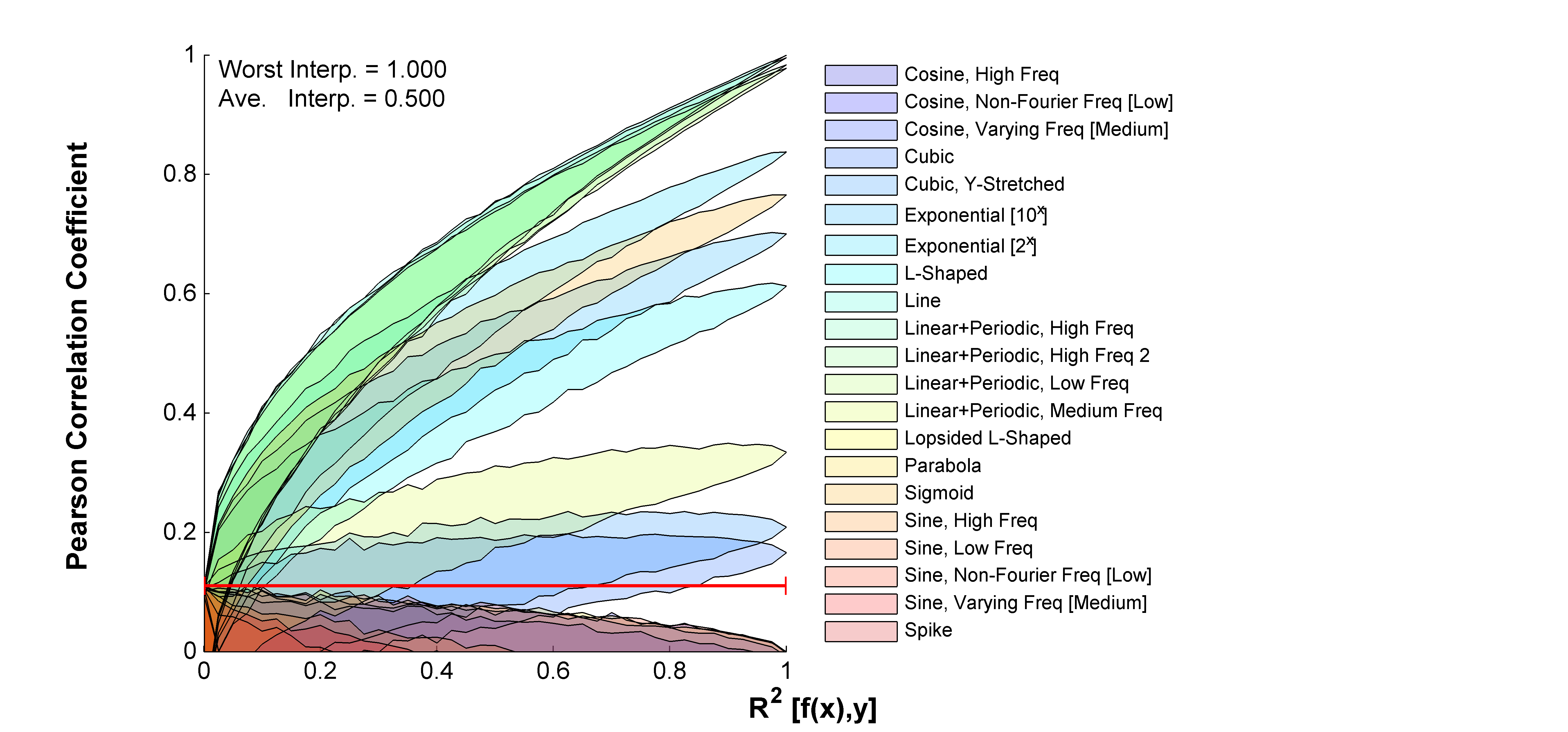}
	\end{minipage}
	\vspace{-1in}
	\caption[The equitability of measures of dependence on noisy functional relationships]{
		    The equitability of measures of dependence on a set of noisy functional relationships, reproduced from \cite{reshef2015comparisons}. \textit{[Narrower is more equitable.]} The plots were constructed as in Figure~\ref{fig:equitabilityAnalysis}.
		    Mutual information, estimated using the Kraskov estimator, is represented using the squared Linfoot correlation.}
	    \label{fig:equitabilityAnalysis_evenCurve_XYNoise}
\end{figure}

% Power of individual functions, using optimal params (n=500) based on AUC
\begin{figure}
	\centering
	\includegraphics[clip=true, trim = 0.85in 0.8in 0.775in 0.6in, width=0.975\linewidth]{\pathToFigures/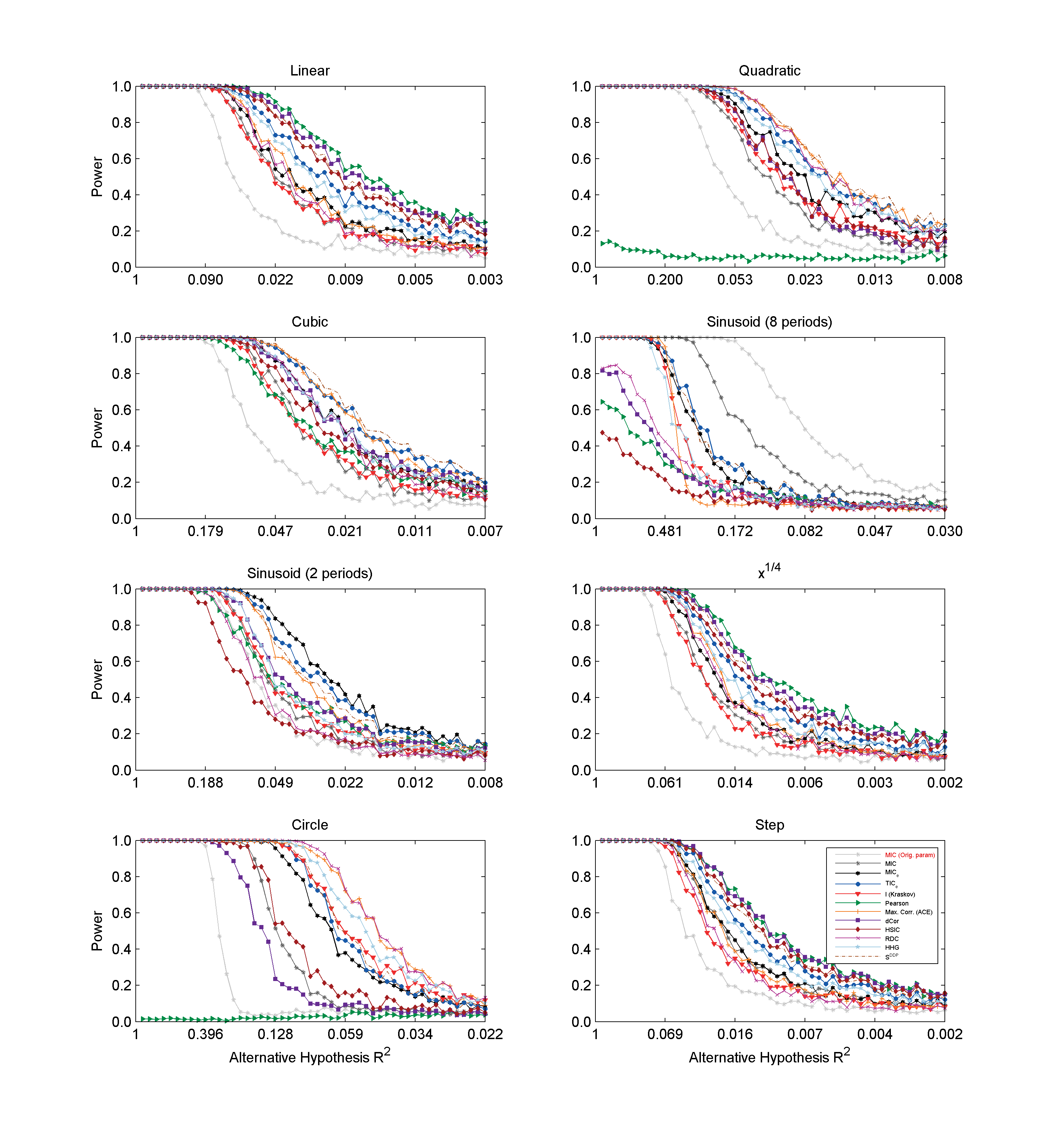}
	\caption[Power against statistical independence of tests based on measures of dependence across several relationship types]{
	    Power of independence testing using several leading measures of dependence, on the relationships chosen by~\cite{simon2012comment}, at $50$ noise levels with linearly increasing magnitude for each relationship and $n=500$. To enable comparison of power regimes across relationships, the x-axis of each plot lists $R^2$ rather than noise magnitude.
	    }
	\label{fig:indivRelPowerOptimalParam}
\end{figure}

\section{Equitability analysis of randomly chosen functions with additional noise model}
\setcounter{figure}{0}
Figure~\ref*{fig:equitabilityAnalysis_evenCurve_GP_YNoise} contains a version of the main text Figure~\ref{fig:equitabilityAnalysis_evenCurve_GP_XYNoise}, but where noise has been added only to the dependent variable in each functional relationship, rather to both the independent and dependent variables.

\begin{figure}
    \centering
    \includegraphics[clip=true, trim = 0in 0in 0in 0in, height=7in]{\pathToFigures/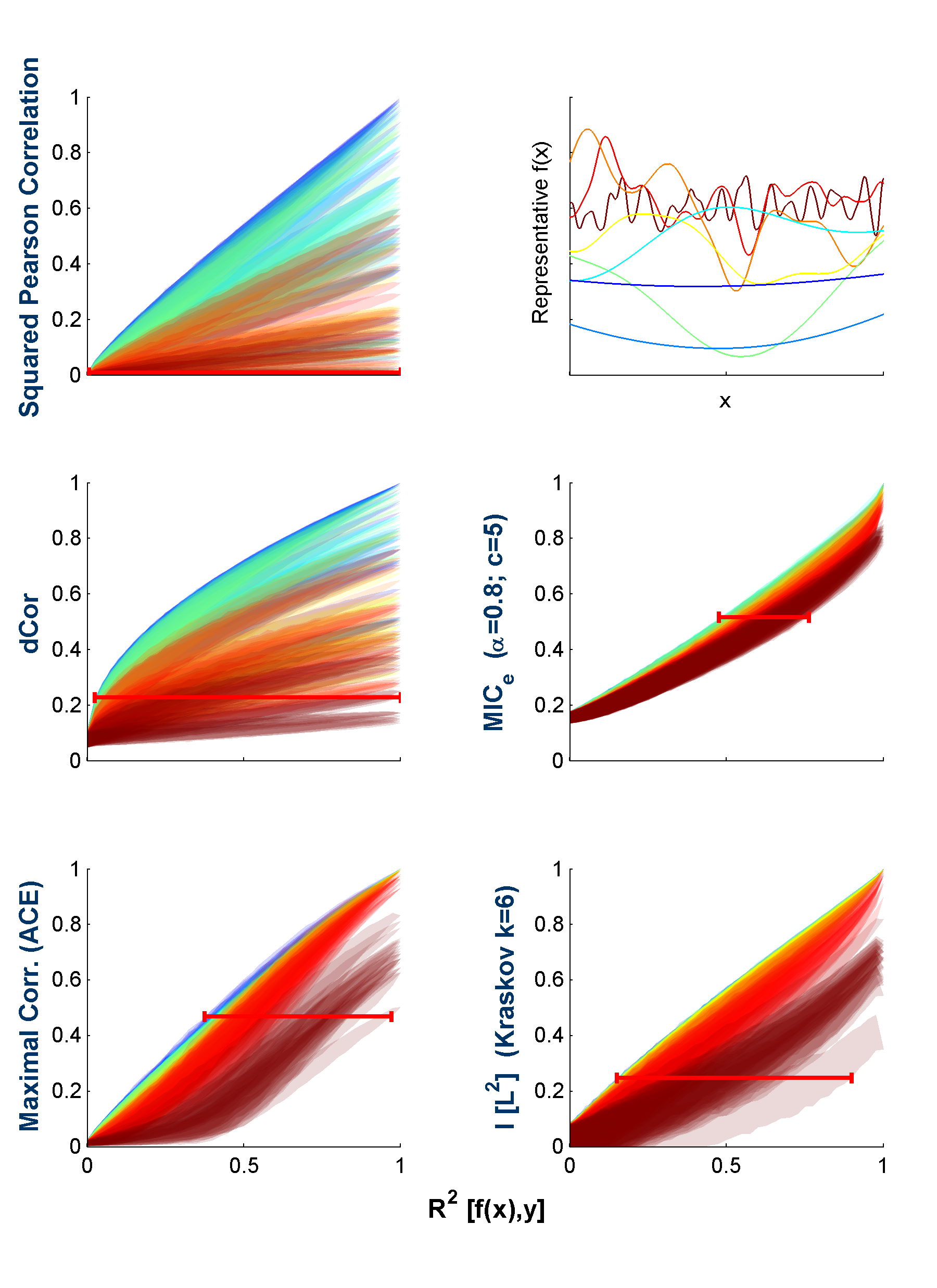}
    \caption{Equitability of methods examined on functions randomly drawn from a Gaussian process distribution, using a different noise model. This figure is identical to Figure~\ref{fig:equitabilityAnalysis_evenCurve_GP_XYNoise}, but with noise added only to the dependent variable in each relationship. Each method is assessed as in Figure~\ref{fig:equitabilityAnalysis_evenCurve_GP_XYNoise}, with a red interval indicating the widest range of $R^2$ values corresponding to any one value of the statistic; the narrower the red interval, the higher the equitability. Sample relationships for each Gaussian process bandwidth are shown in the top right with matching colors.
    }
    \label{fig:equitabilityAnalysis_evenCurve_GP_YNoise}
\end{figure}

\section{Consistency of independence testing based on \texorpdfstring{$\TICestE$}{TICe}}
\label{app:tic_consistent}
\setcounter{figure}{0}
Here we prove Propositions~\ref{prop:nonzero_submatrix} and~\ref{prop:growth_of_S} and then use those propositions to prove Theorem~\ref{thm:consistency_TICe}, which shows that $\TICestE$ can be used for independence testing.

\subsection{Proof of Proposition~\ref{prop:nonzero_submatrix}}
\label{app:nonzero_submatrix}
\vspace{0.3cm}
\noindent \textbf{Proposition}
\textit{
Let $(X,Y)$ be a pair of jointly distributed random variables. If $X$ and $Y$ are statistically independent, then $M(X,Y) \equiv [M](X,Y) \equiv 0$. If not, then there exists some $a > 0$ and some integer $\ell_0 \geq 2$ such that
\[
M(X,Y)_{k,\ell}, [M](X,Y)_{k,\ell} \geq \frac{a}{\log \min \{k, \ell\}}
\]
either for all $k \geq \ell \geq \ell_0$, or for all $\ell \geq k \geq \ell_0$.
}
\vspace{0.3cm}
\begin{proof}
We give the proof only for $[M] = [M](X,Y)$, with the understanding that all parts of the argument are either identical or similar for $M(X,Y)$. When $X$ and $Y$ are independent, then for any grid $G$, $(X,Y)|_G$ exhibits independence as well. Therefore $I((X,Y)|_G) = 0$ for all grids $G$, and so every entry of $[M]$, being a supremum over such quantities, is 0.

For the case that $X$ and $Y$ are dependent, our strategy is to first find, without loss of generality, a column of $[M]$ almost all of whose values are bounded away from zero, and then argue that this suffices.

The dependence of $X$ and $Y$ implies that $\popMIC(X,Y) > 0$. By Corollary~\ref{cor:boundary_equichar}, which states that $\sup \partial [M] = \popMIC(X,Y)$, we therefore know that there is at least one non-zero element of the boundary of $[M]$, as defined in Definition~\ref{def:boundary}. Without loss of generality, suppose that this element is $[M]_{\uparrow, \ell_0} = \lim_{k \rightarrow \infty} [M]_{k,\ell_0}$. The fact that this limit is strictly positive implies that there exists some $k_0 \geq \ell_0$ and some $r > 0$ such that $[M]_{k, \ell_0} \geq r$ for all $k \geq k_0$. That is, all but finitely many of the entries in the $\ell_0$-th column of $[M]$ are at least $r$.

We now show that the existence of such a column suffices to prove the claim. Fix some $k > k_0$ and note that this implies that $k > \ell_0$. We argue that for all $\ell$ in $\{\ell_0, \ldots, k\}$, the desired condition holds. Since $k > \ell_0$, the term $I^{[*]}((X,Y), k, \ell_0)$ in the definition of $[M]_{k,\ell_0}$ is a maximization over grids that have an equipartition of size $k$ on one axis and an optimal partition of size $\ell_0$ on the other. Since we allow empty rows/columns in the maximization, substituting any $\ell$ satisfying $\ell_0 \leq \ell \leq k$ therefore does not constrain the maximization in any way and so it cannot decrease $I^{[*]}$. In other words, for $\ell$ satisfying $\ell_0 \leq \ell \leq k$, we have
\[
I^{[*]}((X,Y), k, \ell) \geq I^{[*]}((X,Y), k, \ell_0) .
\]
Since $k \geq \ell, \ell_0$, the normalizations in the definition of $[M]_{k, \ell}$ and $[M]_{k, \ell_0}$ are $\log \ell$ and $\log \ell_0$ respectively. Therefore, we have that
\[
[M]_{k, \ell} \geq [M]_{k, \ell_0} \frac{\log \ell_0}{\log \ell} \geq \frac{r \log \ell_0}{\log \ell}
\]
where the last inequality is because $k > k_0$. Setting $a = r \log \ell_0$ then gives the result.
\end{proof}

\subsection{Proof of Proposition~\ref{prop:growth_of_S}}
\label{app:growth_of_S}
\vspace{0.3cm}
\noindent \textbf{Proposition}
\textit{
Let $(X,Y)$ be a pair of jointly distributed random variables. If $X$ and $Y$ are statistically independent, then $S_B(M(X,Y)) = S_B([M](X,Y)) = 0$ for all $B > 0$. If not, then $S_B(M(X,Y))$ and $S_B([M](X,Y))$ are both $\Omega(B \log \log B)$.
}
\vspace{0.3cm}
\begin{proof}
We give the argument for $M = M(X,Y)$ only, but the argument holds as stated for $[M](X,Y)$ as well.

The result follows from the guarantee given by the Proposition~\ref{prop:nonzero_submatrix} above. In the case of independence, the proposition tells us that $M \equiv 0$, which immediately gives that $S_B(M) = 0$ for all $B > 0$. For the case of dependence, the proposition implies that there is some $a > 0$ and some integer $\ell_0 \geq 2$ such that, without loss of generality, $M_{k, \ell} \geq a / \log \ell$ for all $k \geq \ell \geq \ell_0$. We convert this into a lower bound on $S_B(M)$.

The key is to write the sum one column at a time, counting how many entries in each column both satisfy $k \geq \ell \geq \ell_0$ and $k\ell \leq B$. For any $\ell$ satisfying $\ell_0 \leq \ell \leq \sqrt{B}$, the entries $(\ell, \ell), \ldots, (B/\ell, \ell)$ meet this criterion, and there are $B / \ell_0 - (\ell_0-1)$ of them. Moreover, since the guarantee of Proposition~\ref{prop:nonzero_submatrix} tells us that all of these entries are at least $a / \log \ell$, we can lower-bound $S_B(M)$ as follows.
\begin{align*}
S_B(A) &\geq \sum_{\ell = \ell_0}^{\sqrt B} \frac{a}{\log \ell} \left( \frac{B}{\ell} - (\ell - 1) \right)  \\
    &= a B \sum_{\ell = \ell_0}^{\sqrt B} \frac{1}{\ell \log \ell} - a\sum_{\ell = \ell_0}^{\sqrt B}\frac{\ell - 1}{\log \ell} \\
    &= a \left( B \sum_{\ell = \ell_0}^{\sqrt B} \frac{1}{\ell \log \ell} - O(B) \right) \\
    &= \Omega (B \log \log B)
\end{align*}
where the second-to-last equality is because $(\ell - 1)/\log \ell \leq \ell$, and the last equality is because $\sum_{i = i_0}^n 1/(i \log i)$ grows like $\log \log n$.
\end{proof}

\subsection{Proof of Theorem~\ref{thm:consistency_TICe}}
\label{app:consistency_TICe}
\vspace{0.3cm}
\noindent \textbf{Theorem}
\textit{
The statistics $\TIC_B$ and $\TICestE_{,B}$ yield consistent right-tailed tests of independence, provided $\omega(1) < B(n) \leq O(n^{1-\ep})$ for some $\ep > 0$.
}
\vspace{0.3cm}
\begin{proof}
We give the proof for $\TIC$ only; however, the argument holds as stated for $\TICestE$ as well.

Let $(X,Y)$ be jointly distributed random variables, and let $D_n$ be a sample of size $n$ from the distribution of $(X,Y)$. Let $M = M(X,Y)$ be the characteristic matrix of $(X,Y)$ and let $\widehat{M}(D_n)$ be the sample characteristic matrix. It suffices to establish the result for a deterministic monotonic function of $\TIC_B(D_n)$. We therefore show convergence of $\TIC_B(D_n)/B(n)$ to zero under the null hypothesis of independence and to $\infty$ under any alternative. Our general strategy for doing so is to translate known bounds on our error at estimating entries of $M$ into bounds on the difference between $\TIC_B(D_n)/B(n) = S_{B(n)}(\widehat{M}(D_n))/B(n)$ and $S_{B}(M)/B(n)$. We then obtain the result by invoking Proposition~\ref{prop:growth_of_S}, which implies that $S_B(M)/B(n)$ is zero under the null hypothesis but grows without bound under the alternative.

We know from Lemma~\ref{lem:boundCharMatrixEntries} (Lemma~\ref{lem:boundEquiCharMatrixEntries} for the equicharacteristic matrix) that there exists some $a >0$ depending only on $B$ such that
\[
\left| \widehat{M}(D_n)_{k,\ell} - M_{k,\ell} \right| \leq O \left( \frac{1}{n^a} \right)
\]
for all $k\ell \leq B(n)$ with probability $1 - o(1)$. This means that with probability $1-o(1)$ we have
\[
\frac{1}{B(n)}\left| \TIC_B(D_n) - S_{B(n)}(M) \right| \leq O\left( \frac{\#_n}{B(n) n^a} \right)
\]
where $\#_n$ is the number of pairs $(k, \ell)$ such that $k\ell \leq B(n)$. It can be shown by taking the integral of $B/x$ with respect to $x$ that $\#_n = O(B(n) \log B(n))$. Therefore, the error in the above bound is at most $O(\log B(n) / n^a) = O(1/\mbox{poly}(n))$ for our choice of $B(n)$.

We now use Proposition~\ref{prop:growth_of_S} to show that this bound gives the desired result. Under the null hypothesis of independence, the proposition says that $S_{B(n)}(M) = 0$ always, and so since $B$ is a growing function the bound implies that $\TIC_B(D_n)/B(n) \rightarrow 0$ in probability. Under the alternative hypothesis in which $(X,Y)$ exhibit a dependence, the proposition implies that $S_{B(n)}(M)/B(n) > \omega(1)$. Since $B$ is a growing function of $n$, this means that for any $r > 0$, the probability that $S_{B(n)}(M)/B(n) > r$ goes to 1 as $n$ grows. In other words, $\TIC_B(D_n)/B(n) \rightarrow \infty$ in probability.
\end{proof}

\section{Information-theoretic lemmas}
\label{appendix:technical}
\setcounter{figure}{0}
\begin{lemma}
\label{lem:boundEntropy}
Let $\Pi$ and $\Psi$ be random variables distributed over a discrete set of states $\Gamma$, and let $(\pi_i)$ and $(\psi_i)$ be their respective distributions. Let $P = f(\Pi)$ and $Q = f(\Psi)$ for some function $f$ whose image is of size $B$. Define
\[ \ep_i = \frac{\psi_i - \pi_i}{\pi_i} .\]
Then for every $0 < a < 1$ there exists some $A > 0$ such that
\[ \left| H(Q) - H(P) \right| \leq \left( \log B \right) A \sum_i |\ep_i| \]
when $|\ep_i| \leq 1 - a$ for all $i$.
\end{lemma}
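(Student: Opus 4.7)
The plan is to express $H(Q) - H(P)$ as a sum indexed by the $B$ image states of $f$, apply a first-order estimate (mean value theorem) to each summand, and then carefully handle the singular $\log(1/p_j)$ factor that appears by using the standard inequality $p_j \log(1/p_j) \leq 1/e$.

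Concretely, write $p_j = \sum_{i \in f^{-1}(j)} \pi_i$, $q_j = \sum_{i \in f^{-1}(j)} \psi_i$, and $\Delta_j = q_j - p_j = \sum_{i \in f^{-1}(j)} \pi_i \ep_i$, so $P$ and $Q$ are the pushforward distributions on an index set of size $B$. For $g(x) = x \log x$ we have $g'(x) = \log x + 1$, and the mean value theorem gives $g(q_j) - g(p_j) = (\log \xi_j + 1)\Delta_j$ for some $\xi_j$ between $p_j$ and $q_j$. The hypothesis $|\ep_i| \leq 1 - a$ forces $\psi_i \geq a\pi_i$, hence $q_j \geq a p_j$ and therefore $\xi_j \geq a p_j$; this yields $|\log \xi_j| \leq \log(1/a) + \log(1/p_j)$. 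Summing over $j$ gives
\[
|H(Q) - H(P)| \leq (1 + \log(1/a))\sum_j |\Delta_j| + \sum_j |\Delta_j|\log(1/p_j).
\]

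The main obstacle is the second sum, since $\log(1/p_j)$ is unbounded as $p_j \to 0$. I would tame it by writing $|\Delta_j|/p_j = |\ep_j'|$, so that $|\Delta_j|\log(1/p_j) = p_j\log(1/p_j)\cdot |\ep_j'|$; since $p_j \log(1/p_j) \leq 1/e$ uniformly in $p_j$, the second sum is at most $(1/e)\sum_j |\ep_j'|$. A short reindexing argument then bounds this pushforward quantity by the upstream one: $|\ep_j'| \leq p_j^{-1}\sum_{i \in f^{-1}(j)} \pi_i |\ep_i|$ and $\pi_i/p_j \leq 1$, so $\sum_j |\ep_j'| \leq \sum_i |\ep_i|$. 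Together with the trivial bound $\sum_j |\Delta_j| \leq \sum_i \pi_i |\ep_i| \leq \sum_i |\ep_i|$, this produces an estimate of the form
\[
|H(Q) - H(P)| \leq \left( 1 + \log(1/a) + \tfrac{1}{e} \right) \sum_i |\ep_i|,
\]
which in fact does not depend on $B$ at all.

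Since the hypothesis is vacuous unless $B \geq 2$, we have $\log B \geq \log 2 > 0$, and the stated lemma follows by absorbing this constant into the choice of $A$ (i.e.\ taking $A = (1 + \log(1/a) + 1/e)/\log 2$). The only step requiring genuine care is the bound $p_j \log(1/p_j) \leq 1/e$ combined with the reindexing $\sum_j |\ep_j'| \leq \sum_i |\ep_i|$; the rest is just mean value theorem and the elementary lower bound $\xi_j \geq a p_j$. The $\log B$ factor in the statement is simply a convenient slack — it matches the form in which the bound is invoked in Lemma~\ref{lem:boundMIOnSubgrids}, where the three applications combine to give an overall $O(\log B)$ factor.
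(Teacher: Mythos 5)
Your proof is correct, but it takes a genuinely different route from the paper's. The paper starts from the second-order Taylor expansion of entropy in the relative errors $e_i$ (citing Roulston), exploits the exact cancellation $\sum_i e_i p_i = 0$ of the zeroth-order term, bounds the linear term via the per-state inequality $-p_i \ln p_i \leq \ln B$ (deduced from $H(P) \leq \ln B$ and the common sign of the summands), and controls the quadratic and cubic remainders by convexity; the hypothesis $|\ep_i| \leq 1-a$ enters implicitly through the constant hidden in the $O(e_i^3)$ remainder. You instead apply the mean value theorem to $x \mapsto x\log x$ on each image state, use $|\ep_i| \leq 1-a$ explicitly to pin the intermediate point via $\xi_j \geq a p_j$, and tame the resulting $\log(1/p_j)$ singularity with $p\log(1/p) \leq 1/e$ rather than with $-p\log p \leq \log B$; the reindexing step $\sum_j |\ep'_j| \leq \sum_i |\ep_i|$ is the same data-processing-style argument in both proofs. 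Your version is more self-contained (no external expansion, no remainder analysis), makes the dependence of $A$ on $a$ explicit as $1 + \log(1/a) + 1/e$, and shows that the $\log B$ factor in the statement is pure slack --- the bound $|H(Q)-H(P)| \leq C_a \sum_i |\ep_i|$ holds with a constant independent of $B$, which is strictly stronger than what is needed downstream in Lemma~\ref{lem:boundMIOnSubgrids}. The paper's form keeps the $\log B$ normalization structurally parallel to how the bound is consumed there, but gains nothing quantitatively over yours.
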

\begin{proof}
We prove the claim with entropy measured in nats. A rescaling then gives the general result.

Let $(p_i)$ and $(q_i)$ be the distributions of $P$ and $Q$ respectively, and define
\[ e_i = \frac{q_i - p_i}{p_i} \]
analogously to $\ep_i$. Before proceeding, we observe that
\[ e_i = \sum_{j \in f^{-1}(i)} \frac{\pi_j}{p_i} \ep_j .\]

We now proceed with the argument. We have from \cite{roulston1999estimating} that
\begin{eqnarray}
\left| H(Q) - H(P) \right| &\leq& \left| \sum_i \left( e_i p_i (1 + \ln p_i) + \frac{1}{2} e_i^2 p_i + \O{e_i^3} \right) \right| \\
	&\leq& \left| \sum_i e_i p_i \right| + \left| \sum_i e_i p_i \ln p_i \right| + \frac{1}{2} \left| \sum_i e_i^2 p_i \right| + \left| \sum_i \O{e_i^3} \right| \\
	&=& \left| \sum_i e_i p_i \ln p_i \right| + \frac{1}{2} \sum_i e_i^2 p_i + \left| \sum_i \O{e_i^3} \right| \label{eq:toBound}
\end{eqnarray}
where the final equality is because $\sum_i e_i p_i = \sum_i q_i - \sum_i p_i = 0$. We proceed by bounding each of the terms in Equation~\ref{eq:toBound} separately.

To bound the first term, we write
\[ \left| \sum_i e_i p_i \ln p_i \right| \leq -\sum_i | e_i | p_i \ln p_i .\]
We then note that $- \sum_i p_i \ln p_i \leq \ln B$, and since each of the summands has the same sign this means that $-p_i \ln p_i \leq \ln B$. We also observe that
\[ \left| e_i \right| \leq \left| \sum_{j \in f^{-1}(i)} \frac{\pi_j}{p_i} \ep_j \right| \leq \sum_j \frac{\pi_j}{p_i} \left| \ep_j \right| \leq \sum_j |\ep_j| \]
since $\pi_j/p_i \leq 1$. Together, these two facts give
\begin{eqnarray*}
- \sum_i |e_i| p_i \ln p_i &\leq& (\ln B) \sum_i |e_i| \\
	&\leq& (\ln B) \sum_i |\ep_i|
\end{eqnarray*}
The second inequality is because each $e_i$ is a weighted average of a set of $\ep_i$ and each $\ep_i$ enters into the expression of exactly one $e_i$.

To bound the second term, we use the fact that $p_i \leq 1$ for all $i$, and so
\[ \sum_i e_i^2 p_i \leq \sum_i e_i^2 . \]
We then write
\begin{eqnarray*}
\sum_i e_i^2 &=& \sum_i \left( \sum_{j \in f^{-1}(i)} \frac{\pi_j}{p_i} \ep_j \right)^2 \\
	&\leq& \sum_i \sum_{j \in f^{-1}(i)} \frac{\pi_j}{p_i} \ep_j^2 \\
	&\leq& \sum_j \ep_j^2 \\
	&=& \sum_j \O{\left| \ep_j \right|}
\end{eqnarray*}
where the second line is a consequence of the convexity of $f(x) = x^2$ and the third line is because the sets $f^{-1}(i)$ partition $\Gamma$.

To bound the third term, we write
\[ \left| \sum_i \O{e_i^3} \right| \leq \sum_i \O{|e_i|^3} \]
and then proceed as we did with the second term, using the fact that $f(x) = x^3$ is convex for $x \geq 0$. This gives
\[ \sum_i \O{|e_i|^3} \leq \sum_i \O{|\ep_i|^3} = \sum_i \O{|\ep_i|} \]
completing the proof.
\end{proof}

\begin{lemma}
\label{lem:boundWeightedAverageOfEntropy}
Let $\{ w_i \} \subset [0,1]$ be a set of size $n$ with $\sum_i w_i \leq 1$, and let $\{u_i\}$ be a set of $n$ non-negative numbers satisfying $\sum_i u_i = a$ and $u_i \leq w_i$. Then
\[
\sum_{i=1}^n w_i H_b \left( \frac{u_i}{w_i} \right) \leq H_b \left( a \right)
\]
where $H_b$ is the binary entropy function.
\end{lemma}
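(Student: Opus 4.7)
The plan is to recognize the left-hand side as a Jensen-type average of binary entropies and exploit the concavity of $H_b$ on $[0,1]$.

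First I would reduce to the case $\sum_i w_i = 1$ by padding. Specifically, set $w_{n+1} = 1 - \sum_{i=1}^n w_i \in [0,1]$ and $u_{n+1} = 0$. Then $u_{n+1} \leq w_{n+1}$, the constraint $\sum_i u_i = a$ is preserved, and since $H_b(0) = 0$ the value of the left-hand sum is unchanged (interpreting $w_{n+1} H_b(0/w_{n+1}) = 0$ even in the degenerate case $w_{n+1} = 0$). So it suffices to establish the inequality when $\{w_i\}_{i=1}^{n+1}$ is a genuine probability distribution.

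Second, I would invoke Jensen's inequality. Let $p_i = u_i / w_i \in [0,1]$ (interpreting $0/0 = 0$). Since $H_b$ is concave on $[0,1]$ and $\{w_i\}_{i=1}^{n+1}$ sums to $1$,
\[
\sum_{i=1}^{n+1} w_i H_b(p_i) \;\leq\; H_b\!\left( \sum_{i=1}^{n+1} w_i p_i \right) \;=\; H_b\!\left( \sum_{i=1}^{n+1} u_i \right) \;=\; H_b(a),
\]
where in the middle equality $w_i p_i = u_i$ by definition. This is exactly the bound claimed, since the $(n+1)$-th term contributed $0$ and so the sum equals the original $\sum_{i=1}^n w_i H_b(u_i/w_i)$.

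There is no serious obstacle: the only mild subtlety is handling the boundary cases where some $w_i$ equals $0$ (forcing $u_i = 0$ as well) or where $\sum w_i$ equals $1$ exactly (making the padding step vacuous), both of which are handled by the convention $0 \cdot H_b(0/0) := 0$. A probabilistic reading is also available and gives an equivalent one-line argument: if $I$ is drawn with $\Pr(I=i) = w_i$ (after padding) and $X \mid I = i$ is Bernoulli$(p_i)$, then $\sum w_i H_b(p_i) = H(X \mid I) \leq H(X) = H_b(a)$, which is the standard ``conditioning reduces entropy'' inequality; I mention this only as a sanity check, and would present the Jensen version as the actual proof.
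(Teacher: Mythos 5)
Your proof is correct and is essentially the paper's argument in unwrapped form: the paper constructs a binary variable $Y$ conditioned on an index variable $X$ with $\Pr(X=i)=w_i$ and applies $H(Y\mid X)\le H(Y)=H_b(a)$, which is exactly the Jensen step you carry out directly via concavity of $H_b$ (and which you yourself note as the equivalent probabilistic reading). Your explicit handling of the padding and the $0/0$ boundary cases is a minor point of added care over the paper's version.
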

\begin{proof}
Consider the random variable $X$ taking values in $\{0, \ldots, n\}$ that equals 0 with probability $1-\sum_i w_i$ and equals $i$ with probability $w_i$ for $0 < i \leq n$. Define the random variable $Y$ taking values in $\{0,1\}$ by
\[ \Pr{Y = 0 | X = i} = \left\{
        \begin{array}{ll}
            0 & \quad i = 0 \\
            u_i / w_i & \quad 0 < i \leq n
        \end{array}
    \right. .
 \]
The function we wish to bound equals $H(Y | X) \leq H(Y)$. We therefore observe that
\[ \sum_{i=1}^n w_i H_b \left( \frac{u_i}{w_i} \right) \leq H(Y) . \]
The result follows from the observation that
\[ \Pr{Y = 0} = \sum_i \Pr{X = i} \frac{u_i}{w_i} = \sum_i u_i \leq a  .\]
\end{proof}

\begin{lemma}
Let $X$ be a random variable distributed over $k$ states, with $\Pr{X = x} = p_x$. Let $\alpha_x \geq 0$ be such that $\sum \alpha_x = \delta$, and define the random variable $X'$ by $\Pr{X' = x} = (p_x + \alpha_x)/(1 + \delta)$. We have
\[
\left| H(X') - H(X) \right| \leq H_b(\delta) + \delta \log k
\]
where $H_b$ is the binary entropy function.
\end{lemma}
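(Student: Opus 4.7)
My plan is to realize $X'$ as a mixture so that both entropies can be related through a single conditioning variable, and then bound the gap by standard inequalities on conditional entropy.

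Concretely, let $Z$ denote a random variable over the same $k$ states with $\Pr{Z = x} = \alpha_x/\delta$ (defined arbitrarily if $\delta = 0$, in which case the claim is trivial). Introduce a Bernoulli indicator $Y$ with $\Pr{Y = 0} = 1/(1+\delta)$ and $\Pr{Y = 1} = \delta/(1+\delta)$, coupled to $X'$ so that $X' = X$ when $Y = 0$ and $X' = Z$ when $Y = 1$. Then by construction
\[
\Pr{X' = x} = \tfrac{1}{1+\delta} p_x + \tfrac{\delta}{1+\delta} \tfrac{\alpha_x}{\delta} = \tfrac{p_x + \alpha_x}{1+\delta},
\]
matching the definition in the statement, and $H(X' \mid Y) = \tfrac{1}{1+\delta} H(X) + \tfrac{\delta}{1+\delta} H(Z)$.

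The upper bound on $H(X') - H(X)$ comes from $H(X') \leq H(X', Y) = H(Y) + H(X' \mid Y)$, which rearranges to
\[
H(X') - H(X) \leq H_b\!\left(\tfrac{\delta}{1+\delta}\right) + \tfrac{\delta}{1+\delta}\bigl(H(Z) - H(X)\bigr) \leq H_b(\delta) + \delta \log k,
\]
using that $H(Z) \leq \log k$, that $\delta/(1+\delta) \leq \delta$, and that $H_b$ is increasing on $[0,1/2]$ with $\delta/(1+\delta) \leq \delta \leq 1/2$ (which is the regime of interest, matching the hypothesis of the enclosing Proposition~\ref{prop:boundedVariationOfI}). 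The reverse direction is even easier: conditioning reduces entropy, so $H(X') \geq H(X' \mid Y)$, giving
\[
H(X) - H(X') \leq H(X) - H(X' \mid Y) = \tfrac{\delta}{1+\delta}\bigl(H(X) - H(Z)\bigr) \leq \delta \log k.
\]
Combining the two bounds yields $|H(X') - H(X)| \leq H_b(\delta) + \delta \log k$, as claimed.

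The only subtle point is the comparison $H_b(\delta/(1+\delta)) \leq H_b(\delta)$, which relies on $\delta \leq 1/2$; this is the regime in which the lemma is applied in the paper, and otherwise the right-hand side already exceeds $\log k$ and the inequality is vacuous from $|H(X') - H(X)| \leq \log k$. Everything else is routine manipulation of the mixture representation.
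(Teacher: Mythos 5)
Your proof is correct and is essentially the paper's own argument: your Bernoulli indicator $Y$ coupled to the mixture $X' = \tfrac{1}{1+\delta}X + \tfrac{\delta}{1+\delta}Z$ induces exactly the same joint distribution as the paper's auxiliary variable with $\Pr{Z = 0 | X' = x} = p_x/(p_x+\alpha_x)$, and both directions then follow from the same chain-rule manipulations ($H(X') \leq H(Y) + H(X'|Y)$ for the upper bound, and a conditional-entropy lower bound for the other). If anything you are slightly more careful than the paper, which writes $H(Z) = H_b(\delta)$ where the exact value is $H_b(\delta/(1+\delta))$, and your use of ``conditioning reduces entropy'' in the reverse direction gives a marginally tighter bound ($\delta \log k$ with no $H_b$ term) than the paper's subadditivity step.
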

\begin{proof}
Define a new random variable $Z$ by
\[ \Pr{Z = 0 | X' = x} = \frac{p_x}{p_x+\alpha_x}, \quad \Pr{Z = 1 | X' = x} = \frac{\alpha_x}{p_x + \alpha_x} .\]
We will use the fact that $H(X' | Z = 0) = H(X)$ to obtain the required bound.

To upper bound $H(X') - H(X)$, we write
\begin{eqnarray*}
H(X') - H(X) &\leq& H(X', Z) - H(X)  \\
	&=& H(Z) + \Pr{Z = 0} H(X' | Z = 0) + \Pr{Z = 1} H(X' | Z = 1) - H(X) \\
	&\leq& H_b(\delta) + (1-\delta) H(X) + \delta H(X' | Z = 1) - H(X) \\
	&=& H_b(\delta) - \delta H(X) + \delta \log k \\
	&\leq& H_b(\delta) + \delta \log k
\end{eqnarray*}
where in the fourth line we have used that $H(X' | Z = 1) \leq \log k$.

To upper bound $H(X) - H(X')$, we write
\begin{eqnarray*}
H(X') + H(Z) &\geq& H(X', Z) \\
	&\geq& \Pr{Z=0} H(X' | Z = 0) \\
	&=& (1-\delta) H(X)
\end{eqnarray*}
which yields
\[ H(X') \geq (1-\delta) H(X) - H_b(\delta) \]
since $H(Z) = H_b(\delta)$. Thus, we have
\[ H(X) - H(X') \leq \delta H(X) + H_b(\delta) \leq \delta \log k + H_b(\delta) . \]
\end{proof}

\begin{lemma}
Let $X$ be a random variable distributed over $k$ states, with $\Pr{X = x} = p_x$. Let $\alpha_x \leq 0$ be such that $\sum |\alpha_x| = \delta$, and define the random variable $X'$ by $\Pr{X' = x} = (p_x + \alpha_x)/(1 - \delta)$. We have
\[
\left| H(X') - H(X) \right| \leq H_b \left( \frac{\delta}{1-\delta} \right) + \frac{\delta}{1-\delta} \log k
\]
where $H_b$ is the binary entropy function. In particular, when $\delta \leq 1/3$ we have
\[
\left| H(X') - H(X) \right| \leq H_b(2\delta) + 2\delta \log k .
\]
\end{lemma}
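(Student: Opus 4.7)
My plan is to reduce this case to the preceding lemma by observing that $X$ can be recovered from $X'$ via the mass-addition procedure analyzed there. I will define $\beta_x = -\alpha_x/(1-\delta) = |\alpha_x|/(1-\delta) \geq 0$ and set $\delta' = \sum_x \beta_x = \delta/(1-\delta)$. A short algebraic check will show that $(p'_x + \beta_x)/(1+\delta') = p_x$; indeed, $1+\delta' = 1/(1-\delta)$, so $(p'_x+\beta_x)/(1+\delta') = (1-\delta)(p'_x+\beta_x) = (p_x+\alpha_x) + |\alpha_x| = p_x$. Thus $X$ is precisely the distribution obtained from $X'$ by incrementing each $p'_x$ by $\beta_x$ and renormalizing, which is exactly the setup of the previous lemma with the roles of $X$ and $X'$ exchanged.

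Applying that lemma then yields immediately
\[
|H(X) - H(X')| \leq H_b(\delta') + \delta' \log k = H_b\left(\frac{\delta}{1-\delta}\right) + \frac{\delta}{1-\delta}\log k,
\]
which is the first claim. For the ``in particular'' part, the hypothesis $\delta \leq 1/3$ gives $1-\delta \geq 2/3$, so $\delta/(1-\delta) \leq 2\delta$ and both quantities lie in $[0, 2/3]$. I will then argue that the map $x \mapsto H_b(x) + x\log k$ is non-decreasing on $[0, 2/3]$ for every $k \geq 2$: a one-line derivative computation gives $\frac{d}{dx}[H_b(x) + x\log k] = \log((1-x)k/x)$, which is non-negative precisely when $x \leq k/(k+1)$, and $k/(k+1) \geq 2/3$ iff $k \geq 2$. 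Combined with $\delta/(1-\delta) \leq 2\delta$, this monotonicity yields the weaker stated bound. The degenerate case $k=1$ is immediate since then $H(X)=H(X')=0$.

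The main conceptual observation is the reversibility of the mass-shift operation: because $X$ and $X'$ differ only by a redistribution of probability, the ``removal'' case reduces cleanly to the ``addition'' case treated in the previous lemma, at the modest cost of replacing $\delta$ by $\delta/(1-\delta)$. I do not anticipate a serious obstacle; the only step requiring care is confirming that the monotonicity argument used in the ``in particular'' statement covers the full interval $[0, 2/3]$, which is precisely why the mild hypothesis $k \geq 2$ (rather than something weaker) is what makes the cleaner bound $H_b(2\delta) + 2\delta \log k$ work at the boundary $\delta = 1/3$.
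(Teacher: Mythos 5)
Your proposal is correct and uses exactly the paper's argument: the paper's proof is the one-line observation that $X$ is recovered from $X'$ by adding $\delta/(1-\delta)$ mass and rescaling, then invoking the preceding lemma. Your explicit verification of that reduction, and in particular your monotonicity argument for the map $x \mapsto H_b(x) + x\log k$ on $[0,2/3]$ to justify the ``in particular'' weakening (which genuinely requires the combined quantity, since $H_b(\delta/(1-\delta)) > H_b(2\delta)$ near $\delta = 1/3$), carefully fills in details the paper leaves implicit.
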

\begin{proof}
We observe that we can get from $X'$ to $X$ by adding $\delta / (1-\delta)$ probability mass and rescaling. The previous lemma then gives the result.
\end{proof}

\begin{lemma}
\label{lem:entropyBoundForChangedMass}
Let $X$ be a random variable distributed over $k$ states, with $\Pr{X = x} = p_x$. Let $\alpha_x$ be such that $\sum |\alpha_x| = \delta$, and define the random variable $X'$ by $\Pr{X' = x} = (p_x + \alpha_x)/(1 - \sum \alpha_x)$. That is, $X'$ is the result of changing the probability of state $x$ by $\alpha_x$ and then re-normalizing to obtain a valid distribution. If $\delta \leq 1/4$, we have
\[
\left| H(X') - H(X) \right| \leq 2H_b(2\delta) + 3\delta \log k
\]
where $H_b$ is the binary entropy function.
\end{lemma}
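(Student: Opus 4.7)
The plan is to reduce the general case to the two preceding lemmas via a two-step decomposition: first apply the upward-movement lemma to pass through an intermediate distribution that accounts for the positive part of $\alpha_x$, then apply the downward-movement lemma to reach $X'$ by handling the negative part.

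Concretely, I would first write $\alpha_x = \alpha_x^+ - |\alpha_x^-|$, where $\alpha_x^+ = \max(\alpha_x, 0)$ and $\alpha_x^- = \min(\alpha_x, 0)$, and set $\delta^+ = \sum_x \alpha_x^+$ and $\delta^- = \sum_x |\alpha_x^-|$, so that $\delta^+ + \delta^- = \delta$. Next I would introduce the intermediate random variable $Y$ defined by
\[
\Pr{Y = x} = \frac{p_x + \alpha_x^+}{1 + \delta^+},
\]
which is obtained from $X$ by adding only the nonnegative perturbations and renormalizing. Since the $\alpha_x^+$ are nonnegative with total mass $\delta^+ \leq \delta \leq 1/4$, the first of the two preceding lemmas yields
\[
\left| H(Y) - H(X) \right| \leq H_b(\delta^+) + \delta^+ \log k.
\]

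I would then observe that $X'$ is obtainable from $Y$ by a purely subtractive perturbation. A short algebraic check shows that $\Pr{X' = x} = (\Pr{Y = x} + \gamma_x)/(1 + \sum_x \gamma_x)$, where $\gamma_x = -|\alpha_x^-|/(1 + \delta^+) \leq 0$ and $\sum_x |\gamma_x| = \delta^-/(1 + \delta^+) =: \delta''$. Since $\delta'' \leq \delta^- \leq \delta \leq 1/4 < 1/3$, the second preceding lemma applies and gives
\[
\left| H(X') - H(Y) \right| \leq H_b(2\delta'') + 2\delta'' \log k \leq H_b(2\delta^-) + 2\delta^- \log k.
\]

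Finally, combining the two bounds via the triangle inequality and using monotonicity of $H_b$ on $[0,1/2]$ together with $\delta^+, 2\delta^- \leq 2\delta \leq 1/2$, I would conclude
\[
\left| H(X') - H(X) \right| \leq H_b(2\delta) + H_b(2\delta) + (\delta^+ + 2\delta^-) \log k \leq 2H_b(2\delta) + 2\delta \log k,
\]
which is stronger than the stated bound $2H_b(2\delta) + 3\delta \log k$. The only real care needed is bookkeeping of the normalization constants when passing from $Y$ to $X'$ so that the second lemma can be invoked with its exact hypothesis; no step poses a substantive obstacle.
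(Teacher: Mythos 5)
Your proof is correct and follows essentially the same route as the paper: pass through the intermediate distribution obtained by adding the positive perturbations, apply the first lemma, then remove the negative mass and apply the second lemma, with the same normalization bookkeeping. Your final accounting is in fact slightly tighter (you observe $\delta^+ + 2\delta^- \leq 2\delta$, giving $2\delta\log k$ where the paper bounds the two linear terms separately to get $3\delta\log k$), but the argument is otherwise identical.
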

\begin{proof}
Let $\delta_+$ be the total magnitude of all the positive $\alpha_x$, and let $\delta_-$ be the total magnitude of all the negative $\alpha_x$. We first add all the mass we're going to add, and apply the first of the previous two lemmas. Then we remove all the mass we are going to remove, and apply the second of the two previous lemmas. This yields a bound of
\begin{eqnarray*}
&&	H_b(\delta_+) + \delta_+ \log k + H_b\left(2\frac{\delta_-}{1 + \delta_+} \right) + 2 \frac{\delta_-}{1 + \delta_+} \log k \\
&\leq& H_b(\delta_+) + \delta_+ \log k + H_b(2 \delta_-) + 2\delta_- \log k \\
&\leq& H_b(2\delta) + \delta \log k + H_b(2 \delta) + 2 \delta \log k \\
&\leq& 2H_b(2 \delta) + 3 \delta \log k
\end{eqnarray*}
where the first inequality is because $1 + \delta_+ \leq 1 + \delta < 2$ and $2 \delta_- \leq 2 \delta \leq 1/2$, and the second inequality is because $\delta_+ \leq \delta < 2\delta \leq 1/2$.
\end{proof}

\newpage
\bibliographystyle{ieeetr}
\bibliography{\pathToCommon/References}

%From original document
% \bibliographystyle{plainnat} enable for long-form citations (along with natbib usepackage line)
% \bibliographystyle{ieeetr}
% \bibliography{\pathToCommon/References}

\end{document}